\newcommand{\appref}[1]{\hyperref[#1] {{Appendix~\ref*{#1}}}}
\newcommand{\be}{\begin{eqnarray} \begin{aligned}}
\newcommand{\ee}{\end{aligned} \end{eqnarray}}
\newcommand{\benn}{\begin{eqnarray*} \begin{aligned}}
\newcommand{\eenn}{\end{aligned} \end{eqnarray*}}
\newcommand*{\textfrac}[2]{{{#1}/{#2}}}
\newcommand*{\cA}{\mathcal{A}} 
\newcommand*{\cB}{\mathcal{B}}
\newcommand*{\cC}{\mathcal{C}}
\newcommand*{\cE}{\mathcal{E}}
\newcommand*{\cF}{\mathcal{F}}
\newcommand*{\cG}{\mathcal{G}}
\newcommand*{\cL}{\mathcal{L}}
\newcommand*{\cM}{\mathcal{M}}
\newcommand*{\cN}{\mathcal{N}}
\newcommand*{\cD}{\mathcal{D}}
\newcommand*{\cR}{\mathcal{R}}
\newcommand*{\cP}{\mathcal{P}}
\newcommand*{\cV}{\mathcal{V}}
\newcommand*{\cX}{\mathcal{X}}
\newcommand*{\cY}{\mathcal{Y}}
\newcommand*{\cZ}{\mathcal{Z}}
\newcommand*{\supp}{\mathrm{supp}}
\newcommand*{\fr}[2]{\frac{#1}{#2}}
\newcommand{\bc}{\begin{center}}
\newcommand{\ec}{\end{center}}
\newtheorem{theorem}{Theorem}[section]
\newtheorem{lemma}[theorem]{Lemma}
\newtheorem{proposition}[theorem]{Proposition}
\newtheorem{definition}[theorem]{Definition}
\newtheorem{corollary}[theorem]{Corollary}
\def\01{\{0,1\}}
\newcommand{\ceil}[1]{\lceil{#1}\rceil}
\newcommand*{\neigh}{\mathrm{neigh}}
\newcommand*{\syn}{\mathsf{Syn}}
\newcommand*{\Vext}{V^{\mathrm{ext}}}
\newcommand*{\Vint}{V^{\mathrm{int}}}
\newcommand*{\Zcirc}{Z_{\circ}}
\newcommand*{\Zpathext}{Z_{\mathrm{ext}}}
\newcommand*{\Ctilde}{\widetilde{\mathcal{C}}}
\newcommand*{\Ctildedual}{\widetilde{\mathcal{C}}^*}
\newcommand*{\Cprime}{\widetilde{\mathcal{C}}} % Obsolete
\newcommand*{\Cprimedual}{\widetilde{\mathcal{C}}^*} % Obsolete
\newcommand*{\Esc}{\mathsf{E}_{\mathrm{sc}}}
\newcommand*{\Escdual}{\mathsf{E}_{\mathrm{sc}^*}}
\newcommand*{\Tsc}{\mathsf{T}_{\mathrm{sc}}}
\newcommand*{\Tscdual}{\mathsf{T}_{\mathrm{sc}^*}}
\newcommand*{\Vsc}{\mathsf{V}_{\mathrm{sc}}}
\newcommand*{\Vdangsc}{\mathsf{V}'_{\mathrm{sc}}}
\newcommand*{\Vdangscdual}{\mathsf{V}'_{\mathrm{sc}^*}}% dangling endspoints of dual graph
\newcommand*{\Vscdual}{\mathsf{V}_{\mathrm{sc}^*}}
\newcommand*{\Eclsc}{\mathsf{E}_{\mathrm{cl},\mathrm{sc}}}
\newcommand*{\Eclscdual}{\mathsf{E}_{\mathrm{cl},\mathrm{sc}^*}}
\newcommand*{\Tclsc}{\mathsf{T}_{\mathrm{cl},\mathrm{sc}}}
\newcommand*{\Tclscdual}{\mathsf{T}_{\mathrm{cl},\mathrm{sc}^*}}
\newcommand*{\Vclsc}{\mathsf{V}_{\mathrm{cl},\mathrm{sc}}}
\newcommand*{\Vclscdual}{\mathsf{V}_{\mathrm{cl},\mathrm{sc}^*}}
\newcommand*{\Teven}{\mathsf{T}_{\mathrm{even}}}
\newcommand*{\Veven}{\mathsf{V}_{\mathrm{even}}}
\newcommand*{\Vcirceven}{\mathsf{V}^{\circ}_{\mathrm{even}}}
\newcommand*{\Vdangeven}{\mathsf{V}'_{\mathrm{even}}}
\newcommand*{\Eeven}{\mathsf{E}_{\mathrm{even}}}
\newcommand*{\Xcl}{\cX}
\newcommand*{\Adec}{{\Ecldec \cap \cA}}
\newcommand*{\adec}{{a \cap \Ecldec}}
\newcommand*{\Todd}{\mathsf{T}_{\mathrm{odd}}}
\newcommand*{\Vodd}{\mathsf{V}_{\mathrm{odd}}}
\newcommand*{\Vcircodd}{\mathsf{V}^{\circ}_{\mathrm{odd}}}
\newcommand*{\Vdangodd}{\mathsf{V}'_{\mathrm{odd}}}
\newcommand*{\Eodd}{\mathsf{E}_{\mathrm{odd}}}
\newcommand*{\Zcl}{\cZ}
\newcommand*{\Adecdual}{{\mathcal{A}_{\mathrm{dec}^{*}} }}
\newcommand*{\Tgl}{\mathsf{T}_{\mathrm{gl}}}
\newcommand*{\Vgl}{\mathsf{V}_{\mathrm{gl}}}
\newcommand*{\Egl}{\mathsf{E}_{\mathrm{gl}}}
\newcommand*{\Tgldual}{\mathsf{T}^*_{\mathrm{gl}}}
\newcommand*{\Vgldual}{\mathsf{V}^*_{\mathrm{gl}}}
\newcommand*{\Egldual}{\mathsf{E}^*_{\mathrm{gl}}}
\newcommand*{\Tdec}{\mathsf{T}_{\mathrm{dec}}}
\newcommand*{\Vdecint}{\mathsf{V}^{\mathrm{int}}_{\mathrm{dec}}}
\newcommand*{\Vdecext}{\mathsf{V}^{\mathrm{ext}}_{\mathrm{dec}}}
\newcommand*{\Vdec}{\mathsf{V}_{\mathrm{dec}}}
\newcommand*{\Edec}{\mathsf{E}_{\mathrm{dec}}}
\newcommand*{\Tdecdual}{\mathsf{T}_{\mathrm{dec}^*}}
\newcommand*{\Vdecdualint}{\mathsf{V}^{\mathrm{int}}_{\mathrm{dec}^{*}}}
\newcommand*{\Vdecdualext}{\mathsf{V}^{\mathrm{ext}}_{\mathrm{dec}^{*}}}
\newcommand*{\Vdecdual}{\mathsf{V}_{\mathrm{dec}^*}}
\newcommand*{\Edecdual}{\mathsf{E}_{\mathrm{dec}^*}}
\newcommand*{\Tcldec}{\mathsf{T}_{\mathrm{cl},\mathrm{dec}}}
\newcommand*{\Vcldecint}{\mathsf{V}^{\mathrm{int}}_{\mathrm{cl},\mathrm{dec}}}
\newcommand*{\Vcldecext}{\mathsf{V}^{\mathrm{ext}}_{\mathrm{cl},\mathrm{dec}}}
\newcommand*{\Vcldecextleft}{\mathsf{V}^{\mathrm{ext},\mathrm{left}}_{\mathrm{cl},\mathrm{dec}}}
\newcommand*{\Vcldecextright}{\mathsf{V}^{\mathrm{ext},\mathrm{right}}_{\mathrm{cl},\mathrm{dec}}}
\newcommand*{\Vextleft}[1]{\mathsf{V}^{\mathrm{ext},\mathrm{left}}_{\mathrm{cl,dec}}(#1)}
\newcommand*{\Vcldec}{\mathsf{V}_{\mathrm{cl},\mathrm{dec}}}
\newcommand*{\Ecldec}{\mathsf{E}_{\mathrm{cl},\mathrm{dec}}}
\newcommand*{\Tcldecdual}{\mathsf{T}_{\mathrm{cl},\mathrm{dec}^*}}
\newcommand*{\Vcldecdualint}{\mathsf{V}^{\mathrm{int}}_{\mathrm{cl},\mathrm{dec}^*}}
\newcommand*{\Vcldecdualext}{\mathsf{V}^{\mathrm{ext}}_{\mathrm{cl},\mathrm{dec}^*}}
\newcommand*{\Vcldecdualextbottom}{\mathsf{V}^{\mathrm{ext},\mathrm{bottom}}_{\mathrm{cl},\mathrm{dec}^*}}
\newcommand*{\Vcldecdualexttop}{\mathsf{V}^{\mathrm{ext},\mathrm{top}}_{\mathrm{cl},\mathrm{dec}^*}}
\newcommand*{\Vcldecdual}{\mathsf{V}_{\mathrm{cl},\mathrm{dec}^*}}
\newcommand*{\Vextbottom}[1]{\mathsf{V}^{\mathrm{ext},\mathrm{bottom}}_{\mathrm{cl},\mathrm{dec}^{*}}(#1)}
\newcommand*{\Ecldecdual}{\mathsf{E}_{\mathrm{cl},\mathrm{dec}^*}}
\newcommand*{\cLXcl}{\cL_{\mathrm{cl},X}}
\newcommand*{\cLZcl}{\cL^*_{\mathrm{cl},Z}}
\newcommand*{\mmatch}{\mathsf{MinMatch}}
\newcommand*{\incident}{\mathrm{Inci}}
\newcommand{\ztwoinner}[2]{\ensuremath{\left\langle\hspace{-0.7ex}\left\langle{#1},{#2}\right\rangle\hspace{-0.7ex}\right\rangle}}
\newcommand*{\res}{\mathsf{res}}
\definecolor{lightblue}{RGB}{109, 194, 247}
\definecolor{bulklogical}{RGB}{26,11,11}
\definecolor{bulk}{RGB}{217, 212, 212}
\definecolor{lightred}{RGB}{255, 148, 148}
\definecolor{mygreen}{RGB}{0, 100, 0}
\begin{document}

\title{Long-range  data transmission in \\
a fault-tolerant quantum bus architecture}
\author[1,2]{Shin Ho Choe}
\author[1,2]{Robert König}
\affil[1]{Department of Mathematics, Technical University of Munich}
\affil[2]{Munich Center for Quantum Science and Technology (MCQST), Munich, Germany}
\maketitle

\begin{abstract}
We propose a scheme for fault-tolerant  long-range entanglement generation at the ends of a rectangular array of qubits of length~$R$ and a square cross section of size~$d\times d$ with~$d=O(\log R)$.  
Up to an efficiently computable  Pauli correction, the scheme generates a maximally entangled state of two qubits using a depth-$6$  circuit consisting of nearest-neighbor Clifford gates and local measurements only. Compared with  existing fault-tolerance schemes for quantum communication, the protocol is distinguished by its low latency:  starting from a product state, the entangled state is prepared in a time~$O(t_{\textrm{local}})$ determined only by the local gate and measurement operation time~$t_{\textrm{local}}$. Furthermore, the requirements on local repeater stations are minimal: Each repeater uses only~$\Theta(\log^2 R)$~qubits  with 
a lifetime of order~$O(t_{\textrm{local}})$.  
 We prove a converse bound $\Omega(\log R)$ on the number of qubits per repeater among all low-latency schemes for fault-tolerant  quantum communication over distance~$R$. Furthermore, all operations 
within a repeater are local when the qubits are arranged in a square lattice.

The noise-resilience of our scheme relies on the fault-tolerance properties of the underlying cluster state. We give a full error analysis, establishing a fault-tolerance threshold against   general (circuit-level) local stochastic noise affecting preparation, entangling operations and measurements. This includes, in particular, errors  correlated in time and space. Our conservative analytical estimates are surprisingly optimistic, suggesting that the scheme is  suited for long-range entanglement generation both in and between near-term quantum computing devices. 
\end{abstract}

\tableofcontents

\section{Introduction}
The ability of operating on spatially separated qubits has been recognized early on as a key requirement for a working quantum computer~\cite{DiVincenzocriteria,preskillproconquantum}. Long-range quantum communication is also a fundamental primitive at the core of quantum information-theoretic~\cite{Gisineta07,Kimble08,wehnerelkousshanson} and cryptographic~\cite{BB84,GisinTittleZbinden02}
protocols in quantum networks, as well as e.g., distributed quantum computation~\cite{ciracekerthuelgamacchiavello,MeterMunroNemotoKohei08,bealsetal13} or distributed sensing~\cite{komar14,Eldredge}. With increasingly powerful capabilities of present-day devices that can act as local nodes in such a network, finding appropriate protocols for entanglement sharing and quantum communication  over faulty networks remains an active and central theme of present-day research.

The nature of the best protocol to be used depends heavily
on the available hardware and its noise characteristics. Here we consider a monolithic architecture for point-to-point communication in a quasi-1D array of qubits. In terms of more standard terminology in quantum communication, this scheme can be interpreted as consisting of a line of repeater stations. Our proposal  addresses the specific constraints on near-term devices by optimizing the following aspects:
\begin{enumerate}[(i)]
\item minimal number of qubits per repeater station:
As a function of the total communication distance~$R$, every repeater  only needs to manipulate $O(\log^2 R)$ number of qubits. This is in line with the best  currently known proposals for long-range quantum communication. We also provide a converse bound $\Omega(\log R)$ on the number of qubits per repeater for low-latency protocols as considered in this work. 
\item short coherence times: The required coherence time of each qubit  is independent of the overall communication distance~$R$: It only needs to be of the order of the local  operation time~$t_{\textrm{local}}$.  This is the maximal time
needed to perform a single- or two-qubit gate, a single-qubit measurement, or an entangling gate between two neighboring repeaters. 
\item simplicity of involved operations: Our scheme only involves one- and two-qubit gates between nearest-neighbor qubits in the architecture. In particular, assuming the repeater stations are connected appropriately, all operations at each repeater  are geometrically local when the qubits are arranged on a 2D~array. This makes the scheme particularly attractive for, e.g., planar chip-like devices where the connectivity to neighboring devices is established through the third dimension.

Because of the simplicity of the scheme, we can explicitly specify and analyze all involved operations without resorting to fault-tolerant  gadgets (e.g., for encoding and recovery) that are often
treated in a black-box manner. In particular, this means that we can provide a stringent error analysis for fully general circuit-level noise. 

    \item resilience to general errors: we provide a rigorous proof of the robustness of our scheme in the form of a fault-tolerance threshold theorem  allowing for general local stochastic noise. This models not only the commonly studied situation where qubits are depolarized according to identically and independently distributed errors (i.e., product channels), but also captures situations where errors are correlated both in space and time. We assume all involved operations (state preparation, gates and measurements)  to be non-ideal, i.e., affected by noise.

\item minimal latency:  Our protocol minimizes the ``time to entanglement''~$T_{\textrm{ent}}$, the amount of time it takes to establish long-range entanglement from a product state of all qubits. The long-range entangled state is  available after a  time of order~$T_{\textrm{ent}}=O(t_{\textrm{local}})$. 
 In contrast, the quantity~$T_{\textrm{ent}}$ scales at least linearly with the total communication distance~$R$  in several existing protocols based, e.g., on transmitting encoded information successively between neighboring repeaters.

  We note that some of these protocols 
    reach the scaling~$O(t_{\textrm{local}})$ per generated Bell pair asymptotically  when operated continuously, i.e., 
    this is an achievable  rate of communication. In contrast, the quantity~$T_{\textrm{ent}}$ measures the latency when starting to run the scheme, and may be particularly relevant in near-term applications in scenarios where continuous operation is challenging.
\end{enumerate}
We note that our analysis treats Pauli errors and erasure errors on the same footing (similar to e.g., Refs.~\cite{Murali98})  leading to a fault-tolerance threshold for situations where the noise strength of both of these types of noise is of the same order of magitude. Future work may examine using modified decoders~\cite{delfossenickerson21} exploiting the high loss tolerance of topological codes~\cite{barretstace}; indeed, we expect these schemes to be resilient to loss errors with a probability of occurrence closer to the percolation threshold of the square lattice, i.e., much higher than the thresholds we establish for local stochastic Pauli noise.

Because of the simplicity of the involved operations, our scheme is a candidate  architecture for transferring quantum information  between  as well as within quantum devices. That is, it can act as a data bus which can exchange information between various components as first studied in~\cite{brennensongwilliams}. Because it prepares a Bell pair up to an (efficiently computable) Pauli error, it is compatible with e.g., magic-state injection methods to achieve universality: corrections can be propagated ``in software'', i.e., by commuting Pauli operators through a stabilizer circuit without the need for physically applying a correction before proceeding with the computation. 

In the area of quantum computing architectures, our protocol could be especially beneficial in settings where gates or measurements need to performed between spatially distant qubits. Such a need arises for example when using a quantum low-density parity check (LDPC) code whose stabilizer generators are not spatially local. With our scheme, we can  design a quasi-2D architecture that allows for the application of joint measurements on any subset of size $\ell=O(1)$ of a total of~$n$ qubits, see Fig.~\ref{fig:busarchitecture2D}.  Each such $\ell$-qubit measurement is executed with constant fidelity in a constant amount of time (up to Pauli correction whose classical postprocessing can typically be deferred to a later stage). The architecture requires $O(n \cdot \mathsf{polylog}(n))$ qubits in total. When combined with the recently discovered good quantum LDPC codes that have parameters close to optimal~\cite{panteleevkalachev}, this logarithmic overhead unfortunately  only leads to an encoding rate that decays as an inverse polylogarithmic function. In the near term, however, this still beats  e.g.,  surface-code based approaches~\cite{defolsseiyerpoulin,Litinski2019gameofsurfacecodes} that have an inverse polynomial encoding rate, although the latter have the advantage of being strictly 2D and not quasi-2D as our scheme.
\begin{figure}
    \centering
    \includegraphics[width=15cm]{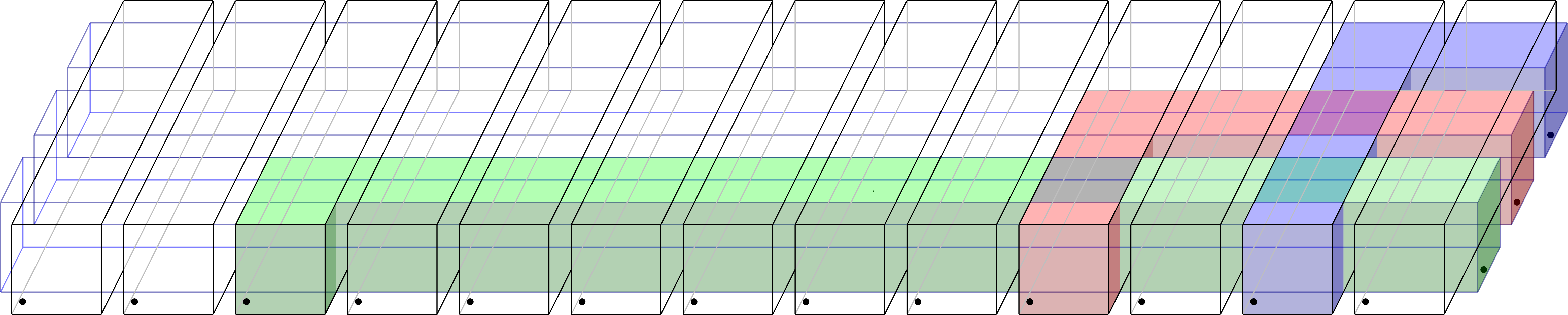}
    \caption{
    A quasi-2D network of quantum buses. This architecture allows to perform a joint measurement on any subset of $\ell$~qubits (illustrated here for~$\ell=3$) with constant fidelity by ``busing" the corresponding qubits to the same location before the measurement, and subsequently transferring the qubits  back.
    The transfer is achieved by teleporting the qubits using the long-range entanglement established with our protocol on each bus.\label{fig:busarchitecture2D}}
\end{figure}

\subsubsection*{Prior work on entanglement generation from cluster states}
 Our work builds on the fault-tolerance properties of the 3D~cluster state. In seminal work~\cite{raussendorfLongrangeQuantumEntanglement2005}, Raussendorf, Bravyi and Harrington proved that  for temperatures below some threshold, the thermal state of the cluster state Hamiltonian on a simple 3D~cubic lattice has localizable entanglement: applying single-qubit Pauli measurements
 to each ``bulk'' qubit  results -- up to Pauli corrections determined by the measurement outcomes  -- in an encoded Bell state
of 2D~surface codes lying on two opposing faces of the cube.  Since the initial thermal state is the cluster state with independent  and identically distributed Pauli-$Z$-errors on each qubit, and the cluster state can be created in constant depth with local operations, this provides a scheme for long-range (encoded) entanglement generation by a constant-depth circuit with local gates. While the assumed (physically motivated) noise model is somewhat restricted, the authors of~\cite{raussendorfLongrangeQuantumEntanglement2005} also point out that more general errors on the initial cluster state such as independent single-qubit depolarizing noise can be tolerated when using ideal (non-faulty) operations at the two surfaces.

%%% Quantum advantage in noisy shallow circuits
Entanglement generation in a cubic lattice can indeed be achieved under more general noise:  As shown in~\cite{bravyiQuantumAdvantageNoisy2020}, the Raussendorf-Bravyi-Harrington scheme produces a logical  Bell pair even under general local stochastic noise below some threshold. Local stochastic errors model general circuit-level Pauli noise that can include correlations in time and space. As a special case, this encompasses 
independent and identically distributed (i.i.d.) Pauli noise on the initial cluster state, as well as e.g., measurement errors. We discuss this error model  and  the exact statement of~\cite{bravyiQuantumAdvantageNoisy2020} below. It shows that under local stochastic noise, it is possible to generate an encoded Bell pair which is corrupted by a residual local stochastic error. The latter can be  corrected by surface-code decoding.

The scheme of~\cite{raussendorfLongrangeQuantumEntanglement2005}  produces a surface-code encoded Bell pair, i.e., the resulting entanglement lives in a logical $2$-qubit subspace of a many-qubit system. In applications such as the one considered in~\cite{bravyiQuantumAdvantageNoisy2020}, this is a desirable feature: the encoded information can be operated on without decoding, thus  preserving fault-tolerance properties. In a communication scenario, however, it may be more desirable to establish entanglement between two physical qubits instead. A scheme for achieving this kind of localization of the entanglement was  proposed in~\cite{perseguersFidelityThresholdLongrange2010} by modifying the approach of~\cite{raussendorfLongrangeQuantumEntanglement2005}: Here a different single-qubit measurement pattern is used at the two (``surface code'') boundaries. This results in entanglement between two distinguished qubits in each surface. For the problem of establishing a constant-fidelity Bell pair in a cube of size~$R\times R\times R$, the work~\cite{perseguersFidelityThresholdLongrange2010} provides analytical and numerical threshold estimates in the case of i.i.d.~Pauli-$X$ and Pauli-$Z$ errors in the cluster state. (It was argued in~\cite{perseguersFidelityThresholdLongrange2010} that this error model captures a situation where initially, independent noisy Bell states are shared along each edge of the cubic lattice, and all subsequent (single-qubit) operations are ideal, i.e., noise-free.) In particular, it is shown that there is a threshold error rate such that constant-fidelity two-qubit entanglement (i.e., independent of~$R$) is established for any error rate below that threshold.

To achieve long-distance communication, one can use cluster states on lattices of the form~$d\times d\times R$ with $R\gg d$, i.e., lattices that  are elongated in the direction where quantum information is to be transferred. In this work, we denote a lattice of such form by $\cC[d \times d \times R]$, see Section~\ref{sec:entanglementgenerationnoisy} for a detailed definition. The corresponding setup is quasi-1D, with $O(d^2)$~qubits arranged on a square on each ``slice'' along this direction. This idea has been explored numerically in the case of i.i.d.~Pauli-$Z$ noise in~\cite{raussendorfLongrangeQuantumEntanglement2005}. For sufficiently small error rates, a fidelity of the encoded (error corrected) scaling as
$F\sim \exp\left(-R\kappa_1\exp(-d\kappa_2)\right)$ for some constants $\kappa_1,\kappa_2>0$ was observed. These simulations were carried out for period boundary conditions. Assuming the surface codes have similar behavior as toric codes away from the threshold, this implies that using~$d=\Omega(\log R)$ (i.e., a polylogarithmic number~$n=O(d^2)$ of qubits per repeater station) is sufficient to achieve constant fidelity over a distance~$R$. In a similar vein, the results of~\cite{perseguersFidelityThresholdLongrange2010} are stated for cubes (i.e., lattices of the form~$R\times R\times R$), but the author suggests using lattices of the form~$d\times d\times R$ with $d=\Theta(\log R)$.  Here we establish rigorous bounds on the fault-tolerance threshold for this scenario for a more general noise model.

\subsubsection*{Surface codes for communication and as topological quantum memories}
Our work is closely related to~\cite{perseguersFidelityThresholdLongrange2010}: indeed, the cluster state and measurement pattern used in our scheme is identical to that of~\cite{perseguersFidelityThresholdLongrange2010}.  As we explain below, this measurement strategy and the associated correction operation can be interpreted as a fault-tolerant single-shot decoding procedure for the surface code.
 In fact, the corresponding  measurement pattern for the surface code has also previously been used in~\cite{Litinski2019magicstate} to realize a fault-tolerant $T$-measurement. A similar  triangular pattern has been applied in Refs.~\cite{mazurekLongdistanceQuantumCommunication2014,lodygaSimpleSchemeEncoding2015} for encoding, i.e., state preparation, for the surface code, and more specifically, for injecting  a magic state into the surface code~\cite{Li_2015}.

Surface codes have found many uses  in  information-processing and communication settings. While some of these proposals are superficially similar to our work, they typically have different objectives. In particular, they often involve operations over an extensive amount of time, preventing realization by a constant-depth circuit:  Consider  first the fault-tolerant encoding/decoding scheme for the 2D surface code studied in~\cite{lodygaSimpleSchemeEncoding2015}. This scheme establishes entanglement in time, i.e., gives a complete prescription (including preparation and readout) of how the surface code can be used to preserve quantum information. Similar to the seminal  work~\cite{dennisTopologicalQuantumMemory2002} showing that the surface code can be used as a quantum memory  even with noisy syndrome measurements, this leads to a matching problem on a 3D~lattice in spacetime. This is conceptually different from our procedure: Our decoding procedure involves a ``spacelike'' 3D~lattice, and the entire process is achieved in a constant  time. (Ref.~\cite{lodygaSimpleSchemeEncoding2015} also gives a lower bound on the achievable fidelity in terms of ``circuit volume'' when using concatenated codes. The non-locality and complexity of the involved operations is mentioned as a challenge.)

The proposal of~\cite{FowleretalSurfaceCodeCommunication} also relies on surface codes: Several corresponding code patches are placed  next to each other along the communication direction and an encoded qubit is teleported through this architecture. Similar to a topological quantum memory with noisy syndrome measurements, the scheme relies on interaction and measurement patterns
that are repeated a number of times scaling linearly with the code distance.

\subsubsection*{Local stochastic noise and Clifford circuits}
We consider quantum circuits subject to local stochastic noise. This is a general notion of errors introduced in~\cite{gottesmanlocalstochastic}
to model situations where the noise is ``locally decaying'' but otherwise arbitrary. In particular, this model includes
not only i.i.d.~noise on each qubit, but also adversarial settings where errors can have unspecified correlations both in space and time (i.e., at different circuit locations).

Here we follow the formalism used in~\cite{fawzietalFOCS2018}. When implementing an ideal circuit under local stochastic noise,
all qubits experience  a random Pauli error  at each time step (gate layer) of the circuit. The only assumption we make is that high-weight errors are exponentially suppressed as quantified by a parameter~$p\in [0,1]$ called the noise strength. To be specific, consider a quantum circuit using~$n$ qubits.
A Pauli error~$E$ is a  random variables on the $n$-qubit Pauli group. Its support~$\supp(E)\subseteq [n]$ is the (random) subset of qubits it acts on non-trivially, i.e., with either~$X$, $Y$, or~$Z$. We call a Pauli error~$E$ a local stochastic error with strength~$p\in [0,1]$, written~$E\sim\cN(p)$, if  any subset of~$k$ qubits is affected by~$E$ with probability at most~$p^k$, i.e., if
\begin{align}\label{eq:introlocalstochasticnoisecondition}
  \Pr\left[F\subseteq\supp(E)\right]&\leq p^{|F|}\qquad\textrm{ for any subset }\qquad F\subseteq [n]\ .
\end{align}

Consider an ideal depth-$d$ circuit that starts in a product state~$\ket{0^n}:=\ket{0}^{\otimes n}$ of $n$~qubits, applies a
unitary~
\begin{align}
    U=U_d\cdots U_1
\end{align}
consisting of~$d$ gate layers~$U_j$ (each of which is a depth-$1$ circuit consisting of one- and two-qubit gates) and subsequently measures a subset of the qubits in the computational basis. We call a noisy execution of this circuit with parameter~$p$
the same kind of process, but with~$U$ replaced by
\begin{align}
U_{\textrm{noisy}}&=E_{d+1}E_{d}U_d\cdots E_2 U_2 E_1 U_1E_0\  ,\label{eq:noisycircuitrealizationex} 
\end{align}
where each error~
\begin{align}
  E_j\sim \cN(p)\qquad\textrm{ for  }\qquad j\in \{0,\ldots,d+1\} \label{eq:localerrorassumptiondem}
\end{align}
is a local stochastic error with strength~$p$. This models both preparation errors (described by~$E_0$), gate execution errors (described by~$E_j$ for $j=1,\ldots,d$) and measurement errors, i.e., bit flips in the measurement outcome (described by~$E_{d+1}$). Importantly, the random variables~$\{E_j\}_{j=0}^{d+1}$ need not be independent, capturing situations where errors are correlated in time. Indeed, condition~\eqref{eq:localerrorassumptiondem} merely imposes constraints on the marginal distributions of the errors~$\{E_j\}_{j=0}^{d+1}$.

Put succintly, the goal of fault-tolerance is to design an ideal circuit in such a way that any noisy implementation of the circuit  still realizes the desired functionality, assuming that
the error strength~$p$ is below some threshold. The noise model considered here makes the corresponding analysis  tractable when working with Clifford circuits: Any local stochastic error can be commuted forward or backward (in time) through  such a circuit while preserving the property of being a local stochastic error. In addition, products of local stochastic errors are themselves local stochastic errors, meaning that errors can be accumulated, i.e., combined into a single local stochastic error. These manipulations change the error strength. More precisely, it is possible to establish conservative upper bounds on the error strength of the resulting error, see~\cite[Lemma~11]{bravyiQuantumAdvantageNoisy2020} for a formal statement of this calculus of local stochastic errors. Importantly, however, these manipulations do not change the nature of the errors: This is  because of the minimal assumptions used in the definition of local stochastic noise.

We note that the practice of commuting Pauli errors forward or backward through a circuit composed of Clifford gates is common throughout the literature. However, in some  earlier work on quantum repeaters, 
the fact that such operations may introduce correlations between errors  when starting from an i.i.d.~error model is sometimes neglected. The concept of local stochastic errors and the corresponding calculus elegantly sidesteps this difficulty: For any Clifford circuit
of constant depth~$d$, it is sufficient to establish resilience against a single local stochastic error at any time step~$t\in\{0,\ldots,d+1\}$ in the circuit, rather than considering the $d+1$-tuple~$(E_0,\ldots,E_{d+1})$: Indeed, any noisy circuit~\eqref{eq:noisycircuitrealizationex}  behaves identically to a circuit of the form
\begin{align}
U_{\textrm{noisy}}&=U_d\cdots U_{t+1} E U_t\cdots U_1 \  ,\label{eq:noisycircuitrealizationexcommuted} 
\end{align}
with a certain local stochastic error~$E\sim\cN(p^{O(1)})$.  A corresponding  upper bound on the error strength follows by applying commutation relations and combining errors according to rules such as those given in~\cite[Lemma~11]{bravyiQuantumAdvantageNoisy2020}. 

\subsubsection*{Single-shot encoded Bell state preparation and surface code readout}
To exemplify this line of reasoning, consider the (ideal) circuit studied~\cite{bravyiQuantumAdvantageNoisy2020}. Starting from a product state~$\ket{0^n}$, it applies a
depth-$6$ circuit~$W$ with nearest-neighbor Clifford gates  to prepare the cluster state~$W\ket{0^n}$ on a lattice~$\cC[d\times d\times d]$. Subsequently, a subset~$\cA$ of qubits is measured in the computational basis, yielding an outcome (string~$s$). Up to a Pauli correction~$\mathsf{Rec}(s)$ determined by the measurement outcome~$s$, this circuit prepares logical Bell state~$\overline{\Phi}$ encoded in two surface codes on the two boundaries of the lattice, that is, we have 
\begin{align}
\mathsf{Rec}(s)(\bra{s}_{\cA}\otimes I_{\cA^c})W\ket{0^n}&\propto\ket{\overline{\Phi}}\ .
\end{align}
Now consider a noisy execution of this circuit. There are three sources of error: Errors in the preparation of the initial state~$\ket{0^n}$, errors during the execution of the circuit~$W$, and errors affecting the single-qubit measurements.  The former two types of errors can be commuted forward in time past the last gate layer of~$W$.  Measurement errors (where an error on a single qubit measurement corresponds to a Pauli-$X$-error before application of the ideal measurement) can be commuted backwards in time. Combining the commuted versions of these errors, it thus suffices to consider the case where  a single local stochastic error~$E$ is applied after the execution of~$W$. All other operations may be assumed to be ideal, i.e., noise-free: The overall process is equivalent to  measuring (using an error-free measurement) a noisy cluster state~$EW\ket{0^n}$, corrupted by a local stochastic error~$E$.

What does a noisy execution of the circuit achieve? It turns out that the procedure still prepares the logical Bell state~$\overline{\Phi}$ up to the known Pauli correction~$\mathsf{Rec}(s)$ (depending on the measurement result~$s$) and an additional residual error~$\mathsf{Rep}(E)$ (depending on the local stochastic error~$E$). The main technical result established in~\cite{bravyiQuantumAdvantageNoisy2020} for this ``single-shot preparation procedure'' shows that this residual error itself is local stochastic, and its error strength can be related to the error strength~$p$ of the original local stochastic error~$E\sim\cN(p)$. Let us state the corresponding result since it motivates our construction. 

\begin{theorem}[Single-shot encoded Bell state preparation~\cite{bravyiQuantumAdvantageNoisy2020}]\label{thm:singleshotbellstateprep}
Suppose $d\geq 4$. Then there is a threshold probability~$p_0$ such that the following holds: Let $\cC[d\times d\times d]$ be the cubic lattice of linear size~$2d-1$ with one qubit per site. Let  $E\sim\cN(p)$ be a local stochastic error with error strength~$p<p_0$. 
Then there is a local stochastic error~$\mathsf{Rep}(E)$ with 
\begin{align}
\mathsf{Rep}(E) \sim \cN(11p^{1/128})\ \label{eq:problRepE}
\end{align}
such that
\begin{align}
\mathsf{Rep}(E)\mathsf{Rec}(s)(\bra{s}_\cA\otimes I_{\cA^c})EW\ket{0^n}\propto \overline{\Phi}_{\cA^{c}}\ 
\end{align}
with certainty, where the probability is taken over the choice of~$E$ and the syndrome (measurement result)~$s$.
\end{theorem}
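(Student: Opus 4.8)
The plan is to reduce the full noisy circuit to a single Pauli fault acting on an ideal cluster state, and then to analyze the resulting decoding problem on the $3$D lattice. By the local‑stochastic calculus summarized around~\eqref{eq:noisycircuitrealizationexcommuted}, every preparation, gate, and measurement error can be commuted past the last layer of~$W$ and merged into a single local stochastic error~$E\sim\cN(p^{O(1)})$ acting on the noiseless cluster state $W\ket{0^n}$, so it suffices to study the ideal measurement of the corrupted state $EW\ket{0^n}$. First I would set up the stabilizer description of this measurement: the products of cluster stabilizers $K_v=X_v\prod_{w\sim v}Z_w$ that are supported entirely on the measured set~$\cA$ and act there as tensor products of~$X$ yield a family of parity checks whose outcomes are deterministic in the noiseless case. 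A Pauli fault~$E$ flips a subset of these checks, and in the RBH geometry these violated checks are precisely the endpoints of ``error strings'' threading the bulk of $\cC[d\times d\times d]$.

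Next I would recall the ideal (fault‑free) analysis: measuring all bulk qubits in the prescribed single‑qubit bases collapses the cluster state onto the logical Bell state~$\overline{\Phi}$ of the two boundary surface codes, up to a Pauli byproduct fixed by the outcomes. Under a fault~$E$, the recorded syndrome~$s$ is that of~$E$ together with this random byproduct, and $\mathsf{Rec}(s)$ is the Pauli correction returned by a minimum‑weight (matching) decoder applied to~$s$. I would then identify the residual $\mathsf{Rep}(E)$ explicitly as the boundary restriction of the product of the true error string and the correction string: wherever the decoder pairs up defects along the same homology class as~$E$, the two strings cancel up to a stabilizer, and the only surviving support of $\mathsf{Rep}(E)$ comes from closed error‑plus‑correction loops.

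The combinatorial heart of the proof, and the step I expect to be the main obstacle, is showing that this residual is again local stochastic with the stated strength $11p^{1/128}$. The key structural fact is that minimum‑weight matching returns a correction no longer than the true error joining the same defects, so on every connected loop of $E\cdot\mathsf{Rec}(s)$ of length~$\ell$ the physical error~$E$ must contribute at least~$\ell/2$ of the edges; otherwise one could shorten the matching. To bound $\Pr[F\subseteq\supp(\mathsf{Rep}(E))]$ for an arbitrary target set~$F$, I would cover the points of~$F$ by such loops, use the local stochastic hypothesis~\eqref{eq:introlocalstochasticnoisecondition} to bound the probability of each required error configuration by a power of~$p$ proportional to its length, and control the number of admissible configurations by a self‑avoiding‑walk / connective‑constant count on the $3$D lattice. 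The resulting geometric series converges for~$p$ below a threshold~$p_0$, and careful bookkeeping of the matching‑graph degrees together with the half‑length loss turns the raw estimate into the form~$\cN(11p^{1/128})$. Extracting the \emph{explicit} constant~$11$ and exponent~$1/128$ is the delicate part, since it requires tracking the worst‑case branching factor of the decoding graph and the minimal loop length through each residual support point, rather than merely exhibiting some finite threshold.
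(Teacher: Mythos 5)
First, a point of orientation: the paper does not prove Theorem~\ref{thm:singleshotbellstateprep} at all. It is imported verbatim from~\cite{bravyiQuantumAdvantageNoisy2020} as background motivating the construction, so your attempt can only be measured against the proof in that reference and against the analogous machinery this paper builds for its own results (the cycle decomposition of Lemma~\ref{lem:cycledecomposition} and the matching bound of Proposition~\ref{prop:maincombinatorics}). With that understood, your skeleton is the right one: the reduction of circuit-level noise to a single local stochastic error on the ideal cluster state via the commutation calculus, the identification of violated cluster-stabilizer checks with endpoints of error strings, the minimum-weight matching correction $\mathsf{Rec}(s)$, and the observation that on any connected component of $\supp(E)\oplus\supp(\mathsf{Rec}(s))$ of size $\ell$ the true error must contribute at least $\ell/2$ edges (else the matching could be shortened) are exactly the ingredients used both in the cited proof and in this paper's Proposition~\ref{prop:maincombinatorics}.

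The genuine gap is in the step you yourself flag as delicate, and it is not merely a matter of bookkeeping: it is the entire content of the theorem. What must be shown is that $\mathsf{Rep}(E)$ is itself a local stochastic error, i.e.\ that $\Pr[F\subseteq\supp(\mathsf{Rep}(E))]\leq q^{|F|}$ \emph{uniformly over all} subsets $F$, with $q=11p^{1/128}$. This is qualitatively stronger than bounding the probability of a single bad event (which is all that a ``cover $F$ by loops, union bound over self-avoiding walks'' argument of the kind in Proposition~\ref{prop:maincombinatorics} delivers). The hard cases are sets $F$ that intersect one large connected component of $E\oplus\mathsf{Rec}(s)$ in many points, or are spread over many components: one must relate $|F\cap C|$ to the guaranteed error weight inside each component $C$, make the bound factorize over components, and absorb the combinatorial factors from the union over all admissible coverings. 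In~\cite{bravyiQuantumAdvantageNoisy2020} this is done by a cascade of weight-reduction lemmas, each of which costs roughly a factor of $2$ in the exponent of $p$; the exponent $1/128=2^{-7}$ records the number of such halvings, while the lattice branching factors you invoke only enter the threshold $p_0$ and the prefactor $11$, not the exponent. Your mechanism (``worst-case branching factor $+$ minimal loop length'') therefore cannot produce the claimed form~\eqref{eq:problRepE}. A second, smaller error: in a lattice with boundaries, $E\oplus\mathsf{Rec}(s)$ decomposes into closed loops \emph{and} open paths terminating on rough boundaries (precisely the dichotomy of Lemma~\ref{lem:cycledecomposition}), and it is the homologically nontrivial boundary-to-boundary paths — not closed loops — that survive as logical faults; your statement that cancellation occurs ``up to a stabilizer'' is valid only for the trivial homology class, so restricting the residual's support to closed loops discards exactly the configurations the theorem is about.
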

We note that Theorem~\ref{thm:singleshotbellstateprep}  primarily constitutes   a proof-of-principle, giving a precise formulation of the idea that the final state is the encoded Bell state up to a locally decaying error. Both the threshold error strength obtained, as well as  the exponent in~\eqref{eq:problRepE} are likely to be more conservative bounds than necessary in reality, and are expected to be too demanding for experimentally relevant parameters ranges. Nevertheless, Theorem~\ref{thm:singleshotbellstateprep} shows that the single-shot encoded Bell state preparation procedure can be combined with suitable surface code protocols that are resilient to local stochastic errors. Indeed, this is the way the cluster state is used in the application of~\cite{bravyiQuantumAdvantageNoisy2020} and follow-up work~\cite{faulttolerantinteractiveadvantage,secondprotocolnoisy}.

One such noise-resilient surface code protocol is the single-shot decoding protocol discussed in~\cite{bravyiQuantumAdvantageNoisy2020}.  It shows that measurement of the logical Pauli-$\overline{Z}$ operator in the surface code can be achieved fault-tolerantly by local operations even in the presence of local stochastic noise. Let us paraphrase the corresponding statement here.
\begin{theorem}[Single-shot logical~$\overline{Z}$-measurement~\cite{bravyiQuantumAdvantageNoisy2020}]
\label{thm:singleshotlogicalZsurface}
Consider a surface code of distance~$d\geq 7$
with~$n=2d^2-2d+1$ qubits.  
Let $E\sim \cN(q)$ be a local stochastic error with strength~$q\leq 0.01$. Then there is a function~$\mathsf{Dec}:\{0,1\}^n\rightarrow\{0,1\}$ such  that the following holds. Let $\overline{\Psi}$ be an arbitrary encoded state. Assume we measure every qubit of the state~$E\overline{\Psi}$ in the computational basis obtaining the outcome~$x\in \{0,1\}^n$. 
Then
\begin{align}
\left|
\Pr_{E,x}\left[
(-1)^{\mathsf{Dec}(x)}=1
\right]- p_{\overline{Z}}(\overline{\Psi})
\right| &\leq 3 \exp(-0.2d)
\end{align}
 where 
 \begin{align}
 p_{\overline{Z}}(\overline{\Psi})&:=\frac{1}{2}\left(1+\bra{\overline{\Psi}}\overline{Z}\ket{\overline{\Psi}}\right)\ .
 \end{align}
\end{theorem}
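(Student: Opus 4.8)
The plan is to recognize this as the standard destructive logical-$\overline{Z}$ readout of the surface code, now analyzed under local stochastic rather than i.i.d.~noise. Since every qubit is measured in the computational basis, only the $X$-type part of the error influences the outcome: writing the Pauli error (up to phases) as a product of $X$- and $Z$-type factors, the $Z$-factors commute with the measurement while the $X$- and $Y$-factors act as bit-flips on the outcome string. First I would argue that the $X$-support $e\subseteq\supp(E)$ inherits the local stochastic property, since $\Pr[F\subseteq e]\leq\Pr[F\subseteq\supp(E)]\leq q^{|F|}$, so the effective bit-flip pattern $e$ is itself $\cN(q)$. Consequently, measuring $E\overline{\Psi}$ yields an outcome $x=x_0\oplus e$, where $x_0$ is the outcome of an ideal (noiseless) computational-basis measurement of $\overline{\Psi}$; by definition of the destructive readout, the logical bit $L(x_0):=\bigoplus_{i\in\gamma}(x_0)_i$ computed along a $\overline{Z}$-representative path $\gamma$ satisfies $\Pr[L(x_0)=0]=p_{\overline{Z}}(\overline{\Psi})$.

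Next I would specify the decoder. From $x$ one computes the $Z$-plaquette syndrome $\sigma(x)$; because $x_0$ is a valid codeword outcome with trivial syndrome, $\sigma(x)=\sigma(e)$ is caused entirely by $e$. Running minimum-weight perfect matching on $\sigma(x)$ produces a correction chain $\hat{e}$ of minimal weight with $\sigma(\hat{e})=\sigma(e)$, and I set $\mathsf{Dec}(x):=L(x)\oplus L(\hat{e})$. Expanding gives $\mathsf{Dec}(x)=L(x_0)\oplus L(e\oplus\hat{e})$. The chain $e\oplus\hat{e}$ is closed (trivial syndrome), so $L(e\oplus\hat{e})=0$ whenever it is homologically trivial and $L(e\oplus\hat{e})=1$ otherwise. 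Writing $\mathcal{F}$ for the failure event that $e\oplus\hat{e}$ carries a nontrivial $\overline{X}$-homology, we have $\mathsf{Dec}(x)=L(x_0)$ off $\mathcal{F}$, whence $\bigl|\Pr[(-1)^{\mathsf{Dec}(x)}=1]-p_{\overline{Z}}(\overline{\Psi})\bigr|\leq\Pr[\mathcal{F}]$.

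It remains to bound $\Pr[\mathcal{F}]$, which is the crux of the argument. Failure requires $e\oplus\hat{e}$ to contain a homologically nontrivial cycle, hence a cycle of length at least the code distance $d$. The minimum-weight property gives $|\hat{e}|\leq|e|$, so on the nontrivial connected component (of length $\geq d$) at least half the edges belong to the genuine error $e$; thus $\mathcal{F}$ forces a connected self-avoiding structure $P$ spanning the code with $|P\cap e|\geq d/2$. I would then apply a Peierls-type union bound: summing over self-avoiding paths of length $\ell\geq d$ (of which there are at most $n\mu^{\ell}$ with square-lattice connective constant $\mu\leq 3$) and using the local stochastic bound $\Pr[(\text{the }\geq\ell/2\text{ error edges of }P)\subseteq e]\leq q^{\ell/2}$ yields a series with ratio $\mu q^{1/2}$, convergent for $\mu^2 q<1$. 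Carrying out this estimate with $q\leq 0.01$ (so $\mu q^{1/2}\leq 0.3$) absorbs the polynomial prefactor $n=2d^2-2d+1$ into the exponential decay and gives the claimed $\Pr[\mathcal{F}]\leq 3\exp(-0.2d)$.

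The main obstacle is this last step: making the combinatorial counting of nontrivial cycles compatible with the local stochastic noise model. Unlike the i.i.d.~case, the events ``these edges are in error'' for different edges are not independent; the local stochastic hypothesis $\Pr[F\subseteq e]\leq q^{|F|}$ is precisely what rescues the union bound, since it upper-bounds the joint probability of any fixed set of error edges by the i.i.d.~value $q^{|F|}$ without requiring independence. Care is needed to handle the $d\geq 7$ boundary regime and to track the explicit constants yielding the exponent $0.2$ and prefactor $3$; these follow from the standard analysis of Dennis et al.~\cite{dennisTopologicalQuantumMemory2002} adapted through this joint-probability bound.
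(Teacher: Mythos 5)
First, a framing point: the paper never proves Theorem~\ref{thm:singleshotlogicalZsurface} itself — it is quoted verbatim from~\cite{bravyiQuantumAdvantageNoisy2020} purely as motivation (the paper explicitly says ``we do not directly use this result''), so there is no in-paper proof to compare against line by line. The closest in-paper analogue is the machinery the authors build for their own single-shot decoder: Proposition~\ref{prop:maincombinatorics} (minimum matchings vs.\ local stochastic subsets), the resilience-function bounds of Lemmas~\ref{lem:upperboundresiliencetdec} and~\ref{lem:upperboundresiliencetdecdual}, and Theorem~\ref{thm:mainsingleshot}. Your proposal follows essentially that same route, which is also the route of~\cite{bravyiQuantumAdvantageNoisy2020} and~\cite{dennisTopologicalQuantumMemory2002}: reduce to $X$-type errors, note that the $X$-support inherits the local stochastic property, run minimum-weight matching on the plaquette syndrome, observe that the decoder fails exactly when $e\oplus\hat e$ is homologically nontrivial, and control the failure probability by a Peierls-type union bound in which the local stochastic hypothesis $\Pr[F\subseteq e]\le q^{|F|}$ substitutes for independence. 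That overall architecture is correct and is the intended argument.

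One step, however, is justified by a false implication. You claim that the \emph{global} minimality inequality $|\hat e|\le|e|$ implies that on the homologically nontrivial component at least half of the edges lie in the true error $e$. This does not follow: the global inequality is perfectly compatible with one component of $e\oplus\hat e$ consisting mostly of $\hat e$-edges, as long as other components compensate. The correct (and standard) argument is local, not global: every connected component $C$ of $e\oplus\hat e$ is a cycle with trivial syndrome, so if $|C\cap\hat e|>|C|/2$ then $\hat e\oplus C$ would be a correction consistent with the same syndrome but of strictly smaller weight, contradicting minimality of $\hat e$; hence $|C\cap e|\ge|C|/2$ holds \emph{on every component separately}, in particular on the nontrivial one. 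This exchange argument is precisely what the proof of Proposition~\ref{prop:maincombinatorics} carries out. With that repair — and granting the constant-chasing you defer (the prefactor $3$, exponent $0.2$, and the role of $d\ge 7$, where one must also be careful that the union bound over boundary-to-boundary paths carries a polynomial-in-$d$ number of anchor points that has to be absorbed into the exponential) — your argument is sound and matches the approach underlying both the cited reference and the paper's own analysis.
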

\noindent (We note that without loss of generality, the error~$E$ can be assumed to be consisting of Pauli-$X$ operators only.)

Since~$p_{\overline{Z}}(\overline{\Psi})$ is the probability of observing the outcome~$+1$ when measuring the logical~$\overline{Z}$-operator, Theorem~\ref{thm:singleshotlogicalZsurface}  provides a way of fault-tolerantly performing a measurement of this operator. While we do not  directly use this result, it is part of the motivation for our single-shot decoding procedure and its analysis.

\subsubsection*{Prior work on quantum repeaters}
A comparative study of different approaches to long distance quantum communication is provided in~\cite{Murali98} where the authors discuss three generations of quantum repeater protocols:

  The first generation of quantum repeaters  is based on heralded entanglement generation between neighboring stations or repeaters (where entanglement is generated using post-selection on suitable measurement outcomes) combined with heralded entanglement purification (where multiple imperfect Bell pairs are used to probabilistically produce higher-fidelity Bell pairs)~\cite{qrgen1.1,qrgen1.2}. These steps are repeated recursively at different distance scales. First-generation protocols of this kind
require at a number of qubits polynomial in the total communication distance~$R$ at each repeater station. The rate (i.e., number of Bell pairs established per channel use) is an inverse polynomial in~$R$, and the total time per Bell pair is given determined by~$\max\{R/c,t_0\}$, where~$t_0$ is the gate time and~$c$ the speed of light (corresponding to classical data transfer).

Second-generation protocols make use of heralded entanglement generation to counteract loss errors, and use  quantum error-correcting codes to deal with operation errors: long-range encoded entanglement is generated by (encoded) entanglement swapping~\cite{qrgen2.1,qrgen2.2,qrgen2.3}. This leads to a reduction in classical two-way communication previously required in the entanglement purification step. The communication rate is improved to an inverse polylogarithmic function of~$R$, whereas the overall time-scale still scales as~$\max\{R/c,t_0\}$.

  Protocols of the third generation use quantum error correcting codes   to address not only operation errors, but also qubit loss~\cite{FowleretalSurfaceCodeCommunication,muralidharanUltrafast,muralidharanErasureErrors,munroQCwithoutthenecessityofquantummemories}. This family of protocols
   proceeds by communicating encoded quantum information: Repeaters apply a recovery operation to each codeblock before retransmitting the entire block to the next repeater. 
   Because this family of protocols can be realized using one-way communication only, it requires little time per transmitted qubit: The time taken per transmitted qubit is  only limited by~$O(t_0)$, where~$t_0$ is the local gate operation time.     This advantage comes at the cost of requiring relatively high-fidelity quantum gates to realize the error correction operation at each repeater. The achieved rate in this generation of protocols is an inverse polylogarithmic function of~$R$.

Our work is a close relative to these third-generation protocols: indeed, one interpretation  of the corresponding architecture is that it encodes logical information into a surface code at each repeater station. We emphasize, however, that this only motivates the scheme similar to the way e.g., cluster-state measurement-based quantum computation is related to braiding surface code anyons. For example, the scheme does not involve intermediate encoding/recovery operations associated with these surface codes.

Purely error-correction based schemes such as those of the third generation have a loss-tolerance limited by 50\% because of the no-cloning theorem.  To sidestep this limitation, the more recent work~\cite{zwergerLongrangeBigQuantumdata2018} combines error correction with hashing-based entanglement purification. In this scheme, intermediate error correction operations are realized based on stabilizer (resource) states prepared at each repeater. The proposed scheme achieves a constant rate of qubit transmission. Here error correction is based on measurement-based gate teleportation using special (graph)  states encoding the recovery map at each repeater location.
While such schemes are ultimately desirable, the generation of these resource states is arguably more challenging to realize with near-term devices than the protocol discussed here.

\section{Our contribution: Main ideas and results}

\subsubsection*{A threshold theorem for long-range entanglement generation}
In this work, we present a protocol for generating a long-range entangled two-qubit Bell state in a fault-tolerant manner in the presence of local stochastic noise. It is based on a 3D cluster state on a lattice of the form~$d\times d\times R$, see Fig.~\ref{fig:longclusterstate}.

\begin{figure}[htbp]
  \centering
  \includegraphics[width=0.9\textwidth]{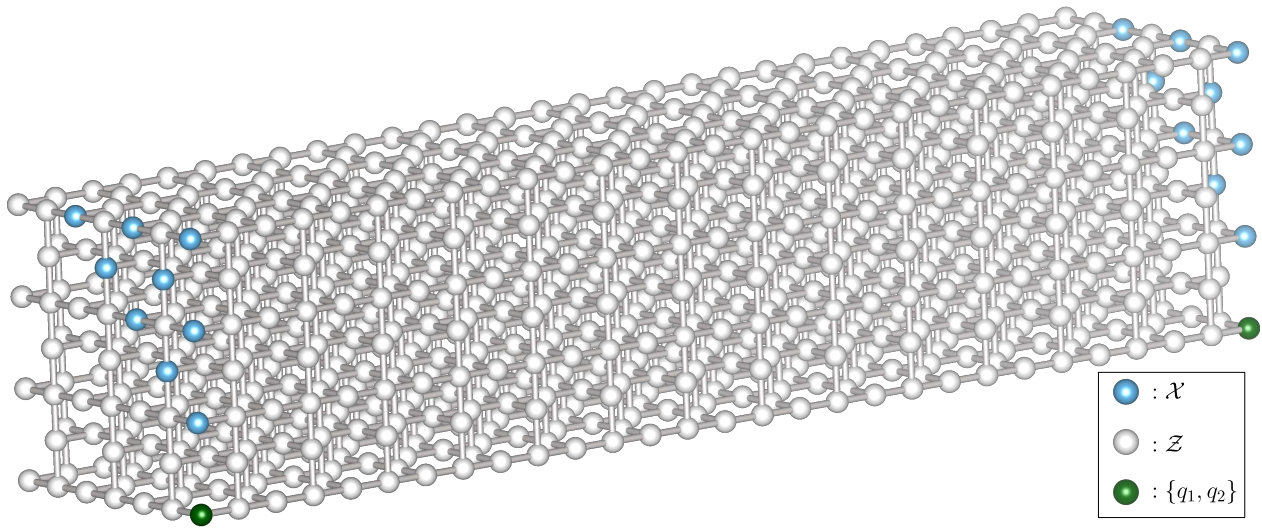}
  \caption{A cluster state on the lattice~$\cC[d\times d \times R]$ with $R \gg d$. Each sphere represents a qubit. Also indicated is the measurement pattern used in our protocol, consisting of single-qubit Pauli-$X$ and Pauli-$Z$ measurement. Entanglement is established between qubits~$q_1$ and $q_2$.}
  \label{fig:longclusterstate}
\end{figure}

To state our main result, we use the following notations: We let~$\cC$ denote the set of qubits  on the cluster state lattice~$\cC[d\times d\times R]$. For a subset~$\Omega\subseteq \cC$, we write
$\ket{0^\Omega}=\bigotimes_{c\in\Omega}\ket{0}_c$ for the all-zero basis state on the qubits~$\Omega$. Similarly, we write $H(\Omega):=\prod_{c\in\Omega} H_c$ for the product of single-qubit Hadamard operators on each qubit of~$\Omega$. When measuring a subset~$\Omega$ in the computational basis, we denote the measurement result by~$s\in \{0,1\}^{\Omega}$, where  $s=(s_c)_{c\in\Omega}$ consists of one measurement outcome bit~$s_c$ associated with each qubit~$c\in\cC$. The standard two-qubit Bell state is denoted
\begin{align}
    \ket{\Phi}=\frac{1}{\sqrt{2}} \left(\ket{00}+\ket{11}\right)\ ,
\end{align}
and the two-qubit Bell basis~$\{\Phi_{(\alpha,\beta)}\}_{(\alpha,\beta)\in\{0,1\}^2}$ consists of the Bell states
\begin{align}
    \Phi_{(\alpha,\beta)}&=(I\otimes Z^{\alpha}X^{\beta})\Phi\ .\label{eq:defbellstate}
\end{align}
Our main result is the following fault-tolerance threshold theorem:
\begin{theorem}[Long-range entanglement generation]\label{thm:mainentanglement}
Let $d\in\mathbb{N}$ with $d\geq 3$ and  $R \in \mathbb{N}$ be odd with $R \geq 3$.
Assume that 
\begin{align}
  R\leq  \frac{1}{d} \left(\frac{1}{10\sqrt{p}}\right)^{d-2} \ .\label{eq:Rupperboundmscaling}
\end{align}
There is a partition 
\begin{align}
    \cC=\cX\cup\cZ \cup \{q_1,q_2\}
\end{align}
of the set of qubits~$\cC=\cC[d\times d\times R]$ into disjoint sets~$\cX$, $\cZ$ and a set~$\{q_1,q_2\}$ consisting of two qubits located at opposite ends of~$\cC$, with coordinates~$u_3=1$ and $u_3=R$, respectively, as well as two  efficiently computable functions
\begin{align}
\begin{matrix}
\alpha: & \{0,1\}^{\cC\backslash \{q_1,q_2\}} & \rightarrow &\{0,1\}\\
\beta: & \{0,1\}^{\cC\backslash \{q_1,q_2\}} & \rightarrow &\{0,1\}
\end{matrix}
\end{align}
such that the following holds. Consider the following protocol:
\begin{enumerate}[(i)]
\item\label{it:clusterstateprepstepmainprotocol}
Prepare the cluster state~$W_{\cC}\ket{0^\cC}$
by applying the local  depth-$6$ Clifford circuit~$W_{\cC}$ 
to qubits associated with the cluster state lattice~$\cC=\cC[d\times d\times R]$.
\item
Measure each qubit except the two  distinguished qubits~$q_1,q_2$ according to the following pattern:
\begin{enumerate}[(a)]
\item
Each qubit~$c\in\cX$ is measured in the Hadamard basis
\item
Each qubit~$c\in \cZ$ is measured in the computational basis.
\end{enumerate}
Let $s\in \{0,1\}^{\cC\backslash \{q_1,q_2\}}$ be the (collection)  of measurement outcomes.
\end{enumerate} Assume that the state after step~\eqref{it:clusterstateprepstepmainprotocol} is corrupted by a local stochastic error~$E\sim\cN(p)$ of strength
\begin{align}
p\leq \frac{1}{5006}\approx 2\times 10^{-4}\ .
\end{align}
Let
\begin{align}
\psi_E(s)\propto (I_{q_1,q_2}\otimes \bra{s}H(\cX)) EW_\cC\ket{0^\cC}
\end{align}
be the post-measurement state on qubits~$\{q_1,q_2\}$. Then  the post-measurement state on qubits $q_1,q_2$ is the Bell basis state~$\Phi_{(\alpha(s), \beta(s))}$ 
 with probability at  least
\begin{align}
    \Pr_{E,s}\left[\psi_E(s)\propto \Phi_{(\alpha(s),\beta(s))}\right]&\geq 1- 5006p\ .
    \end{align}
\end{theorem}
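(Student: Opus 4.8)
The plan is to treat the long lattice as a fault-tolerant surface-code memory in the ``spacelike'' direction whose two encoded qubits are localized onto the physical qubits $q_1,q_2$, and to control the output via the homological/matching picture underlying Theorems~\ref{thm:singleshotbellstateprep} and~\ref{thm:singleshotlogicalZsurface}, but with the relevant constants tracked directly so as to reach the linear-in-$p$ estimate rather than the lossy exponent of~\eqref{eq:problRepE}. First I would fix the bookkeeping. The hypothesis already supplies a single local stochastic error $E\sim\cN(p)$ acting after the depth-$6$ preparation $W_\cC$, which is the commuted form~\eqref{eq:noisycircuitrealizationexcommuted} of full circuit-level noise; so it suffices to analyze an \emph{ideal} measurement of the corrupted cluster state $EW_\cC\ket{0^\cC}$. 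Since every operation is Clifford, I would commute $E$ through the basis-change Hadamards $H(\cX)$ and through $W_\cC$, so that the post-measurement state on $\{q_1,q_2\}$ equals the noise-free post-measurement state acted on by a residual Pauli supported on $\{q_1,q_2\}$, whose $X$- and $Z$-exponents are functions of $E$ and the outcome string $s$. In the noise-free case the cluster stabilizers together with the $\cX$/$\cZ$ measurement pattern pin $q_1,q_2$ to a definite Bell state; this is what defines $\alpha,\beta$ as the output of a minimum-weight matching decoder on the $3$D cell complex of $\cC[d\times d\times R]$, and it gives the ideal correctness $\psi_0(s)\propto\Phi_{(\alpha(s),\beta(s))}$ in the notation of~\eqref{eq:defbellstate}.

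With this in hand, I would characterize the failure event $\{\psi_E(s)\not\propto\Phi_{(\alpha(s),\beta(s))}\}$. By the previous paragraph it occurs exactly when the residual Pauli flips the Bell frame, i.e.\ when the actual error $E$ composed with the decoder's correction closes a homologically nontrivial cycle (a logical fault of one of the two surface codes), or when $E$ directly corrupts the unprotected output region of $q_1$ or $q_2$. Accordingly I would write $\Pr_{E,s}[\mathrm{fail}]\le \Pr[\text{logical fault}]+\Pr[\text{output corruption}]$ and bound the two terms separately. The first term is the genuinely ``surface-code'' part and is where Theorems~\ref{thm:singleshotbellstateprep} and~\ref{thm:singleshotlogicalZsurface} enter conceptually: the bulk measurements realize the single-shot encoded Bell pair, and the boundary (Perseguers) pattern realizes the single-shot logical readout that localizes each logical qubit.

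To bound the logical term I would run a Peierls-type union bound using only the defining inequality~\eqref{eq:introlocalstochasticnoisecondition}: matching fails along a nontrivial chain $\gamma$ only if $E$ covers at least half of $\gamma$, so the probability is at most a sum of terms $p^{\lceil|\gamma|/2\rceil}$ over nontrivial $\gamma$. Bounding the number of chains of a given length by a conservative connectivity constant and noting that the shortest nontrivial chain has length $\Theta(d)$, the geometric series produces a per-location factor of order $(10\sqrt{p})^{d-2}$; summing over the $\Theta(Rd)$ transverse locations along the wire is precisely what the hypothesis~\eqref{eq:Rupperboundmscaling} is designed to control. The output-corruption term, by contrast, is first order in $p$: only a constant-size neighborhood of each of $q_1,q_2$ (the output qubits themselves together with the neighbors whose errors propagate onto them through the measured cluster state) can flip the frame at weight one, and each such qubit contributes at most $p$ by~\eqref{eq:introlocalstochasticnoisecondition}. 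Tracking these constants through the local-stochastic calculus and taking the union bound yields the stated $\Pr_{E,s}[\psi_E(s)\propto\Phi_{(\alpha(s),\beta(s))}]\ge 1-5006p$, with the constant dominated by this linear contribution.

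The hard part will be the logical term. Two things must be done carefully: first, making the matching-decoder correctness precise on the specific elongated complex with the localizing boundary pattern, so that the two frame bits really correspond to the two surface-code logical correlations $X_{q_1}X_{q_2}$ and $Z_{q_1}Z_{q_2}$ of the encoded Bell state; and second, carrying out the chain/cluster counting under genuinely correlated (local stochastic, not i.i.d.) noise, where one may only use marginal bounds of the form~\eqref{eq:introlocalstochasticnoisecondition}, so that~\eqref{eq:Rupperboundmscaling} delivers a usable bound. A secondary but essential difficulty is the error-propagation bookkeeping through the cluster-state measurements needed to isolate cleanly the constant set of critical qubits producing the dominant linear term; keeping the logical and output contributions separate throughout is what allows the sharp constant rather than the pessimistic exponent inherited from a black-box use of Theorem~\ref{thm:singleshotbellstateprep}.
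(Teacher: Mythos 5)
Your high-level architecture does match the paper's: the output is a Bell state whose frame bits are the XOR of a direct-corruption term on $\{q_1,q_2\}$ and a matching-mismatch term (Theorem~\ref{thm:successcondition}); the mismatch term is bounded by a Peierls-type union bound over failure chains, each of which must be at least half covered by the error, using only the marginal bounds~\eqref{eq:introlocalstochasticnoisecondition} (Proposition~\ref{prop:maincombinatorics}); and hypothesis~\eqref{eq:Rupperboundmscaling} is what tames the bulk term. The genuine gap is in your counting of the ``logical'' term. You assert that the shortest nontrivial failure chain has length~$\Theta(d)$, and correspondingly that the whole linear-in-$p$ contribution comes from weight-one errors in a constant-size neighborhood of $q_1,q_2$. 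Both claims are false for this decoder. The chains that flip a frame bit are the paths in $\Zpathext(\Tcldec)$ (resp.\ $\Zpathext(\Tcldecdual)$) with odd overlap with the recovery set $\cLXcl$ (resp.\ $\cLZcl$), and because the recovery sets terminate at the two triangular single-shot-decoding regions, such paths exist at \emph{every} length $\ell=1,2,3,\dots$. Concretely, a single $Z$-error on the qubit $(1,2,1)$ --- which is neither $q_1$ nor $q_2$ nor adjacent to either --- gives $\supp(E_{\mathrm{gl}})$ equal to the single edge $\{(0,2,1),(2,2,1)\}$; both endpoints are external, so the syndrome is empty, the matching is empty, and the protocol outputs $\Phi_{(1,0)}$ while predicting $\Phi_{(0,0)}$. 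Hence the matching-mismatch probability is itself $\Theta(p)$, not $\Theta\bigl(dR(10\sqrt p)^{d-2}p\bigr)$: in Lemmas~\ref{lem:rescLXclbound} and~\ref{lem:rescLZclbound} these corner chains account for the constants $2402$ and $2400$, i.e.\ for $4802$ of the final $5006$, while direct corruption of $q_1,q_2$ contributes only $4p$, and the genuinely bulk-crossing chains (the only ones forced to have length $\geq d$) contribute the $200\,dR(10\sqrt p)^{d-2}p\leq 200p$ term that~\eqref{eq:Rupperboundmscaling} controls.

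As written, your two bins therefore fail to cover the failure event: the corner-anchored chains of length $1\le\ell\le d-1$ are neither errors that ``propagate onto'' $q_1,q_2$ nor chains of length $\Theta(d)$, and they carry the dominant contribution. Were your minimal-length claim true, it would also ``prove'' the false statement that, conditioned on no error on the qubits adjacent to $q_1,q_2$, the protocol fails with probability exponentially small in~$d$ --- contradicted by the weight-one example above. The repair is the stratification the paper actually performs in Lemmas~\ref{lem:rescLXclbound} and~\ref{lem:rescLZclbound} and Theorem~\ref{thm:successprobability}: classify the left-boundary (resp.\ bottom-boundary) external vertices by their graph distance~$s$ to the opposite external boundary; there are only $O(s)$ such vertices for each $s\leq d-1$ (all clustered at the two triangular regions), so the sum $\sum_{s}s\,(10\sqrt p)^{s}$ is bounded by a constant times~$p$ uniformly in $d$ and~$R$, whereas the $\Theta(dR)$ bulk vertices all sit at distance exactly~$d$, and this is the unique place where the hypothesis on~$R$ enters (Corollary~\ref{cor:achievability}). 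This structure --- failure linear in~$p$ because the decoder is unprotected all along the corners of the recovery sets, not merely at the two output qubits --- is precisely what distinguishes single-shot localization from a surface-code memory, and it is what your proposal elides.
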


Recall that the cluster state lattice~$\cC[d\times d\times R]$
has $m=2d^2-2d+1$~qubits 
in each ``sheet'' (with fixed coordinate~$u_3$) along the communication direction. This is the number of qubits that our protocol uses at each repeater station. 
Eq.~\eqref{eq:Rupperboundmscaling} thus implies that, to produce constant-fidelity two-qubit entanglement (under constant-strength noise), our protocol uses~$d=\Theta(\log R)$ or  \begin{align}
  m&=\Theta\left( \log^2 R\right)\ . \label{eq:polylogscalingexplicitprotocol}
\end{align}
 That is, the number~$m$ of qubits per repeater station is polylogarithmic in the communication distance~$R$ for our protocol. 
 We give a converse bound, see Theorem~\ref{thm:nogotheoremmain} below.

The functions~$\alpha$ and $\beta$ involve solving minimum matching problems on certain sublattices of~$\cC$. With Edmond's algorithm~\cite{Edmonds1965,lawler2001,gabow1976,wangThresholdErrorRates2009}, 
they can be evaluated in a time~$O(d^6 \cdot R^3)$. In the context of surface code decoding, it has been shown that parallel processing leads to a polylogarithmic runtime instead~\cite{duclos-cianciFastDecodersTopological2010, devittClassicalProcessingRequirements2010}. We expect that similar approaches are applicable here.

Our scheme  is essentially identical to the procedure of~\cite{perseguersFidelityThresholdLongrange2010}, but constructed in a modular manner: We introduce a new primitive we call single-shot decoding of surface codes. Combining this with the Raussendorf, Bravyi and Harrington-scheme~\cite{raussendorfLongrangeQuantumEntanglement2005}, i.e., applying single-shot decoding to each of the two surface codes yields our scheme. Our work thus gives a reinterpretation of the scheme~\cite{perseguersFidelityThresholdLongrange2010} along with a more general error analysis.

\subsubsection*{Optimality: A converse to the threshold theorem}
We study to which extent the protocol provided by Theorem~\ref{thm:mainentanglement} is optimal among protocols  with low latency, i.e., where the entanglement is created after a constant amount of time (modulo local correction operations).   
Specifically, we consider protocols that are given by constant-depth quantum circuits, and ask if the relationship between the number~$m$ of qubits at each repeater station and the communication distance~$R$ 
can be better than polylogarithmic (cf.~\eqref{eq:polylogscalingexplicitprotocol}). We find that this is not the case, i.e., the  scaling of our protocol is essentially optimal.

The class of protocols we consider are given by constant-depth circuits. A depth-$\Delta$ circuit for entanglement generation  \begin{enumerate}[(i)]
\item
starts with a product state among all qubits, where each repeater (site) has $m$~qubits,
\item
applies a depth-$\Delta$ circuit consisting of single-qubit measurements, single-qubit gates and two-qubit gates between qubits at a repeater (site) and/or between nearest-neighbor sites. Constant-depth here means that~$\Delta$ is a constant (independent of~$R$).
\item
returns a state on two qubits~$\{q_1,q_2\}$ located at the two ends of the communication line at distance~$R$. This state is supposed to be the  Bell state~$\Phi$  up to a local (i.e., tensor product) correction that can be computed from the measurement outcomes.
\end{enumerate}
We allow for adaptivity, i.e., unitaries can be classically controlled by measurement outcomes obtained during the execution of the circuit. Our  main no-go result applies to all such protocols irrespective of whether or not the classical computations determining these adaptive actions are efficient.

Our protocol is of this kind (for qubits arranged on the cluster lattice~$\cC[d\times d\times R]$). Also of this kind are protocols proposed in the seminal work~\cite{acinciraclewenstein} on ``entanglement percolation''. The work~\cite{acinciraclewenstein} considers a unitary error model where  the 
resource state has a non-maximally entangled Bell state on each link. In addition to proposing protocols for this setting for 2D networks, it is shown in~\cite{acinciraclewenstein} that for a 1D~chain of qubits, long-distance entanglement cannot be created by any single-shot protocol: Assuming any constant error per link, the fidelity of the resulting Bell state decays exponentially with distance.

We prove a converse to our ``achievability'' Theorem~\ref{thm:mainentanglement}.  It generalizes the statement of~\cite{acinciraclewenstein} to a setting where instead of a  single (pair) of qubits per site (along the communication line), each site can have a polylogarithmic number of qubits. In addition, we allow constant-depth circuits instead of local measurements only. We find a local stochastic error of constant strength which leads any such protocol to fail if the communication distance~$R$ is too large compared to the number~$m$ of bits at each repeater station. That is, we show the following, see Corollary~\ref{cor:separabilityconverse} for a detailed version of this bound.

\begin{theorem}[Converse for low latency schemes]\label{thm:nogotheoremmain}
Let~$\pi$ be a protocol for long-range entanglement generation over distance~$R$ which 
\begin{enumerate}[(i)]
\item
uses $m$~qubits per repeater,
\item
is realized by a constant-depth (adaptive) circuit 
\item
is resilient to arbitrary local-stochastic noise of strength~$p$ below a constant threshold~$p_0$, i.e., produces a constant-fidelity Bell pair.
\end{enumerate}
Then we must have 
\begin{align}
    m&=\Omega\left(\log R\right)\ .\label{eq:mlowerboundgeneral}
\end{align}
\end{theorem}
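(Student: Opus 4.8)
The plan is to prove the contrapositive: I will exhibit a single local-stochastic error of constant strength $p\le p_0$ that forces the output qubits $q_1,q_2$ into a \emph{separable} state (conditioned on the measurement record) whenever $m$ is smaller than $c\log R$, so that the corrected two-qubit state has fidelity at most $1/2$ with $\ket{\Phi}$ and the protocol cannot produce a constant-fidelity Bell pair. This generalizes the one-qubit-per-site, measurement-only obstruction of~\cite{acinciraclewenstein}, where a constant error per link makes the end-to-end fidelity of a 1D chain decay; here each site carries $m$ qubits and the dynamics is an adaptive constant-depth circuit, so the link that must be severed is a whole \emph{slab} of $O(\Delta)$ consecutive sites rather than a single link.

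The geometric heart of the argument is a decoupling lemma. I tile the communication line into $N=\Theta(R/w)$ disjoint windows of constant width $w=O(\Delta)$, each a slab of $wm$ qubits; since $q_1,q_2$ sit at the two ends, each interior window separates $q_1$ (to its left) from $q_2$ (to its right). I claim that applying the completely depolarizing channel $\cD$ to all qubits of one window $M$, at a suitably chosen layer of the circuit, makes the final joint state a product across the left part (carrying $q_1$ and the left measurement outcomes) and the right part (carrying $q_2$ and the right outcomes). The reason is the bounded light cone of a depth-$\Delta$ circuit: after the wall, no forward light cone of a right-side qubit can reach the left of $M$ within the remaining depth, and symmetrically for the backward cones, so the maximally mixed, uncorrelated slab $M$ blocks all quantum and classical influence between the two sides. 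Conditioning on any outcome string $s$ then leaves $q_1,q_2$ in a product state, and the final local Pauli correction $\Phi_{(\alpha(s),\beta(s))}$ preserves separability, so $\langle\Phi|\cdot|\Phi\rangle\le 1/2$ for the corrected state.

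It remains to realize a depolarizing wall by a genuinely low-strength local-stochastic error. I take the error that, with probability $q$, selects a uniformly random window among the $N$ windows and applies an independent uniformly random Pauli to every qubit in it, and is the identity otherwise (all other noise layers being set to identity, which is a valid realization). Averaging over the random Pauli reproduces $\cD$ on the chosen window, so by the lemma this error contributes fidelity at most $1/2$, giving $F\le 1-q/2$. For the strength, a set $F$ contained in a single window is fully corrupted with probability $q\,N^{-1}(3/4)^{|F|}$ while sets spanning two windows are never fully corrupted, so $E\sim\cN(p)$ holds as soon as $q\le N(4p/3)^{wm}$. Choosing $p=p_0$ and $q$ slightly above $2(1-F_0)$ (the threshold needed to push $F$ below the success fidelity $F_0$) is feasible precisely when $N(4p_0/3)^{wm}>2(1-F_0)$, i.e.\ when
\[
m \;\le\; \frac{\log(R/w)-\log\!\big(2(1-F_0)\big)}{w\,\log\!\big(3/(4p_0)\big)}\ .
\]
Thus a protocol of the stated kind fails for all such $m$, and conversely resilience forces $m=\Omega(\log R)$, with the implicit constant depending only on $\Delta$, $p_0$ and $F_0$. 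This is the detailed bound to be recorded in Corollary~\ref{cor:separabilityconverse}.

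The main obstacle is making the decoupling lemma fully rigorous for \emph{adaptive} circuits. I must verify that classically controlled gates do not transmit information faster than the light cone: measurement outcomes may steer later gates, but in a constant-depth nearest-neighbor circuit such control signals still propagate at most $O(\Delta)$ sites. I also need the light-cone bookkeeping to be tight enough to fix the injection layer and the width $w=O(\Delta)$ so that both the backward cone of $q_2$ and the forward influence onto $q_1$ are absorbed by the slab. Since the argument uses only locality of the quantum circuit and of the classical control, it is insensitive to the computational cost of evaluating the adaptive corrections, matching the generality claimed in Theorem~\ref{thm:nogotheoremmain}.
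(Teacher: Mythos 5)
Your high-level counting is the same one that powers the paper's proof: a single catastrophic event that fully corrupts $\Theta(\Delta m)$ qubits in one of $\Theta(R)$ cross-sectional slabs costs only $p^{\Theta(\Delta m)}$ in local-stochastic strength, so resilience forces $m=\Omega(\log R)$; your bookkeeping ($q\le N(4p/3)^{wm}$, fidelity $\le 1-q/2$, the final inequality for $m$) is all correct. The genuine gap is the decoupling lemma, which is the entire technical content of your argument, and as justified it is not provable in the paper's model. The model explicitly allows adaptive gates controlled by arbitrary functions of \emph{all} previously obtained outcomes, with the classical processing not counted in the depth; classical control therefore propagates globally within a single layer, so the statement you propose to verify (``control signals still propagate at most $O(\Delta)$ sites'') is false. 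More fundamentally, light-cone reasoning is invalid for states \emph{conditioned on measurement outcomes}: entanglement swapping -- and indeed the paper's own protocol -- creates entanglement between $q_1$ and $q_2$ whose backward light cones are disjoint. A single-layer wall can accordingly be evaded: if you inject it at layer $0$, the protocol can measure every slab qubit in the $Z$ basis and reset it with a classically controlled $X$ (a maximally mixed qubit, once measured, is a known pure state) and then run its swapping afterwards; if you inject it late, a protocol may already have completed its swapping, and the conditional entanglement between the ends no longer resides in the slab; a protocol repeating the swap $O(\Delta)$ times with a selection step plausibly defeats every single choice of injection layer. So the lemma is false as a statement about all constant-depth adaptive circuits, not merely unproven.

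The repair uses a feature of local stochastic noise you did not exploit: only the marginals $E_j\sim\cN(p)$ at each time step are constrained, while arbitrary correlations \emph{across} time steps are allowed. Let the adversary pick the window once and apply independent uniform Paulis to its qubits at every layer $j\in\{0,\dots,\Delta+1\}$; each marginal then satisfies exactly your bound, so the strength accounting is unchanged, but now the slab is re-randomized after every gate layer. Measurements inside it return uniform bits uncorrelated with everything, gates coupling either side to it induce only local channels on that side (the maximally mixed slab state factorizes), and no gate couples the two sides directly; hence, conditioned on all outcomes, the joint evolution of the two sides is LOCC across the cut, and the output on $q_1q_2$ stays separable -- which is the conclusion your light-cone argument was aiming for. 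The paper sidesteps the issue differently: it rewrites every inter-repeater $\CNOT$ via gate teleportation consuming a pre-shared Bell pair, turning $\pi$ into an LOCC protocol $\pi'$ with $t\le\Delta m$ Bell pairs per link, and then uses that the depolarizing channel is a convex mixture of the identity and an entanglement-breaking channel to sever one link with probability $p^t$; adaptivity is then handled for free, because LOCC can never recreate entanglement across a separable cut.
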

The proof of  Theorem~\ref{thm:nogotheoremmain} replaces the analysis of a low-latency protocol~$\pi$ by that of an entanglement-assisted  protocol~$\pi'$, and  follows from the fact that for any $p\in (0,1)$,  the qubit depolarizing channel $\cE_p(\rho)=(1-p)\rho+p \tr(\rho)\frac{I}{2}$ 
 is a convex combination of an entanglement-breaking channel~\cite{ebchannelshorodecki} and the identity channel (see Section~\ref{sec:converseboundlowlatency} for details). It implies that local stochastic noise with any constant noise strength prevents entanglement generation over distances greater than exponential in the number of qubits at each site. In particular, it shows that the protocol discussed in Theorem~\ref{thm:mainentanglement} is essentially optimal 
 among all low-latency protocols:  Comparing~\eqref{eq:mlowerboundgeneral} with~\eqref{eq:polylogscalingexplicitprotocol}, we conclude that
the polylogarithmic scaling of the number~$m$ of qubits per site in the protocol of Theorem~\ref{thm:mainentanglement} can at best be improved to a logarithmic scaling in~$R$. 

Closing the log-factor gap between  the converse Theorem~\ref{thm:nogotheoremmain} and achievability  Theorem~\ref{thm:mainentanglement} appears to require a new approach, however: We can show that the analysis of Theorem~\ref{thm:mainentanglement} leading to the scaling~\eqref{eq:polylogscalingexplicitprotocol} is tight. In fact, the scaling is necessary for any procedure which is based on the cluster state and uses the same syndrome information using  functions $\alpha$ and $\beta$ as in our protocol: We show the lower bound
\begin{align}
    m=\Omega(\log^2 R)
\end{align}
for any such protocol (see Theorem~\ref{thm:mainconversebound} for a detailed statement). This result is obtained by a detailed analysis of the associated decoding problem, see Section~\ref{sec:converseboundcluster}. It applies, in particular, to natural variants of our scheme that are obtained, e.g., by replacing Edmond's minimal matching algorithm by other procedures, including heuristic algorithms.
This indicates that improving  over our protocol will require a different resource state or measurement pattern.

\subsubsection*{Combining cluster-state generation and single-shot decoding}
Let us briefly discuss the main ideas underlying our work, giving a high-level overview focused on the two main building blocks:
\begin{enumerate}[(a)]
\item\label{it:clusterstategenerationprocedure}
  The generation of a cluster state on an elongated lattice~$\cC[d\times d\times R]$  and subsequent single-qubit measurement of ``bulk'' qubits located at sites not belonging to the two boundaries with third coordinate equal to~$x_3=1$ and $x_3=R$, respectively. Up to local unitaries, this part of the procedure is identical to
  the original  scheme of~\cite{raussendorfLongrangeQuantumEntanglement2005}. As shown in~\cite{bravyiQuantumAdvantageNoisy2020} (see Theorem~\ref{thm:singleshotbellstateprep} above), this generates a surface-code encoded Bell pair up to a (computable) Pauli correction and residual local stochastic noise.
\item\label{it:singleshotdecodingprocedure}
  A novel single-shot decoding procedure for transferring surface-code encoded quantum information onto a single physical qubit. We show that this procedure is resilient to local stochastic noise: There is  a threshold $p^{\textrm{dec}}_0$ such that if the initial encoded state~$\overline{\Psi}$ is corrupted by a local stochastic $E\sim \cN(p)$ with noise strength $p\leq p^{\textrm{dec}}_0$, then the fidelity of the resulting single-qubit state with the logical state~$\overline{\Psi}$ is exponentially close to~$1$ as a function of the code distance~$d$. We refer to Theorem~\ref{thm:singleshotdecoding} below for details.
    \end{enumerate}

From a conceptual point of view, it is clear that combining~\eqref{it:clusterstategenerationprocedure} with~\eqref{it:singleshotdecodingprocedure} results in a procedure for generating long-range entanglement: Theorem~\ref{thm:singleshotbellstateprep} guarantees that 
the state after the first part of the procedure is essentially an encoded Bell state  corrupted by a local stochastic error. Applying the single-shot decoding procedure to each of the two surface codes then results in a high-fidelity two-qubit Bell state. Indeed, our result (Theorem~\ref{thm:singleshotdecoding} below) ensures that the local stochastic error is dealt with properly in the single-shot decoding phase.

While our scheme for long-range entanglement generation is indeed a direct combination of steps~\eqref{it:clusterstategenerationprocedure} with~\eqref{it:singleshotdecodingprocedure}, our analysis of the resulting fidelity is not simply a combination of the analytical results stated above for each module~\eqref{it:clusterstategenerationprocedure} and~\eqref{it:singleshotdecodingprocedure}. This is because a naive combination of these analytical building blocks only provides a proof-of-principle: A threshold result derived in this way does not yield parameters of practical relevance. Instead, we establish a significantly tighter threshold estimate by a direct analysis of the achieved fidelity for the entire (combined) process.

\subsubsection*{Fault-tolerant single-shot decoding for surface codes}
\begin{figure}
  \centering
    \includegraphics[width=.6\textwidth]{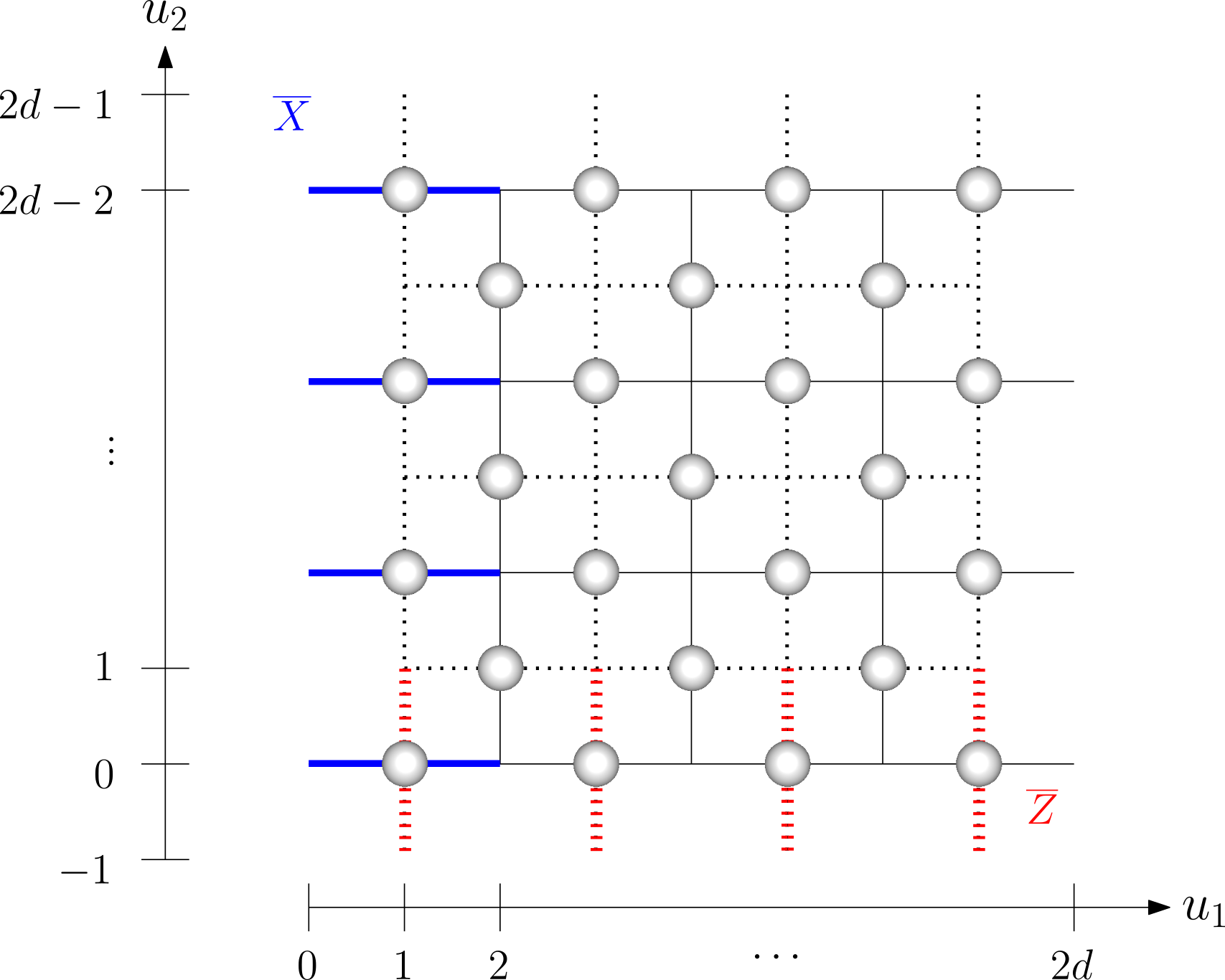}
    \caption{The surface code of distance $d$ with the lattice (solid) and its dual (dotted). Each sphere represents a qubit. Logical string operators $\overline{X}$ (blue, solid edges) and $\overline{Z}$ (red, dotted edges) are illustrated.}
    \label{fig:surfacecodelogical}
\end{figure}

In the following, 
we consider a single qubit encoded into a surface code using an isometric encoding map $V:\mathbb{C}^2\rightarrow(\mathbb{C}^{2})^{\otimes n}$ that maps the single-qubit Pauli operators to the ``canonical'' logical string operators 
$\overline{X}=VXV^\dagger$ and ~$\overline{Z}=VZV^\dagger$ of the surface code, see Fig.~\ref{fig:surfacecodelogical}.  Let us write $\overline{\Psi}:=V\Psi$ for the encoded version of a physical qubit~$\Psi\in\mathbb{C}^2$. Our main  building block is a decoding procedure for the surface code that can be summarized as follows. It is illustrated in Fig.~\ref{fig:surfacecodemeasurementpattern}. 

\begin{figure}[h]
    \centering
    \includegraphics[width=0.7\textwidth]{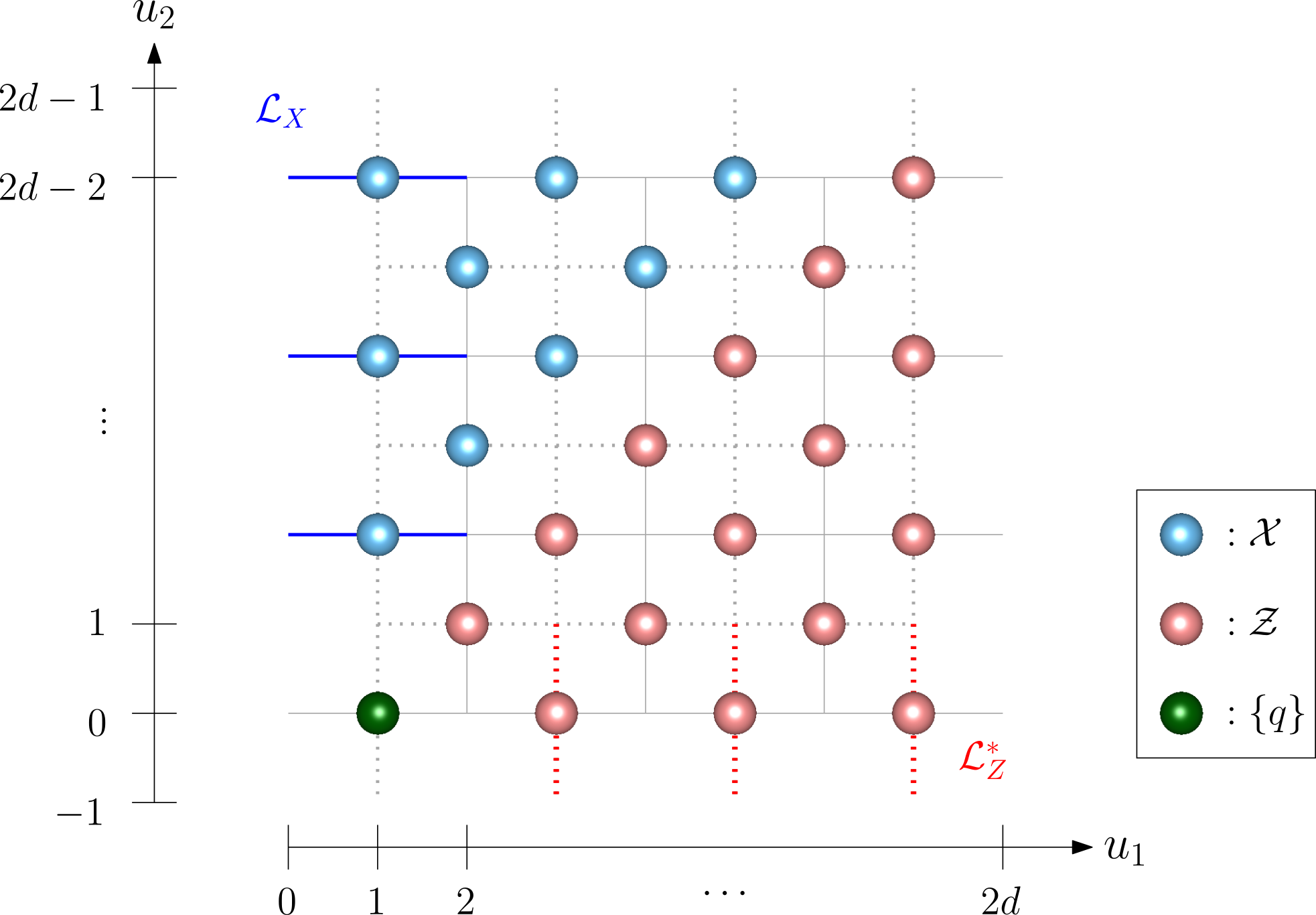}
    \caption{The measurement pattern for single-shot surface code decoding. The recovery sets $\cL_X \subset \cX$ (blue, solid edges) and $\cL_Z^{*} \subset \cZ$ (red, dotted edges) for fault-tolerant measurement of $\overline{X}$ and $\overline{Z}$ are illustrated.}
    \label{fig:surfacecodemeasurementpattern}
\end{figure}

\begin{theorem}[Single-shot surface code decoding]\label{thm:singleshotdecoding} Consider a distance-$d$ surface code with~$d\geq 2$ consisting of~$n=2d^2-2d+1$ qubits. 
Then there is a partition~$[n]=\cX\cup\cZ\cup \{q\}$
of the set of qubits into disjoint sets~$\cX$, $\cZ$ and a set~$\{q\}$ consisting of a single qubit~$q$, and two  efficiently computable functions 
\begin{align}
\begin{matrix}
\alpha: & \{0,1\}^{\cC\backslash \{q\}} & \rightarrow &\{0,1\}\\
\beta: & \{0,1\}^{\cC\backslash \{q\}} & \rightarrow &\{0,1\}
\end{matrix}
\end{align}
such that the following holds. Consider the following protocol applied to a surface-code encoded logical state~$\overline{\Psi}:=V\Psi$ corrupted by 
a local stochastic error~$E\sim\cN(p)$ of strengh
\begin{align}
p \leq \frac{1}{144}\approx 7\times 10^{-3} \ ,
\end{align}
where  $\Psi\in\mathbb{C}^2$ is arbitrary:
\begin{enumerate}[(a)]
\item
Each qubit~$c\in\cX$ is measured in the Hadamard basis, and
\item
each qubit~$c\in \cZ$ is measured in the computational basis.
\end{enumerate}
Let $s\in \{0,1\}^{\cC\backslash \{q\}}$ be the (collection of) measurement outcomes, and let 
\begin{align}
    \psi_E(s)\propto (I_{q}\otimes \bra{s}H(\cX))EV\ket{\Psi}
\end{align}
be the corresponding  post-measurement state on qubit~$q$. Then
\begin{align}
    \Pr_{E,s}\left[Z_q^{\alpha(s)}X_q^{\beta(s)}\psi_E(s)\propto \Psi\right]\geq 1-94p \ .
\end{align}
\end{theorem}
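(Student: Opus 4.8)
The plan is to exploit the CSS structure of the surface code so that the $X$- and $Z$-logical information decouple, and to read the measurement pattern as a single-shot transport of the two logical string operators onto the output qubit~$q$. I would first fix representatives $L_X$ and $L_Z$ of $\overline{X}$ and $\overline{Z}$ with $L_X\subseteq \cX\cup\{q\}$ and $L_Z\subseteq \cZ\cup\{q\}$, both terminating at the distinguished qubit~$q$; these are exactly the recovery sets $\cL_X$ and $\cL_Z^*$ of Fig.~\ref{fig:surfacecodemeasurementpattern}. Writing an arbitrary Pauli error as $E=E_XE_Z$ up to phase, the computational-basis measurements on $\cZ$ are sensitive only to $E_X|_{\cZ}$ and the plaquette ($Z$-type) syndrome, while the Hadamard-basis measurements on $\cX$ are sensitive only to $E_Z|_{\cX}$ and the vertex ($X$-type) syndrome. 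This lets me treat the determination of $\alpha(s)$ (the $\overline{X}$-readout) and of $\beta(s)$ (the $\overline{Z}$-readout) as two independent surface-code decoding problems.

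For the ideal (noise-free) case I would track the measurement in the stabilizer formalism. Because $V$ is a Clifford encoding and all measured observables are Pauli, pushing $X_c$ for $c\in\cX$ and $Z_c$ for $c\in\cZ$ through the projector $\bra{s}H(\cX)$ replaces each such factor by the known sign $(-1)^{s_c}$ (using $\bra{s}_cH_cX_c=(-1)^{s_c}\bra{s}_cH_c$ on $\cX$ and $\bra{s}_cZ_c=(-1)^{s_c}\bra{s}_c$ on $\cZ$). Applied to $L_X$ and $L_Z$, this shows that on the post-measurement state $\overline{X}$ acts as $X_q$ times $\prod_{c\in L_X\cap\cX}(-1)^{s_c}$ and $\overline{Z}$ acts as $Z_q$ times $\prod_{c\in L_Z\cap\cZ}(-1)^{s_c}$. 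Hence the encoded state is transported onto~$q$ up to a Pauli $Z_q^{\alpha_0(s)}X_q^{\beta_0(s)}$ with $\alpha_0,\beta_0$ these raw parities, which establishes the claim with certainty when $E=I$.

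For the noisy case the two raw parities are corrupted: $E_Z|_{\cX}$ flips some $X$-outcomes and $E_X|_{\cZ}$ flips some $Z$-outcomes, but the very same errors displace the corresponding syndromes so that they pin down the endpoints of the error strings. I would therefore define $\alpha(s)$ and $\beta(s)$ by correcting the raw parities with a minimum-weight perfect matching on the vertex- and plaquette-syndrome graphs respectively (evaluable by Edmonds' algorithm). The key reduction is then that $Z_q^{\alpha(s)}X_q^{\beta(s)}\psi_E(s)\propto\Psi$ holds exactly unless the actual error combined with the matching correction is homologically nontrivial along~$L_X$ or~$L_Z$, i.e.\ unless a logical error chain is formed—the standard equivalence between decoding failure and a spanning chain, adapted to the open boundaries of the planar code.

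The main obstacle is the failure bound, and in particular obtaining a bound \emph{linear} in~$p$ that holds uniformly for all $d\geq 2$. One contribution is irreducible and local: any nontrivial Pauli on~$q$ itself produces no syndrome on the measured qubits, so it cannot be diagnosed and directly corrupts the transported state; by local stochasticity this event has probability at most~$p$. The remaining failures require the error plus correction to span opposite boundaries (or to connect a boundary to~$q$). I would bound these by a Peierls / self-avoiding-walk count: a spanning chain of length~$\ell$ forces at least~$\ell/2$ of its qubits into $\supp(E)$, so using $\Pr[F\subseteq\supp(E)]\leq p^{|F|}$ and the local-stochastic calculus of \cite[Lemma~11]{bravyiQuantumAdvantageNoisy2020}, its contribution is at most roughly $(\text{number of length-}\ell\text{ chains})\cdot\binom{\ell}{\ell/2}\,p^{\ell/2}$. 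Bounding the number of chains by $c^{\ell}$ turns the sum into a geometric series that converges precisely when $p\leq 1/144$, with leading (length-two) term linear in~$p$; adding the irreducible $q$-term and the two independent $X$- and $Z$-sectors produces the stated constant. I expect the genuinely delicate points to be the careful treatment of the neighborhood of~$q$ and of the two open boundaries, where the shortest uncorrectable chains live and where the constant~$94$ is pinned down, and confirming that the matching never routes an otherwise-correctable error into a logical fault through~$q$.
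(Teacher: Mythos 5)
Your proposal follows essentially the same route as the paper: the same recovery sets $\cL_X,\cL_Z^*$ as logical representatives meeting only at $q$, the same CSS split into two independent matching-based decoding problems, the same characterization of failure as (error on $q$) XOR (a boundary-to-boundary chain in the error-plus-matching with odd overlap with the recovery set), and the same Peierls-type count — using minimality of the matching to force at least half of each chain into $\supp(E)$, then a self-avoiding-walk/geometric-series bound — which is exactly the paper's resilience-function machinery (Proposition~\ref{prop:maincombinatorics}, Lemmas~\ref{lem:upperboundresiliencetdec} and~\ref{lem:upperboundresiliencetdecdual}) assembled via union bound into $p+p+54p+38p=94p$. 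The delicate point you flag about $q$ is resolved in the paper precisely as your structure suggests: $q$ is not an edge of either decoding graph, so the matching can never touch it, and the error on $q$ enters only as the separate, undiagnosable term bounded by $p$ in each sector.
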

This shows that the logical state encoded in the surface code can be transferred to a single qubit in a fault-tolerant manner, up to a Pauli correction which is determined by the measurement outcomes.  We refer to Theorem~\ref{thm:mainsingleshot} below for a more detailed statement.  Similarly as before,  the functions~$\alpha,\beta$ are defined in terms of minimum matchings, and can be computed in time~$O(n^3)$ in this case.

 We note that  Theorem~\ref{thm:singleshotdecoding} immediately yields a protocol for fault-tolerantly measuring the logical Pauli-$\overline{Z}$ operator, thus subsuming Theorem~\ref{thm:singleshotlogicalZsurface}. The corresponding measurement protocol  proceeds by  applying a computational basis measurement on the post-measurement state~$\psi_E(s)$  on qubit~$q$ with outcome~$m$, and outputting the corrected bit~$m\oplus \beta(s)$.  In a similar manner, one can realize a fault-tolerant logical $T$-measurement as considered in~\cite{Litinski2019magicstate}.  Beyond measuring~$\overline{Z}$ or~$\overline{T}$, Theorem~\ref{thm:singleshotdecoding} also permits to execute any logical measurement on the encoded information. More generally, as the information ends up being localized on qubit~$q$, it can be acted on in subsequent quantum-information-processing protocols.

The protocol provided in Theorem~\ref{thm:singleshotdecoding} bears some similarity to teleportation: encoded information is essentially teleported to a single physical qubit. To illustrate the basic 
underlying idea, consider a CSS-stabilizer code encoding a single qubit, with a property that logical $\overline{X}$- and $\overline{Z}$-Pauli operators can be chosen in a way that they overlap only in a single qubit~$q$. As a concrete example, consider
the stabilizer code with generators
\begin{align}
  S_1 &=X_1X_3\\
  S_2 &=Z_1Z_2Z_3\ .
\end{align}
It encodes  a single logical qubit into three physical qubits. We can choose logical operators as
\begin{align}
  \overline{X}&=X_1X_2\\
  \overline{Z}&=Z_1Z_3\ ,
  \end{align}
  which has the desired property (with $q$ being the first qubit). Suppose~$\overline{\Psi}\in(\mathbb{C}^2)^{\otimes 3}$ is an encoded state associated with~$\Psi\in\mathbb{C}^2$. Assume we measure qubit~$2$ in the Hadamard basis, obtaining outcome~$x\in\{0,1\}$, and measure qubit~$3$ in the computational basis getting the outcome~$z\in \{0,1\}$. This corresponds to measuring the observables~$X_2$ and~$Z_3$. Then it is easy to check that the post-measurement state~$\varphi(x,z)\in\mathbb{C}^2$ on qubit~$1$  has expectation values
  \begin{align}
  \bra{\varphi(x,z)} X\ket{\varphi(x,z)}&=(-1)^x \bra{\overline{\Psi}}\overline{X}\ket{\overline{\Psi}}\\
  \bra{\varphi(x,z)} Z\ket{\varphi(x,z)}&=(-1)^z \bra{\overline{\Psi}}\overline{Z}\ket{\overline{\Psi}}\ .
    \end{align}
 It follows immediately that the state~$Z^xX^z\ket{\varphi(x,z)}$ has identical expectation values as the logical state~$\overline{\Psi}$, and hence must be proportional to~$\Psi$. That is, the logical state~$\overline{\Psi}$ is decoded, i.e., transferred to the first physical qubit. We note, however, that the process described here is not fault-tolerant. In particular, the measurement of the observables~$X_2$ and $Z_3$ is not resilient to errors and may lead to an incorrect ``correction'' operation~$Z^xX^z$.

 Here we argue that for the surface code, an analogous process can in fact be made fault-tolerant. The idea is to use logical operators that again only overlap in a single qubit~$q$, i.e.,
 \begin{align}
 \overline{X}&=X_q X(\cL_X)\\
 \overline{Z}&=Z_q Z(\cL_Z^{*})\ ,
 \end{align}
 where~$\cL_X$ and $\cL_Z^{*}$ are disjoint subsets of qubits, see Fig.~\ref{fig:surfacecodemeasurementpattern}. The goal, then, is to fault-tolerantly measure the observables~$X(\cL_X)$ and $Z(\cL_Z^{*})$. Such a procedure can be obtained by modifying the protocol of Theorem~\ref{thm:singleshotlogicalZsurface}. The latter gives a fault-tolerant way of measuring~$\overline{Z}$ by measuring  all qubits in the computational basis, and, by considering the dual lattice, a way of measuring the logical operator~$\overline{X}$ by measuring all qubits in the  Hadamard basis. 
 To simulate the measurement of both~$X(\cL_X)$ and $Z(\cL_Z^{*})$ at the same time, we use a  pattern of single-qubit measurements that simultaneously measures a subset of qubits in the computational basis, and a subset of qubits in the Hadamard basis. The measurement pattern is ``triangular'' as shown in Fig.~\ref{fig:surfacecodemeasurementpattern}. We show that this procedure has the properties stated in Theorem~\ref{thm:singleshotdecoding}.

\subsubsection*{Outline}
In Section~\ref{sec:singleshotdecodingprotocol}, we introduce our single-shot decoding protocol for the surface code accompanied with background material. In Section~\ref{sec:boundmatchingmain}, we provide a combinatorial framework for analyzing  matching problems
 arising in the analysis of both  our surface code decoding protocol and the  entanglement generation protocol introduced in Section~\ref{sec:entanglementgenerationnoisy}. Using the framework, we prove in Section~\ref{sec:singleshotdecode} that our surface code decoding protocol is robust against local stochastic noise.

Then in Section~\ref{sec:entanglementgenerationnoisy}, we introduce our long-range entanglement generation protocol.  We prove in Section~\ref{sec:resiliencenoiselocalstochastic} that this protocol is resilient against local stochastic noise of strength below a certain threshold: A lower bound on the achievable distance of entanglement generation is presented in terms of the noise strength and the number of qubits. Finally, in Section~\ref{sec:converse}, we present a converse result for low latency entanglement generation schemes.

\section{Single-shot surface code decoding\label{sec:singleshotdecodingprotocol}}
In this section, we define our single-shot decoding protocol for the surface code: We introduce all relevant definitions to apply this protocol in practice.  We defer the analysis of the resilience of this protocol against local stochastic noise to subsequent sections: In Section~\ref{sec:boundmatchingmain}, we will first provide general bounds on certain matching problems on graphs. These will subsequently be applied in Section~\ref{sec:singleshotdecode} to establish a fault-tolerance threshold theorem for the single-shot decoding protocol introduced here.

\subsection{Definition of surface code\label{sec:surfacecodelattice} }
We consider a distance-$d$-surface code with
a total of $n=2d^2-2d+1$ qubits associated with $d^2$~horizontal edges and $(d-1)^2$~vertical edges as illustrated in Fig.~\ref{fig:surfacecodegraph}.  The code has smooth boundaries at the top and bottom, and rough boundaries on the left and right. In more detail, consider a square lattice in~$\mathbb{R}^2$ consisting of {\em sites}
\begin{align}
\Cprime&=\left\{(u_1,u_2)\in\mathbb{Z}^2\ |\ 0\leq u_1\leq 2d\textrm{ and } 0\leq u_2\leq 2d-2\right\}\ .
\end{align}
We  place qubits at sites~$(u_1,u_2)\in\Cprime$ with the property that one coordinate~$u_j$ is even, whereas the other coordinate~$u_k$ is odd, i.e., qubits are located at the sites
\begin{align}
\cC=\Cprime \setminus\{(o,o),(e,e)\}=\{(o,e)\}\cup \{(e,o)\}\subset\Cprime .
\end{align}
Here and below we write $e$ (respectively~$o$) for an even (respectively odd) integer and use the convention that e.g., $\{(o,o)\}\subset\Cprime$ is the subset of all pairs $(u_1,u_2)\in \Cprime$ with both $u_1$ and $u_2$ even.  Let
\begin{align}
d(u,v):=\sum_{j=1}^2 \abs{u_j-v_j}\ 
\end{align}
denote Manhattan distance between~$u$ and $v$. Given a site~$u\in \Cprime$, the set of nearest  neighbors of~$u$ is defined as 
\begin{align}
\neigh(u) = \left\{v \in \mathcal{C} \mid d(u,v) = 1\right\}\ 
\label{eq:nearestneighborset}
\end{align}
i.e., the  nearest neighbors of~$u$ are associated with qubits at Manhattan distance~$1$ from~$u$.

The  surface code graph~$T_{sc}=(\Vsc,\Esc)$ is  defined as
\begin{align}
\Vsc&=\{(e,e)\}\subset \Cprime\\
\Esc&=\left\{\{u,v\}\in \Vsc\times \Vsc\ |\ (u_1,v_1)\not\in\{(0,0),(2d,2d)\}, d(u,v)=2\right\}\ .
\end{align}
In other words, edges of~$\Tsc$ are associated with pairs~$(u,v)$ separated by a distance~$d(u,v)=2$, with the exception of pairs~$(u,v)$ that lie on the same vertical at $u_1=v_1=0$ or $u_1=v_1=2d$. The corresponding vertices
\begin{align}
  \Vdangsc &=\{(0,e)\}\cup \{(2d,e)\} \subset \Vsc
\end{align}
are degree-$1$ vertices in~$\Tsc$, i.e., endpoints of ``dangling'' edges at one of the ``rough'' boundaries of~$\Tsc$: they are connected only to their right respectively left nearest neighbor. We can therefore express the set of edges of~$\Tsc$ succintly as
\begin{align}
\Esc&=\left\{\{u,v\}\in \Vsc\times \Vsc\ |\ d(u,v)=2\textrm{ and }  (u,v)\not\in\Vdangsc\times\Vdangsc\right\}\ .
  \end{align}
In the following, it will be convenient
to label an edge~$\{u,v\}\in \Esc$ by its midpoint~$(u+v)/2\in \cC$.

We will extensively use a ``dual graph'' $\Tscdual=(\Vscdual,\Escdual)$. To define this, let us first consider the set
\begin{align}
  \Cprimedual&=\left\{(u_1,u_2)\in\mathbb{Z}^2\ |\ 1\leq u_1\leq 2d-1\textrm{ and } -1\leq u_2\leq 2d-1\right\}\
\end{align}
of dual sites. We note that the location of qubits can equivalently be specified as
\begin{align}
\cC^{*}&=\{(o,e)\}\cup \{(e,o)\}\subset\Cprime^{*}\ .
\end{align}
i.e., $\cC = \cC^{*}$.
The dual surface code graph~$\Tscdual=(\Vscdual,\Escdual)$
is defined by 
\begin{align}
\Vscdual&=\{(o,o)\}\subset \Cprime_*\\
\Escdual&=\left\{\{u,v\}\in \Vscdual\times \Vscdual\ |\ (u_2,v_2)\not\in\{(-1,-1),(2d-1,2d-1)\} \textrm{ and }d(u,v)=2\right\}\ .
\end{align}
Here degree-$1$-vertices, i.e., those belonging to
\begin{align}
  \Vdangscdual &=\{(o,-1)\}\cup \{(o,2d-1)\} \subset \Vscdual\ ,
\end{align}
are located at the top- and bottom ``rough'' boundaries.
Again, we sometimes refer to an edge~$\{u,v\}\in \Escdual$ by its midpoint~$(u+v)/2$.

The surface code associated with the surface code graphs~$\Tsc,\Tscdual$ is defined as follow: For every vertex~$u\in \Vsc$ with degree~$\deg_{\Tsc}(u)\geq 3$, i.e.,
every vertex~$u\in \Vsc\setminus\Vdangsc$ which is not the endpoint of a dangling edge~$u$, there is a stabilizer generator~$A_u:=\prod_{v\in \neigh(u)} X_v$ consisting of Pauli-$X$ operators only. Pauli-$Z$-stabilizer generators are defined similarly based on the dual graph~$\Tscdual$: For every $u\in \Vsc^*$, there is a stabilizer generator~$B_u:=\prod_{v\in\neigh(u)}Z_v$ consisting of Pauli-$Z$ operators only. 

\begin{figure}
\centering 
\begin{subfigure}[b]{0.45\textwidth}
   \centering 
   \includegraphics[width=.9\linewidth]{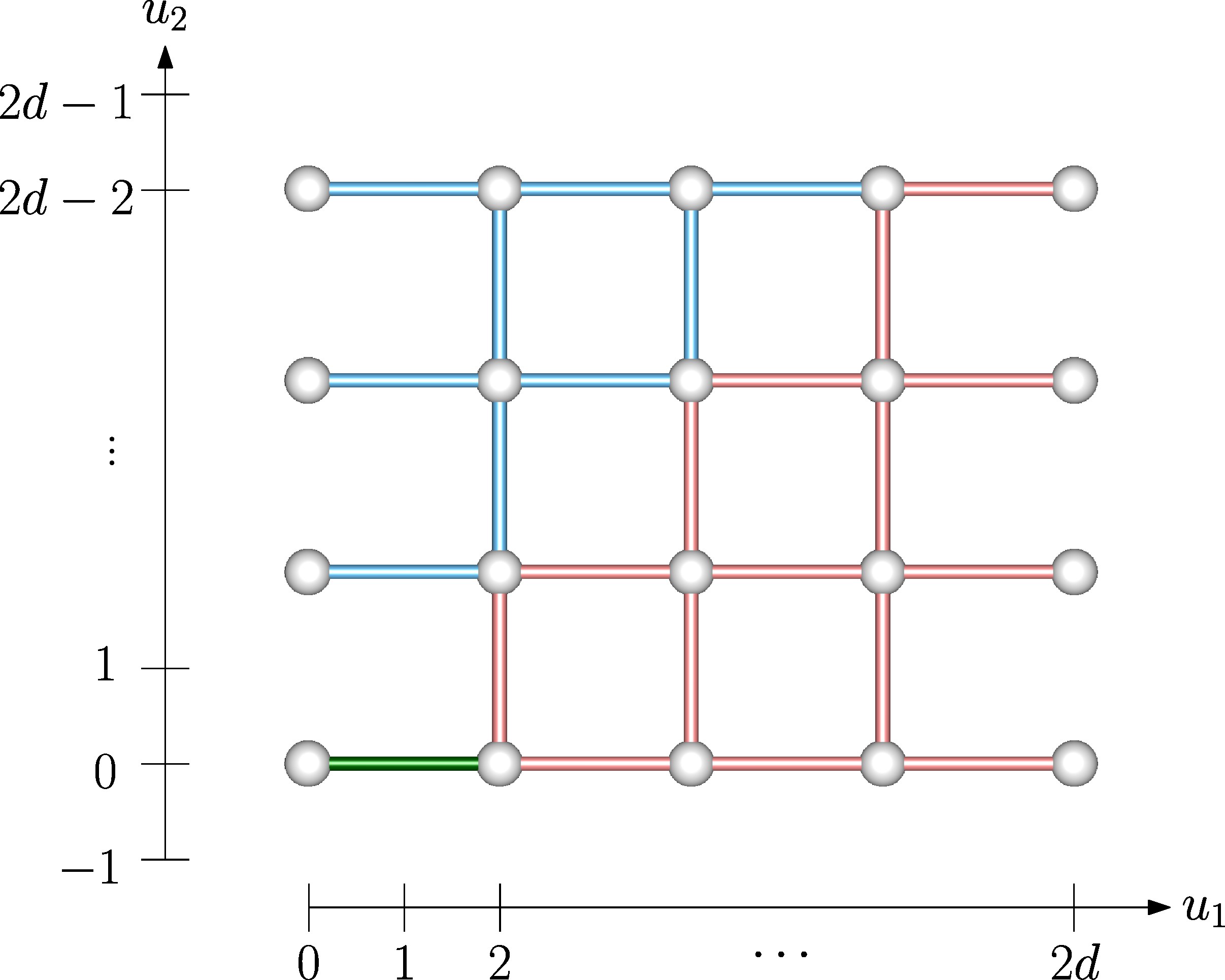}
    \caption{
    \begin{tabular}{ |c||c|c| } 
     \hline
     \multicolumn{3}{|c|}{ $\Tsc=(\Vsc, \Esc)$} \\
     \hline\hline
     subsets of $\Esc$ & color & measurement\\ 
     \hline\hline
     $\cX$ &\textcolor{lightblue}{ \rule{1.5cm}{0.3cm}} & $X$-basis \\
     \hline 
     $\cZ$ &\textcolor{lightred}{ \rule{1.5cm}{0.3cm}} & $Z$-basis \\
     \hline 
     $\{q\}$ & \textcolor{mygreen}{ \rule{1.5cm}{0.3cm}} & - \\
     \hline 
    \end{tabular}
    }\label{fig:Tscmeasurement}
\end{subfigure}
\hfill  \quad 
\begin{subfigure}[b]{0.45\textwidth}
   \centering 
   \includegraphics[width=.9\linewidth]{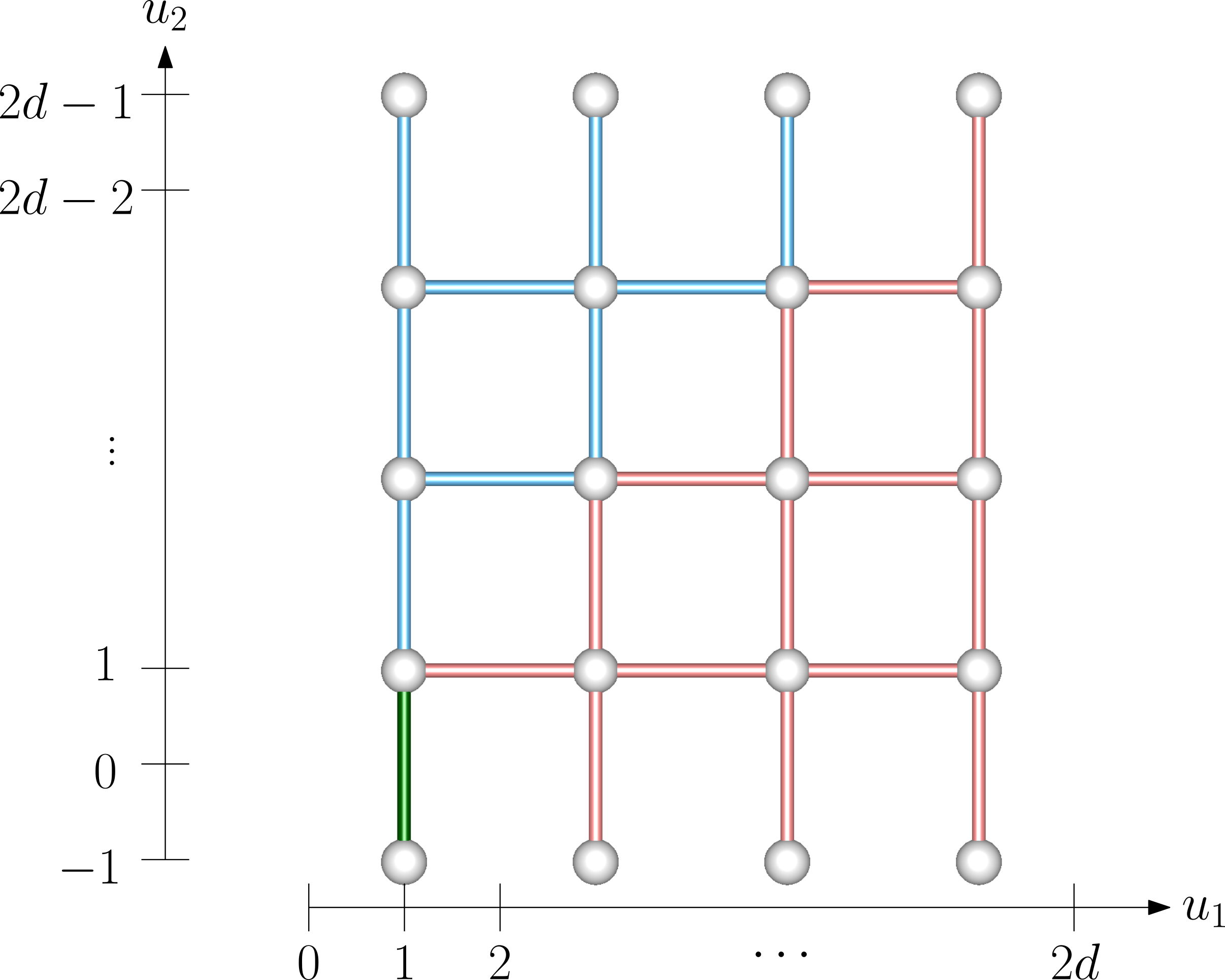}
    \caption{
    \begin{tabular}{ |c||c|c| } 
     \hline
     \multicolumn{3}{|c|}{ $\Tscdual=(\Vscdual, \Escdual)$} \\
     \hline\hline
     subsets of $\Escdual$ & color & measurement\\ 
     \hline\hline
     $\cX$ &\textcolor{lightblue}{ \rule{1.5cm}{0.3cm}} & $X$-basis \\
     \hline 
     $\cZ$ &\textcolor{lightred}{ \rule{1.5cm}{0.3cm}} & $Z$-basis \\
     \hline 
     $\{q\}$ & \textcolor{mygreen}{ \rule{1.5cm}{0.3cm}} & - \\
     \hline 
    \end{tabular}
    }\label{fig:Tscdualmeasurement}
\end{subfigure}
\caption{
The surface code graph~$\Tsc$, and the dual graph~$\Tscdual$. All qubits are situated at the midpoints of the edges.
The protocol uses a measurement pattern where qubits belonging to~$\cX$ (midpoints of blue edges) are measured in the Pauli-$X$ basis, qubits belonging to~$\cZ$ (midpoints of red edges) are measured in the Pauli-$Z$ basis, and qubit~$q$ (midpoints of green edges) is not measured.
\label{fig:surfacecodegraph}}
\end{figure}

\subsection{Graphs with internal vertices: boundary map and minimal matchings}\label{sec:graphswithinternalvertices}
Our decoder is based on an algorithm~$\mmatch$ which finds certain matchings of a graph with internal and external vertices.  Let us first discuss some basic notions.

We  consider  pairs $(G=(V,E),V^\mathrm{int})$ where~$G$ is a graph and~$V^\mathrm{int}\subset V$ a subset of vertices we call internal vertices. The complement~$V^\mathrm{ext}:=V\setminus V^\mathrm{int}$ will be referred to as the set of external vertices.  

Define a boundary map
$\partial:2^{E} \rightarrow 2^{V^\mathrm{int}}$ that takes a subset~$\cE\subset E$ of edges 
to a subset~$\partial \cE\subset V^\mathrm{int}$ of internal vertices: An interval vertex~$v\in V^\mathrm{int}$ belongs to~$\partial\cE$ if and  only if the number of edges~$e\in\cE$ incident on~$v$ is odd. 
 The map~$\partial$ can be understood as the ``usual'' $\mathbb{Z}_2$-boundary map taking $1$-chains  (sets of edges) on the graph~$G$ to $0$-chains (sets of vertices)  on~$G$, followed by a projection onto the subset of vertices~$V^\mathrm{int}$. We observe that, interpreted as a function $\partial:\mathbb{Z}_2^{E}\rightarrow\mathbb{Z}_2^{V^\mathrm{int}}$, the boundary map~$\partial$ is $\mathbb{Z}_2$-linear. 
 
Let $\cV\subset V^\mathrm{int}$ be a subset of internal vertices. A subset~$\cE\subset E$ of edges is  called a matching of~$\cV$ if~$\partial\cE=\cV$. 
A matching~$\cE$ of~$\cV$ is called minimal if it its size~$|\cE|$ is minimal among all matchings of~$\cV$.

We note that for every graph $G=(V,E)$ with interval vertices $\Vint$, there is a deterministic algorithm~$\mmatch$ which outputs a minimum matching~$\mmatch(\cV)$ of the given subset~$\cV \subset \Vint$ with runtime  bounded by $O(\abs{V}^3)$~\cite{lawler2001,gabow1976,wangThresholdErrorRates2009}. This is obtained by suitably adapting Edmond's matching algorithm \cite{Edmonds1965}.

In Section~\ref{sec:boundmatchingmain}, we will provide a general combinatorial analysis of matching on graphs with internal vertices. The ``decoding'' graphs of interest will be introduced in Section~\ref{sec:decodingsubgraphssingle}.

\subsection{Measurement pattern used in protocol\label{sec:measurementpatternsingle}}
Here we define the triangular measurement pattern used in our single-shot decoding protocol.  Let 
\begin{align}
q&:=(1,0)\in \cC 
\end{align}
be the qubit at the bottom left corner. The measurement pattern is then defined as follows (see Fig.~\ref{fig:surfacecodegraph}):
Every qubit belonging to
\begin{align}
\cX&:=\left\{(u_1,u_2)\in \cC\setminus\{q\}\ |\ u_2\geq u_1+1\right\}
\end{align}
is measured in the Pauli-$X$-basis. Every qubit belonging to
\begin{align}
\cZ&:=\left\{(u_1,u_2)\in \cC\setminus\{q\}\ |\ u_2< u_1+1\right\}
\end{align}
is measured in the Pauli-$Z$ basis. The qubit at~$q$ is not measured.

\subsection{Recovery sets\label{sec:recoverysetsurface}}

We will consider a special subset~$\cL_X\subset \Esc$ of edges, see Fig.~\ref{fig:Tdecmarked}. The set $\cL_X$ is located on a vertical strip and consists of edges associated with the qubits
\begin{align}
\cL_X&=\{ (1,u_2)\ |\ u_2\geq 2, u_2\textrm{ even }\}\ .\label{eq:cLXdefinition}
\end{align}
(Here we use the convention that an edge $\{u,v\}$ is represented by its midpoint; it is, however, more natural to think of elements of~$\cL_X$ as edges of~$\Esc$.)

Similarly, we consider a ``dual'' set~$\cL_Z^*\subset\Edecdual$ located on a horizontal strip defined by their midpoints
\begin{align}
  \cL_Z^*&=\left\{(u_1,0)\ |\ u_2\geq 3, u_2\textrm{ odd }\right\}\ .\label{eq:cLZdefinition}
\end{align}
We will argue below that our protocol will need to estimate parities of certain strings restricted to~$\cL_X$ and $\cL_Z^*$, respectively, in order to fix errors. This is because the associated qubits are closely related to the logical Pauli operators of the surface code, as discussed below. For this reason, we call $\cL_X$ and $\cL_Z^*$  the recovery sets.

\begin{figure}
\centering 
\begin{subfigure}[b]{0.45\textwidth}
   \centering 
   \includegraphics[width=.9\linewidth]{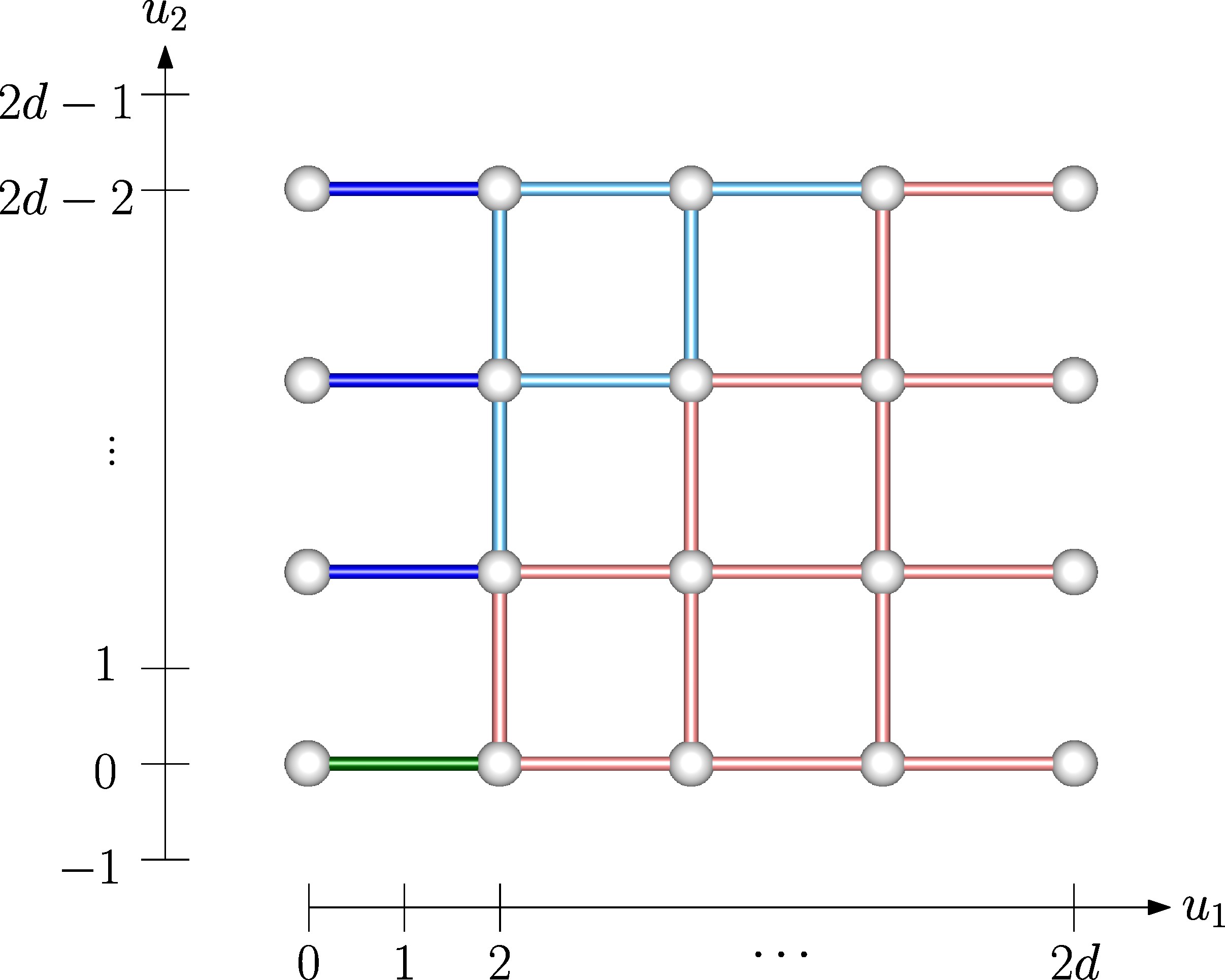}
    \caption{
    \begin{tabular}{ |c||c|c| } 
     \hline
     \multicolumn{3}{|c|}{ $\Tsc=(\Vsc, \Esc)$} \\
     \hline\hline
     subsets of $\Esc$ & color & measurement\\ 
     \hline\hline
     $\cL_X$ &\textcolor{blue}{ \rule{1.5cm}{0.3cm}} & $X$-basis \\
     \hline 
     $\cX \setminus \cL_X$ &\textcolor{lightblue}{ \rule{1.5cm}{0.3cm}} & $X$-basis \\
     \hline 
     $\cZ$ &\textcolor{lightred}{ \rule{1.5cm}{0.3cm}} & $Z$-basis \\
     \hline 
     $\{q\}$ & \textcolor{mygreen}{ \rule{1.5cm}{0.3cm}} & - \\
     \hline 
    \end{tabular}
    }\label{fig:Tscrecoveryset}
\end{subfigure}
\hfill  \quad 
\begin{subfigure}[b]{0.45\textwidth}
   \centering 
   \includegraphics[width=.9\linewidth]{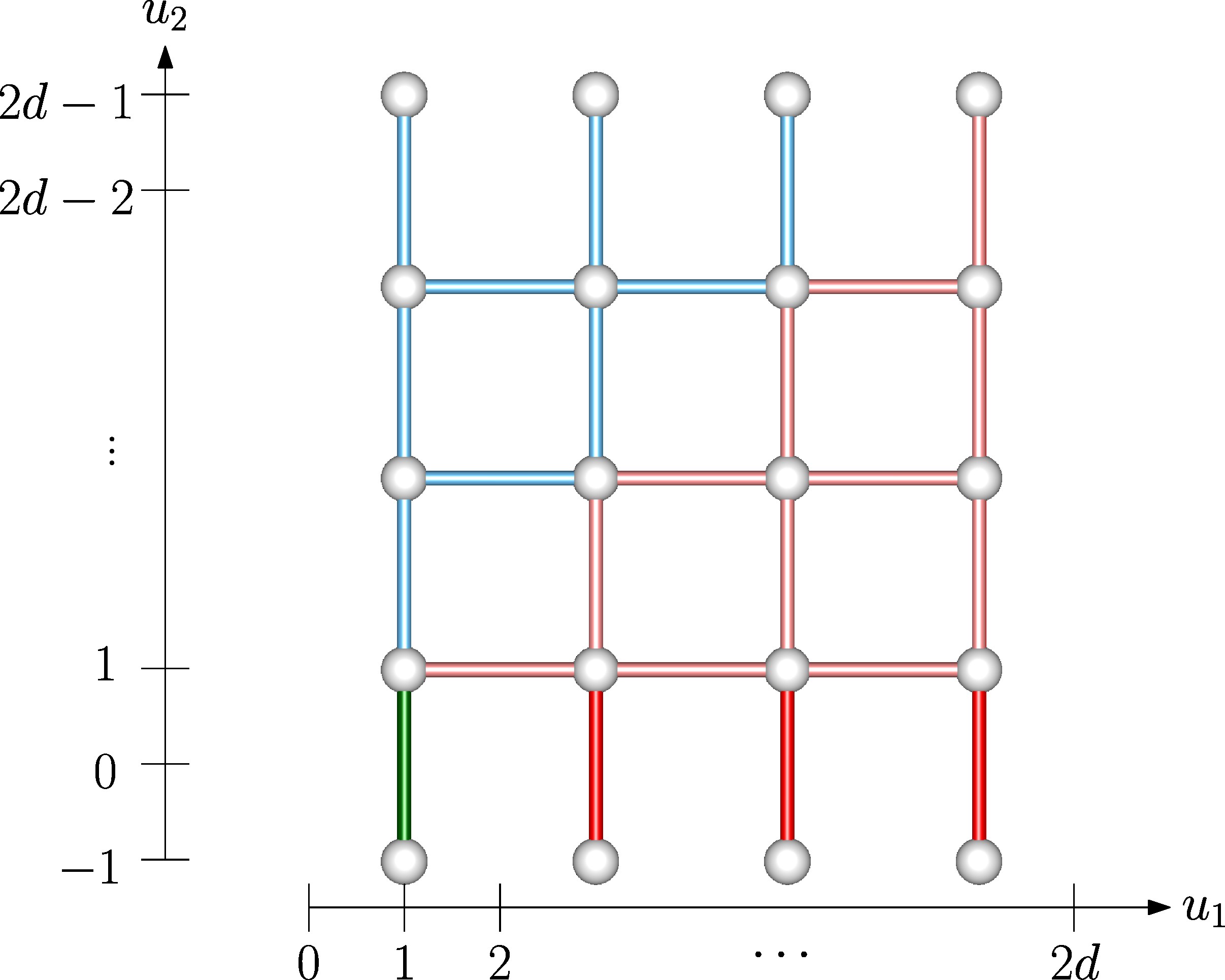}
    \caption{
    \begin{tabular}{ |c||c|c| } 
     \hline
     \multicolumn{3}{|c|}{ $\Tscdual = (\Vscdual, \Escdual)$} \\
     \hline\hline
     subsets of $\Edecdual$ & color & measurement\\ 
     \hline\hline
     $\cX$ &\textcolor{lightblue}{ \rule{1.5cm}{0.3cm}} & $X$-basis \\
     \hline
     $\cL_Z^{*}$ &\textcolor{red}{ \rule{1.5cm}{0.3cm}} & $Z$-basis \\
     \hline 
     $\cZ \setminus \cL_Z^{*}$ &\textcolor{lightred}{ \rule{1.5cm}{0.3cm}} & $Z$-basis \\
     \hline 
     $\{q\}$ & \textcolor{mygreen}{ \rule{1.5cm}{0.3cm}} & - \\
     \hline
    \end{tabular}
    }\label{fig:Tscdualrecoveryset}
\end{subfigure}
\caption{The surface code graph~$\Tsc$ and the dual graph~$\Tscdual$, as well as the associated recovery sets~$\cL_X\subset \Esc$ (blue)  and $\cL_Z^*\subset \Escdual$ (red).  \label{fig:Tdecmarked}}
\end{figure}

\subsection{Decoding subgraphs and internal vertices\label{sec:decodingsubgraphssingle}}
Let  us define the following ``decoding'' subgraph~$\Tdec=(\Vdec,\Edec)$ of~$\Tsc$, see Fig.~\ref{fig:Tdec}. The graph~$\Tdec$ is the subgraph induced by the set of vertices
\begin{align}
\Vdec&= \{(u_1,u_2)\in \Vsc\ |\ u_2\geq u_1>0\}\ .
\end{align}
Alternatively, we may say that~$\Tdec$ is the subgraph of~$\Tsc$ induced by
the set of edges~$\Edec:=\cX$. Explicitly, the edges of~$\Tdec$ are 
\begin{align}
  \Edec&=\left\{ \{u,v\}\in \Vdec\times \Vdec\ |\ d(u,v)=2\right\}\ .
  \end{align}
Observe that the recovery set~$\cL_X\subset\Edec$ is a subset of the edges of~$\Tdec$. We will use~$\Tdec$ to estimate a parity of a certain string restricted to~$\cL_X$.

The graph~$\Tdec$ has distinguished internal vertices
\begin{align}
\Vdecint&=\left\{u\in \Vdec\ |\ \neigh(u)\subset \Edec\right\}\  .\label{eq:tdecinterior} 
\end{align}
A vertex~$u\in \Vdec\setminus\Vdecint$ which is not internal is called external. We denote the set of external vertices of~$\Tdec$ by~$\Vdecext$, see Fig.~\ref{fig:Tdec}. This definition is motivated by the fact that for every internal vertex~$u\in\Vdecint$, the ``star'' operator~$A_u$ of the surface code has support~$\supp(A_u)\subset \Edec$ completely contained in the set of edges of~$\Tdec$.  This is not the case for the external vertices.

\begin{figure}
\centering 
\begin{subfigure}[b]{0.45\textwidth}
   \centering 
   \includegraphics[width=.9\linewidth]{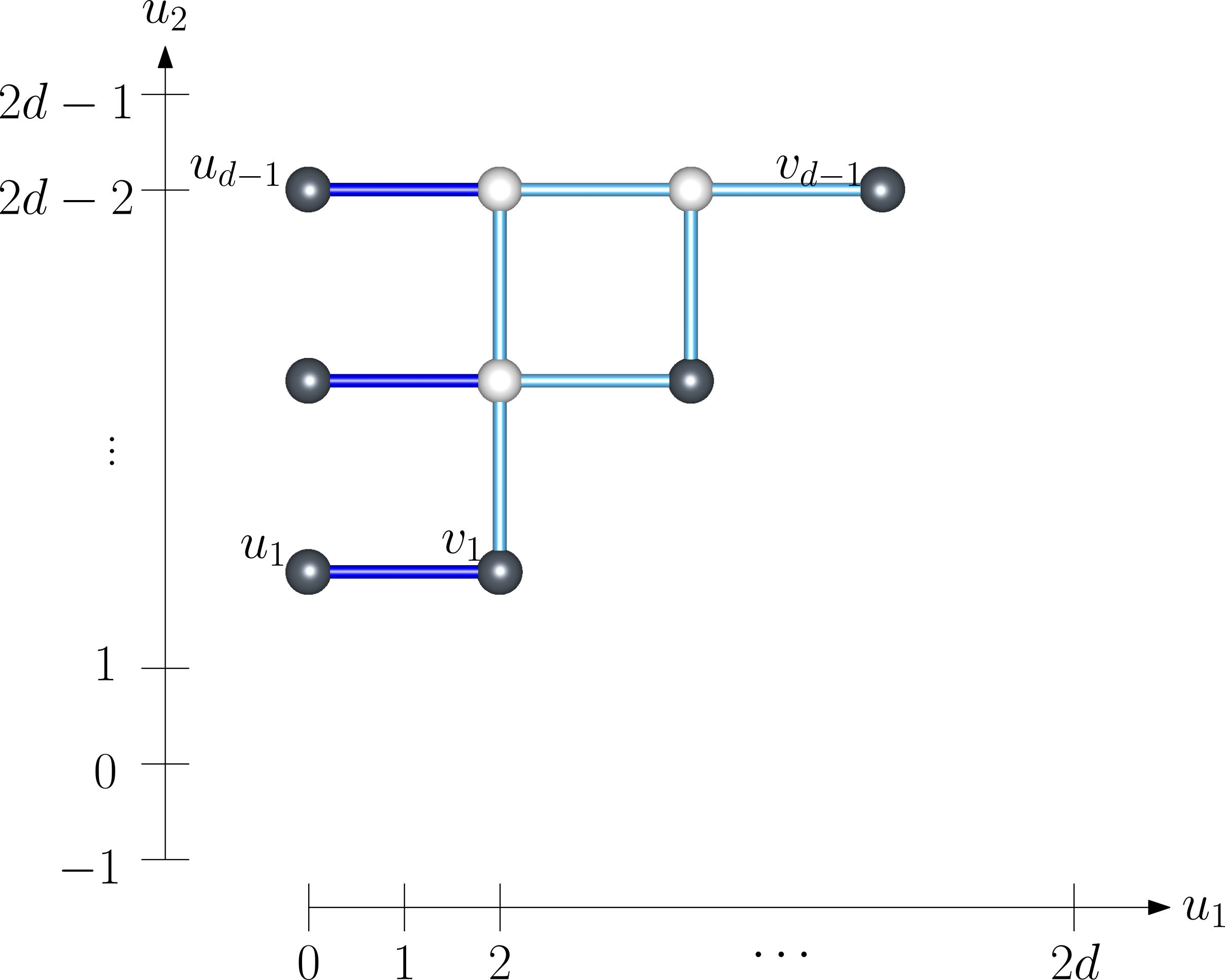}
    \caption{
    \begin{tabular}{ |c||c|c| } 
     \hline
     \multicolumn{3}{|c|}{ $\Tdec=(\Vdec, \Edec)$ } \\
     \hline\hline
     subsets of $\Edec$ & color & measurement\\ 
     \hline\hline
     $\cL_X$ &\textcolor{blue}{ \rule{1.5cm}{0.3cm}} & $X$-basis \\
     \hline 
     $\cX \setminus \cL_X$ &\textcolor{lightblue}{ \rule{1.5cm}{0.3cm}} & $X$-basis \\
     \hline 
    \end{tabular}
    }\label{fig:Tdec}
\end{subfigure}
\hfill  \quad 
\begin{subfigure}[b]{0.45\textwidth}
   \centering 
   \includegraphics[width=.9\linewidth]{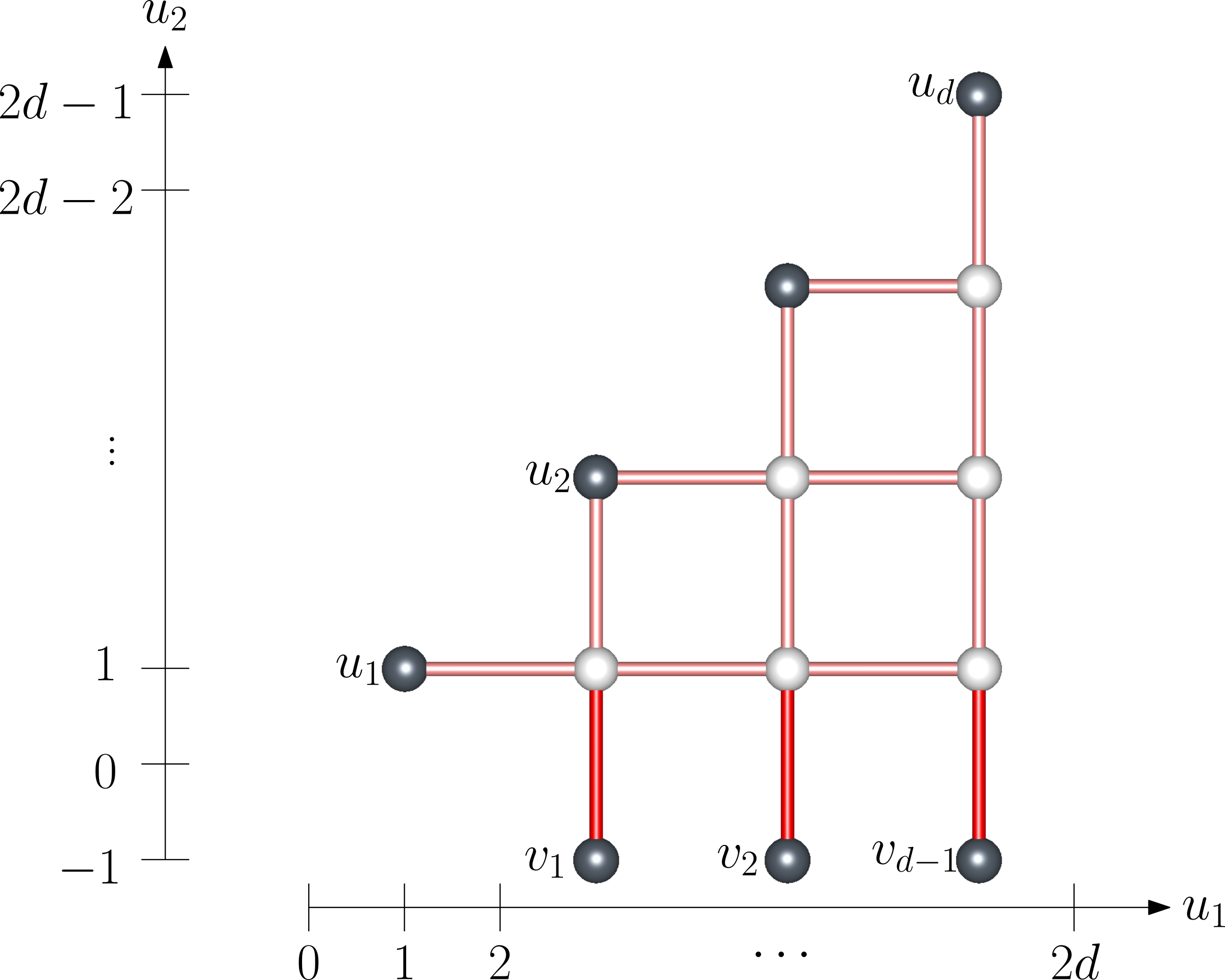}
    \caption{
    \begin{tabular}{ |c||c|c| } 
     \hline
     \multicolumn{3}{|c|}{ $\Tdecdual = (\Vdecdual, \Edecdual)$} \\
     \hline\hline
     subsets of $\Tdec$ & color & measurement\\ 
     \hline\hline
     $\cL_Z^{*}$ &\textcolor{red}{ \rule{1.5cm}{0.3cm}} & $Z$-basis \\
     \hline 
     $\cZ \setminus \cL_Z^{*}$ &\textcolor{lightred}{ \rule{1.5cm}{0.3cm}} & $Z$-basis \\
     \hline 
    \end{tabular}
    }\label{fig:Tdecdual}
\end{subfigure}
  \caption{The decoding graph~$\Tdec$ and the dual decoding graph $\Tdecdual$. 
  Their internal (external) vertices are illustrated as white (black, resp.) spheres. 
  Edges are colored according to the subsets where their midpoints are contained (see the tables in the subfigures).}
  \label{fig:gluedecodinggraphscl}
\end{figure}

The ``dual decoding subgraph''~$\Tdecdual=(\Vdecdual,\Edecdual)$ of $\Tscdual$ is defined similarly: 
The graph $\Tdecdual$ is the subgraph induced by the set of vertices
\begin{align}
  \Vdecdual = \{(u_1, u_2) \in \Vsc \mid u_1 \geq u_2 > 0\} \setminus \{(1, -1)\} \ ,
\end{align}
and its edges are $\Edecdual := \cZ$. Explicitly, the edges of $\Tdecdual$ are
\begin{align}
  \Edecdual = \{\{u,v\} \in \Vdecdual \times \Vdecdual \mid d(u,v) = 2\} \ .
\end{align}
Observe that the recovery set $\cL_Z^{*}$ is a subset of the edges of $\Tdecdual$.
We will use $\Tdecdual$ to estimate the parity of a certain string restricted to $\cL_Z^{*}$.

The internal vertices of $\Tdecdual$ are
\begin{align}
  \Vdecdualint = \{u \in \Vdecdual \mid \neigh(u) \subset \Edecdual\} \ .\label{eq:tdecdualinterior}
\end{align}
Note that a vertex~$u$ of $\Tdecdual$ is internal if and only if the ``plaquette'' operator $B_u$ of the surface code has support $\supp(B_u) \subset \Edecdual$ contained in the set of edges of $\Tdecdual$. A vertex $u \in \Vdecdual \setminus \Vdecdualint$ which is not internal is called external. We denote the set of external vertices of $\Tdecdual$ by $\Vdecdualext$.

\subsection{Description of the single-shot decoding procedure}\label{sec:scdecodingalgorithm}
We provide the description of algorithm for single-shot surface code decoding. 
\begin{algorithm}[H]
\caption{Single-shot surface code decoding}
\label{alg:fdecsingle}
\begin{algorithmic}[1]
\Require A state on $(\mathbb{C}^2)^{\otimes \cC}$, where $\cC$ 
are the locations of qubits  of a distance-$d$ surface code.
  \State Perform the following measurements: measure
  \begin{center}
  \begin{tabular}{c|c|c}
   every qubit in the set & in the  & denote the outcome by \\
   \hline
$\cX\subset \Edec$ & $X$-basis & $x\in \{0,1\}^\cX$\ \ \\
$\cZ\subset \Edecdual$ & $Z$-basis & $z\in \{0,1\}^\cZ$\ .
  \end{tabular}
  \end{center}
  \label{it:stepmeasurementsingle}
  \State Compute the boundaries of the subsets of edges~$x$ and $z$ in $\Tdec$ and $\Tdecdual$, respectively, i.e., set 
  \begin{align}
  s &\leftarrow \partial x\label{eq:sdefinitionboundarysingle}\\
  s^{*} &\leftarrow \partial^*z\ .
  \end{align}\label{state:setsandsdualsingle}
  \State Compute minimal matchings \label{state:setmsingle}
  \begin{align}
  m &\leftarrow \mmatch(s)\\
m^* &\leftarrow \mmatch^*(s^{*})\ 
\end{align}
\State Compute the bits\label{it:syndromebitestimationsingle}
\begin{align}
\hat{s}^X&\leftarrow \ztwoinner{m}{\cL_X}\label{eq:hatsxdefsingle}\\
\hat{s}^Z&\leftarrow \ztwoinner{m^{*}}{\cL^{*}_{Z}}\label{eq:hatszdefsingle}\ .
\end{align}\label{state:sydnromebitsxszsingle}
  \State Determine  the bits
  \begin{align}
  \hat{c}_X &\leftarrow  \ztwoinner{x}{ \cL_X} \oplus \hat{s}^X\\
  \hat{c}_Z& \leftarrow  \ztwoinner{z}{\cL^*_Z} \oplus \hat{s}^Z
  \end{align} \label{state:estimateofcsingle}
  \State Apply $Z^{\hat{c}_X} X^{\hat{c}_Z}$ to $q$.\label{state:estimatecorrectionsingle}
  \State Return the resulting qubit state on qubit~$q$.
  \end{algorithmic}
\end{algorithm}

The following notation is used in the algorithm: For any two sets $A, B \in \cC$, we define
\begin{align}
    \ztwoinner{A}{B} := \bigoplus_{j \in \cC} A_j B_j
\end{align}
where $A$ and $B$ are considered as strings (given by the characteristic function of the corresponding set) in $\{0,1\}^{\cC}$ and $\oplus$ denotes XOR on bits.
The value~$\ztwoinner{A}{B}$ can  equivalently be defined as the parity of $\abs{A \cap B}$.
Considering the set of strings $\{0,1\}^{\cC}$ as a $\mathbb{Z}_2$-linear space, the map $\ztwoinner{\cdot}{\cdot}$ is clearly bilinear.

We also note that the runtime of Algorithm~\ref{alg:fdecsingle} 
is dominated by the computation of the matchings in step~\eqref{state:setmsingle}, i.e., 
runs in time $O(\abs{\cC}^3)$, see the remark at the end of Section~\ref{sec:graphswithinternalvertices}.

\section{A bound on matching  with internal vertices\label{sec:boundmatchingmain}} 
In this section, we 
provide a general combinatorial framework for analyzing 
decoding problems involving minimal matchings.  These arguments will be applied in Section~\ref{sec:singleshotdecode} to analyze the success probability 
of the single-shot decoding Algorithm~\ref{alg:fdecsingle} introduced in Section~\ref{sec:singleshotdecodingprotocol} in the presence of local stochastic noise. We will also use these general concepts in Section~\ref{sec:entanglementgenerationnoisy} to analyze the robustness of our entanglement generation protocol based on the 3D~cluster state.

We first show in Section~\ref{sec:cycledecomposition} that a cycle, a set of edges in a graph with trivial boundary, can be decomposed as a disjoint union of simple closed paths and simple paths connecting two external vertices through internal vertices. 

Next, in Section~\ref{sec:resfunction}, we introduce the notion of a local stochastic subset of edges. This is a random variable~$\cE$ which is
a random subset of edges, and whose distribution is exponentially decaying in the cardinality of the subset. We also introduce what we call  the resilience function associated with a fixed subset~$\cL$ of edges. It is defined by a combinatorial expression in terms  of a sum over simple closed paths and simple paths connecting two external endpoint vertices through internal vertices.
We show that the value of resilience function is an upper bound on the probability that the parities of a local stochastic subset~$\cE$  and the minimum matching $\mmatch{(\partial \cE})$ of the boundary of~$\cE$ restricted to~$\cL$ differ. In particular, this will allow us to use the parity of~$\mmatch(\partial \cE) \cap \cL$ as an estimate of the parity of~$\cE \cap \cL$. This is the key statement that we will subsequently use to analyze resilience against local stochastic noise.

\subsection{Cycle decomposition}\label{sec:cycledecomposition}
Let $(G = (V, E),\Vint)$ be a graph with a distinguished set of internal vertices~$\Vint \subset V$. Let $\partial : 2^E \to 2^{\Vint}$ be the boundary map defined in Section~\ref{sec:graphswithinternalvertices}.
  A subset~$\cE\subset E$ is called a cycle if the boundary~$\partial \cE=\emptyset$ is trivial. We denote by~$Z(G)$ the set of cyles on~$G$. 
  By the linearity of~$\partial$, the set of cycles $Z(G)$ forms a linear subspace of~$\mathbb{Z}_2^{E}$ called the cycle space of $G$. 
  We will consider certain cycles which span~$Z(G)$.

 A closed loop~$L=\{e_1, \dots, e_{\ell}\} \subseteq E$ is given by a sequence~$(e_1, \dots, e_{\ell})$ of distinct edges such that there exists a set of vertices~$\{v_1, \dots, v_{\ell}\}$  with the property  that $e_{j} = \{v_j, v_{j+1}\}$ for all $j = 1, \dots, \ell - 1$ and $e_{\ell} = \{v_{\ell}, v_1\}$.
 In particular, any  closed loop~$L$ belongs to $Z(G)$, that is, $\partial L = \emptyset$.
 A simple closed loop~$L=\{e_1, \dots, e_{\ell}\}$ is a closed loop with the property that the vertices~$\{v_1, \dots, v_{\ell}\}$ are distinct.
 We note that any closed loop can be decomposed into a set of pairwise (edge-)disjoint simple closed loops.
 We will denote the set of simple closed loops on~$G$ by~$\Zcirc(G)$.

 We say that a simple (i.e., not  self-intersecting) path~$P=\{e_1, \dots, e_{\ell}\}\subseteq E$   connects two distinct external vertices through internal vertices if
  there are distinct  vertices~$v_1,v_{\ell+1}\in\Vext $ and~$v_2,\ldots,v_{\ell}\in \Vint$ 
  such that $e_j=\{v_j,v_{j+1}\}$ for $j=1,\ldots,\ell$.
Again, every such path~$P$ belongs to~$Z(G)$ and the edges $\{e_j\}_{j=1}^\ell$ are pairwise distinct by construction.
We will denote by $\Zpathext(G)$ the set of simple paths on~$G$ connecting distinct external vertices through internal vertices.

 The union of $Z_\circ(G)$ and $\Zpathext(G)$    spans the cycle space~$Z(G)$. In fact, a stronger property holds:
\begin{lemma}\label{lem:cycledecomposition}
  Let $\cZ\in Z(G)$ be a cycle. Then $\cZ$ is the disjoint union of simple closed loops and simple paths with endpoints in~$V^{ext}$ through internal vertices.
  That is, there is a family 
\begin{align}
\{\cZ_\alpha\}_{\alpha}\subset Z_\circ(G)\cup \Zpathext(G)
\end{align}
of cycles such that each~$\cZ_\alpha$ is either a simple closed loop or a simple path with distinct endpoints in~$V^{ext}$ passing through internal vertices, and
\begin{align}
\cZ=\bigcup_\alpha \cZ_\alpha\qquad\textrm{ with }\qquad \cZ_\alpha\cap \cZ_\beta=\emptyset\textrm{ for }\alpha\neq\beta\ .
\end{align}
\end{lemma}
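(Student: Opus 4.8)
The plan is to prove the statement by induction on the number $|\cZ|$ of edges, repeatedly extracting a single piece --- either a simple closed loop or a simple path between two external vertices --- whose removal leaves a smaller cycle. The base case $\cZ=\emptyset$ is the empty union. For the inductive step, the heart of the matter is the following extraction claim: \emph{every nonempty cycle $\cZ$ contains a subset $\cZ_0\subseteq\cZ$ that is either a simple closed loop or a simple path with distinct external endpoints through internal vertices, such that $\cZ\setminus\cZ_0$ is again a cycle.} Granting this, I would remove $\cZ_0$, apply the induction hypothesis to the strictly smaller cycle $\cZ\setminus\cZ_0$, and adjoin $\cZ_0$ to the resulting family. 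Since each extracted piece is deleted from the edge set, the family produced this way is automatically pairwise edge-disjoint and its union is all of $\cZ$.

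To establish the extraction claim I would trace a walk through the subgraph induced by $\cZ$, using that $\partial\cZ=\emptyset$ says precisely that every internal vertex has even degree in $\cZ$ (external vertices may have arbitrary degree). Distinguish two cases for the starting point. If some external vertex $v_1\in\Vext$ has positive degree in $\cZ$, I start the walk there; otherwise every edge of $\cZ$ is incident only to internal vertices and I start at any vertex of positive degree. At each step I leave the current vertex along an edge of $\cZ$ not yet used by the walk, and I stop at the first index $i\geq 2$ for which the reached vertex $v_i$ is external or already occurs earlier in the walk. The crucial point is that the walk can never get stuck before such a stop: whenever it arrives at a freshly visited internal vertex it has used exactly one incident edge (the edge of arrival), and since that vertex has even degree at least $2$, an unused edge is available to continue. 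As the graph is finite, the walk therefore terminates at some $v_i$.

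The two stopping events yield the two admissible piece types. If the walk first repeats a vertex, say $v_i=v_j$ with $j<i$, then $\{e_j,\dots,e_{i-1}\}$ is a simple closed loop, as its intermediate vertices $v_{j+1},\dots,v_{i-1}$ are distinct and $v_j=v_i$; this is the extracted $\cZ_0\in Z_\circ(G)$. If instead the walk first reaches a \emph{new} external vertex $v_i$ (which can happen only when the start $v_1$ is external), then $v_1,\dots,v_i$ is a simple path with distinct external endpoints $v_1,v_i$ passing through internal vertices only: the intermediate vertices are internal, since otherwise the walk would have stopped earlier, and all vertices are distinct, giving $\cZ_0\in\Zpathext(G)$. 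In either case removing $\cZ_0$ preserves the cycle property: each internal vertex is incident to exactly $0$ or $2$ edges of $\cZ_0$ --- a simple loop or a simple path passes through each of its interior vertices exactly once, using two incident edges --- so every internal-vertex degree parity in $\cZ$ is unchanged and $\partial(\cZ\setminus\cZ_0)=\partial\cZ=\emptyset$.

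The main obstacle is bookkeeping rather than conceptual: one must verify carefully that the stopping rule produces a genuinely \emph{simple} piece (vertex-distinct, not merely edge-distinct) and that the start is chosen so that a path terminates at two external vertices and never at an internal one. Both points follow from stopping at the \emph{first} external or repeated vertex, and from starting at an external vertex whenever one carries an edge of $\cZ$; the even-degree condition at internal vertices, which is exactly the hypothesis $\partial\cZ=\emptyset$, is what guarantees the walk is never forced to halt in the interior.
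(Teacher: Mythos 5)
Your proof is correct and takes essentially the same approach as the paper's: an inductive extraction argument in which the hypothesis $\partial\cZ=\emptyset$ (even degree at every internal vertex) guarantees that a walk along unused edges never gets stuck, terminating either in a simple closed loop or in a simple path joining two distinct external vertices, whose removal leaves a smaller cycle. The only difference is organizational --- the paper first strips away all simple closed loops and then extracts paths from the loop-free remainder, while your first-repeat/first-external stopping rule extracts both kinds of pieces in one unified walk --- but both rest on the same parity argument.
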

\begin{proof}
Let $\cZ\in Z(G)$ be a cycle. 
Suppose that there is a simple closed loop~$L\in Z_{\circ}(G)$ that is contained in~$\cZ$.
Then it follows immediately from the definitions that $\cZ':=\cZ\backslash L$ is also a cycle.
Proceeding inductively, we may decompose~$\cZ$ as
\begin{align}
\cZ&=\tilde{\cZ}\cup \bigcup_{j=1}^r L_j\  \label{eq:czdecompositionclosedloops}
\end{align}
with pairwise disjoint simple closed loops~$L_1,\ldots,L_r$ (with $r=0$ if~$\cZ$ does not contain any simple closed loop) such that~$\tilde{\cZ}$ contain does not contain any (simple) closed loop.

We claim that~$\tilde{\cZ}$ is either empty or a union~
\begin{align}
\tilde{\cZ}=\bigcup_{k=1}^s P_k\label{eq:pairwisedisjointpath}
\end{align} of pairwise disjoint paths~$\{P_k\}_{k=1}^s\subset \Zpathext(G)$ with distinct endpoints in~$V^\mathrm{ext}$.
Consider the following two cases:
\begin{enumerate}[(i)]
\item
If  $\tilde{\cZ}$ does not contain an edge~$e\in \tilde{\cZ}$ with $e\cap \Vext\neq \emptyset$, i.e., if $\tilde{\cZ}$ only touches internal vertices of~$V$, then the condition~$\partial \tilde{\cZ}=\emptyset$ implies that
the number of edges belonging to~$\tilde{\cZ}$ which are incident on any internal vertex~$v\in V^\mathrm{int}$ is even. Because~$\tilde{\cZ}$ does not contain any closed loop by construction, it follows that $\tilde{\cZ}=\emptyset$.

\item
Suppose there is an edge~$e\in\tilde{\cZ}$ such that $e\cap (V \setminus V^\mathrm{int})\neq \emptyset$.
Let us assume that $e=\{v,w\}$ where $v\in V\backslash V^\mathrm{int}$ is an external vertex. 
We can then construct a path~$P=\{e_1,\ldots,e_\ell\}\subset\tilde{\cZ}$ that has~$v=v_s$ as starting point, first traverses the edge~$e_1:=e$, and finally ends at an external vertex~$v_e\neq v_s\in V\backslash V^\mathrm{int}$.
Indeed, if $w\in V\backslash V^\mathrm{int}$ is an external vertex, then $P=\{e\}$ is such a path (of length~$1$). 
If $w\in V^\mathrm{int}$ is an internal vertex, then the condition~$\tilde{\cZ}\in Z(G)$ implies that
the number of edges belonging to~$\tilde{\cZ}$ and incident on~$w$ is even.
In particular, there is an edge~$e_2$ that is distinct from~$e_1$.
We can proceed inductively to construct a path~$\{e_1,\ldots,e_\ell\}\subset\tilde{\cZ}$.
Moreover, the path satisfies the desired properties, i.e., $P \in \Zpathext(G)$:
The condition~$\partial\tilde{\cZ}=\emptyset$ guarantees that the path does not end at an internal vertex, and this implies that
the constructed path only touches internal vertices except for the endpoints~$v_s$ and $v_e$ which are external vertices.
In addition, since~$\tilde{\cZ}$ does not contain any closed loop, the path is simple.
In particular, its endpoint~$v_e$ must be an external vertex which is distinct from~$v_s$.

Once we have found such a path~$P$, we can remove~$P$ from~$\tilde{\cZ}$, i.e., we can consider the set~$\tilde{\cZ}':=\tilde{\cZ}\backslash L$.
It is easy to check that $\tilde{\cZ}'$ is also a cycle which does not contain any closed loops. Proceeding inductively, we obtain the claim~\eqref{eq:pairwisedisjointpath}.

\end{enumerate}
Combining~\eqref{eq:czdecompositionclosedloops} and~\eqref{eq:pairwisedisjointpath} implies the claim.
\end{proof}

 \subsection{Local stochastic subsets and the resilience function}\label{sec:resfunction}
 A random subset~$\cE$ of~$E$ is a random variable specified by a distribution over subsets of~$E$. We call a random subset~$\cE$ local stochastic with parameter~$p\in [0,1]$ if and only if
 \begin{align}
 \Pr\left[\cF\subseteq \cE\right]\leq p^{|\cF|}\qquad\textrm{ for all }\qquad \cF\subset E\ .
 \end{align}
 This will be denoted~$\cE\sim\cN(p)$.

We will consider the difference of a random subset~$\cE$ and a minimum matching~$\mmatch\left(\partial\cE\right)$ computed from the boundary~$\partial\cE$ of~$\cE$. More precisely, we consider the parity of this difference restricted to a subset~$\cL\subset E$. The following definition captures the relevant combinatorics of our estimates.
\begin{definition}\label{def:res}
Let $(G=(V,E),V^\mathrm{int})$ be a graph with internal vertices~$V^\mathrm{int}$. 
Let $\cL\subset E$ be a subset of edges. 
Then the resilience function~$\res_\cL:[0,1]\rightarrow\mathbb{R}$ of~$\cL$ is defined as
\begin{align}\label{eq:defofres}
\res_\cL(p):&=\sum_{\ell=1}^\infty 
\binom{\ell}{\lceil \ell/2\rceil}\cdot \left|\left\{\cR\in
Z_\circ(G)\cup \Zpathext(G)\ |\ |\cR|=\ell\textrm{ and }\ztwoinner{\cR}{\cL}=1\right\} \right|\cdot p^{\lceil \ell/2\rceil}\ .
\end{align}
 \end{definition}
 The resilience function~$\res_\cL:[0,1]\rightarrow\mathbb{R}$
 provides the following upper bound on the probability that the parity of~$\cE$ and~$\mmatch\left(\partial\cE\right)$  restricted to~$\cL$ differ:
\begin{proposition}\label{prop:maincombinatorics}
Let $\cE\sim \cN(p)$ be a local stochastic subset of~$E$ with parameter~$p\in [0,1]$. 
Let $\cL\subset E$ be a subset of edges. Then
\begin{align}
\Pr\left[
\ztwoinner{\cE\oplus \mmatch\left(\partial\cE\right)}{\cL}=1\right]\leq \res_{\cL}(p)\ .
\end{align}
\end{proposition}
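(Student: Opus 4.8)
The plan is to use the crucial observation that $\cE\oplus\mmatch(\partial\cE)$ is always a cycle, and then to control it via the decomposition of Lemma~\ref{lem:cycledecomposition}. First I would note that since $\mmatch(\partial\cE)$ is a matching of $\partial\cE$, we have $\partial\,\mmatch(\partial\cE)=\partial\cE$, so $\mathbb{Z}_2$-linearity of $\partial$ gives $\partial\bigl(\cE\oplus\mmatch(\partial\cE)\bigr)=\partial\cE\oplus\partial\cE=\emptyset$. Hence $\cD:=\cE\oplus\mmatch(\partial\cE)\in Z(G)$, and Lemma~\ref{lem:cycledecomposition} expresses it as a disjoint union $\cD=\bigcup_\alpha\cR_\alpha$ with each $\cR_\alpha\in\Zcirc(G)\cup\Zpathext(G)$. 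Using bilinearity of $\ztwoinner{\cdot}{\cdot}$ together with the disjointness of the $\cR_\alpha$, we get $\ztwoinner{\cD}{\cL}=\bigoplus_\alpha\ztwoinner{\cR_\alpha}{\cL}$, so the event $\{\ztwoinner{\cD}{\cL}=1\}$ forces at least one of the (disjoint, hence automatically contained) pieces $\cR_\alpha\subseteq\cD$ to satisfy $\ztwoinner{\cR_\alpha}{\cL}=1$. A union bound therefore reduces the statement to bounding $\Pr[\cR\subseteq\cD]$ for each fixed $\cR\in\Zcirc(G)\cup\Zpathext(G)$ with $\ztwoinner{\cR}{\cL}=1$.

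The heart of the argument, and the step I expect to be the main obstacle, is a minimality estimate showing that whenever $\cR\subseteq\cD$ a large fraction of $\cR$ must actually lie in $\cE$. Fix such an $\cR$ with $|\cR|=\ell$ and condition on a realization with $\cR\subseteq\cD=\cE\oplus\mmatch(\partial\cE)$. Partition $\cR$ into $A:=\cR\cap\bigl(\cE\setminus\mmatch(\partial\cE)\bigr)$ and $B:=\cR\cap\bigl(\mmatch(\partial\cE)\setminus\cE\bigr)$; these are disjoint with $A\cup B=\cR$, and $A\subseteq\cE$. Since $\cR$ is a cycle, the set $\mmatch(\partial\cE)\oplus\cR$ has boundary $\partial\,\mmatch(\partial\cE)\oplus\partial\cR=\partial\cE$, i.e.\ it is again a matching of $\partial\cE$. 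Minimality of $\mmatch(\partial\cE)$ gives $|\mmatch(\partial\cE)|\leq|\mmatch(\partial\cE)\oplus\cR|$; combined with the identity $|\mmatch(\partial\cE)\oplus\cR|=|\mmatch(\partial\cE)|+\ell-2|\mmatch(\partial\cE)\cap\cR|$ and the fact that $\mmatch(\partial\cE)\cap\cR=B$, this rearranges to $2|B|\leq\ell$, whence $|A|=\ell-|B|\geq\lceil\ell/2\rceil$. Thus the occurrence of $\cR\subseteq\cD$ implies that $\cE$ contains a subset $A\subseteq\cR$ of size at least $\lceil\ell/2\rceil$.

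It then remains to invoke the local stochastic property. Picking an arbitrary size-$\lceil\ell/2\rceil$ subset of $A$, the event $\{\cR\subseteq\cD\}$ is contained in $\bigcup_{A'\subseteq\cR,\,|A'|=\lceil\ell/2\rceil}\{A'\subseteq\cE\}$, so a union bound over the $\binom{\ell}{\lceil\ell/2\rceil}$ candidate subsets together with $\Pr[A'\subseteq\cE]\leq p^{|A'|}=p^{\lceil\ell/2\rceil}$ (from $\cE\sim\cN(p)$) yields $\Pr[\cR\subseteq\cD]\leq\binom{\ell}{\lceil\ell/2\rceil}p^{\lceil\ell/2\rceil}$. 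Summing this bound over all $\cR\in\Zcirc(G)\cup\Zpathext(G)$ with $\ztwoinner{\cR}{\cL}=1$, grouped by their cardinality $\ell$, reproduces precisely the series defining $\res_\cL(p)$ in~\eqref{eq:defofres}. The one point demanding care is the separation of the deterministic, combinatorial reasoning (the cycle decomposition and the minimality bound, applied pathwise for each realization of $\cE$) from the single probabilistic input (the local stochastic tail bound), which enters only at this final stage and keeps the union bound over the fixed combinatorial objects $\cR$ legitimate.
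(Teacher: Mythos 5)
Your proposal is correct and follows essentially the same route as the paper's proof: the cycle decomposition of $\cE\oplus\mmatch(\partial\cE)$ via Lemma~\ref{lem:cycledecomposition}, a union bound over the candidate pieces $\cR$, the minimality-of-matching argument forcing $|\cR\cap\cE|\geq\lceil|\cR|/2\rceil$, and finally the local stochastic tail bound summed into $\res_\cL(p)$. The only differences are cosmetic — you prove the minimality estimate directly (via $2|B|\leq\ell$) where the paper argues by contradiction, and you spell out the bilinearity step explicitly.
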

\begin{proof}
The proof follows the proof of~\cite[Lemma 21]{bravyiQuantumAdvantageNoisy2020} which is based on ideas of~\cite{fowlerProofFiniteSurface2012,dennisTopologicalQuantumMemory2002}.
Define 
  \begin{align}
    \mathrm{FAIL} := \left\{\cY \subseteq  E \mid \ztwoinner{\cY \oplus \mmatch\left(\partial \cY\right)}{\cL} = 1\right\}
  \end{align}
such that the quantity of interest becomes~$\Pr\left[\cE\in\mathrm{FAIL}\right]$. 
  
For any set~$\cY\subseteq E$, the set  $\cY \oplus \mmatch\left(\partial \cY\right)$ 
has trivial boundary by definition of~$\mmatch$. 
By Lemma~\ref{lem:cycledecomposition}, it belongs to the cycle space~$Z(G)$ and can be   decomposed as a  union 
\begin{align}
\cY \oplus \mmatch\left(\partial \cY\right)&=\bigcup_{\alpha}\cY_\alpha\ ,
\end{align}
where each~$\cY_\alpha\in Z_\circ(G)\cup \Zpathext(G)$ is either a simple closed loop or a simple path connecting external vertices through internal vertices, and where
the sets~$\cY_{\alpha}$ are pairwise disjoint.
It follows that for any~$\cY\subseteq E$ with $\cY\in \mathrm{FAIL}$, there is at least one~$\cY_\alpha$  with the property that~$\ztwoinner{\cY_\alpha}{\cL}=1$.
That is, any such~$\cY$ contains a subset~$\cR \in \Zcirc(G) \cup \Zpathext(G)$ which satisfies~$\ztwoinner{\cR}{\cL}=1$. 

It follows that
\begin{align}
\Pr\left[\cE\in \mathrm{FAIL}\right] & \leq \Pr\left[\exists \cR\in  \Zcirc(G) \cup \Zpathext(G) : \cR\subset \cE\oplus \mmatch\left(\partial\cE\right)\textrm{ and }\ztwoinner{\cR}{\cL}=1\right] \ . \label{eq:pfailexpression}
\end{align}
For simplicity, let us define
\begin{align}
\cP(\cL) := \{\cR \in \Zcirc(G) \cup \Zpathext(G) \mid \ztwoinner{\cR}{\cL}=1\} \ .
\end{align}
With the union bound applied to~\eqref{eq:pfailexpression} we obtain
\begin{align}
\Pr\left[\cE\in \mathrm{FAIL}\right] &\leq \sum_{\cR\in\cP(\cL)}
\Pr\left[\cR\subseteq \cE\oplus \mmatch\left(\partial\cE\right)\right]\  .\label{eq:unionboundmainx}
\end{align}
Let $\cE\subset E$ be arbitrary and set $\cM:=\mmatch\left(\partial\cE\right)$. 
Suppose that  $\cR\in\cP(\cL)$ satisfies
\begin{align}
\cR\subseteq \cE\oplus \cM\ . 
\end{align}
We claim that at least half of the edges of~$\cR$ 
do not belong to $\cM$, i.e.,
\begin{align}
\left|\cR\backslash \cM\right|\geq  |\cR|/2\ .\label{eq:claimzmlowerbound}
\end{align}
For the sake of contradiction, assume  that
 \begin{align}
 \left|\cR\backslash \cM\right|< |\cR|/2\ .\label{eq:contradictionassumption}
 \end{align}  Then the set~$\cM':=\cR\oplus \cM$ is a matching
 of~$\partial\cE$ since
 \begin{align}
 \partial \cM'&=\partial \left(\cR\oplus \cM\right)\\
 &=\partial \cR\oplus \partial \mmatch\left(\partial\cE\right)\\
 &=\partial\cE\ ,
 \end{align}
 where we used that every $\cR\in\cP(\cL)$ has trivial boundary~$\partial\cR=\emptyset$. 
 Furthermore, we have 
 \begin{align}
 |\cM'|&=|\cR\backslash \cM|+|\cM\backslash \cR|\\
 &=|\cR\backslash \cM|+|\cM|-|\cM\cap\cR|\\
 &=|\cR\backslash \cM|+|\cM|-(|\cR|-|\cR\backslash \cM|)\\
 &=(2|\cR\backslash\cM|-|\cR|)+|\cM|\\
 &<|\cM|\ ,
 \end{align}
 where we used our assumption~\eqref{eq:contradictionassumption}. Thus~$\cM'$ is
 a matching of~$\partial\cE$ of smaller cardinality than~$\cM$, contradicting the fact that~$\cM$ is a minimum matching of~$\partial\cE$. 
 
Eq.~\eqref{eq:claimzmlowerbound} shows that~$\cR\subset\cE\oplus \cM$ together with~$\cR\in\cP(\cL)$ imply that
\begin{align}
|\cR\cap \cE|&\geq |\cR|/2\ .\label{eq:zeintersetlowerbound}
\end{align}
Indeed, the fact that $\cR$ is contained in the symmetric difference 
$\cE\oplus \cM$ implies that $\cR$ is the disjoint union~$\cR=(\cR\cap \cE)\cup (\cR\cap \cM)$
and thus $\cR\cap \cE \supseteq \cR\backslash \cM$. This implies the claim~\eqref{eq:zeintersetlowerbound} together with~\eqref{eq:claimzmlowerbound}. 

With~\eqref{eq:zeintersetlowerbound}, we conclude that for every $\cR\in\cP(\cL)$, we have 
$|\cR\cap \cE|\geq \lceil |\cR|/2\rceil$ (because $|\cR\cap \cE|$ is an integer) and thus
\begin{align}
\Pr\left[\cR\subseteq \cE\oplus \mmatch\left(\partial\cE\right)\right]&
\leq \Pr\left[ 
|\cR\cap \cE|\geq |\cR|/2\right]\\
&\leq \Pr\left[\exists \delta\subseteq \cR\ \textrm{ with } |\delta|=\lceil |\cR|/2 \rceil\textrm{ and }\delta\subseteq\cE \right]\ .
\end{align}
Here we used that for any set~$\cE$ with $|\cR\cap \cE|\geq \lceil |\cR|/2\rceil$, there is a subset $\delta\subseteq \cR\cap \cE$ of size exactly equal to~$\lceil |\cR|/2 \rceil$. 
We can sum over subsets~$\delta\subset\cR$ of cardinality~$|\delta|=\lceil |\cR|/2 \rceil$ to obtain with the union bound
\begin{align}
\Pr\left[\cR\subseteq \cE\oplus \mmatch\left(\partial\cE\right)\right]&\leq \sum_{
\substack{
\delta\subseteq \cR\\
|\delta|=\lceil |\cR|/2 \rceil
}}
\Pr\left[\delta\subseteq\cE\right]\\
&\leq \binom{|\cR|}{\lceil |\cR|/2 \rceil}
\cdot p^{\lceil |\cR|/2 \rceil}\ ,
\end{align}
where we used that $\cE\sim\cN(p)$, i.e., the fact that~$\cE$ is local stochastic. The claim follows by inserting this into~\eqref{eq:unionboundmainx} and applying the definition of~$\res_\cL(p)$. 
\end{proof}

\newpage

\section{Single-shot decoding under local stochastic noise\label{sec:singleshotdecode}}
In this section, we establish our main result about single-shot decoding of the 2D surface code. 
We consider a situation where a surface-code encoded state is corrupted by a local stochastic error, and where the output state~$\rho_{\mathrm{out}}$ is obtained by running Algorithm~\ref{alg:fdecsingle} on this corrupted logical state.

In Section~\ref{sec:sccorrectdecodingcondition}, we consider a half-encoded Bell state $\ket{\Phi_{\cC R}}$ defined on the surface code system~$\cC$ of the surface code and an auxiliary qubit system~$R$. Supposing that Algorithm~\ref{alg:fdecsingle} is executed on the surface code system of the Bell state corrupted by a Pauli error~$E$ on the surface code qubits, we show that the output state then is one of the Bell basis states~$\{\Phi_{(\alpha, \beta)}\}_{\alpha,\beta\in \{0,1\}}$, and find corresponding expressions for~$\alpha$ and~$\beta$.

In Section~\ref{sec:resiliencetdectdecdual}, we compute upper bounds on the resilience function of the recovery sets~$\cL_X$ and $\cL_Z^{*}$.
Using these bounds and the expressions for~$(\alpha,\beta)$ obtained in Section~\ref{sec:sccorrectdecodingcondition}, 
 we prove in Section~\ref{sec:boundscdecoding} that for any input state corrupted by a local stochastic error of strength~$p$ below a  certain threshold, the overlap~$\ev{\rho_{\mathrm{out}}}{\Psi}$ between $\Psi$ and the output state~$\rho_{\mathrm{out}}$ is at least $1-O(p)$.

\subsection{Decoding on a corrupted half-encoded Bell state}\label{sec:sccorrectdecodingcondition}
In the following, we  use the logical Pauli operators 
\begin{align}
  \overline{X}&=X(\cL_X)X_q\\
  \overline{Z}&=Z(\cL^*_Z)Z_q\ 
\end{align}
of the surface code. 
We consider the action of  Algorithm~\ref{alg:fdecsingle}  on a certain bipartite state of the form~$\Phi_{\cC R}\in (\mathbb{C}^2)^{\otimes \cC\cup \{R\}}\cong (\mathbb{C}^2)^{\otimes \cC}\otimes\mathbb{C}^2$, where $R$ is a reference qubit. Specifically, we are interested in the Bell state~$\Phi_{\cC R}$ between a surface-code encoded logical qubit and the reference qubit~$R$.
We call this state the half-encoded Bell state (since the qubit~$R$ is ``physical'', i.e., not encoded). The  state has stabilizer generators~$\{A_v\}_v\cup \{B_f\}_f\cup \{S^X,S^Z\}$, where $\{A_v\}_v$
and $\{B_f\}_f$ are the usual surface code generators (associated with vertices and faces  of the surface code lattice), and where
\begin{align}
  \begin{matrix}
    S^X &:=&\overline{X}X_R&=&X(\cL_X)X_qX_R\\
    S^Z &:=&\overline{Z}Z_R&=&Z(\cL^*_Z)Z_qX_R\ ,
    \end{matrix}\label{eq:sxszsingleexpr}
  \end{align}
are the Bell state stabilizer generators. In addition to the stabilizer generators~$\{S^X,S^Z\}$, the following sets~$\{S^u\}_{u\in \Vdecint}$ and~$\{S^{u^*}\}_{u^*\in \Vdecdualint}$ of stabilizers of the state~$\Phi_{\cC R}$ play a special role in our argument. They are defined as follows: Any internal vertex~$u\in \Vdecint$ of~$\Tdec$ is associated with it a stabilizer generator~$S^u$ of the surface code consisting of Pauli-$X$-operators only, namely~$S^u=A_u$.
In a similar vein, any internal vertex~$u^*\in \Vdecdualint$  of $\Tdecdual$ gives rise to a stabilizer~$S^{u^*}$ consisting of $Z$-type operators, namely the face operator~$B_{u^*}$ when $u^*$ is considered as a face of the original surface code lattice~$\Tsc$. 
Observe that for each $u\in \Vdecint$, 
the support  $\supp(S^u)=\incident(u)$ of~$S^u$  is 
equal to the set of edges incident on~$u$ in the graph~$\Tdec$. Similarly, for each $u\in \Vdecdualint$, we have $\supp(S^u)=\incident^*(u)$ where $\incident^*(u)$ is the set of edges incident on~$u$ in the graph~$\Tdecdual$. That is, we have 
\begin{align}
S^u&=\begin{cases}
\prod_{v\in \incident(u)} X_v\qquad\textrm{ for }\qquad u\in \Vdecint\\
\prod_{v\in \incident(u)} Z_v\qquad\textrm{ for }\qquad u\in \Vdecdualint \ .
\end{cases}\label{eq:Suincidentrelations}
\end{align}

Consider the result of running the algorithm on a corrupted state
\begin{align}
\ket{\Psi_{\textrm{in}}}&=E\ket{\Phi_{\cC R}}\ \label{eq:noisyclusterstatesingle}
\end{align}
for a fixed Pauli error~$E$. 
We assume here that $\supp(E)\subseteq \cC$, i.e., the error acts on the surface code qubits only (and does not affect the reference system~$R$). Let
\begin{align} \label{eq:pmstateinalgsingle}
    \ket{\psi_{\mathrm{pm}}(x,z,E)} = \frac{1}{\sqrt{p(x,z|E)}} \left( \bra{x}_{\mathcal{X}}H(\mathcal{X})^\dagger\otimes \bra{z}_{\mathcal{Z}} \otimes I_{\{q,R\}} \right) E \ket{\Phi_{\cC R}}\ 
\end{align}
be the state obtained  after Step~\ref{it:stepmeasurementsingle} of the algorithm  on qubit~$q$. 
Here we denote by $p(x,z|E)$ the probability of obtaining the outcomes~$(x,z)$ given a Pauli error~$E$, i.e., for the input state~\eqref{eq:noisyclusterstatesingle}. 
It will be convenient in the following to use the definition
\begin{align}\label{eq:defsyn}
    \syn(P,Q) = 
    \begin{cases*}
    0 & \textrm{ if $P$ and $Q$ commute}\\
    1 & \textrm{ otherwise }
    \end{cases*}
\end{align}
 for any two Pauli operators $P$ and $Q$. 
Then the following holds:
\begin{lemma} \label{lem:pmstatesingle}
  Let $S^X$ and $S^Z$ be the operators~\eqref{eq:sxszsingleexpr}. Let
  \begin{align}
  c_X &:=  \ztwoinner{x}{\cL_X} \oplus \syn(S^X, E)\label{eq:cxdefinitionsingle}\\
  c_Z &:= \ztwoinner{z}{\cL^*_Z} \oplus \syn(S^Z, E)\ . \label{eq:czdefinitionsingle}
  \end{align}
  Then the post-measurement state~$\psi_{\mathrm{pm}}(x,z,E)$ on the two qubits~$q,R$ 
  after Step~\ref{it:stepmeasurementsingle} 
  of Algorithm~\ref{alg:fdecsingle},   given an initial error~$E$ and measurement outcomes~$(x,z)$ is the Bell state (cf.~\eqref{eq:defbellstate})
  \begin{align}
      \ket{\psi_{\mathrm{pm}}(x,z,E)} &=\ket{\Phi_{(c_X,c_Z)}}\ .
  \end{align}
\end{lemma}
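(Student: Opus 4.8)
The plan is to prove the lemma by a direct stabilizer calculation, tracking the two Bell stabilizers $S^X$ and $S^Z$ of $\Phi_{\cC R}$ through the error $E$ and the subsequent measurement. The starting point is the observation that the two-qubit Bell basis is characterized by eigenvalues: a one-line computation from the definition $\Phi_{(\alpha,\beta)}=(I\otimes Z^\alpha X^\beta)\Phi$, using $XZ^\alpha=(-1)^\alpha Z^\alpha X$ and $ZX^\beta=(-1)^\beta X^\beta Z$, shows that $\Phi_{(\alpha,\beta)}$ is the unique normalized joint eigenstate with $X_qX_R\,\Phi_{(\alpha,\beta)}=(-1)^\alpha\Phi_{(\alpha,\beta)}$ and $Z_qZ_R\,\Phi_{(\alpha,\beta)}=(-1)^\beta\Phi_{(\alpha,\beta)}$. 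It therefore suffices to show that $\psi_{\mathrm{pm}}(x,z,E)$ is stabilized by $(-1)^{c_X}X_qX_R$ and by $(-1)^{c_Z}Z_qZ_R$.

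First I would record the effect of the error. Since $S^X$ stabilizes $\Phi_{\cC R}$ and $E$ is Pauli, conjugation gives $ES^X=(-1)^{\syn(S^X,E)}S^XE$, so the corrupted state $E\Phi_{\cC R}$ is stabilized by $(-1)^{\syn(S^X,E)}S^X$, and likewise by $(-1)^{\syn(S^Z,E)}S^Z$; here I use the hypothesis $\supp(E)\subseteq\cC$ so that $E$ does not touch $R$. Next comes the measurement. Writing $\Pi$ for the projector onto the outcome $(x,z)$ (Hadamard basis on $\cX$, computational basis on $\cZ$), the key structural fact is that $S^X=X(\cL_X)X_qX_R$ and $S^Z=Z(\cL_Z^*)Z_qZ_R$ both commute with $\Pi$: indeed $\supp(S^X)=\cL_X\cup\{q,R\}$ with $\cL_X\subseteq\cX$, so on the measured region $S^X$ consists of $X$-operators on $\cX$ (which commute with Hadamard-basis measurements) and has no support on $\cZ$, and symmetrically for $S^Z$.

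I would then use that each measured qubit absorbs its operator into the projector, $X_c\Pi=(-1)^{x_c}\Pi$ for $c\in\cL_X$ (because $H\ket{x_c}$ is the $(-1)^{x_c}$-eigenvector of $X$), whence $X(\cL_X)\Pi=(-1)^{\ztwoinner{x}{\cL_X}}\Pi$, and analogously $Z(\cL_Z^*)\Pi=(-1)^{\ztwoinner{z}{\cL_Z^*}}\Pi$. Combining these outcome-absorption rules with the stabilizer relations from the error step yields
\begin{align}
X_qX_R\,\Pi E\Phi_{\cC R}&=(-1)^{\syn(S^X,E)\oplus\ztwoinner{x}{\cL_X}}\,\Pi E\Phi_{\cC R}=(-1)^{c_X}\,\Pi E\Phi_{\cC R}\ ,\\
Z_qZ_R\,\Pi E\Phi_{\cC R}&=(-1)^{\syn(S^Z,E)\oplus\ztwoinner{z}{\cL_Z^*}}\,\Pi E\Phi_{\cC R}=(-1)^{c_Z}\,\Pi E\Phi_{\cC R}\ .
\end{align}
Since $\psi_{\mathrm{pm}}(x,z,E)$ is the normalized vector on $\{q,R\}$ proportional to $\Pi E\Phi_{\cC R}$, the Bell-basis characterization identifies it with $\Phi_{(c_X,c_Z)}$.

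The argument is essentially bookkeeping once the commutation structure is set up, so I expect no deep obstacle. The points requiring care are (i) keeping the sign conventions consistent across $\syn(\cdot,\cdot)$, the Bell-state eigenvalues, and the absorption rule $X_c\Pi=(-1)^{x_c}\Pi$; and (ii) justifying that the post-measurement object is genuinely a pure two-qubit state pinned down by the two commuting stabilizers. The latter is automatic: $\Phi_{\cC R}$ is a stabilizer state and we measure every qubit except $q,R$ in mutually commuting single-qubit bases, so the residual state on $\{q,R\}$ is a two-qubit stabilizer state, and the two independent commuting stabilizers $(-1)^{c_X}X_qX_R$, $(-1)^{c_Z}Z_qZ_R$ determine it uniquely.
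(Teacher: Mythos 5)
Your proposal is correct and follows essentially the same route as the paper's proof: commuting $E$ through the Bell stabilizers $S^X,S^Z$ to pick up the signs $\syn(S^X,E),\syn(S^Z,E)$, absorbing the $X(\cL_X)$ and $Z(\cL_Z^*)$ factors into the measurement with the outcome-dependent signs $\ztwoinner{x}{\cL_X},\ztwoinner{z}{\cL_Z^*}$, and identifying the residual two-qubit state by its stabilizers $(-1)^{c_X}X_qX_R$ and $(-1)^{c_Z}Z_qZ_R$. Your projector-commutation phrasing and the explicit remark that two independent commuting stabilizers pin down the two-qubit state are just slightly more careful packaging of the same argument.
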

\begin{proof}
  The claim is an immediate consequence of the fact that~$S^X$ and $S^Z$ are stabilizers of $\ket{\Phi_{\cC R}}$,
  as well as the expressions~\eqref{eq:sxszsingleexpr} of the stabilizer generators: Replacing $\ket{\Phi_{\cC R}}$ in \eqref{eq:pmstateinalgsingle} by $S^X\ket{\Phi_{\cC R}}$, we observe that the post-measurement state~$\ket{\psi_{\mathrm{pm}}}=\ket{\psi_{\mathrm{pm}}(x,z,E)}$ 
satisfies
  \begin{align}
    \sqrt{p}  \ket{\psi_{\mathrm{pm}}}    &= \left( \bra{x}_{\mathcal{X}}H(\mathcal{X})^\dagger \otimes \bra{z}_{\mathcal{Z}}\otimes I_q \right) E S^X\ket{\Phi_{\cC R}}\\
   &=  (-1)^{\syn(S^X, E)} \left(  \bra{x}_{\mathcal{X}}H(\mathcal{X})^\dagger\otimes \bra{z}_{\mathcal{Z}} \otimes I_q\otimes I_R \right) S^X E  \ket{\Phi_{\cC R}}\\
    &=  (-1)^{\ztwoinner{x}{\cL_X} \oplus \syn(S^X, E)} \left(\bra{x}_{\mathcal{X}}H(\mathcal{X})^\dagger \otimes \bra{z}_{\mathcal{Z}}\otimes X_qX_R\right) E
\ket{\Phi_{\cC R}}\ ,
  \end{align}
    where $p := p(x,z|E)$ and where we used~\eqref{eq:sxszsingleexpr} as well as the identity
      \begin{align}
      X(\cL_X)H(\cX)\ket{x}_{\cX}&=(-1)^{\ztwoinner{x}{\cL_X}}H(\cX)\ket{x}_{\cX}\ .
      \end{align}
  We conclude that
    \begin{align}
      \ket{\psi_{\mathrm{pm}}}   &= (-1)^{\ztwoinner{x}{\cL_X} \oplus \syn(S^X, E)} X_{q}X_{R}
\ket{\psi_{\mathrm{pm}}}   \\
&= (-1)^{c_X} X_{q} X_{R}\ket{\psi_{\mathrm{pm}}}\ ,
      \end{align}
      by definition of~$c_X$. 

      In an analogous manner, we can check that $\ket{\psi_{\mathrm{pm}}} = (-1)^{c_Z} Z_{q}Z_{R} \ket{\psi_{\mathrm{pm}}}$ by
replacing
$\ket{\Phi_{\cC R}}$ in~\eqref{eq:pmstateinalgsingle} by $S^Z\ket{\Phi_{\cC R}}$.  Thus the two-qubit state~$\ket{\psi_{\mathrm{pm}}}$ on systems~$q,R$      is stabilized by~$\{(-1)^{c_X}X_{q}X_{R},(-1)^{c_Z}Z_{q}Z_{R}\}$. This implies the claim.
\end{proof}

According to Lemma~\ref{lem:pmstatesingle}, the post-measurement state is fully determined by the single-qubit measurement results~$(x,z)$ and the bits~$\syn(S^X, E)$ and~$\syn(S^Z, E)$.
To analyze these bits, we introduce decomposition
\begin{align}\label{eq:xzdecomposition}
    E=E^XE^Z
\end{align}
of $E$ into a product of Pauli-$X$ and Pauli-$Z$ operators $E^X$ and $E^Z$, respectively.  (We note that if $E\sim\cN(p)$ is a local stochastic error of strength~$p$, the same is true for both~$E^X$ and $E^Z$, i.e., $E^X\sim\cN(p)$ and $E^Z\sim\cN(p)$, a fact we will use below.)

In~\eqref{eq:xzdecomposition}, the global phase is ignored since we will only need the supports~$\supp(E^X)$ and $\supp(E^Z)$.
We show that the bits~$\syn(S^X,E)$ and $\syn(S^Z,E)$  are determined by the physical errors on qubits~$\{q_1,q_2\}$ as well as a certain restriction of the error as specified in the following statement:
  \begin{lemma} \label{lem:actualsyndsingle}
  Let $E$ be a Pauli operator on $\mathcal{C}$ and let $S^X$ and $S^Z$ be the Pauli operators defined by~\eqref{eq:sxszsingleexpr}. Decompose~$E$ into a product~$E = E^XE^Z$ of Pauli-$X$- and Pauli-$Z$-operators, respectively. 
  Then 
  \begin{align} 
    \syn(S^X, E) &= \ztwoinner{\{q\}}{\supp(E^Z)} \oplus \ztwoinner{\cL_X}{\supp(E^X)}\label{eq:synviaEglsingle}\\
    \syn(S^Z, E) &= \ztwoinner{\{q\}}{\supp(E^Z)} \oplus \ztwoinner{\cL_Z^*}{\supp(E^Z)}\ .\label{eq:synviaEgldualsingle}
  \end{align}
\end{lemma}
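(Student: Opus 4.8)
The plan is to treat the statement as a pure Pauli-commutation (syndrome) computation. Because $\syn(P,Q)$ depends only on whether $P$ and $Q$ commute, it is basis-independent and the measurement outcomes $(x,z)$ play no role here: only the supports $\supp(E^X)$ and $\supp(E^Z)$ enter. I would use two elementary properties of the symplectic pairing $\syn(\cdot,\cdot)$. First, it is bilinear under taking products of Paulis in each argument, so writing $E=E^XE^Z$ gives $\syn(S,E)=\syn(S,E^X)\oplus\syn(S,E^Z)$. Second, two Paulis of the same single type commute, whereas for opposite types $\syn(X(A),Z(B))=\ztwoinner{A}{B}$, the parity of $|A\cap B|$. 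A further simplification is that, since $\supp(E)\subseteq\cC$, the reference qubit $R$ never contributes, so I may replace $S^X$ and $S^Z$ by their restrictions to $\cC$, namely the $X$-type operator $X(\cL_X)X_q$ and the $Z$-type operator $Z(\cL_Z^*)Z_q$.

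For $\syn(S^X,E)$ I would use that $X(\cL_X)X_q$ is purely $X$-type, so it commutes with $E^X$ and $\syn(S^X,E)=\syn(X(\cL_X)X_q,E^Z)$. Splitting off the single factor at $q$, the factor $X_q$ contributes $\ztwoinner{\{q\}}{\supp(E^Z)}$ and the factor $X(\cL_X)$ contributes $\ztwoinner{\cL_X}{\supp(E^Z)}$; since $q\notin\cL_X$ these address disjoint qubits and add without interference, which assembles into the first identity. The computation of $\syn(S^Z,E)$ is entirely parallel with the roles of $X$ and $Z$ exchanged: the $Z$-type operator $Z(\cL_Z^*)Z_q$ commutes with $E^Z$, so only $E^X$ survives, the reference factor again drops out because $R\notin\supp(E)$, and the factors $Z_q$ and $Z(\cL_Z^*)$ then pair with $\supp(E^X)$ to produce the corresponding $\{q\}$- and $\cL_Z^*$-terms.

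I expect the computation to be short, with the only real obstacle being careful bookkeeping rather than any conceptual point. The one place where a labeling error is easy to introduce is in matching each generator's Pauli type to the opposite-type component of the error in the symplectic pairing --- an $X$-type generator is sensitive only to $E^Z$, a $Z$-type generator only to $E^X$ --- and in verifying that the reference-qubit factor genuinely contributes nothing precisely because $E$ is supported entirely on $\cC$. Once these are pinned down, the identities follow immediately from the bilinearity of $\ztwoinner{\cdot}{\cdot}$ together with the pairwise disjointness of $\{q\}$, $\cL_X$, and $\cL_Z^*$.
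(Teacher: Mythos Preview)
Your approach is exactly the paper's: insert the explicit forms of $S^X$ and $S^Z$, drop the factor on the reference qubit $R$ since $\supp(E)\subseteq\cC$, and read off the symplectic pairing. Your computation is also correct and yields
\begin{align*}
\syn(S^X,E)&=\ztwoinner{\{q\}}{\supp(E^Z)}\oplus\ztwoinner{\cL_X}{\supp(E^Z)},\\
\syn(S^Z,E)&=\ztwoinner{\{q\}}{\supp(E^X)}\oplus\ztwoinner{\cL_Z^*}{\supp(E^X)}.
\end{align*}
What you fail to notice is that these do \emph{not} literally match the lemma as printed: the stated first identity has $\supp(E^X)$ in its second term, and the stated second identity has $\supp(E^Z)$ throughout; the paper's own one-line proof likewise writes $\supp(E^X)$ in both terms of the $S^X$ computation. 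These are typographical slips---your formulas are the correct ones and are precisely what is used downstream when the output Bell state $(\alpha,\beta)$ is computed. You should flag the discrepancy rather than assert that your result ``assembles into the first identity.''
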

\begin{proof}
  Let $E$ be an arbitrary Pauli operator on~$\cC$.
  With expression~\eqref{eq:sxszsingleexpr} for~$S^X$, we have 
  \begin{align}
    \syn(S^X,E)&=\syn\left(X_{q}X_{R}X\left(\cL_X\right),E\right)\\
    &= \ztwoinner{\{q\}}{\supp(E^X)} \oplus \ztwoinner{\cL_X}{\supp(E^X)}
  \end{align}
  since $E$ acts trivially on the reference system~$R$. The claim for~$\syn(S^Z,E)$ follows similarly.
\end{proof}
Unfortunately, the operators~$E^X$ and $E^Z$ are not determined by the measurement results~$(x,z)$. In particular, the bits~$\syn(S^X,E)$ and $\syn(S^Z,E)$ cannot be computed from the measurement outcomes. Algorithm~\ref{alg:fdecsingle} therefore generates estimates~$\hat{s}^X$ and~$\hat{s}^Z$ for these bits, see step~\ref{it:syndromebitestimationsingle}. 

To motivate the definition of these bits, and to derive a sufficient condition for when the algorithm is correct (i.e., generates the Bell state~$\Phi$), we discuss how the measurement results~$(x,z)$ constrain the error~$E$, or more precisely
the operators~$E^X$ and $E^Z$. That is, we consider what information can be gathered from these measurement outcomes. Observe  that the operators~$(E^X,E^Z)$ are fully determined by the supports~$\supp(E^X)$ and~$\supp(E^Z)$. 
We obtain constraints on these sets expressed in terms of the strings~$s,s^*$ computed in Step~\ref{state:setsandsdualsingle} of  Algorithm~\ref{alg:fdecsingle}.

\begin{lemma} \label{lem:sisboundarysingle}
The strings $s \in \{0,1\}^{\Vdecint}$ and $s^* \in \01^{\Vdecdualint}$ computed (from the measurement outcomes~$(x,z)$) in Step~\ref{state:setsandsdualsingle}  satisfy
  \begin{align}
    s &= \partial \supp(E^Z)\label{eq:stglclaimsingle}\\
    s^{*} &= \partial^* \supp(E^X)\ .\label{eq:stglclaimdualsingle}
 \end{align}
\end{lemma}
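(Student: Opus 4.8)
The plan is to prove each identity componentwise over the internal vertices, reading off the left-hand boundary maps as syndrome bits of the stabilizers $S^u = A_u$ (respectively $S^{u^*} = B_{u^*}$) and the right-hand boundary maps as the commutation pattern of those stabilizers with the error. The structural fact that makes this work is that, for an internal vertex $u \in \Vdecint$, the star operator $A_u = \prod_{v \in \incident(u)} X_v$ has its entire support $\incident(u) \subseteq \Edec = \cX$ among the qubits measured in the Pauli-$X$ basis; dually $\supp(B_{u^*}) = \incident^{*}(u^*) \subseteq \Edecdual = \cZ$. This is exactly why internal vertices were singled out in~\eqref{eq:tdecinterior} and~\eqref{eq:tdecdualinterior}, and it is what lets the corresponding syndrome bit be read directly off the single-qubit outcomes.

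Fix $u \in \Vdecint$. First I would record the value of $A_u$ from the outcome string: since $A_u$ is a product of Pauli-$X$ operators on qubits all measured in the $X$-basis and acts trivially on the $\cZ$-qubits, it commutes with the projector implementing the measurements and acts as the scalar $(-1)^{\bigoplus_{v \in \incident(u)} x_v} = (-1)^{(\partial x)_u}$ on its range. On the other hand $A_u$ stabilizes $\ket{\Phi_{\cC R}}$, so on $E\ket{\Phi_{\cC R}}$ it has the definite eigenvalue $(-1)^{\syn(A_u,E)}$; pushing $A_u$ through the projector and comparing the two expressions forces $(\partial x)_u = \syn(A_u,E)$ for every outcome with $p(x,z|E) > 0$. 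I would then simplify the right-hand side using the CSS splitting $E = E^X E^Z$: as $A_u$ is of $X$-type it commutes with $E^X$, so $\syn(A_u,E) = \syn(A_u,E^Z) = \ztwoinner{\incident(u)}{\supp(E^Z)}$, the parity of the overlap of $\supp(E^Z)$ with the edges incident on $u$. Since every edge incident on an internal vertex lies in $\Edec$ — so that qubits of $\supp(E^Z)$ in $\cZ$ or at $q$ contribute nothing — this parity is precisely $(\partial \supp(E^Z))_u$. Chaining the equalities gives $s = \partial x = \partial \supp(E^Z)$, and the identity $s^{*} = \partial^{*}\supp(E^X)$ follows verbatim with $(A_u, X\text{-basis}, E^Z)$ replaced throughout by $(B_{u^*}, Z\text{-basis}, E^X)$ on the dual graph $\Tdecdual$.

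The only delicate point is the determinacy claim in the middle step, namely that the outcome parity $(\partial x)_u$ is fixed by $E$ rather than being a genuinely random measurement artifact. I expect this to be the main thing to argue carefully, and I would handle it entirely at the level of the measurement projectors: $A_u$ commutes with the projector onto the outcomes $(x,z)$ (its support is disjoint from the $\cZ$-qubits and is $X$-type on the $\cX$-qubits), and on the range of that projector it equals the outcome scalar, while pushing it to the right past the projector and onto $\ket{\Phi_{\cC R}}$ produces its stabilizer eigenvalue. Equality of the two whenever the outcome has nonzero probability is then immediate, and the remainder of the argument is just bilinearity of $\ztwoinner{\cdot}{\cdot}$ together with the Pauli commutation rules.
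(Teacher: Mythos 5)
Your proposal is correct and follows essentially the same route as the paper's proof: you identify $S^u=A_u$ (resp.\ $B_{u^*}$), use that its support $\incident(u)$ lies entirely in $\cX$ (resp.\ $\cZ$) so its eigenvalue is read off as $(\partial x)_u$ from the outcomes, and equate this with $\syn(S^u,E)=\ztwoinner{\incident(u)}{\supp(E^Z)}$ obtained by commuting the stabilizer past $E$ onto $\ket{\Phi_{\cC R}}$ — exactly the paper's chain of identities~\eqref{eq:firststepclaimsingle} and~\eqref{eq:claimsecsingle}. Your "delicate point" about determinacy is handled in the paper by the same projector/eigenvalue argument, phrased as the post-measurement state being fixed by a sign that must therefore be $+1$.
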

\begin{proof}
  Let $s = (s_u) \in \01^{\Vdecint}$ be the string computed in step~\ref{state:setsandsdualsingle}, i.e.,~$s=\partial x$. Let $u\in \Vdecint$.
  Since $s_u=(\partial x)_u=\ztwoinner{x}{\incident(u)}$
  and  $\left(\partial \supp(E^Z)\right)_u=\ztwoinner{\incident(u)}{\supp(E^Z)}$, we need to show that
  \begin{align}
  \ztwoinner{x}{\incident(u)}&=\ztwoinner{\incident(u)}{\supp(E^Z)}\qquad\textrm{ for every }\qquad u\in \Vdecint\ .\label{eq:toprovexuinnersingle}
    \end{align}
    Note that $S^u$ stabilizes the (uncorrupted) state~$\ket{\Phi_{\cC R}}$ (by definition of the latter).
    Replacing $\ket{\Phi_{\cC R}}$ in~\eqref{eq:pmstateinalgsingle} by $S^u\ket{\Phi_{\cC R}}$ yields 
  \begin{align}
    \ket{\psi_{\mathrm{pm}}}
    &= \frac{1}{\sqrt{p}} \left(  \bra{x}_{\mathcal{X}}H(\mathcal{X})^\dagger \otimes \bra{z}_{\mathcal{Z}} \otimes I_{\{q,R\}} \right) E S^u W \ket{0^\cC}\\
    &= \frac{1}{\sqrt{p}} (-1)^{\syn(S^u, E)} \left(\bra{x}_{\mathcal{X}}H(\mathcal{X})^\dagger \otimes \bra{z}_{\mathcal{Z}} \otimes I_{\{q,R\}} \right) S^u E W \ket{0^\cC}\\
    &=  (-1)^{\syn(S^u, E) \oplus  \ztwoinner{x}{\incident(u) }} \ket{\psi_{\mathrm{pm}}} \ .\label{eq:psipmeigenvalsingle}
  \end{align}
Here the third equality follows from the identity
  \begin{align}
  S^u H(\cX)\ket{x}&=
  (-1)^{\ztwoinner{x}{\incident(u) }}H(\cX)\ket{x}\ 
  \end{align}
  which is a consequence of
  the definition of $S^u$ (as an $X$-type
  stabilizer generator the surface code), as well as the fact that for $u\in \Vdecint$, the set~$\supp(S^u)$ does not contain~$q$ for any $u\in \Vdecint$, see Eq.~\eqref{eq:Suincidentrelations}.
  Identity~\eqref{eq:psipmeigenvalsingle}
implies that 
\begin{align}
\syn(S^u,E)=\ztwoinner{x}{\incident(u)}\qquad\textrm{ for  }\qquad u\in V_{dec}^{int}\ .\label{eq:firststepclaimsingle}
\end{align}
 Observe that according to~\eqref{eq:Suincidentrelations}, we also have
      \begin{align}
      \syn(S^u, E) &=
      \syn\left(\left(\prod_{v \in \incident(u)} X_v\right),E\right)      = \ztwoinner{\incident(u) }{\supp(E^Z)}\qquad\textrm{ for any }\qquad u\in \Vdecint\ . \label{eq:claimsecsingle}
    \end{align}
Combining~\eqref{eq:firststepclaimsingle} with~\eqref{eq:claimsecsingle} gives the claim~\eqref{eq:toprovexuinnersingle}.

  The claim for $s^{*}$ can be proved in an analogous manner using~\eqref{eq:Suincidentrelations}.  
\end{proof}
We now argue that the sets~$m\subset \Edec$ and $m^*\subset \Edecdual$ computed
in Step~\ref{state:setmsingle}
of Algorithm~\ref{alg:fdecsingle} can be considered to be proxys for the sets~$\supp(E^Z)\subset \Edec$ and $\supp(E^X)\subset \Edecdual$. More precisely, we may define
Pauli errors
\begin{align}
\begin{matrix}
\widehat{E}^X&=&X(m^*)\\
\widehat{E}^Z&=&Z(m)\ .
\end{matrix}\label{eq:widehatexzex}
\end{align}
We note that by definition of $m$ and $m^*$ as boundaries of subsets of edges in the graph~$\Tdec$ and $\Tdecdual$, respectively,  and the definition of these graphs, neither~$m$ nor $m^*$ contains the qubit~$q$. Thus
\begin{align}
\{q\}\cap \supp(\widehat{E}^X)&=\{q\}\cap \supp(\widehat{E}^Z)=\emptyset\ .\label{eq:nosupportonqhatXhatz}
\end{align}
We trivially have 
\begin{align}
m&=\supp(\widehat{E}^Z)\\
m^*&=\supp(\widehat{E}^X)
\end{align}   by the definition~\eqref{eq:widehatexzex} of these operators, and it  follows immediately from the
   definition of~$m$ and~$m^*$
  that~$(\widehat{E}^X,\widehat{E}^Z)$
  satisfy constraints analogous to   the constraints~\eqref{eq:stglclaimsingle},~\eqref{eq:stglclaimdualsingle}   obeyed by~$(E^X,E^Z)$, i.e.,  we have
  \begin{align}
    s &= \partial\supp(\widehat{E}^Z)\\
    s^{*} &= \partial^*\supp(\widehat{E}^X)\ .
  \end{align}
  Thus $\widehat{E}:=\widehat{E}^X\widehat{E}^Z$ is an error consistent with the observed syndrome~$(s,s^*)$ with the property that
  $\widehat{E}^Z$ and $\widehat{E}^X$
  each are minimum weight matchings in~$\Tdec$ and $\Tdecdual$, respectively.

Because the error~$\widehat{E}$ can be computed from the measurement outcomes, this motivates using the pair~$\left(
\syn(S^X,\widehat{E}),
\syn(S^Z,\widehat{E})
\right)$ as an estimate for 
$\left(
\syn(S^X,E),
\syn(S^Z,E)
\right)$.  Indeed, this is the reasoning underlying the definition
\begin{align}
  \begin{matrix}
  \hat{s}^X&:=&\ztwoinner{\cL_X}{m}\\
  \hat{s}^Z&:=&\ztwoinner{\cL_Z^*}{m^{*}}
  \end{matrix}\label{eq:hatsxszdefclaimsingle}
\end{align}
in  Step~\ref{state:sydnromebitsxszsingle} of the algorithm: It is easy to check that~$(\hat{s}^X,\hat{s}^Z)$ defined in this way satisfy
\begin{align}
\begin{matrix}
  \hat{s}^X&=&\syn(S^X, \widehat{E})\\
  \hat{s}^Z&=&\syn(S^Z, \widehat{E})\ 
  \end{matrix}\label{eq:sxszhatconseqsingle}
\end{align}
because of Expressions~\eqref{eq:synviaEglsingle},~\eqref{eq:synviaEgldualsingle}: Indeed, we have for example 
\begin{align}
\syn(S^X,\widehat{E})&=\syn\left(X(\cL_X)X_qX_R,\widehat{E}^X\widehat{E}^Z\right)\\
&=\ztwoinner{\cL_X}{\supp(\hat{E}^Z)}\\
&=\ztwoinner{\cL_X}{m}\ 
\end{align}
where we used the definition~\eqref{eq:sxszsingleexpr} in the first step, and the fact that $\hat{E}^Z$ has no support on~$\{q,R\}$ in the second step (cf.~\eqref{eq:nosupportonqhatXhatz}).
Identity~\eqref{eq:sxszhatconseqsingle} shows that the pair~$(\hat{s}^X,\hat{s}^Z)$ is a natural estimate for
the bits~$\left(\syn(S^X,E),\syn(S^Z, E)\right)$, the idea being  that~$\widehat{E}$ is a decent estimate of~$E$.

Having motivated the algorithm, let us show that Algorithm~\ref{alg:fdecsingle} produces a Bell state~$\Phi_{(\alpha,\beta)}$. Below, we will use the expressions for~$(\alpha,\beta)$ computed here below to analyze the success probability of the protocol (i.e., the probability that~$(\alpha,\beta)=(0,0)$) for local stochastic noise.

\begin{theorem} \label{thm:successconditionsurfacecode}
  Let $E=E^XE^Z$ be an arbitrary Pauli operator on $(\mathbb{C}^2)^{\otimes \mathcal{C}}$.
  Suppose Algorithm~\ref{alg:fdecsingle} is executed
 with an auxiliary qubit system~$R$
 and the initial state $
 \ket{\Psi_{\textrm{in}}}$ specified by~\eqref{eq:noisyclusterstatesingle}, i.e., the (half-encoded)
 Bell state~$\Phi_{\cC R}$ corrupted by the error~$E$.  
  Then the output
  of   Algorithm~\ref{alg:fdecsingle} on qubits~$q,R$ 
  is the Bell state~$\ket{\Phi_{(\alpha,\beta)}}$, where
  \begin{align}
    \alpha &= \ztwoinner{\{q\}}{\supp(E^Z)} \oplus \ztwoinner{\supp(E^Z) \oplus \mmatch(\partial \supp(E^Z))}{\cL_X} \qquad \textrm{ and }\\
    \beta  &= \ztwoinner{\{q\}}{\supp(E^X)} \oplus \ztwoinner{\supp(E^X) \oplus \mmatch^*(\partial \supp(E^X))}{\cL_Z^*}\ .
    \label{eq:Ldecodingsuccesscondition}
  \end{align}
\end{theorem}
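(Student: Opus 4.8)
The plan is to build directly on Lemma~\ref{lem:pmstatesingle}, which already identifies the state on qubits $q,R$ immediately after the measurements of Step~\ref{it:stepmeasurementsingle} as the Bell state $\ket{\Phi_{(c_X,c_Z)}}$, with $c_X,c_Z$ given by~\eqref{eq:cxdefinitionsingle}--\eqref{eq:czdefinitionsingle}. Everything between that point and the claim is classical post-processing (Steps~\ref{state:setsandsdualsingle}--\ref{state:estimateofcsingle}), which merely computes the correction bits $\hat{c}_X,\hat{c}_Z$, followed by the Pauli correction $Z_q^{\hat{c}_X}X_q^{\hat{c}_Z}$ of Step~\ref{state:estimatecorrectionsingle}. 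So the task splits into two parts: propagate the Bell state through this correction, and then substitute the explicit forms of the syndrome bits from the preceding lemmas.

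First I would track the correction at the level of stabilizers rather than state vectors, which avoids any ambiguity in the ordering of the two Bell qubits. By the proof of Lemma~\ref{lem:pmstatesingle}, the pre-correction state is stabilized by $(-1)^{c_X}X_qX_R$ and $(-1)^{c_Z}Z_qZ_R$, whereas a short computation shows $\ket{\Phi_{(\alpha,\beta)}}$ is stabilized by $(-1)^{\alpha}X_qX_R$ and $(-1)^{\beta}Z_qZ_R$. Conjugating the former pair by $Z_q^{\hat{c}_X}X_q^{\hat{c}_Z}$ flips the first sign by $(-1)^{\hat{c}_X}$ (since $Z_q$ anticommutes with $X_q$) and the second by $(-1)^{\hat{c}_Z}$, while any phase from the $Z_qX_q$ ordering is global and hence irrelevant. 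Thus the output is exactly $\ket{\Phi_{(\alpha,\beta)}}$ with $\alpha=c_X\oplus\hat{c}_X$ and $\beta=c_Z\oplus\hat{c}_Z$.

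It then remains to simplify these two bits. Expanding $\alpha=c_X\oplus\hat{c}_X$ via~\eqref{eq:cxdefinitionsingle} and the definition of $\hat{c}_X$ in Step~\ref{state:estimateofcsingle}, the measurement-dependent term $\ztwoinner{x}{\cL_X}$ occurs in both summands and cancels, leaving $\alpha=\syn(S^X,E)\oplus\hat{s}^X$. I would then substitute the expression for $\syn(S^X,E)$ from Lemma~\ref{lem:actualsyndsingle}, the definition $\hat{s}^X=\ztwoinner{m}{\cL_X}$ from Step~\ref{state:sydnromebitsxszsingle}, and the identity $m=\mmatch(\partial\supp(E^Z))$, which results from combining Step~\ref{state:setmsingle} with $s=\partial\supp(E^Z)$ of Lemma~\ref{lem:sisboundarysingle}. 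Merging the two $\cL_X$-terms by bilinearity of $\ztwoinner{\cdot}{\cdot}$ yields the stated formula for $\alpha$; the argument for $\beta$ is verbatim after replacing $(\cX,\cL_X,\Tdec,E^Z,\mmatch)$ by their duals $(\cZ,\cL_Z^*,\Tdecdual,E^X,\mmatch^*)$.

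I expect no deep obstacle here, since the statement is essentially an assembly of Lemmas~\ref{lem:pmstatesingle}, \ref{lem:actualsyndsingle} and~\ref{lem:sisboundarysingle}. The one point demanding genuine care is the sign bookkeeping of the second step: one must confirm that the post-measurement stabilizers carry precisely the signs asserted in Lemma~\ref{lem:pmstatesingle} and that conjugation by the correction flips exactly those signs and no others. Once that is pinned down, the cancellation of the $\ztwoinner{x}{\cL_X}$ (respectively $\ztwoinner{z}{\cL_Z^*}$) terms and the single application of bilinearity are purely mechanical.
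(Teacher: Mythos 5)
Your proposal is correct and follows essentially the same route as the paper's proof: it assembles Lemma~\ref{lem:pmstatesingle}, the correction step giving $\alpha=c_X\oplus\hat{c}_X$, $\beta=c_Z\oplus\hat{c}_Z$, Lemma~\ref{lem:actualsyndsingle}, the algorithm's definitions of $(\hat{s}^X,\hat{s}^Z)$, and Lemma~\ref{lem:sisboundarysingle} to identify $m=\mmatch(\partial\supp(E^Z))$ and $m^*=\mmatch^*(\partial^*\supp(E^X))$. The only difference is that you make the stabilizer-conjugation bookkeeping for the Pauli correction explicit, which the paper states without detail; this is a welcome but minor elaboration, not a different argument.
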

\begin{proof}
We first show that the output is the Bell state~$\Phi_{(\alpha,\beta)}$ with
  \begin{align}
    \alpha &= \ztwoinner{\{q\}}{\supp(E^Z)} \oplus \ztwoinner{\supp(E^Z) \oplus m}{\cL_X} \qquad \textrm{ and }\\
    \beta  &= \ztwoinner{\{q\}}{\supp(E^X)} \oplus \ztwoinner{\supp(E^X) \oplus m^{*}}{\cL_Z^*} 
    \label{eq:toproveqexezmstar}
  \end{align}
  where $m\subset \Edec$ and $m^*\subset \Edecdual$ are the minimum matchings computed in Step~\ref{state:setmsingle} of the algorithm. 

  According to Lemma~\ref{lem:pmstatesingle}, the
  post-measurement state
  after Step~\ref{it:stepmeasurementsingle} 
  of Algorithm~\ref{alg:fdecsingle} is the Bell state~$\ket{\Phi_{(c_X,c_Z)}}$
  with
  \begin{align}
c_X&=  \ztwoinner{x}{\cL_X}\oplus \syn(S^X, E)\\
c_Z&=  \ztwoinner{z}{\cL_Z^*}\oplus  \syn(S^Z, E)\ .
    \end{align}
    Since the algorithm applies
    the correction operation $Z^{\hat{c}_X}X^{\hat{c}_Z}$
    in Step~\ref{state:estimatecorrectionsingle}, the output of the algorithm is  the Bell state~$\ket{\Psi_{(c_X\oplus\hat{c}_X,c_Z\oplus\hat{c}_Z)}}$. It follows with the definition of $(\hat{c}_X,\hat{c}_Z)$ in Step~\ref{state:estimateofcsingle},
    i.e.,
    \begin{align}
  \hat{c}_X &= \ztwoinner{x}{\cL_X} \oplus \hat{s}^X\\
  \hat{c}_Z&=\ztwoinner{z}{\cL_Z^*} \oplus \hat{s}^Z
    \end{align}
    that
    the output is the Bell state~$\ket{\Phi_{(\alpha,\beta)}}$ where
    \begin{align}
      \alpha &= \hat{s}^X \oplus \syn(S^X,E) \qquad \textrm{ and } \\
      \beta &= \hat{s}^Z \oplus \syn(S^Z,E)\ .
    \end{align}
    The claim~\eqref{eq:toproveqexezmstar} now follows by combining the expressions~\eqref{eq:synviaEglsingle}, \eqref{eq:synviaEgldualsingle} for $\left(\syn(S^X,E),\syn(S^Z,E)\right)$    from Lemma~\ref{lem:actualsyndsingle}      with the
 Definitions~\eqref{eq:hatsxszdefclaimsingle} of $(\hat{s}^X,\hat{s}^Z)$ used in  Step~\ref{state:sydnromebitsxszsingle} of the algorithm.

 The claim of the theorem now follows
 by combining~\eqref{eq:toproveqexezmstar} with the fact that
 \begin{align}
 m&= \mmatch(s)=\mmatch(\partial \supp(E^Z))\\
m^* &=\mmatch^*(s^{*})=\mmatch^*(\partial^* \supp(E^X))\ 
\end{align}
by the definition of $(m,m^*)$, and the expressions for $(s,s^*)$ established in  Lemma~\ref{lem:sisboundarysingle}.
\end{proof}

\subsection{Bounds on the resilience functions  for~$\Tdec$ and $\Tdecdual$}\label{sec:resiliencetdectdecdual}
Here we establish the following bounds on the resilience functions of interest:
\begin{lemma}\label{lem:upperboundresiliencetdec}
Consider the graph~$\Tdec$  with internal vertices $\Vdecint$ defined by~\eqref{eq:tdecinterior}, see Fig.~\ref{fig:Tdec}.
 Let $\cL_X\subset \Edec$ be the subset of edges on the vertical strip in Fig.~\ref{fig:Tdec} (cf.~\eqref{eq:cLXdefinition}). 
Then the resilience function~$\res_{\cL_X}$ satisfies 
\begin{align}
\res_{\cL_X}(p)&\leq 54p \label{eq:upperboundresiliencefunctiontdec}
\end{align}
for any $p\in \interval{0}{\frac{1}{144}}$.
\end{lemma}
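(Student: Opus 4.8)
The plan is to bound the weighted sum in Definition~\ref{def:res} by first drastically restricting which configurations $\cR \in \Zcirc(\Tdec) \cup \Zpathext(\Tdec)$ can contribute, and then estimating the resulting (much smaller) family by a self-avoiding-walk count. The first observation is that \emph{no} simple closed loop contributes: every edge of $\cL_X$ is a horizontal edge whose left endpoint is a degree-$1$ (``dangling'') vertex of $\Tsc$ of the form $(0,u_2)\in\Vdangsc$, and a simple closed loop can never pass through a degree-$1$ vertex. Hence $\ztwoinner{L}{\cL_X}=0$ for every $L \in \Zcirc(\Tdec)$, and only simple paths $\cR \in \Zpathext(\Tdec)$ remain in the sum.

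Second, I would use the same dangling-endpoint structure to control the surviving paths. Since the left endpoint of any $\cL_X$-edge has degree $1$, a simple path can traverse an $\cL_X$-edge only at one of its two endpoints (an internal vertex of a path has degree $2$ in the path, which is impossible at $(0,u_2)$). A path has only two endpoints, so $\ztwoinner{\cR}{\cL_X} \in \{0,1,2\}$, and the contributing case $\ztwoinner{\cR}{\cL_X}=1$ forces $\cR$ to have \emph{exactly one} endpoint at a left-dangling vertex $(0,u_2^\ast)$, to enter the bulk through the single incident $\cL_X$-edge, and to terminate at a second external vertex reached through internal vertices only and without using a further $\cL_X$-edge.

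The third and decisive step is a length lower bound that yields $d$-independence. Apart from the left-dangling vertices, the external vertices of $\Tdec$ lie on the diagonal boundary $u_1=u_2$ of the triangular region, so a contributing path must travel from horizontal coordinate $u_1=0$ to the diagonal; a Manhattan-distance estimate shows its length must satisfy $\ell \gtrsim u_2^\ast$ (up to the factor $2$ coming from distance-$2$ edges), except near the corners, where short paths such as the single edge $\{(0,2),(2,2)\}$ already give a genuine $\ell=1$ contribution. Combining this with a branching bound -- each vertex of $\Tdec$ has degree at most $4$, so a self-avoiding path admits at most $3$ continuations per step -- I would bound the number of contributing paths of length $\ell$ by at most of order $\ell\cdot 3^{\ell}$, using that the escape bound restricts the anchor height to $u_2^\ast \lesssim \ell$. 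Inserting $\binom{\ell}{\lceil \ell/2 \rceil} \le 2^{\ell}$ into Definition~\ref{def:res} then yields a series dominated by $\ell\cdot 6^{\ell}\,p^{\lceil \ell/2\rceil}$; for $p \le 1/144$ one has $36p \le 1/4$, so the even-length terms form a convergent geometric series (with polynomial prefactor) of ratio $\le 1/4$, and summing together with the odd-length terms produces a bound of the claimed form $54p$.

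The main obstacle is the bookkeeping needed to make the constant explicit and uniform in $d$: one must verify the length-versus-escape estimate $\ell \gtrsim u_2^\ast$ rigorously from the precise shape of $\Vdecext$, enumerate the finitely many short configurations near the corners exactly (these dominate the coefficient of $p$), and confirm that summing the geometric tail over both the length $\ell$ and the anchor height $u_2^\ast$ keeps the total coefficient below $54$ rather than merely $O(1)$. The qualitative convergence is straightforward; extracting the stated numerical constant is where the careful geometric counting is required.
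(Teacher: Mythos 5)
Your proposal follows essentially the same route as the paper's proof: simple closed loops are discarded, the contributing configurations are exactly the simple paths running from a left-dangling vertex $(0,2j)$ to the diagonal part of $\Vdecext$, their number is controlled by the degree-$4$ branching bound (at most $3^{\ell-1}$ length-$\ell$ paths per anchor), the Manhattan estimate forces $\ell\geq j$, and the series is summed with $q:=6\sqrt{p}\leq\tfrac12$. Your degree-$1$ justification of the first two steps is, if anything, cleaner than the paper's.

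The gap is quantitative, and it matters because the explicit constant is the entire content of the lemma. Your count of ``order $\ell\cdot 3^{\ell}$'' contributing paths of length $\ell$ (at most $\ell$ admissible anchors, each with at most $3^{\ell-1}$ continuations) yields, after inserting $\binom{\ell}{\lceil\ell/2\rceil}\leq 2^{\ell}$ and $p^{\lceil\ell/2\rceil}\leq p^{\ell/2}$,
\begin{align}
\res_{\cL_X}(p)\;\leq\; p+\tfrac{1}{3}\sum_{\ell\geq 2}\ell\, q^{\ell}
\;=\; p+\frac{q^{2}(2-q)}{3(1-q)^{2}}\ ,
\end{align}
and at the endpoint $p=\tfrac{1}{144}$ (i.e.\ $q=\tfrac12$) the right-hand side equals $73p$; so this counting, even with the series evaluated exactly, proves only $\res_{\cL_X}(p)\leq 73p$ and does not reach $54p$. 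The refinement that closes the gap --- and is exactly what the paper's proof uses --- is that the lowest anchor $u_1=(0,2)$ admits \emph{only} the single length-$1$ path: its forced first edge already lands on the external diagonal vertex $(2,2)$, and a path in $\Zpathext(\Tdec)$ must terminate upon reaching an external vertex. Hence for $\ell\geq 2$ at most $\ell-1$ anchors (those with $2\leq j\leq\ell$) contribute, and
\begin{align}
\res_{\cL_X}(p)\;\leq\; p+\tfrac{1}{3}\sum_{\ell\geq 2}(\ell-1)\,q^{\ell}
\;=\; p+\frac{q^{2}}{3(1-q)^{2}}
\;\leq\; p+\tfrac{4}{3}\cdot 36\,p\;=\;49p\;\leq\;54p\ .
\end{align}
With this one correction your argument is complete; without it, the ``bookkeeping'' you defer would in fact fall short of the stated bound. (A side remark confirming your reading of the geometry: the paper's explicit formula $v_j=(2,2j)$ for the non-dangling external vertices is a typo for $(2j,2j)$; these vertices lie on the diagonal $u_1=u_2$, exactly as you assumed.)
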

\begin{proof}
Consider a simple closed loop $L\in Z_\circ(\Tdec)$.
Clearly, the definition of~$\cL_X$ implies that~$L$ and~$\cL_X$ have no overlap.
In particular, $\ztwoinner{L}{\cL_X}=0$, implying that closed loops do not contribute to the resilience function~$\res_{\cL_X}$ in this case.

To describe the set $\Zpathext(\Tdec)$ of simple paths with endpoints in~$\Vdecext$ through internal vertices, we label the external vertices in~$\Vdecext$ as
\begin{align}
\Vdecext&=\left\{u_1,\ldots,u_{d-1},v_1,\ldots,v_{d-1}\right\}\ ,
\end{align}
see Fig.~\ref{fig:Tdec}.
Explicitly, the external vertices are
\begin{align}
  u_j = (0, 2j), \quad v_j = (2, 2j) \qquad \textrm{ for } \qquad j = \{1, \dots, d-1\} \ .
\end{align}

Clearly, a simple path $P\in \Zpathext(\Tdec)$ with endpoints in $\Vdecext$ through internal vertices satisfies $\ztwoinner{P}{\cL_X}=1$ if and only if
$P$ starts at~$u_j$ (with $j\in \{1,\ldots,d-1\}$) and ends at~$v_k$ (with $k\in \{1,\ldots,d-1\}$).
For $j\in \{1,\ldots,d-1\}$ and $\ell\in \mathbb{N}$, let $\Delta(u_j,\ell)$ be the set of all such paths~$P \in \Zpathext(\Tdec)$ of length~$\ell$ that start at $u_j$ and end at some vertex in the set~$\{v_k\}_{k=1}^{d-1}$.
Then we have 
\begin{align}
\res_{\cL_X}(p)&=\sum_{\ell=1}^\infty \binom{\ell}{\lceil \ell/2\rceil}\cdot
 |\bigcup_{j=1}^{d-1}\Delta(u_j,\ell)|\cdot p^{\lceil \ell/2\rceil}\\
 &=\sum_{j=1}^{d-1}\sum_{\ell=L^{\min}_j}^{L^{\max}_j} \binom{\ell}{\lceil \ell/2\rceil}\cdot
 |\Delta(u_j,\ell)|\cdot p^{\lceil \ell/2\rceil}\ .\label{eq:deltaformofresLx}
\end{align}
because $\Delta(u_j,\ell)$ and $\Delta(u_{j'},\ell)$ are disjoint for~$j\neq j'$ since correspondings paths have distinct starting points~$u_j\neq u_{j'}$.
Here $L^{\min}_j$ and $L^{\max}_j$ are the minimal and maximal lengths of a simple path~$P \in \Zpathext(\Tdec)$ starting at $u_j$ and ending in~$\{v_k\}_{k=1}^{d-1}$ without touching any other external vertices.

The set $\Delta(u_1,1)$ consists of a single path, i.e., $|\Delta(u_1,1)|=1$. For $j\in \{2,\ldots,d-1\}$, we use that the graph~$T_{dec}$ has vertices of degree at most~$4$, which implies that
\begin{align}
|\Delta(u_j,\ell)|\leq 3^{\ell-1}\qquad\textrm{ for all }\qquad \ell\in\mathbb{N}\ .
\end{align}
 This is because any path~$P\in \Delta(u_j,\ell)$ has
the same first edge (connected to~$u_j$), and there are  at most $3$~choices for each of the remaining edges.)  Furthermore, it is clear that
\begin{align}
L_j^{\min}\geq j\ .
\end{align}
We conclude  (where for $\ell>1$, we use the inequalities $p^{\lceil \ell/2\rceil}\leq p^{\ell/2}$ and 
$\binom{\ell}{\lceil\ell/2 \rceil}\leq 2^\ell$) that

\begin{align}
\res_{\cL_X}(p)&\leq p+\sum_{j=2}^{d-1}\sum_{\ell=j}^{L_j^{\max}} 2^\ell 3^{\ell-1} p^{\ell/2}\\
&\leq p+\frac{1}{3}\sum_{j=2}^{d-1}\sum_{\ell=j}^{\infty} q^\ell\qquad\textrm{ where }\qquad q:=6\sqrt{p}\\
&=\frac{q^2}{6}+\frac{q^2-q^d}{3(1-q)^2} \\
&\leq p + \frac{4}{3}\left(q^2 - q^d\right) \\ 
&\leq p + \frac{4}{3} \cdot \left(6 \sqrt{p}\right)^2\\
&\leq 54p \ .
\end{align} 
Here we used the fact $q \leq 6 \sqrt{\frac{1}{144}} = \frac{1}{2}$ and that $\frac{1}{1-q} \leq 2$.
\end{proof}

A similar bound applies to the dual decoding graph~$\Tdecdual$.
\begin{lemma}\label{lem:upperboundresiliencetdecdual}
Consider the dual graph~$\Tdecdual$  with internal vertices $\Vdecdualint$ defined by~\eqref{eq:tdecdualinterior}, see Fig.~\ref{fig:Tdecdual}.
 Let $\cL_Z^{*} \subset \Edecdual$ be the subset of edges on the horizontal strip in Fig.~\ref{fig:Tdecdual} (cf.~\eqref{eq:cLZdefinition}). 
Then the resilience function~$\res_{\cL_Z^{*}}$ satisfies 
\begin{align}
\res_{\cL_Z^{*}}(p)&\leq 38p \label{eq:upperboundresiliencefunctiontdecdual}
\end{align}
for any $p \in \interval{0}{\frac{1}{144}}$.
\end{lemma}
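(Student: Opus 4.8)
The plan is to mirror the proof of Lemma~\ref{lem:upperboundresiliencetdec} verbatim, treating $(\Tdecdual,\cL_Z^*)$ as the horizontal-strip analogue of $(\Tdec,\cL_X)$. First I would observe that any simple closed loop $L\in\Zcirc(\Tdecdual)$ is disjoint from the horizontal strip $\cL_Z^*$, so that $\ztwoinner{L}{\cL_Z^*}=0$; hence closed loops contribute nothing to $\res_{\cL_Z^*}(p)$, and only the simple paths in $\Zpathext(\Tdecdual)$ connecting two external vertices through internal vertices matter. Such a path $P$ satisfies $\ztwoinner{P}{\cL_Z^*}=1$ precisely when it crosses the strip an odd number of times, i.e.\ when it joins an external vertex lying below $\cL_Z^*$ to one lying above it.

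Next I would enumerate the external vertices $\Vdecdualext$ explicitly from the geometry of Fig.~\ref{fig:Tdecdual} and label them so that the strip-crossing paths are exactly those running from a vertex $u_j$ on one side to some vertex on the opposite side. Writing $\Delta^*(u_j,\ell)$ for the set of length-$\ell$ simple paths starting at $u_j$ and terminating at an opposite-side external vertex without touching any other external vertex, I would rewrite, using that these sets are pairwise disjoint for distinct starting vertices,
\begin{align}
\res_{\cL_Z^*}(p)=\sum_{j}\sum_{\ell=L^{\min}_j}^{L^{\max}_j}\binom{\ell}{\lceil \ell/2\rceil}\cdot\abs{\Delta^*(u_j,\ell)}\cdot p^{\lceil \ell/2\rceil}\ .
\end{align}
As in the primal case, every vertex of $\Tdecdual$ has degree at most $4$, which yields $\abs{\Delta^*(u_j,\ell)}\leq 3^{\ell-1}$, while the geometry forces $L^{\min}_j\geq j$.

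Finally I would insert these bounds, estimate $\binom{\ell}{\lceil \ell/2\rceil}\leq 2^\ell$ and $p^{\lceil \ell/2\rceil}\leq p^{\ell/2}$ for $\ell>1$, and sum the resulting geometric series in $q:=6\sqrt{p}$. Using $p\leq \tfrac{1}{144}$, which gives $q\leq \tfrac12$ and $\tfrac{1}{1-q}\leq 2$, the tail $\tfrac{1}{3}\sum_j\sum_{\ell\geq j}q^\ell$ is bounded by a constant multiple of $q^2=36p$; after separating out the shortest crossing path(s) as in the primal argument and collecting constants, one obtains $\res_{\cL_Z^*}(p)\leq 38p$.

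I expect the main obstacle to be the bookkeeping for the external vertices of $\Tdecdual$: unlike $\Tdec$, the dual decoding graph is defined with the vertex $(1,-1)$ removed and with the strip $\cL_Z^*$ placed asymmetrically, so the precise list of admissible starting vertices $u_j$ and their minimal crossing lengths $L^{\min}_j$ must be tracked against the actual vertex set rather than inferred by blind appeal to symmetry. It is exactly this combinatorial difference, namely fewer or shorter strip-crossing paths than in the primal graph, that accounts for the improved constant $38$ in place of $54$, so the geometric sum must be evaluated with the genuine $\Tdecdual$ data.
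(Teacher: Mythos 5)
Your overall strategy coincides with the paper's Appendix~B proof of Lemma~\ref{lem:upperboundresiliencetdecdual}: simple closed loops in $\Zcirc(\Tdecdual)$ are disjoint from $\cL_Z^{*}$ and contribute nothing; the relevant paths in $\Zpathext(\Tdecdual)$ are exactly those joining a diagonal external vertex $u_j=(2j-1,2j-1)$ to a bottom external vertex $v_k=(2k+1,-1)$; these are counted with $\abs{\Delta(u_j,\ell)}\leq 3^{\ell-1}$ (maximum degree $4$) and $L_j^{\min}\geq j$, and the resulting series in $q=6\sqrt{p}$ is summed. Up to that point your plan is the paper's argument verbatim.

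The gap is in the final step, and it is tied to a claim of yours that is backwards. You attribute the improved constant ($38$ versus $54$) to the dual graph having ``fewer or shorter strip-crossing paths.'' The actual data is the opposite. In the primal graph, $u_1=(0,2)$ has degree one and its unique neighbor $v_1$ is external, so $u_1$ contributes exactly one crossing path, of length $1$, giving the term $p$. In the dual graph, the deletion of $(1,-1)$ forces the shortest crossing path from $u_1=(1,1)$ to have length $2$ (through the internal vertex $(3,1)$ down to $(3,-1)$), so $\Delta(u_1,1)=\emptyset$ and $\abs{\Delta(u_1,2)}=1$, giving the larger leading term $\binom{2}{1}p=2p$; moreover, because $(3,1)$ is internal, $u_1$ also admits arbitrarily long crossing paths, adding a tail bounded by $\frac{q^3}{3(1-q)}$, and the diagonal supplies $d$ starting vertices rather than the $d-1$ of the primal left boundary. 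Hence the dual count is term-by-term at least as large as the primal one, and collecting constants with the inequalities you list ($q\leq\tfrac12$, $\tfrac{1}{1-q}\leq 2$, $\tfrac{1}{(1-q)^2}\leq 4$) yields $2p+\tfrac{q^2}{3}+\tfrac{4q^2}{3}=62p$, not $38p$. (The paper's appendix displays exactly these three terms before asserting $38p$, so the same tension exists there; but a derivation that sets out expecting sparser dual crossings, as yours does, cannot reach $38p$ by ``collecting constants'' --- executed as described, it stalls at a constant of order $60$.) To repair the plan you must work from the genuine $\Tdecdual$ data --- empty $\Delta(u_1,1)$, the length-$2$ leading path, the extra tail from $u_1$, and the sum over $j=2,\ldots,d$ --- and then either accept the larger constant these estimates give or sharpen the path-counting (e.g., exploiting that each $u_j$ has degree at most $2$ in $\Tdecdual$) to bring the constant down.
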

The proof is analogous to that of Lemma~\ref{lem:upperboundresiliencetdec}. We include it in Appendix~\ref{app:upperboundresiliencetdecdual} for completeness.

\subsection{A bound on single-shot decoding under local stochastic noise}\label{sec:boundscdecoding}

Consider a distance-$d$ surface code of $\abs{\cC}=2d^2-2d+1$ qubits with $d\geq 2$ and logical operators~$\overline{X}=\prod_{j\in \cL_X\cup \{q\}}X_j$ and $\overline{Z}=\prod_{j\in\cL_Z\cup\{q\}}$.
Let $\ket{\overline{0}}$ and $\ket{\overline{1}}$ be code states to eigenvalues~$+1$ and $-1$~of $\overline{Z}$. Let
\begin{align}
\begin{matrix}
V:&\mathbb{C}^2 & \mapsto & \left( \mathbb{C}^{2} \right)^{\otimes \cC}\\
& \alpha\ket{0}+\beta\ket{1} & \mapsto & \alpha\ket{\overline{0}}+\beta\ket{\overline{1}}\ 
\end{matrix}
\end{align}
be an isometric encoding map, and let 
\begin{align}
\begin{matrix}
  \cE: & \cB(\mathbb{C}^2) & \rightarrow & \cB( \left( \mathbb{C}^{2} \right)^{ \otimes \cC} )\\
 & \rho & \mapsto & V \rho V^{\dagger}
\end{matrix}
\end{align}
be an encoding channel associated with~$V$.
For any state~$\rho\in\cB(\mathbb{C}^2)$, the state~$\cE(\rho)$  is the corresponding encoded (logical) state.  

 We  describe what single-shot decoding protocol achieves in terms of its logical action. For this purpose, let us define the  single-qubit 
random Pauli channel
\begin{align}
\cP_\nu(\rho):=\nu_0 \rho +\nu_1 X\rho X^\dagger +\nu_2 Y\rho Y^\dagger +\nu_3 Z\rho Z^\dagger\ .
\end{align}
for any  probability distribution~$\nu=(\nu_0,\nu_1,\nu_2,\nu_3)$, $\nu_j\geq 0$ and $\sum_{j=0}^3 \nu_j=1$. 

For $p\in [0,1]$, let
$E\sim \cN(p)$ be a local stochastic error with parameter~$p$. This defines a ``noise channel''~$\Lambda_p$ by
\begin{align}
\begin{matrix}
\Lambda_p:&\cB((\mathbb{C}^2)^{\otimes \cC})&\rightarrow &\cB((\mathbb{C}^2)^{\otimes \cC})\\
& \rho & \mapsto & \Lambda_p(\rho):=\sum_{E_0\textrm{ Pauli on }(\mathbb{C}^2)^{\otimes \cC}}\Pr\left[E=E_0\right]E_0\rho E_0^\dagger
\end{matrix}\label{eq:localstochasticnoisechanneldefb}
\end{align}
be the CPTP map that applies the error~$E$ to a state. Finally, let $\cD:\cB((\mathbb{C}^2)^{\otimes \cC})\rightarrow \cB((\mathbb{C}^2)^{\{q\}})$ be the CPTP map corresponding to an ideal execution of Algorithm~\ref{alg:fdecsingle}. Then the following holds:
\begin{theorem}[Single-shot decoding of a surface code]\label{thm:mainsingleshot}
Consider local stochastic noise as expressed by a CPTP map~$\Lambda_p$ of the form~\eqref{eq:localstochasticnoisechanneldefb}, with noise strength
\begin{align}
    p\leq \frac{1}{144}\ .
\end{align}
There is a probability distribution~$\nu=(\nu_0,\nu_1,\nu_2,\nu_3)$ over~$(I,X,Y,Z)$ such that the following holds:
\begin{enumerate}[(i)]
\item Suppose we run algorithm~\ref{alg:fdecsingle} on the noisy encoded  state~$\Lambda_p \circ \cE (\rho)$, where 
 $\rho\in \cB(\mathbb{C}^2)$ is an arbitrary single-qubit state.
Then the resulting output state $\cD\circ\Lambda_p\circ \cE(\rho)$ on qubit~$q$ is the state~$\cP_\nu(\rho)$. 
\item We have $1-\nu_0\leq 94p$.
\end{enumerate}
\end{theorem}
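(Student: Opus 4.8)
The plan is to analyze the logical channel $\cD\circ\Lambda_p\circ\cE$ through its Choi state, reducing everything to the per-Pauli analysis of Section~\ref{sec:sccorrectdecodingcondition}. Since a channel is determined by its action on one half of a maximally entangled state, I would observe that the half-encoded Bell state $\Phi_{\cC R}$ of Section~\ref{sec:sccorrectdecodingcondition}---obtained by encoding one half of a Bell pair shared with a reference qubit $R$---is exactly the Choi state of the encoding map $\cE$. Thus applying $(\cD\otimes\idchannel_R)$ after $(\Lambda_p\otimes\idchannel_R)$ to $\Phi_{\cC R}$ reproduces the Choi state of $\cD\circ\Lambda_p\circ\cE$, and it suffices to compute the output on the systems $\{q,R\}$. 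By the definition~\eqref{eq:localstochasticnoisechanneldefb} of $\Lambda_p$ as a convex mixture of Pauli conjugations and by linearity, this output is $\sum_{E_0}\Pr[E=E_0]\,(\cD\otimes\idchannel_R)(E_0\Phi_{\cC R}E_0^\dagger)$, and Theorem~\ref{thm:successconditionsurfacecode} evaluates each term: for a fixed Pauli $E_0=E^XE^Z$ supported on $\cC$, the result is the two-qubit Bell state $\Phi_{(\alpha(E_0),\beta(E_0))}$ with $\alpha,\beta$ given by~\eqref{eq:Ldecodingsuccesscondition}.

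It then follows that the Choi state equals $\sum_{(\alpha,\beta)\in\{0,1\}^2}\nu_{(\alpha,\beta)}\Phi_{(\alpha,\beta)}$, where I define $\nu_{(\alpha,\beta)}:=\Pr_E[(\alpha(E),\beta(E))=(\alpha,\beta)]$. This is a genuine probability distribution, and I would use the standard fact that a channel whose Choi state is a mixture of the four Bell states $\Phi_{(\alpha,\beta)}=(I\otimes Z^\alpha X^\beta)\Phi$ is exactly the Pauli channel $\cP_\nu$, under the identification $(0,0)\mapsto I$, $(0,1)\mapsto X$, $(1,1)\mapsto Y$, $(1,0)\mapsto Z$ (the global $Y$-phase being immaterial to the channel). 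This would establish part~(i), with $\nu_0=\nu_{(0,0)}=\Pr_E[(\alpha(E),\beta(E))=(0,0)]$.

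For part~(ii) I would bound $1-\nu_0=\Pr_E[(\alpha,\beta)\neq(0,0)]\leq\Pr[\alpha=1]+\Pr[\beta=1]$ by a union bound and treat the two terms symmetrically. For $\alpha$, starting from $\alpha=\ztwoinner{\{q\}}{\supp(E^Z)}\oplus\ztwoinner{\supp(E^Z)\oplus\mmatch(\partial\supp(E^Z))}{\cL_X}$, a second union bound gives
\begin{align}
\Pr[\alpha=1]\leq\Pr\left[q\in\supp(E^Z)\right]+\Pr\left[\ztwoinner{\supp(E^Z)\oplus\mmatch(\partial\supp(E^Z))}{\cL_X}=1\right]\ .
\end{align}
The first term is at most $p$ since $E^Z\sim\cN(p)$ and $q$ is a single qubit. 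For the second term I would pass to $F:=\supp(E^Z)\cap\Edec$, noting that $F$ is again local stochastic with parameter $p$, that $\partial F=\partial\supp(E^Z)$ (every edge incident on an internal vertex of $\Tdec$ lies in $\Edec$), and that the inner product is unchanged because $\cL_X\subset\Edec$; then Proposition~\ref{prop:maincombinatorics} applied to $F$ and $\cL=\cL_X$, together with Lemma~\ref{lem:upperboundresiliencetdec}, bounds it by $\res_{\cL_X}(p)\leq54p$, so $\Pr[\alpha=1]\leq55p$. The identical argument for $\beta$ using $E^X$, the dual graph $\Tdecdual$, the set $\cL_Z^{*}$, Proposition~\ref{prop:maincombinatorics} and Lemma~\ref{lem:upperboundresiliencetdecdual} yields $\Pr[\beta=1]\leq p+38p=39p$. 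Summing gives $1-\nu_0\leq55p+39p=94p$, valid because $p\leq\tfrac{1}{144}$ lies in the range required by both resilience lemmas.

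The main obstacle is making the Choi-state reduction of the first paragraph fully rigorous: one must check that computing the output on the single pair of input/reference systems genuinely reconstructs the full channel $\cD\circ\Lambda_p\circ\cE$ (and not merely its restriction to Bell-diagonal inputs), and that the Bell-state/Pauli dictionary is consistent with convention~\eqref{eq:defbellstate}. A secondary point requiring care is the separate accounting of the ``direct'' error term $\ztwoinner{\{q\}}{\supp(E^Z)}$ on the distinguished qubit $q$: because $q\notin\Edec$ it is not covered by the matching (resilience-function) analysis and must be bounded on its own, which is precisely the origin of the extra $+p$ summand in each of $55p$ and $39p$.
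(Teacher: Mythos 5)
Your proposal is correct and follows essentially the same route as the paper's own proof: identify the Choi state of $\cD\circ\Lambda_p\circ\cE$ with the output of Algorithm~\ref{alg:fdecsingle} on the corrupted half-encoded Bell state, invoke Theorem~\ref{thm:successconditionsurfacecode} per Pauli error to get a Bell-diagonal Choi state (hence a Pauli channel), and then bound $1-\nu_0$ by the same double union bound, with the two direct terms $\Pr[q\in\supp(E^Z)],\Pr[q\in\supp(E^X)]\leq p$ and the two matching terms bounded by $\res_{\cL_X}(p)\leq 54p$ and $\res_{\cL_Z^*}(p)\leq 38p$ via Proposition~\ref{prop:maincombinatorics} and Lemmas~\ref{lem:upperboundresiliencetdec}--\ref{lem:upperboundresiliencetdecdual}, summing to $94p$. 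Your explicit restriction step $F:=\supp(E^Z)\cap\Edec$ (with $\partial F=\partial\supp(E^Z)$ and the inner product with $\cL_X\subset\Edec$ unchanged) is a small rigor refinement that the paper leaves implicit, but it does not change the argument.
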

In particular, Theorem~\ref{thm:mainsingleshot} implies that applying Algorithm~\ref{alg:fdecsingle} 
to any encoded pure state~$\Lambda_p(V\ketbra{\Psi}V^{\dagger})$ corrupted by~$\Lambda_p$ yields an output state~$\rho_{\textrm{out}}=\cD\circ\Lambda_p\circ \cE(\proj{\Psi})$ with overlap at least
\begin{align}
\bra{\Psi}\rho_{\textrm{out}}\ket{\Psi}\geq \nu_0\geq 1- 94p
\end{align}
with the original state. 
\begin{proof}
Let us denote by~$L\cong\mathbb{C}^2$ the ``logical qubit'' to be encoded, and let  $R\cong\mathbb{C}^2$ be a reference system. We consider the action of the noise
and the decoding protocol when applied to the two-qubit Bell state~$\Phi_{RL}:=\Phi_{(0,0)}$, i.e., we compute the Choi-Jamiolkowski state
\begin{align}
J_{Rq}:=(\mathsf{id}_R\otimes \cD\circ\Lambda_p\circ \cE)(\Phi_{RL})\ .
\end{align} 
Observe that $\Phi'_{R\cC}:=(\mathsf{id}_R\otimes \cE)(\Phi_{RL})$ is the half-encoded Bell state by definition. It thus follows from Theorem~\ref{thm:successconditionsurfacecode} that 
\begin{align}
J_{Rq}:=(\mathsf{id}_R\otimes \cP_{\nu})(\proj{\Phi_{(0,0)}})\label{eq:Jrqexpression}
\end{align}
for a probability distribution~$\nu=(\nu_0,\nu_1,\nu_2,\nu_3)$ over $(I,X,Y,Z)$. In this expression, $\nu_0$ is the probability that the condition~\eqref{eq:Ldecodingsuccesscondition} is satisfied.  Because of the Choi-Jamiolkowski isomorphism, Eq.~\eqref{eq:Jrqexpression} shows that the channel~$\cD\circ\Lambda_p\circ\cE=\cP_{\nu}$ is a random Pauli channel. It thus remains to establish an upper bound on~$1-\nu_0$. 

Let us introduce the following four random variables:
\begin{align}
A_Z&:=\ztwoinner{\{q\}}{\supp(E^Z)}\\
B_Z&=\ztwoinner{\supp(E^Z) \oplus \mmatch(\partial \supp(E^Z))}{\cL_X}\\
A_X&:=\ztwoinner{\{q\}}{\supp(E^X)}\\
B_X&:=\ztwoinner{\supp(E^X) \oplus \mmatch^*(\partial \supp(E^X))}{\cL_Z^*}\ .
\end{align}
Then~\eqref{eq:Ldecodingsuccesscondition} translates to
\begin{align}
\nu_0&=\Pr\left[(A_Z,B_Z)\in \{(0,0),(1,1)\}\textrm{ and }(A_X,B_X)\in \{(0,0),(1,1)\}\right]\ .
\end{align}
The union bound gives
\begin{align}
1-\nu_0 &\leq \Pr\left[(A_Z,B_Z)\in \{(0,1),(1,0)\}\right]+\Pr\left[(A_X,B_X)\in \{(0,1),(1,0)\}\right]\ .
\end{align}
Again using the union bound, we have 
\begin{align}
\Pr\left[(A_Z,B_Z)\in \{(0,1),(1,0)\}\right]&\leq \Pr\left[(A_Z,B_Z)=(0,1)\right]+\Pr\left[(A_Z,B_Z)=(1,0)\right]\\
&\leq \Pr\left[B_Z=1\right]+\Pr\left[A_Z=1\right]\ .
\end{align}
Applying the same reasoning to~$(A_X,B_X)$, we conclude that 
\begin{align}
1-\nu_0 & \leq\Pr\left[A_X=1\right] +\Pr\left[A_Z=1\right]+\Pr\left[B_X=1\right]+\Pr\left[B_Z=1\right]\ .
\end{align}
By definition, we have 
\begin{align}
\Pr\left[A_Z=1\right]&=\Pr\left[\ztwoinner{\{q\}}{\supp(E^Z)}=1\right]\\
&\leq p
\end{align}
where we used that~$E^Z\sim\cN(p)$ is local stochastic noise with the same parameter~$p$ as~$E$. By similar reasoning, we have 
\begin{align}
\Pr\left[A_X=1\right]&\leq  p\ .
\end{align}

Furthermore,  we have 
\begin{align}
\Pr\left[B_Z=1\right]&=\Pr\left[\ztwoinner{\supp(E^Z) \oplus \mmatch(\partial \supp(E^Z))}{\cL_X}=1\right]\\
&\leq \mathsf{res}_{\cL_X}(p)\\
&\leq 54p 
\end{align}
where we used the definition of~$A_Z$, the fact that $E^Z\sim\cN(p)$ and Proposition~\ref{prop:maincombinatorics}, as well as the upper bound~\eqref{eq:upperboundresiliencefunctiontdec}.  Similarly, we have 
\begin{align}
\Pr\left[B_X=1\right]&\leq \mathsf{res}_{\cL_Z^{*}}(p)\\
&\leq 38 p \ .
\end{align}
Combining these inequalities, we obtain
\begin{align}
1-\nu_0& \leq  p + p + 38p + 54p \\ 
& \leq 94p   
\end{align}
as claimed.
\end{proof}

\section{Entanglement generation from a noisy 3D cluster state\label{sec:entanglementgenerationnoisy}}
In the following, we describe a protocol for generating a two-qubit Bell state in a noise-robust fashion. It is based on the cluster state~$W\ket{0^{\cC}}$ on a 3D~lattice~$\cC$, a state which can be generated by a depth-$6$ circuit~$W$.  This state is measured using single-qubit measurements on all but two qubits. Based on the measurement outcomes, a certain correction operation is applied to one of the remaining two qubits.

In Section~\ref{sec:clusterstate} we discuss the relevant definitions underlying the cluster state. In Section~\ref{sec:measurementpatternclusterstate}
 we describe the single-qubit measurement pattern used in the protocol. Section~\ref{sec:subgraphsm} introduces certain subgraphs of the cluster state lattice and additional data such as  distinguished subsets of qubits having to do with the encoded logical information. These  are relevant  for the definition of certain decoding functions used in the protocol.
 
 Section~\ref{sec:stabilizersgraphstate} identifies certain stabilizers of the graph state that are key to the construction and analysis of the entanglement generation protocol. We give a complete description of the protocol in Section~\ref{sec:alg}. Finally, in Section~\ref{sec:alganalysiscorrectness}, we compute the output state of this algorithm when it is run on a noisy cluster state~$EW\ket{0^{\cC}}$ which is corrupted by a Pauli error~$E$. This will be used to analyze the success probability for local stochastic errors in Section~\ref{sec:resiliencenoiselocalstochastic}.

\subsection{Cluster state: Definition\label{sec:clusterstate}}
Our construction is based on a 3D cluster state. To introduce this state, we mostly follow the conventions of~\cite{bravyiQuantumAdvantageNoisy2020} but
use different conventions to describe so-called  dangling edges. Contrary to Ref.~\cite{bravyiQuantumAdvantageNoisy2020} we will consider their  degree-$1$ endpoints as regular  vertices.

Let $d \geq 2$ and let $R\geq 3$ be an odd integer. Consider the set
\begin{align}
\Cprime[d\times d \times R] := \{(u_1, u_2, u_3) \in \mathbb{Z}^3 \mid 0 \leq u_1 \leq 2d,~ 0 \leq u_2 \leq 2d-2,~ 1 \leq u_3 \leq R\}
\end{align} 
(From the perspective of coding, the integers~$d$ and $R$ are associated with the distance~$d$ of a surface code, and a ``cluster state distance''~$\frac{R+1}{2}$, respectively.) In the following, we simply write~$\Cprime = \Cprime[d\times d \times R]$. 
Elements $(u_1,u_2,u_3)\in \Cprime$ will be referred to as sites.

Qubits are located at each site belonging to the subset
\begin{align}\label{eq:defcCddR}
\cC[d \times d \times R]  := \{u \in \Ctilde \mid 1 \leq u_1 \leq 2d-1\} \setminus \left\{ (o,o,o), (e,e,e) \right\} \subset \Ctilde
\end{align}
where each $e~(o)$ refers to any even (odd) integer. The set $\cC=\cC = \cC[d \times d \times R]$ is called the cluster state lattice of the form $d \times d \times R$.  We will refer to a site~$u\in \cC$ interchangeably as site~$u$ or qubit~$u$.
 In the following, we use the convention that e.g., 
 $\{(e,e,R)\}\subset\Cprime$  denotes the set of all $(u_1,u_2,u_3)\in \Cprime$ with $u_1,u_2$ even and $u_3=R$.

Given a site~$u\in \Cprime$, the set of nearest  neighbors of~$u$ is defined as 
\begin{align}
\neigh(u) = \left\{v \in \mathcal{C} \mid 
d(u,v)=1 \right\}\qquad\textrm{ where }\qquad d(u,v):=\sum_{i=1}^3 \abs{u_i - v_i}\ ,\label{eq:nearestneighborset}
\end{align}
i.e., the  nearest neighbors of~$u$ are associated with qubits at Manhattan distance~$1$ from~$u$. Because elements of~$\Cprime \setminus \cC$ are excluded in~\eqref{eq:nearestneighborset}, any  qubit~$u\in\cC$ has at most four nearest neighbor qubits.

The cluster state of interest is associated with the graph whose vertex set is~$\cC$, and where
the set of edges consists of pairs $\{u,v\}$ with $v\in \neigh(u)$, see Fig.~\ref{fig:3Dclusterstate} for an illustration of this graph.
It is defined as 
\begin{align}
  W\ket{0^\cC}\ ,
\end{align} where
\begin{align}
  W = H(\cC)\left( \prod_{\left\{ u,v \right\}} CZ_{u,v}\right) H(\cC)\ ,
\end{align}
with the product taken over all edges of the graph, i.e., all pairs $u,v\in\cC$ such that $v\in \neigh(u)$.
Here $CZ_{u,v}$ denotes a controlled-$Z$ gate on qubits~$u$ and $v$. By definition, the unitary~$W$ can be implemented using a depth-$6$ circuit with nearest-neighbor gates. (Note that
our definition of the graph state differs from the more commonly used definition by an additional layer of Hadamard-gates.) The state~$W\ket{0^{\cC}}$ is a stabilizer state with stabilizer group generated by
\begin{align}
  G_u = Z_u \prod_{v \in \neigh(u)} X_v \qquad \textrm{ with } \qquad u \in \mathcal{C}\ .\label{eq:stabilizergeneratorsgraph}
\end{align}

\begin{figure}[h]
  \centering
  \includegraphics[width=.7\linewidth]{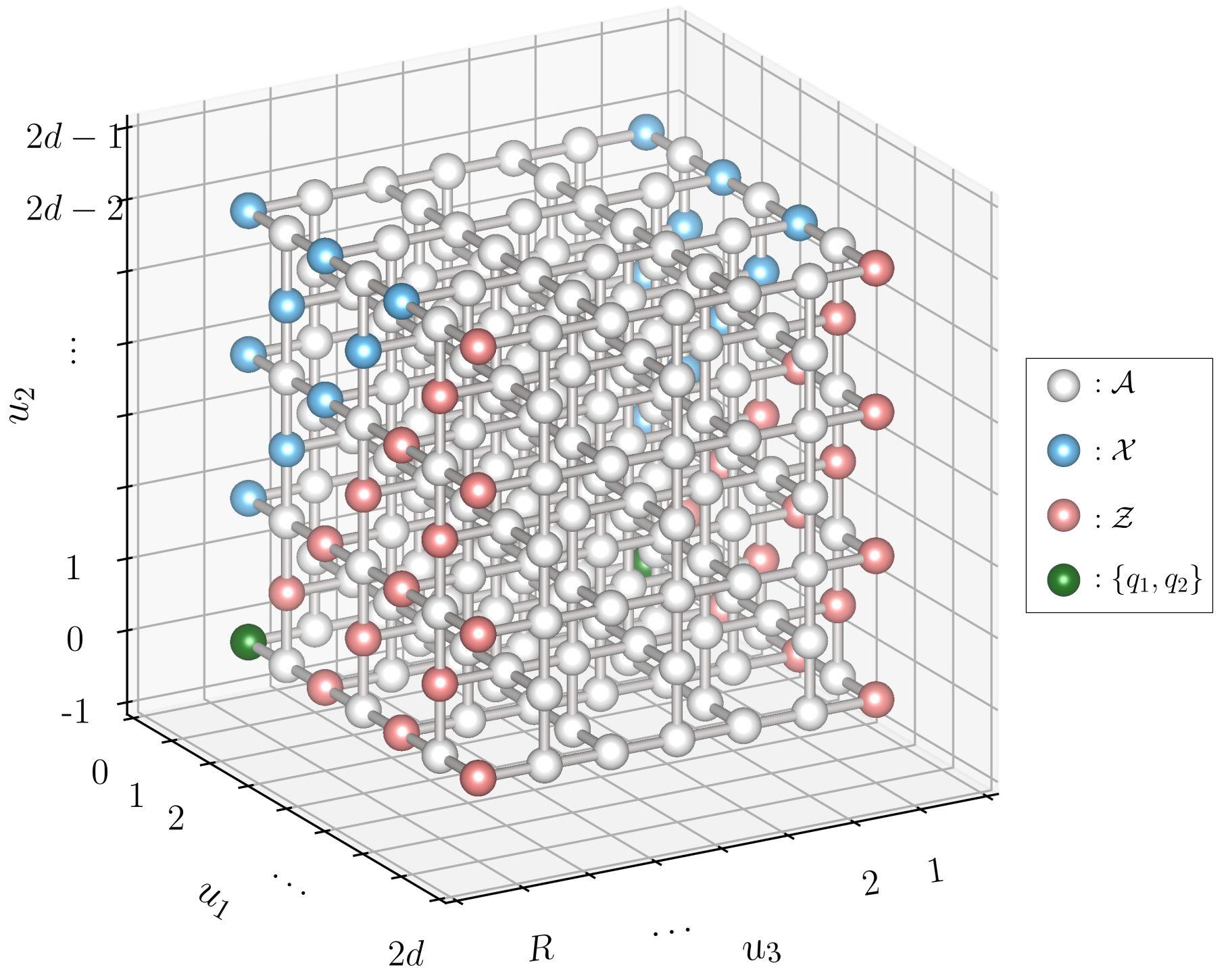}
  \caption{The graph associated with the three-dimensional cluster state $W\ket{0^{\cC}}$. Each qubit is located at a vertex (sphere) in $\cC$. Measurement pattern used in protocol is illustrated with colors.}
  \label{fig:3Dclusterstate}
\end{figure}

As in the case of surface codes, it will also need a dual description. For this, let $\Cprimedual= \Ctildedual[d \times d \times R]$ be the set of dual sites defined as
\begin{align}
\Ctildedual[d \times d \times R]:= \left\{(u_1,u_2,u_3)\in\mathbb{Z}^3\ |\ 1\leq u_1\leq 2d-1\textrm{ and }-1\leq u_2\leq 2d-1\textrm{ and }1\leq u_3\leq R \right\}\ .
\end{align}
Then we can equivalently express the location of qubits as 
\begin{align}
  \cC^{*} = \{u \in \Ctildedual \mid 0 \leq u_2 \leq r-1\} \setminus \{(o,o,o),(e,e,e)\} \subset \Ctildedual \ ,
\end{align}
i.e., we have $\cC = \cC^{*}$.

\subsection{Measurement pattern used in protocol}\label{sec:measurementpatternclusterstate}
In this section, we describe the measurement pattern used in our protocol. We partition the set of qubits as~$\mathcal{C}=\cA\cup \cB$ where
\begin{align}
  \mathcal{B}  &= \{(e,o,1),(o,e,1), (e,o,R), (o,e,R) \in \mathcal{C}\}, \\
  \mathcal{A}  &= \mathcal{C} \setminus \mathcal{B}.
\end{align}
The set~$\mathcal{B}$ is the union of~$\{(e,o,1),(o,e,1)\}$, i.e., a subset of all qubits at the ``front'' layer (i.e., with $u_3=1$) of~$\cC$, and ~$\{(e,o,R),(o,e,R)\}$, a subset of all qubits at the ``back'' layer (i.e., with $u_3=R$) of~$\cC$. In prior work~\cite{raussendorfLongrangeQuantumEntanglement2005,bravyiQuantumAdvantageNoisy2020},
no measurements were applied to these qubits; instead, long-range encoded entanglement between two surface codes of distance~$d$ associated with these front- and back-layer qubits was established.

The remaining qubits~$\{(e,e,1)\}$ in the front layer are omitted from~$\cB$ but belong to~$\cA$. Similarly, qubits~$\{(e,e,R)\}$ in the back layer belong to~$\cA$. These are auxiliary qubits whose measurement (specified below) reveals information about the associated surface code stabilizers in the corresponding layer. Similarly, the measurement results associated with the remaining qubits of~$\cA$ belonging to the ``bulk'' reveal information about the logical error of the encoded surface code state (or, in our scheme, the final two-qubit entangled state). 

As in~\cite{raussendorfLongrangeQuantumEntanglement2005,bravyiQuantumAdvantageNoisy2020}, all qubits belonging to~$\cA$ are measured in the $Z$-basis. Because our goal is to establish entanglement between two (physical) qubits, we additionally measure all qubits in~$\cB$, except for two qubits
\begin{align}
q_1&:=(1,0,1)\\
q_2&:=(1,0,R)
\end{align} belonging to the front and back layers, respectively.  Our protocol generates a Bell state between these two qubits.

The measurement pattern for the front layer qubits of~$\cB$ mirrors the single-shot decoding procedure for the surface code: Qubits belonging to the upper left triangular region are measured in the~$X$-basis, whereas qubits situated inside the lower right triangular region are measured in the~$Z$-basis. Qubits in the back layer of~$\cB$ are measured with an analogous pattern.  Formally, the sets of qubits in $\cB \setminus \{q_1, q_2\}$ measured in the~$X$- respectively~$Z$-basis are
\begin{align}
  \cX &= \{(u_1, u_2, u_3) \in \mathcal{B} \setminus \{q_1,q_2\} \mid u_2 \geq 1 + u_1\}\\
  \cZ &= \{(u_1, u_2, u_3) \in \mathcal{B} \setminus \{q_1,q_2\} \mid u_2 < 1 + u_1\}\ .
\end{align}
The measurement outcome will be denoted by $(a,x,z)$, where
\begin{align}
    a &\in \{0,1\}^{\cA} \\
    x &\in \{0,1\}^{\Xcl} \\
    z &\in \{0,1\}^{\Zcl} \ 
\end{align}
are the measurement results associated with the qubits belonging to~$\cA$, $\Xcl$ and $\Zcl$, respectively. 
This completes the description of the measurements used in the protocol, see Fig.~\ref{fig:3Dclusterstate}.

\subsection{Subgraphs, recovery graphs and internal vertices~\label{sec:subgraphsm}}
In this section, we introduce various subgraph of the cluster state lattice. We also
 define certain subsets of qubits that are relevant for our entanglement generation protocol.

Specifically, in Section~\ref{sec:gl}, 
we introduce the ``glued graph'' ~$\Tgl$,
as well as the ``dual glued graph''~$\Tgldual$. They are obtained by gluing together surface code graphs respectively dual surface code graphs with the cluster state lattice. These graphs are at the basis of our entanglement generation protocol, and clarify the relationship to the single-shot decoding procedure for the surface code discussed in Section~\ref{sec:singleshotdecodingprotocol}.

In Section~\ref{sec:decodinggraphs}, we then introduce the decoding graph~$\Tcldec$ and its  dual counterpart~$\Tcldecdual$. In Section~\ref{sec:internalverticestcl}, we additionally define certain subsets of their edges, the so-called recovery sets. We also identify the associated internal vertices. Our entanglement generation protocol involves  solving corresponding matching problems.

\subsubsection{Glued graphs}\label{sec:gl}
We introduce certain graphs in~$\Ctilde$ and their dual in~$\Ctildedual$ as discussed in~\cite{bravyiQuantumAdvantageNoisy2020}.
Those graphs are  subgraphs or supergraphs of certain graphs that will be introduced in Section~\ref{sec:decodinggraphs}.

We first define an ``even graph'' $\Teven = (\Veven, \Eeven)$ in $\Ctilde$.
The set~$\Veven = \Vcirceven \cup \Vdangeven$ of vertices has two disjoint subsets
\begin{align}
  \Vcirceven &= \{u \in \{(e,e,e)\} \mid 2 \leq u_2 \leq 2d - 2\} \subset \Ctilde \qquad \textrm{ and } \\
  \Vdangeven &=\{u \in \left\{ (e,e,1), (e,e,R) \right\} \mid u_1 \not \in \{0, 2d\}\} \cup \{(0,e,e)\}\cup \{(2d,e,e)\} \subset \Ctilde \ .
\end{align}
Here we avoid the terminology of dangling edges used in~\cite{bravyiQuantumAdvantageNoisy2020} and instead collect endpoints of such edges using the vertex set~$\Vdangeven$.
We note that~$\Vdangeven$ consists of vertices~$u = (u_1,u_2,u_3)$ in the ``front'' ($u_3 = 1$), ``back'' ($u_3=R$), ``left'' ($u_1=0$), or ``right'' ($u_1=2d$) boundary.
Any pair $u,v\in \Veven$ of vertices at distance~$d(u,v)=2$ is connected by an edge in~$\Teven$, except when $u,v$ both belong to~$\Vdangeven$.
Other edges in $\Teven$ are of the form~$\{u,v\}$ with 
\begin{align}\label{eq:distanceoneedgesinTeven}
u &= (u_1, u_2, u_3) \in \{(e,e,1), (e,e,R)\} \qquad \textrm{ and }\qquad 
   v= \begin{cases*}
    (u_1,u_2,u_3 + 1) & \textrm{ if $u_3 = 1$}\\
    (u_1,u_2,u_3 - 1) & \textrm{ if $u_3 = R$} \ .
  \end{cases*} 
\end{align}
It is easy to check that these edges constitute all pairs of vertices~$u, v \in \Veven$ satisfying $d(u,v)=1$.
In summary, the set of edges of~$\Teven$ is
\begin{align}
  \Eeven &=\left\{\{u,v\} \in \Veven^2 \mid d(u,v)=2\textrm{ and } (u,v)\not\in \Vdangeven\times \Vdangeven\right\} \\
  & \quad \cup \left\{ \{u,v\} \in \Veven^2 \mid d(u,v) = 1 \right\} \ .
\end{align}

The ``odd graph''~$\Todd=(\Vodd, \Eodd)$ in~$\Ctildedual$ is a dual version of~$\Teven$. The set~$\Vodd = \Vcircodd \cup \Vdangodd$ of vertices consists of  two disjoint subsets
\begin{align}
  \Vcircodd &= \{u \in \{(o,o,o)\} \mid u_2 \not \in \{-1, r\}, u_3 \not \in \{1, R\}\} \subset \Ctildedual \qquad \textrm{ and } \\
  \Vdangodd &=\{u \in \left\{ (o,o,1), (o,o,R) \right\} \mid u_2 \not \in \{-1, r\}\} \cup \{u \in \{(o,-1,o), (o, r,o)\} \mid u_3 \not \in \{1, R\}\} \subset \Ctildedual \ .
\end{align}
We note that $\Vdangodd$ consists of vertices~$u = (u_1, u_2, u_3)$ in the ``front'' ($u_3 = 1$), ``back'' ($u_3=R$), ``top'' ($u_2=2d$), or ``bottom'' ($u_2=-1$) boundary.
The set of edges of $\Todd$ is
\begin{align}
  \Eodd = \{\{u,v\} \in \Vodd^2 \mid d(u,v) = 2 \textrm{ and } (u, v) \not \in \Vdangodd \times \Vdangodd\} \ .
\end{align}

The surface code graph~$\Tclsc$ has vertices belonging to the set of primal sites~$\Cprime$.
It has two connected components lying in the planes~$u_3=1$ and $u_3=R$. Each connected component is isomorphic to the surface code graph~$\Tsc=(\Vsc,\Esc)$ introduced  in Section~\ref{sec:surfacecodelattice}.
Let us simply write
\begin{align}
\Tclsc &=\Tsc\times \{1,R\}\ 
\end{align}
for this graph to emphasize this fact. In more detail,  we have
\begin{align}
  \Vclsc &=\left(\bigcup_{(u_1,u_2)\in \Vsc} \{(u_1,u_2,1)\}\right)\cup \left(\bigcup_{(u_1,u_2)\in \Vsc} \{(u_1,u_2,R)\}\right)\\
  \Eclsc&=
          \left(\bigcup_{
          {\tiny \begin{matrix}
              \{(u_1, u_2), (v_1, v_2)\}\in \Esc \end{matrix}}}
          \Bigl\{
          \left\{(u_1,u_2,1),(v_1,v_2,1)\right\}
          \Bigr\}
          \right)
          \cup
          \left(\bigcup_{
          {\tiny \begin{matrix}
              \{(u_1, u_2), (v_1, v_2)\}\in \Esc \end{matrix}}}
          \Bigl\{
          \left\{(u_1,u_2,R),(v_1,v_2,R)\right\}
          \Bigr\}
          \right) \ .
\end{align}

The graph~$\Tclscdual$ similarly consists of two copies of the dual surface code graph~$\Tsc$ located in the planes $u_3=1$ and $u_3=R$, respectively, which we express as 
\begin{align}
\Tclscdual &=\Tscdual\times \{1,R\}\ .
\end{align}
Written out, this means that
\begin{align}
 \Vclscdual &=\left(\bigcup_{(u_1,u_2)\in \Vscdual} \{(u_1,u_2,1)\}\right)\cup \left(\bigcup_{(u_1,u_2)\in \Vscdual} \{(u_1,u_2,R)\}\right)\\
 \Eclscdual&=
 \left(\bigcup_{
{\tiny \begin{matrix}
\{(u_1,u_2),(v_1,v_2)\}\in \Escdual\end{matrix}}}
\left\{(u_1,u_2,1),(v_1,v_2,1)\right\}\right)
\cup
 \left(\bigcup_{
{\tiny \begin{matrix}
\{(u_1,u_2), (v_1,v_2) \}\in \Escdual\end{matrix}}}
\left\{(u_1,u_2,R),(v_1,v_2,R)\right\}\right)\ .
 \end{align}

Now we are able to introduce a certain ``glued graph''~$\Tgl=(\Vgl,\Egl)$,
as well as a ``dual glued graph''~$\Tgldual=(\Vgldual,\Egldual)$, in terms of the subgraphs~$\Teven$, $\Tclsc$, $\Todd$, and $\Tclscdual$, see Fig.~\ref{fig:gluedgraphsdefinitioncluster}.

\begin{figure}
  \begin{subfigure}[b]{0.45\textwidth}
    \centering
    \includegraphics[width=.9\linewidth]{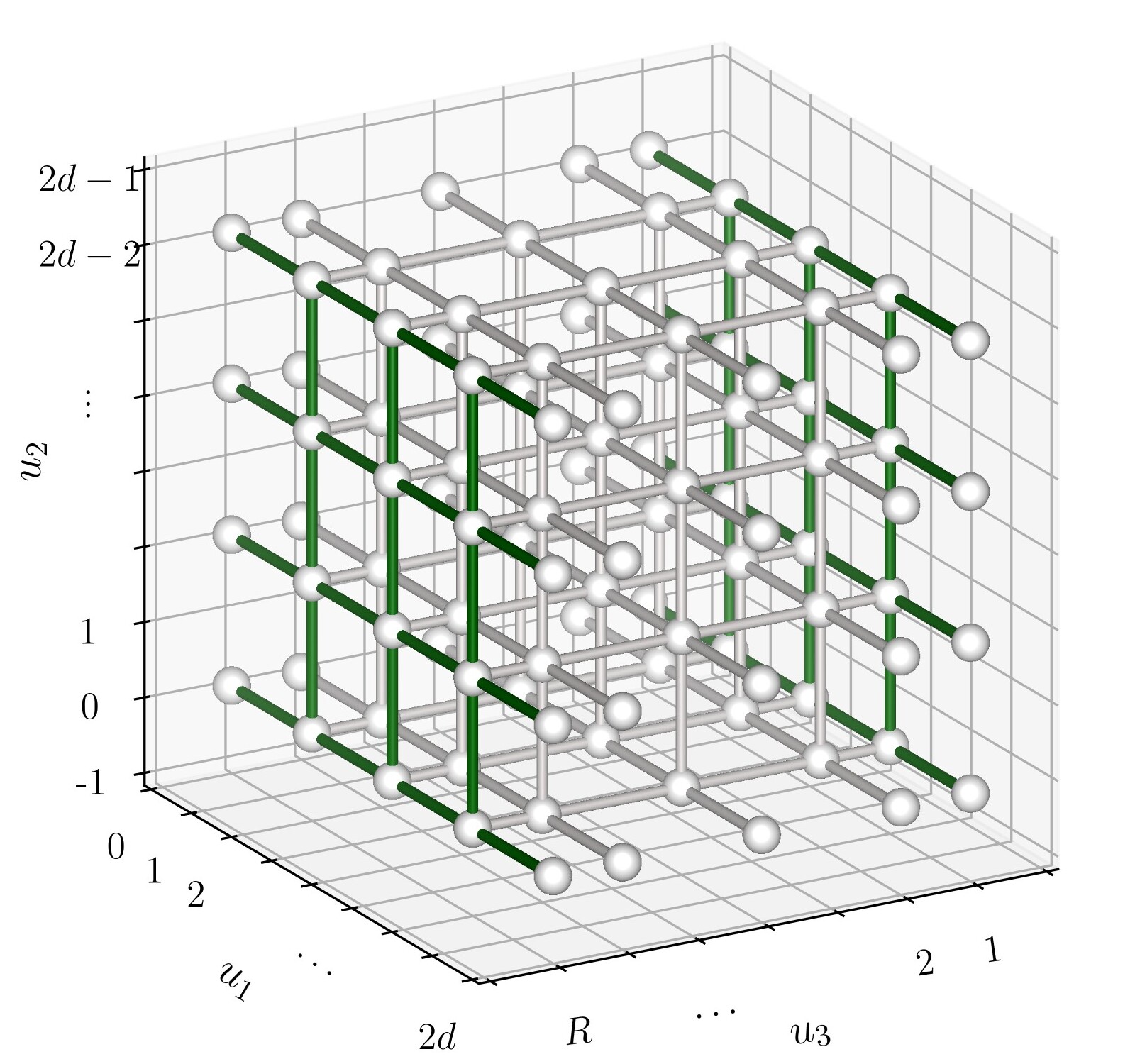}
    \caption{The glued graph $\Tgl$ as a union of two subgraphs: $\Teven$ (gray) and $\Tclsc$ (green).}\label{fig:gluedgraph}
  \end{subfigure}
  % \hfill
  \quad
  \begin{subfigure}[b]{0.45\textwidth}
    \centering
    \includegraphics[width=.9\linewidth]{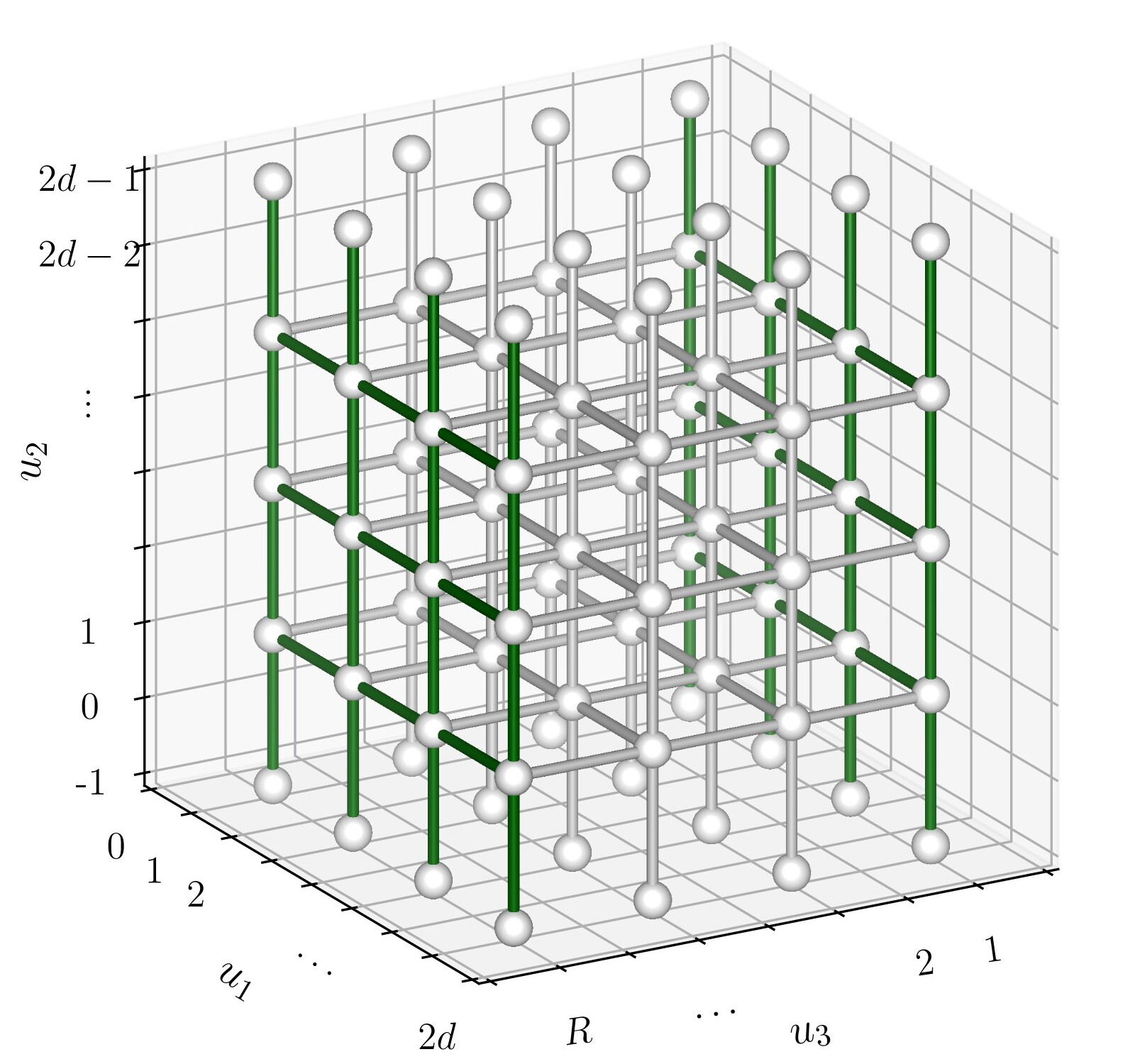}
    \caption{The dual glued graph $\Tgldual$ as a union of two subgraphs: $\Todd$ (gray) and $\Tclscdual$ (green).}\label{fig:dualgluedgraph}
  \end{subfigure}
  \caption{
    The glued graph~$\Tgl = \Teven \cup \Tclsc$ and the dual glued graph~$\Tgldual = \Todd \cup \Tclscdual$.}\label{fig:gluedgraphsdefinitioncluster}
\end{figure}

The glued graph~$\Tgl$ is obtained by  attaching the surface graph~$\Tclsc$ to the even graph~$\Teven$; here the two copies of the surface code
end up being connected to the rough front- and back boundaries of the even graph~$\Teven$. The graph~$\Tgl$ is formally defined by setting
\begin{align}
  \Vgl &=\Veven\cup \Vclsc\\
  \Egl &=\Eeven\cup \Eclsc\ .
\end{align}
We denote this construction simply by
\begin{align}
\Tgl&=\Teven\cup \Tclsc\ .
\end{align}
The  ``dual glued graph''~$\Tgldual=(\Vgldual,\Egldual)$ is defined by
gluing the dual surface graph~$\Tclsc$ to  the odd graph~$\Todd$, i.e.,
\begin{align}
\Tgldual &=\Todd\cup \Tclscdual\ .
  \end{align}

\subsubsection{Decoding subgraphs}\label{sec:decodinggraphs}
In the following, we describe how to process the measurement results in order to determine a suitable correction operation on the post-measurement on qubits~$\{q_1,q_2\}$ to obtain the Bell state~$\Phi$. 
More precisely, the correction is a Pauli operator of the form $Z^{\hat{c}_X}X^{\hat{c}_Z}$, where $\hat{c}_X$ and $\hat{c}_Z$ are computed from the single-qubit measurements results by a classical computation we describe below.
The computation of~$\hat{c}_X$  requires finding a minimal matching of a subset of marked vertices on a certain ``decoding'' graph~$\Tcldec$ with interior and exterior vertices.
The computation of~$\hat{c}_Z$ proceeds in a similar ``dual'' manner and involves a ``dual decoding graph''~$\Tcldecdual$, see Fig.~\ref{fig:internalexternalverticescl} for an illustration of these graphs.

The graphs~$\Tcldec$ and~$\Tcldecdual$ play a central role in our protocol:  We will  relate the edges of the  graphs~$\Tcldec$ and~$\Tcldecdual$ with the measured qubits. We will argue how the measurement results  determine certain subsets of edges in these two graphs. These subsets are used  to obtain certain subsets of marked vertices (by taking  boundaries). The pair ~$(\hat{c}_X,\hat{c}_Z)$ is computed from these subsets of marked vertices. 

\begin{figure}
  \centering
  \begin{subfigure}[b]{0.45\textwidth}
    \centering
    \includegraphics[width=0.9\textwidth]{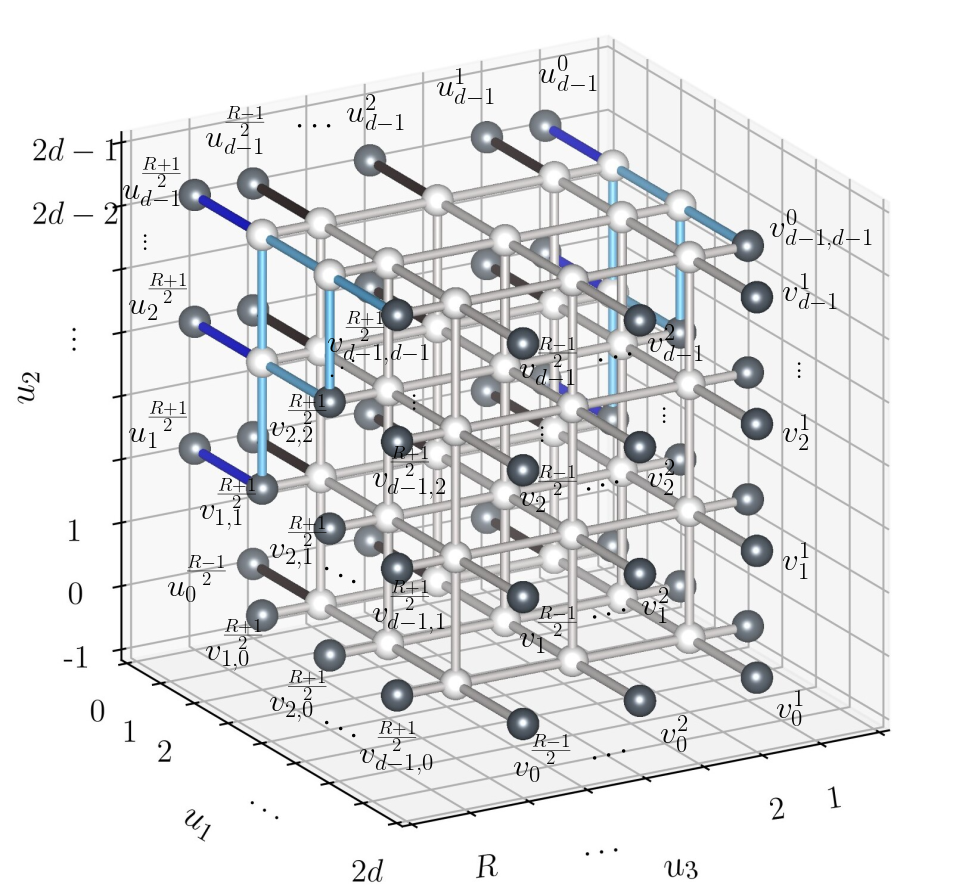}    
    \caption{
    \begin{tabular}{ |c||c|c| } 
     \hline
     \multicolumn{3}{|c|}{ $\Tcldec=\left(\Vcldec, \Ecldec\right)$} \\
     \hline\hline
     subsets of $\Ecldec$ & color & measurement\\ 
     \hline\hline
     $\cLXcl \cap \cA$ &\textcolor{bulklogical}{ \rule{1.2cm}{0.3cm}} & $Z$-basis \\
     \hline 
     $\left(\Ecldec \setminus \cLXcl \right) \cap \cA$&\textcolor{bulk}{ \rule{1.2cm}{0.3cm}} & $Z$-basis \\
     \hline 
     $\cLXcl \cap \cX$ &\textcolor{blue}{ \rule{1.2cm}{0.3cm}} & $X$-basis \\
     \hline 
     $\left(\Ecldec \setminus \cLXcl \right) \cap \cX$ &\textcolor{lightblue}{ \rule{1.2cm}{0.3cm}} & $X$-basis \\
     \hline 
    \end{tabular}
    }\label{fig:Tdecclexternal}
  \end{subfigure}
  % \hfill
  \quad
  \begin{subfigure}[b]{0.45\textwidth}
    \centering
    \includegraphics[width=.9\linewidth]{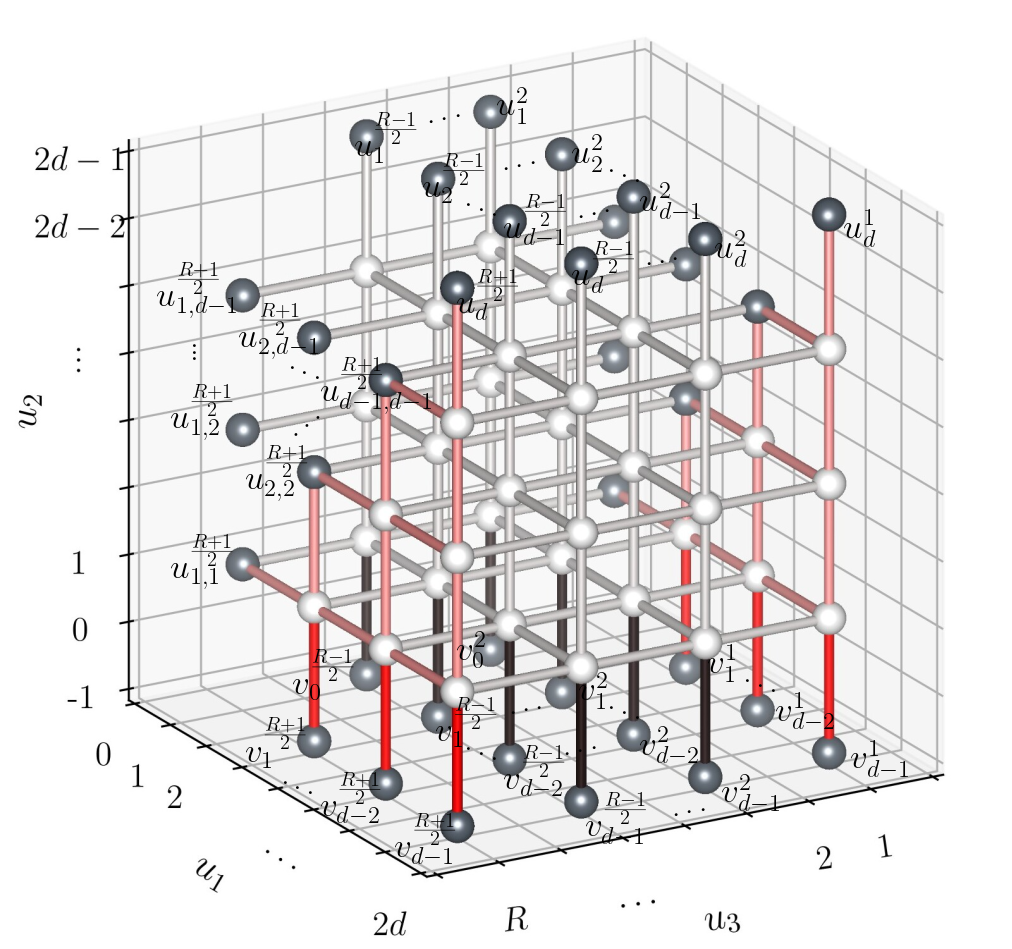}    
    \caption{
    \begin{tabular}{ |c||c|c| } 
     \hline
     \multicolumn{3}{|c|}{ $\Tcldecdual=\left(\Vcldec, \Ecldec\right)$} \\
     \hline\hline
     subsets of $\Ecldecdual$ & color & measurement\\ 
     \hline\hline
     $\cLZcl \cap \cA$ &\textcolor{bulklogical}{ \rule{1.2cm}{0.3cm}} & $Z$-basis \\
     \hline 
     $(\Ecldecdual \setminus \cLZcl)\cap \cA$ &\textcolor{bulk}{ \rule{1.2cm}{0.3cm}} & $Z$-basis \\
     \hline 
     $\cLZcl \cap \cZ$ &\textcolor{red}{ \rule{1.2cm}{0.3cm}} & $Z$-basis \\
     \hline 
     $(\Ecldecdual \setminus \cLZcl) \cap \cZ$ &\textcolor{lightred}{ \rule{1.2cm}{0.3cm}} & $Z$-basis \\
     \hline 
    \end{tabular}
    }\label{fig:Tdeccldualexternal}
  \end{subfigure}
  \caption{
    The decoding graph~$\Tcldec$  as well as the dual decoding graph~$\Tcldecdual$. External vertices~ $\Vcldecext$ and $\Vcldecdualext$ (black spheres) and internal vertices $\Vcldecint$ and $\Vcldecdualint$ (white spheres) are illustrated.
    The tables describe in which basis qubits corresponding to the edges are measured.
  }\label{fig:internalexternalverticescl}
\end{figure}

The decoding graph~$\Tcldec=(\Vcldec,\Ecldec)$ is the result of gluing two copies of the subgraph~$\Tdec\subset \Tsc$  (instead of two copies of the entire graph~$\Tsc$) to $\Teven$, see Fig.~\ref{fig:Tdecclexternal}. 
By construction, the resulting decoding graph~$\Tcldec$ is a subgraph of~$\Tgl$. Concisely, we have 
\begin{align}
  \Tcldec &= \Teven\cup \left(\Tdec\times  \{1,R\}\right)\subset \Tgl\ .\label{eq:tcldecdef}
\end{align}

The dual decoding graph~$\Tcldecdual=(\Vcldecdual,\Ecldecdual)$ is a subgraph of the dual glued graph~$\Tgldual$ and is  defined in a similar manner by
\begin{align}
  \Tcldecdual &= \Todd\cup \left(\Tdecdual\times  \{1,R\}\right)\subset \Tgldual\ ,\label{eq:tcldecdualdef}
\end{align}
see Fig.~\ref{fig:Tdeccldualexternal}. 

We explain the relationship between the qubits that are measured  and the edges of the graphs~$\Tcldec$ and~$\Tcldecdual$ using a bijection 
\begin{align}
\Lambda : \Ecldec \cup \Ecldecdual \to \cC \setminus \{q_1, q_2\} \ .
\end{align}
called the ``labeling'' map. The map~$\Lambda$ identifies  each edge of the graphs $\Tcldec$ and $\Tcldecdual$ with  the site of a qubit located at the midpoint of the edge or at one of the vertices adjacent to the edge: Let~$e=\{u,v\}\in \Ecldec \cup \Ecldecdual $ be one of edges of the graphs $\Tcldec$ and $\Tcldecdual$.
If~$d(u,v) = 2$, then~$\Lambda$ maps the edge~$e$ to the midpoint~$\Lambda(e)=\frac{u+v}{2}$  of the vertices~$u$ and~$v$.
Otherwise, i.e., if $d(u,v)\neq 2$, then we must have~$d(u,v)=1$ and it follows that $e$ belongs to~$\Eeven \subset \Ecldec$, and is of the form
\begin{align}\label{eq:shortedgesintcldec}
\{u, v\} &= \begin{cases*}
 \{(u_1,u_2,1),(u_1,u_2,2)\} & \textrm{ if $\{u_3,v_3\} = \{1,2\}$}\\
 \{(u_1,u_2,R-1),(u_1,u_2,R)\} & \textrm{ if $\{u_3,v_3\} = \{R-1,R\}$} 
\end{cases*}& \textrm{ with } \qquad u_1, u_2 \textrm{ even }
\end{align}
as stated in~\eqref{eq:distanceoneedgesinTeven} of Section~\ref{sec:gl}. 
In this case, the edge~$e$ is mapped by~$\Lambda$ to the site~$(u_1,u_2,1)$ if~$\{u_3,v_3\}=\{1,2\}$ or the site~$(u_1,u_2,R)$ otherwise.
In summary, we can write the map~$\Lambda$ as
\begin{equation}
  \begin{aligned}\label{eq:defoflabeling}
     \Lambda(\{u,v\})= \begin{cases*}
      (u_1, u_2, 1) & \textrm{ if $\{u,v\} \in \Ecldec$, $d(u, v) = 1$ and $\{u_3, v_3\} = \{1, 2\}$} \\
      (u_1, u_2, R) & \textrm{ if $\{u,v\} \in \Ecldec$, $d(u, v) = 1$ and $\{u_3, v_3\} = \{R-1, R\}$} \\
      \left( \frac{u_1+v_1}{2}, \frac{u_2+v_2}{2}, \frac{u_3+v_3}{2} \right) & \textrm{ otherwise.}
    \end{cases*}
  \end{aligned}
\end{equation}
It is easy to check that the map~$\Lambda$ is a bijection.

Using the labeling map~$\Lambda$, we can identify each edge of~$\Tcldec$ or~$\Tcldecdual$ with one of  the measured qubits, or equivalently, the site of the latter. We say an edge~$e\in \Ecldec \cup \Ecldecdual$ is labeled by a site~$w \in \cC \setminus \{q_1,q_2\}$ if $\Lambda(e) = w$.

With the labeling map~$\Lambda$, we have 
\begin{align}
\begin{matrix}
    \cX&= &\Lambda(\Edec \times \{1,R\}) \\
    \cZ&= &\Lambda(\Edecdual \times \{1,R\}) \\
    \Lambda(\Ecldec) \cap \cA &=& \Lambda(\Eeven) \\
    \Lambda(\Ecldecdual) \cap \cA &=& \Lambda(\Eodd) \ .
    \end{matrix}\label{eq:lambdamvs}
\end{align}
Sometimes we simply denote an edge~$e$ by the corresponding label~$\Lambda(e)$, leaving the map~$\Lambda$ implicit. Following this convention  Eq.~\eqref{eq:lambdamvs} becomes
\begin{align}
\begin{matrix}
    \cX&= &\Edec \times \{1,R\} \\
    \cZ&= &\Edecdual \times \{1,R\} \\
    \Ecldec \cap \cA &=& \Eeven \\
    \Ecldecdual \cap \cA &=& \Eodd \ . \end{matrix}
\end{align}
Because of the definitions~\eqref{eq:tcldecdef}
and~\eqref{eq:tcldecdualdef}
of the graphs~$\Tcldec$ and~$\Tcldecdual$, this allows us to write the set of edges $\Ecldec$ and $\Ecldecdual$ as unions of pairwise disjoint subsets $\cX$,  $\cZ$, $\Ecldec \cap \cA$ and $\Ecldecdual \cap \cA$ of the measured qubits~$\cC \setminus \{q_1,q_2\}$, namely
\begin{align}
\begin{matrix}
    \Ecldec &= &\cX \cup \left(\Ecldec \cap \cA\right) \\
    \Ecldecdual &=& \cZ \cup \left(\Ecldecdual \cap \cA\right)
    \end{matrix}\label{eq:ecldececldecstar}
\end{align}
In our entanglement generation protocol, we will use~\eqref{eq:ecldececldecstar} to
interpret the measurement outcomes $(a,x,z) \in \{0,1\}^{\cA\cup\Xcl \cup \Zcl}$ as defining two subsets of $\Ecldec$ and $\Ecldecdual$, respectively.

\subsubsection{Recovery sets and internal vertices of $\Tcldec$ and $\Tcldecdual$\label{sec:internalverticestcl}}
We define recovery sets~$\cLXcl\subset \Ecldec$ and $\cLZcl\subset\Ecldecdual$ to be used later, see Fig.~\ref{fig:internalexternalverticescl} for an illustration. The role of these sets in the context of the graph state will be clarified below, see Lemma~\ref{lem:logicalopsSXSZ}.

The recovery set~$\cLXcl\subset\Ecldec$ consists of edges incident on degree-$1$ vertices (so-called dangling edges in the terminology of~\cite{bravyiQuantumAdvantageNoisy2020}) on the left rough boundary of~$\Tcldec$.
That is,
  \begin{align}
    \cLXcl &= \left\{\{(0,u_2,u_3),(2,u_2,u_3)\}\ |\ u_2\geq 2, u_2\textrm{ even, } u_3\in \{1,R\}\right\}\\
    &\quad \cup
    \left\{\{(0,u_2,u_3),(2,u_2,u_3)\} \mid u_2, u_3\textrm{ even }\right\}\ ,
  \end{align}
  or more succinctly (labeling edges by sites with the map~$\Lambda$)
  \begin{align}
    \cLXcl &= \left\{(1,u_2,u_3)\ |\ u_2\geq 2, u_2\textrm{ even, } u_3\in \{1,R\}\right\}\cup
    \left\{(1,u_2,u_3)\ |\ u_2\geq 2, u_2, u_3\textrm{ even }\right\}\ .\label{eq:setunionxm}
  \end{align}
  Observe that the first set in this union is simply $\cL_X\times \{1,R\}$, i.e., it consists of two copies of the set~$\cL_X$ introduced for the surface code in Section~\ref{sec:recoverysetsurface}.
  We can also write this set as $\cLXcl \cap \cB$ or equivalently as $\cLXcl \cap \cX$ (by the fact that $\cB \cap \Ecldec = \cX$) using the labels. 
  Similarly, the second set in the union~\eqref{eq:setunionxm} is $\cLXcl \cap \cA$.
  In summary, the recovery set~$\cLXcl$ is the disjoint union
  \begin{align}
      \cLXcl = \left(\cLXcl \cap \cA\right) \cup \left(\cLXcl \cap \cB\right) \ .
  \end{align}

  The dual recovery set~$\cLZcl\subset\Ecldecdual$ consists of edges incident on degree-$1$ vertices at the bottom of~$\Ecldecdual$.
  For every site $(u_1, u_2, u_3) \in \left\{ (o,-1,o) \right\} \setminus \left\{ q_1, q_2 \right\}\in \Cprimedual$,  there exists such an edge incident on the vertex at~$(u_1, 1, u_3)$; these edges are labeled by their midpoint~$(u_1, 0, u_3)$.
  Such an edge belongs to~$\cA$ if and only if~$u_3 \not \in \{1,R\}$, and to~$\cB$ otherwise.
  In summary, the recovery set~$\cLZcl$ is the disjoint union
  \begin{align}
    \cLZcl &=(\cLZcl \cap \cA)\cup (\cLZcl \cap \cB)
  \end{align}
  where
  \begin{align}
    \cLZcl \cap \cA  &= \left\{ (u_1,u_2, u_3) \in \left\{ (o,0,o) \right\} \mid u_1 \not \in \{1, R\} \right\}\\
    \cLZcl \cap \cB  &= \left\{ (u_1,u_2, u_3) \in \left\{ (o,0,o) \right\} \mid u_1 \in \{1, R\} \right\} \setminus \left\{ q_1, q_2 \right\}\ .
\end{align}

Finally, let us define the set $\Vcldecint\subset\Vcldec$ of internal vertices of the decoding graph~$\Tcldec$ and similarly, the set $\Vcldecdualint\subset\Vcldecdual$ of internal vertices of the dual decoding graph~$\Tcldecdual$. 
We  call their complements $\Vcldecext:=\Vcldec\backslash \Vcldecint$ and $\Vcldecdualext:=\Vcldecdual\backslash \Vcldecdualint$ the sets of external vertices, see Fig.~\ref{fig:internalexternalverticescl}.

For the decoding graph~$\Tcldec$, the set of internal vertices is
\begin{align}
\Vcldecint&=\left\{u\in \Veven\ |\ \neigh(u)\subset \Eeven \right\} \cup \left( \Vdecint \times \left\{ 1, R \right\} \right) \ .\label{eq:tcldecinterior} 
\end{align}
The set of internal vertices of the graph~$\Tcldecdual$ is defined similarly
as
\begin{align}
  \Vcldecdualint&=\left\{u\in\Vodd\ |\ \neigh(u)\subset \Eodd\right\} \cup \left( \Vdecdualint \times \{1,R\} \right)\ ,
\end{align}
or equivalently as 
\begin{align}
  \Vcldecdualint&=\left\{u\in\Vcldecdual\ |\ \neigh(u)\subset \Ecldecdual\right\}\ .\label{eq:tcldecdualinterior} 
\end{align}

\subsection{Stabilizers of the cluster state\label{sec:stabilizersgraphstate}}
In this section, we consider
certain products of the stabilizer generators~$\{G_u\}_{u\in\cC}$ of the cluster state~$W\ket{0^\cC}$ defined by~\eqref{eq:stabilizergeneratorsgraph}.
The first set of operators will be used to define an error syndrome computable from single-qubit measurements performed in our protocol. To introduce
these operators, we  denote by~$\incident_G(v)$ the set of all edges of a graph~$G$ which are incident on a vertex~$v$.
Here the graph $G$ is either $\Tcldec$ or $\Tcldecdual$.
Again using the labeling map~$\Lambda$ (cf.~\eqref{eq:defoflabeling}), we will sometimes regard~$\incident_G(v)$ as a set of sites (associated with the measured qubits).  
Key to our procedure is the fact that certain products of the stabilizer generators~\eqref{eq:stabilizergeneratorsgraph} of the cluster  state~$W\ket{0^\cC}$ have supports given by sets of the form~$\incident_{\Tcldec}(v)$ for $v \in \Vcldec$ or $\incident_{\Tcldecdual}(v)$ for $v \in \Vcldecdual$. More precisely, we have the following:
\begin{lemma}\label{lem:productsofstabilizergeneratorsone}
  Define $S^u$ for $u \in \Vcldecint$ as
  \begin{align}\label{eq:defofSu}
    S^u := \begin{cases*}
      \prod_{v \in \neigh(u)} G_v\qquad & \text{if $u_3\not\in \{1,R\}$}\\
      G_u & \text{if $u_3\in \{1,R\}$}\ .
      \end{cases*}
  \end{align}
  Then
  \begin{align}\label{eq:Suandincident}
    S^u&=  \left(\prod_{v \in \incident_{\Tcldec}(u) \cap \cA} Z_v\right)\left(\prod_{v \in \incident_{\Tcldec}(u) \cap \cB} X_v\right)\ .
    \end{align}
      Similarly, define $S^u$ for $u \in \Vcldecdualint$ as 
  \begin{align}\label{eq:defofSudual}
    S^u = \begin{cases*}
      \prod_{v \in \neigh(u)} G_v \qquad &\textrm{ if $u_3\not\in \{1,R\}$} \\
      G_{u \pm (0,0,1)} \prod_{v \in \neigh(u) \cap \mathcal{B}}G_v  \qquad &\textrm{ if        $u_3\in \{1,R\}$}
    \end{cases*}
  \end{align}
  where the sign $\pm$ in \eqref{eq:defofSudual} is ``$+$'' if $u$ is situated in the front layer, and ``$-$'' if $u$ is situated in the back layer so that $u \pm (0,0,1) \in \mathcal{C}$.
  Then
  \begin{align}\label{eq:Sudualandincident}
    S^u = \prod_{v \in \incident_{\Tcldecdual}(u)}Z_v \qquad \textrm{ for all } \qquad u \in \ \Vcldecdualint\ .
  \end{align}
\end{lemma}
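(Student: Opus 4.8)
The plan is to expand each operator $S^u$ as a product of the generators $G_v = Z_v\prod_{w\in\neigh(v)}X_w$ and show that all $X$-type factors cancel in pairs, leaving exactly the operators on the incident edges claimed in \eqref{eq:Suandincident} and \eqref{eq:Sudualandincident}. The organizing observation is that the cluster-state graph on $\cC$ is bipartite: every qubit site has either exactly one odd coordinate (call these \emph{primal}) or exactly two (\emph{dual}), and since adjacent sites differ in a single coordinate by $\pm 1$, every edge joins a primal to a dual qubit. Hence $G_v$ places $Z$ on $v$ and $X$ only on qubits of the opposite type. I would also first record, directly from the definitions of $\Teven$, $\Todd$, the glued graphs, and the labeling map $\Lambda$, the geometric dictionary: for a bulk internal vertex $u$ (with $u_3\notin\{1,R\}$) the labeled set $\incident_{\Tcldec}(u)$ (resp. $\incident_{\Tcldecdual}(u)$) coincides with $\neigh(u)$ and lies entirely in $\cA$; for a front/back internal vertex it consists of the in-plane surface neighbours together with a single ``vertical'' qubit.

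For the bulk cases $S^u=\prod_{v\in\neigh(u)}G_v$, the $Z$-factors immediately give $\prod_{v\in\neigh(u)}Z_v=\prod_{v\in\incident_{\Tcldec}(u)}Z_v$ (all incident edges in $\cA$), matching the $\cB$-free form of \eqref{eq:Suandincident}, and dually \eqref{eq:Sudualandincident}. The whole content is that the $X$-factors vanish. The exponent of $X_w$ in $S^u$ equals $|\neigh(u)\cap\neigh(w)|\bmod 2$, i.e.\ the number of common qubit-neighbours of the vertex $u$ and the opposite-type second-neighbour $w$. A short cubic-lattice computation shows any such $w$ that actually occurs lies at Manhattan distance $2$ from $u$ and has exactly two common neighbours with it (the two ``face-diagonal'' qubits), so every $X_w$ occurs an even number of times and cancels.

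For the surface cases $u_3\in\{1,R\}$ the vertex sits in the front or back layer. In the primal case $u=(e,e,1)$ is itself a qubit and $S^u=G_u$; its qubit-neighbours are the in-plane dual qubits constituting $\incident_{\Tcldec}(u)\cap\cB$, while the single vertical edge to $(e,e,2)$ is labelled by $u$ itself, so $\incident_{\Tcldec}(u)\cap\cA=\{u\}$ and $G_u=Z_u\prod_w X_w$ is already in the claimed form. In the dual case $u=(o,o,1)$ is \emph{not} a qubit, which is why $S^u$ is defined in \eqref{eq:defofSudual} as the product of the single bulk generator $G_{u\pm(0,0,1)}$ with the front generators $G_v$ over $v\in\neigh(u)\cap\cB$. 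Here two cancellations must be checked: the $X$-operators that $G_{u\pm(0,0,1)}$ sends onto the four bulk primal qubits $\neigh(u\pm(0,0,1))$ are matched one-to-one by the $X$-operators the front generators $G_v$ send ``up'' into the bulk, and the remaining $X$-operators on the front primal qubits cancel among the $G_v$ by the same common-neighbour count as for the surface-code plaquette operators. What survives is $Z_{u\pm(0,0,1)}\prod_{v\in\neigh(u)\cap\cB}Z_v$, which is exactly $\prod_{v\in\incident_{\Tcldecdual}(u)}Z_v$ once one identifies the vertical distance-$2$ edge of $\Todd$ (labelled by $u\pm(0,0,1)$) together with the in-plane dual edges as the incident set.

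The main obstacle is making the $X$-cancellation watertight at the front and back layers: in the bulk the parity follows from the cubic-lattice symmetry, but near the boundary one must ensure that every second-neighbour qubit genuinely has \emph{both} of its relevant common neighbours present in the graph, so that no $X$-operator is left with odd multiplicity. This is precisely where I would invoke the defining property of $\Vcldecint$ and $\Vcldecdualint$ (that the full star of $u$ lies in the graph), and treat the glued boundary geometry — the vertical dangling edges of $\Teven$ and $\Todd$ and the triangular measurement pattern — explicitly as above, rather than appealing to the bulk symmetry argument.
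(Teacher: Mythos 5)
Your proposal is correct and follows essentially the same route as the paper's proof: expand $S^u$ into stabilizer generators, observe that every Pauli-$X$ factor lands on a site having exactly two common qubit-neighbours with $u$ (the paper phrases this as an explicit pairing of $(v,w)$ with a partner $(v',w)$), so all $X$'s cancel, and then identify the surviving $Z$-support with $\incident_{\Tcldec}(u)$ resp.\ $\incident_{\Tcldecdual}(u)$ and its $\cA$/$\cB$ decomposition via the labeling map. The only difference is one of completeness: the paper works out the primal cases and declares the dual decoding graph ``analogous,'' whereas you spell out the dual front/back-layer case $S^u = G_{u\pm(0,0,1)}\prod_{v\in\neigh(u)\cap\cB}G_v$ explicitly, including the matching of the vertical $X$'s against those sent into the bulk by the front-layer generators, which is the least obvious of the cancellations.
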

We include a proof of this lemma in Appendix~\ref{sec:appendixproductsofstabilizer}.

We also need two additional products~$S^X$ and $S^Z$ of the stabilizer generators~$\{G_u\}_{u\in\cC}$. These have support on  the two qubits~$\{q_1,q_2\}$
between which the protocol establishes entanglement, and their eigenvalues fix the state of these two qubits as described below (see Lemma~\ref{lem:pmstateisbellbasis}).
\begin{lemma} \label{lem:logicalopsSXSZ}
  Define the operators 
  \begin{align}
    S^X& := \prod_{u \in\cLXcl \cap \cB} G_u \qquad \textrm{ and }\qquad
    S^Z := \prod_{u \in\cLZcl \cup \left\{ q_1, q_2 \right\}} G_u \ .
  \end{align}
  Then
  \begin{align}
    S^X &= X_{q_1}X_{q_2}X\left(\cLXcl\cap \cB\right) Z\left(\cLXcl\cap \cA\right)\label{eq:sxdefinition} \\
    S^Z &= Z_{q_1}Z_{q_2}Z\left(\cLZcl\cap \cB\right) Z\left( \cLZcl\cap \cA\right)\ .\label{eq:szdefinition}
  \end{align}
\end{lemma}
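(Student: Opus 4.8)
The statement is a direct computation in the stabilizer formalism, and the natural tool is the elementary calculus of products of graph-state generators. Since $G_u = Z_u\prod_{v\in\neigh(u)}X_v$, for any subset $A\subseteq\cC$ one has
\begin{align}
\prod_{u\in A}G_u \;=\; \pm\, Z(A)\,X(\Gamma(A)), \qquad \Gamma(A):=\{w\in\cC \mid |\neigh(w)\cap A|\text{ is odd}\},
\end{align}
where $Z(A)$ collects the $Z_u$ from each generator and $X(\Gamma(A))$ records the $X_v$ that survive after the pairwise cancellation of all $X$-contributions on qubits adjacent to an even number of elements of $A$ (this is a clean $X$/$Z$ split, with no $Y$'s, precisely because in our sets $A\cap\Gamma(A)=\emptyset$). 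The entire proof is then the geometric evaluation of $Z(A)$ and $\Gamma(A)$ for the two product sets $A=\cLXcl\cap\cB$ and $A=\cLZcl\cup\{q_1,q_2\}$, using the explicit coordinate descriptions of $\cLXcl$, $\cLZcl$, $\{q_1,q_2\}$ together with the nearest-neighbor relation on $\cC$ (recall every qubit of $\cC$ has at most four neighbors).

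The plan is to treat $S^Z$ first, as it is the clean case. The set $\cLZcl\cup\{q_1,q_2\}$ is exactly the full ``bottom'' membrane $\{(o,0,o)\}\cap\cC$ at $u_2=0$: the bulk piece $\cLZcl\cap\cA$ supplies the odd layers $u_3\notin\{1,R\}$, the piece $\cLZcl\cap\cB$ supplies $u_3\in\{1,R\}$, and re-adjoining $q_1,q_2$ fills the two sites removed in the definition of $\cLZcl\cap\cB$; since $R$ is odd this is the complete set of odd $u_3$. For this $A$ I would verify that $\Gamma(A)=\emptyset$: the only candidate neighbors lie at sites $(e,0,o)$ or $(o,0,e)$, and a short case analysis shows each interior such site has exactly two membrane neighbors (even), while the would-be endpoint sites sit at $u_1\in\{0,2d\}$ or $u_3\notin[1,R]$ and are therefore not qubits of $\cC$; neighbors in the $u_2$-direction are ruled out because $(o,1,o)$ is excluded and $(o,-1,o)$ is out of range. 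Hence every $X$-contribution cancels and $S^Z=Z(A)=Z_{q_1}Z_{q_2}Z(\cLZcl\cap\cB)Z(\cLZcl\cap\cA)$, which is exactly \eqref{eq:szdefinition}.

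For $S^X$ I would run the same bookkeeping and then split the resulting supports into the pieces $\{q_1,q_2\}$, $\cLXcl\cap\cB$ and $\cLXcl\cap\cA$ via the coordinate descriptions, matching \eqref{eq:sxdefinition}. The cleanest way to organize the cancellations is through the labeling map $\Lambda$ and the incidence sets $\incident_{\Tcldec}(\cdot)$, exactly as in Lemma~\ref{lem:productsofstabilizergeneratorsone}: one identifies each surviving $X_v$ (resp.\ $Z_v$) with an edge of $\Tcldec$ incident on a vertex of the product set, so that ``even incidence $\Rightarrow$ cancellation'' becomes the statement that interior vertices of the relevant membrane contribute no net boundary, while only the degree-one (dangling-edge) data on the left rough boundary and the front/back layers survive. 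In fact, since Lemma~\ref{lem:productsofstabilizergeneratorsone} already expresses the appropriate products of $G_u$'s in terms of $\incident_{\Tcldec}$, I would try to assemble $S^X$ and $S^Z$ from those building blocks rather than re-deriving the cancellations from scratch.

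The delicate part — and the main obstacle — is the boundary bookkeeping at the $u_3\in\{1,R\}$ layers and at the distinguished qubits $q_1,q_2$, where the generic interior pairwise cancellation fails and one must use the precise surface-code dangling-edge structure encoded in $\cLXcl$, $\cLZcl$ and in the membership conditions defining $\cA$ and $\cB$. I would handle this by isolating the generators attached to the front and back layers, computing their joint action by hand, and then checking case by case that the net support on each qubit class $\{(o,e,\cdot)\}$, $\{(e,o,\cdot)\}$, $\{(e,e,\cdot)\}$ agrees with the claimed operator; everything away from these layers reduces to routine mod-$2$ incidence counting.
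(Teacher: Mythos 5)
Your master formula $\prod_{u\in A}G_u=Z(A)\,X(\Gamma(A))$ (with $\Gamma(A)$ the set of sites adjacent to an odd number of elements of $A$) is sound, and your treatment of $S^Z$ is complete and correct: $\cLZcl\cup\{q_1,q_2\}$ is the full membrane $\{(o,0,o)\}\cap\cC$, every site of pattern $(e,0,o)$ or $(o,0,e)$ has exactly two neighbours in it, the remaining candidate neighbours are not qubits of $\cC$, so $\Gamma(A)=\emptyset$ and \eqref{eq:szdefinition} follows. This is the same cancellation the paper performs for that half.

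The $S^X$ half is where the real content lies, and your plan for it would fail if carried out. By your own formula, the $Z$-support of $\prod_{u\in A}G_u$ is $A$ itself; taking $A=\cLXcl\cap\cB$ as in the displayed definition of $S^X$ therefore produces an operator whose $Z$-support is $\cLXcl\cap\cB$, whereas the right-hand side of \eqref{eq:sxdefinition} has $Z$-support $\cLXcl\cap\cA$ and $X$-support $\{q_1,q_2\}\cup\left(\cLXcl\cap\cB\right)$. Concretely, for $u=(1,u_2,1)\in\cLXcl\cap\cB$ one finds $\neigh(u)=\{(2,u_2,1),(1,u_2,2)\}$, because $(1,u_2\pm 1,1)$ has pattern $(o,o,o)$ and $(0,u_2,1)\notin\cC$; hence $G_u=Z_{(1,u_2,1)}X_{(2,u_2,1)}X_{(1,u_2,2)}$. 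In the product over $\cLXcl\cap\cB$ the $Z$'s land exactly on $\cLXcl\cap\cB$, the $X$'s land only on sites of the form $(2,u_2,u_3)$ or $(1,u_2,u_3')$ with $u_3'$ even, and no factor ever touches $q_1$ or $q_2$. No amount of "bookkeeping" or regrouping of supports can turn this into \eqref{eq:sxdefinition}; had you executed your plan you would have hit this contradiction rather than a proof.

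What actually yields \eqref{eq:sxdefinition} --- and what the paper's appendix computes, its index-set notation notwithstanding --- is the product of $G_u$ over the bulk membrane $\{(1,u_2,u_3)\in\cC\mid u_2,u_3\ \mathrm{even}\}$, i.e.\ over $\cLXcl\cap\cA$ in the edge-labelling convention (crucially including its $u_2=0$ column). For that set the internal $X$-contributions cancel in pairs: each site $(1,u_2',u_3)$ with $u_2'$ odd has both neighbours $(1,u_2'\pm 1,u_3)$ in the set, and likewise in the $u_3$-direction for $3\le u_3\le R-2$; the $Z$'s survive exactly on the membrane, and the only unpaired $X$'s sit at the column ends $u_3\in\{1,R\}$, giving precisely $X_{q_1}X_{q_2}X\left(\cLXcl\cap\cB\right)$. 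Recognizing that the product must run over this set --- equivalently, that the index set printed in the lemma is incompatible with the claimed operator --- is the crux of the $S^X$ case, and it is the step your proposal treats as routine and omits.
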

We note that similar expressions for products of stabilizer generators have been exploited in the seminal work~\cite{raussendorfLongrangeQuantumEntanglement2005}. For completeness, we give a proof of this lemma in Appendix~\ref{sec:appendixproductsofstabilizer}.

\subsection{Description of the entanglement generation protocol}\label{sec:alg}
The entanglement generation protocol is given as Algorithm~\ref{alg:ftes} below. It takes as input a state~$\ket{\Psi_{\textrm{in}}}$ of the qubits on~$\cC$ and produces  a two-qubit state~$\ket{\Psi_{\textrm{out}}}$ on qubits~$\{q_1,q_2\}$. 

In the following, we again identify a  string~$a\in\{0,1\}^{\cC}$ with a subset of~$\cC$ (interpreting~$a$ as its characteristic function). The bit-wise XOR~$a\oplus b$ of two strings~$a,b\in \{0,1\}^{\cC}$ then corresponds to taking the symmetric difference of the associated sets.

\begin{algorithm}[H]
\caption{Fault-tolerant entanglement generation protocol}
\label{alg:ftes}
\begin{algorithmic}[1]
\Require Input: A state on $(\mathbb{C}^2)^{\otimes \cC}$, where $\cC = \cC[d\times d\times R]$ is the set of the locations of qubits.

  \State Perform the following measurements: measure
  \begin{center}
  \begin{tabular}{c|c|c}
   every qubit in the set & in the  & denote the outcome by \\
   \hline
    $\Xcl$ & $X$-basis & $x\in \{0,1\}^{\Xcl}$\\
    $\Zcl$ & $Z$-basis & $z\in \{0,1\}^{\Zcl}$\\
    $\cA$ & $Z$-basis & $a \in \{0,1\}^{\cA}$\\
  \end{tabular}
  \end{center}
  \label{it:stepmeasurement}
    \State Compute the boundaries of the edges $(a \cap \Ecldec) \oplus x$ and $(a \cap \Ecldecdual) \oplus z$, respectively, i.e., set
  \begin{align}
  s &\leftarrow \partial_{\Tcldec}\left( (a \cap \Ecldec)\oplus x \right)\label{eq:sdefinitionboundary}\\
  s^{*} &\leftarrow \partial_{\Tcldecdual}\left( (a \cap \Ecldecdual) \oplus z \right)
  \end{align}\label{state:setsandsdual}
  \State Compute minimal matchings 
  \begin{align}
  m &\leftarrow \mmatch_{\Tcldec}(s)\\
m^* &\leftarrow \mmatch_{\Tcldecdual}(s^{*})\ 
\end{align}\label{state:setm}
\State Compute the bits\label{it:syndromebitestimation}
\begin{align}
  \hat{s}^X&\leftarrow \ztwoinner{m}{\cLXcl}\label{eq:hatsxdef}\\
\hat{s}^Z&\leftarrow \ztwoinner{m^{*}}{\cLZcl}\label{eq:hatszdef}\ 
\end{align}\label{state:sydnromebitsxsz}

  \State Determine  the bits
  \begin{align}
    \hat{c}_X &\leftarrow \ztwoinner{a}{\cLXcl\cap \cA} \oplus \ztwoinner{x}{\cLXcl\cap \cB} \oplus \hat{s}^X\\
    \hat{c}_Z& \leftarrow \ztwoinner{a}{\cLZcl\cap \cA} \oplus \ztwoinner{z}{\cLZcl\cap \cB} \oplus \hat{s}^Z
  \end{align} \label{state:estimateofc}
  \State Apply $Z^{\hat{c}_X} X^{\hat{c}_Z}$ to $q_1$.\label{state:estimatecorrection}
  \State Return the two-qubit state on the qubits~$\{q_1,q_2\}$.
  \end{algorithmic}
\end{algorithm}

We will show that if the input state~$\ket{\Psi_{\textrm{in}}}$ is the cluster state~$W\ket{0^\cC}$, then the output of Algorithm~\ref{alg:ftes} is the two-qubit Bell state~$\ket{\Phi}$. 
More generally, we will show that the procedure  yields a Bell state~$\ket{\Phi_{(\alpha,\beta)}}$  if we start with a cluster state corrupted by a Pauli error.

\subsection{Entanglement generation from a corrupted cluster state}\label{sec:alganalysiscorrectness}
In this section, we examine the result of running Algorithm~\ref{alg:ftes} on the corrupted cluster state
\begin{align}
  \ket{\Psi^{\cC}_{\textrm{in}}}&=EW\ket{0^\cC}\ \label{eq:noisyclusterstate}
\end{align}
for a fixed Pauli error~$E$. 
We show that the final state on qubits~$q_1,q_2$ is one of the four Bell states~$\{\ket{\Phi_{(\alpha,\beta)}}\}_{\alpha,\beta\in \{0,1\}}$, and give a formula for $(\alpha,\beta) \in \{0,1\}^2$ in terms of the error~$E$ (see Theorem~\ref{thm:successcondition}). This is the basis of our analysis of local stochastic noise in Section~\ref{sec:resiliencenoiselocalstochastic}.

To determine the state at the end of the protocol, first consider the post-measurement state
\begin{align} \label{eq:pmstateinalg}
    \ket{\psi_{\mathrm{pm}}(a,x,z,E)} = \frac{1}{\sqrt{p(a,x,z|E)}} \left( \bra{a}_{\mathcal{A}} \otimes  \bra{x}_{\mathcal{X}}H(\mathcal{X})^\dagger\otimes \bra{z}_{\Zcl} \otimes I_{\{q_1,q_2\}} \right) E   W\ket{0^{\cC}}\ 
  \end{align}
obtained after step~\ref{it:stepmeasurement} of the algorithm  on qubits~$\{q_1,q_2\}$. Here  $p(a,x,z \mid E)$ denotes the probability of obtaining the outcomes $(a,x,z)$ given a Pauli error~$E$, i.e., for the input state~\eqref{eq:noisyclusterstate}.  This state is a Bell state determined by the measurement outcomes~$(a,x,z)$ and the error~$E$  as follows:
\begin{lemma} \label{lem:pmstateisbellbasis}
  Let $S^X$ and $S^Z$ be the operators introduced in Lemma~\ref{lem:logicalopsSXSZ}. Let
  \begin{align}
    c_X &:= \ztwoinner{a}{\cLXcl\cap \cA} \oplus \ztwoinner{x}{\cLXcl \cap \cB} \oplus \syn(S^X, E)\label{eq:cxdefinition}\\
  c_Z &:= \ztwoinner{a}{\cLZcl \cap\cA} \oplus \ztwoinner{z}{\cLZcl \cap \cB} \oplus \syn(S^Z, E)\ .\label{eq:czdefinition}
  \end{align}
  Then, given measurement outcomes~$(a,x,z)$,
  the post-measurement state after Step~\ref{it:stepmeasurement} of Algorithm~\ref{alg:ftes} running on a corrupted input state $EW\ket{0^{\cC}}$ is 
  \begin{align}
      \ket{\psi_{\mathrm{pm}}(a,x,z,E)} &=\ket{\Phi_{(c_X,c_Z)}}\ .
  \end{align}
  \end{lemma}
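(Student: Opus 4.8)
The plan is to follow the proof of the surface-code analogue Lemma~\ref{lem:pmstatesingle} almost verbatim, the only structural difference being that here the relevant stabilizers live on the cluster state~$W\ket{0^\cC}$ rather than on a half-encoded Bell state. The two operators~$S^X$ and~$S^Z$ from Lemma~\ref{lem:logicalopsSXSZ} are by construction products of the generators~$\{G_u\}_{u\in\cC}$, hence they stabilize~$W\ket{0^\cC}$, i.e., $S^XW\ket{0^\cC}=S^ZW\ket{0^\cC}=W\ket{0^\cC}$. The goal is to show that the two-qubit post-measurement state~$\ket{\psi_{\mathrm{pm}}(a,x,z,E)}$ on~$\{q_1,q_2\}$ is simultaneously stabilized by the signed operators~$(-1)^{c_X}X_{q_1}X_{q_2}$ and~$(-1)^{c_Z}Z_{q_1}Z_{q_2}$; by the definition~\eqref{eq:defbellstate} of the Bell basis this forces~$\ket{\psi_{\mathrm{pm}}}=\ket{\Phi_{(c_X,c_Z)}}$.

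First I would establish the $X$-eigenvalue. Starting from the defining expression~\eqref{eq:pmstateinalg} for~$\sqrt{p}\,\ket{\psi_{\mathrm{pm}}}$, I insert the stabilizer identity~$W\ket{0^\cC}=S^XW\ket{0^\cC}$ and commute~$S^X$ past the fixed Pauli error~$E$, producing the sign~$(-1)^{\syn(S^X,E)}$. Next I substitute the explicit form~\eqref{eq:sxdefinition}, $S^X=X_{q_1}X_{q_2}X(\cLXcl\cap\cB)Z(\cLXcl\cap\cA)$, and push each tensor factor through the measurement functionals. The factor~$Z(\cLXcl\cap\cA)$ acts on qubits of~$\cA$, measured in the computational basis, so $\bra{a}_{\cA}Z(\cLXcl\cap\cA)=(-1)^{\ztwoinner{a}{\cLXcl\cap\cA}}\bra{a}_{\cA}$. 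The factor~$X(\cLXcl\cap\cB)$ acts on qubits of~$\cLXcl\cap\cB\subset\cX$; using $H^\dagger X=ZH^\dagger$ gives $\bra{x}_{\cX}H(\cX)^\dagger X(\cLXcl\cap\cB)=(-1)^{\ztwoinner{x}{\cLXcl\cap\cB}}\bra{x}_{\cX}H(\cX)^\dagger$. Finally~$X_{q_1}X_{q_2}$ acts only on the unmeasured qubits and therefore survives as an operator acting on~$\ket{\psi_{\mathrm{pm}}}$. Collecting the three signs into~$c_X$ as in~\eqref{eq:cxdefinition} yields $\ket{\psi_{\mathrm{pm}}}=(-1)^{c_X}X_{q_1}X_{q_2}\ket{\psi_{\mathrm{pm}}}$. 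The $Z$-eigenvalue is obtained identically from~$S^Z$: using~\eqref{eq:szdefinition}, both~$Z(\cLZcl\cap\cA)$ (on~$\cA$, $Z$-measured) and~$Z(\cLZcl\cap\cB)$ (on~$\cLZcl\cap\cB\subset\cZ$, $Z$-measured) pass through the bras producing the signs~$\ztwoinner{a}{\cLZcl\cap\cA}$ and~$\ztwoinner{z}{\cLZcl\cap\cB}$, leaving~$Z_{q_1}Z_{q_2}$ on the two output qubits, so that $\ket{\psi_{\mathrm{pm}}}=(-1)^{c_Z}Z_{q_1}Z_{q_2}\ket{\psi_{\mathrm{pm}}}$ with~$c_Z$ as in~\eqref{eq:czdefinition}.

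The last step is purely algebraic: a normalized two-qubit state fixed by both~$(-1)^{c_X}X_{q_1}X_{q_2}$ and~$(-1)^{c_Z}Z_{q_1}Z_{q_2}$ is unique, and a direct check using~$\Phi_{(\alpha,\beta)}=(I\otimes Z^\alpha X^\beta)\Phi$ shows that~$\Phi_{(\alpha,\beta)}$ is stabilized by~$(-1)^\alpha X_{q_1}X_{q_2}$ and~$(-1)^\beta Z_{q_1}Z_{q_2}$; identifying~$\alpha=c_X$ and~$\beta=c_Z$ gives the claim.

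I expect the only genuine obstacle to be bookkeeping rather than anything conceptual: one must verify the basis-compatibility of every tensor factor of~$S^X$ and~$S^Z$ supplied by Lemma~\ref{lem:logicalopsSXSZ}. Concretely, I would check the inclusions~$\cLXcl\cap\cB\subset\cX$ and~$\cLZcl\cap\cB\subset\cZ$ (so that the single-qubit commutation identities invoked above are the correct ones), confirm that~$\cLXcl\cap\cA$ and~$\cLZcl\cap\cA$ consist of qubits measured in the $Z$-basis, and confirm that~$X_{q_1}X_{q_2}$ and~$Z_{q_1}Z_{q_2}$ are supported only on the two unmeasured qubits~$\{q_1,q_2\}$, so that they commute trivially with all measurement functionals and remain as operators on the post-measurement state. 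These facts are immediate from the definitions of~$\cX$,~$\cZ$,~$\cA$ and of the recovery sets, but they are precisely the places where the argument would fail if the measurement pattern and the supports of~$S^X,S^Z$ were not aligned.
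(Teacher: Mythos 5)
Your proposal is correct and follows essentially the same route as the paper's own proof: insert the stabilizer identities $S^XW\ket{0^\cC}=W\ket{0^\cC}$ and $S^ZW\ket{0^\cC}=W\ket{0^\cC}$, commute past the fixed error $E$ to pick up $(-1)^{\syn(S^X,E)}$ and $(-1)^{\syn(S^Z,E)}$, push the Pauli factors of~\eqref{eq:sxdefinition} and~\eqref{eq:szdefinition} through the measurement bras to collect the parity signs, and conclude that $\ket{\psi_{\mathrm{pm}}}$ is stabilized by $(-1)^{c_X}X_{q_1}X_{q_2}$ and $(-1)^{c_Z}Z_{q_1}Z_{q_2}$. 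Your closing bookkeeping checks (the inclusions $\cLXcl\cap\cB\subset\cX$, $\cLZcl\cap\cB\subset\cZ$, and the support of $X_{q_1}X_{q_2}$, $Z_{q_1}Z_{q_2}$ on the unmeasured qubits) are exactly the facts the paper uses implicitly, so nothing is missing.
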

\begin{proof}
The claim is an immediate consequence of the following two facts:
\begin{enumerate}[(i)]
\item
The operators~$S^X$ and $S^Z$
are defined as products of the stabilizer generators~$\{G_u\}_{u\in\cC}$ and thus stabilize the ideal cluster state~$W\ket{0^\cC}$.\label{it:propertyonesxszstabilize}
\item
  The operators $S^X$ and $S^Z$ are -- up to additional Pauli operators on the measured qubits -- proportional to the stabilizer generators~$\{X_{q_1}X_{q_2},Z_{q_1}Z_{q_2}\}$ of the Bell state~$\ket{\Phi_{(0,0)}}$, see~\eqref{eq:sxdefinition} and~\eqref{eq:szdefinition} of Lemma~\ref{lem:logicalopsSXSZ}.
  \end{enumerate}
Replacing $W\ket{0^{\cC}}$ in \eqref{eq:pmstateinalg} by $S^XW\ket{0^{\cC}}$ using property~\eqref{it:propertyonesxszstabilize}, we observe that the post-measurement state~$\psi_{\mathrm{pm}}=\psi_{\mathrm{pm}}(a,x,z,E)$ 
satisfies
  \begin{align}
\sqrt{p}    \ket{\psi_{\mathrm{pm}}}
    &= \left( \bra{a}_{\mathcal{A}} \otimes \bra{x}_{\mathcal{X}}H(\mathcal{X})^\dagger \otimes \bra{z}_{\mathcal{Z}}\otimes I_{\{q_1,q_2\}} \right) E S^X E\ket{0^{\cC}} \\
    &=  (-1)^{\syn(S^X, E)} \left( \bra{a}_{\mathcal{A}} \otimes   \bra{x}_{\mathcal{X}}H(\mathcal{X})^\dagger\otimes \bra{z}_{\mathcal{Z}} \otimes I_{\{q_1,q_2\}} \right) S^X E  \ket{0^{\cC}} \\
    &=  (-1)^{\ztwoinner{a}{\cLXcl \cap \cA} \oplus \ztwoinner{x}{\cLXcl \cap \cB} \oplus \syn(S^X, E)} \left( \bra{a}_{\mathcal{A}} \otimes \bra{x}_{\mathcal{X}}H(\mathcal{X})^\dagger \otimes \bra{z}_{\mathcal{Z}}\otimes X_{q_1}X_{q_2}\right) E  W\ket{0^{\cC}}\ ,
    \end{align}
      where $p := p(a,x,z|E)$ and where we used~\eqref{eq:sxdefinition} as well as the identities
      \begin{align}
      X(\cLXcl\cap\cB)H(\cX)\ket{x}_{\cX}&=(-1)^{\ztwoinner{x}{\cLXcl\cap\cB}}H(\cX)\ket{x}_{\cX}\\
      Z(\cLXcl\cap \cA)\ket{a}_{\cA}&=(-1)^{\ztwoinner{a}{\cLXcl\cap\cA}}\ket{a}_{\cA}\ .
      \end{align}
      We conclude that 
    \begin{align}
\ket{\psi_{\mathrm{pm}}}        &= (-1)^{\ztwoinner{a}{\cLXcl\cap\cA} \oplus \ztwoinner{x}{\cLXcl\cap\cB} \oplus \syn(S^X, E)} X_{q_1}X_{q_2} \ket{\psi_{\mathrm{pm}}} \\
    &= (-1)^{c_X} X_{q_1} X_{q_2} \ket{\psi_{\mathrm{pm}}}\ ,
      \end{align}
      by definition of~$c_X$. 

  In an analogous manner, we can check $\ket{\psi_{\mathrm{pm}}} = (-1)^{c_Z} Z_{q_1}Z_{q_2} \ket{\psi_{\mathrm{pm}}}$ by replacing $W\ket{0^{\cC}}$ in~\eqref{eq:pmstateinalg} by $S^Z W \ket{0^{\cC}}$. Thus $\psi_{\mathrm{pm}}$
      is stabilized by~$\{(-1)^{c_X}X_{q_1}X_{q_2},(-1)^{c_Z}Z_{q_1}Z_{q_2}\}$. This implies the claim.
\end{proof}

According to Lemma~\ref{lem:pmstateisbellbasis}, the post-measurement state is fully determined by the single-qubit measurement results~$(a,x,z)$ and the bits~$\syn(S^X, E)$ and~$\syn(S^Z, E)$. These bits in turn are
determined by the physical errors on qubits~$\{q_1,q_2\}$ as well as a certain restriction of the error as specified in the following statement:
  \begin{lemma} \label{lem:actualsynd}
  Let $E$ be a Pauli operator on $\mathcal{C}$ and let $S^X$ and $S^Z$ be the Pauli operators defined in Lemma~\ref{lem:logicalopsSXSZ}.
Let us decompose~$E$ into a product~$E = E^XE^Z$ of Pauli-$X$- and Pauli-$Z$-operators, respectively. Define  
  \begin{align} 
    E_{\mathrm{gl}} &= E^{X}|_{\Ecldec \cap \cA} E^{Z}|_{\cX}  \qquad \textrm{ and } \label{eq:defofEgl}\\
    E_{\mathrm{gl}}^{*} &= E^{X}|_{\Ecldecdual}\ \label{eq:defofEgldual}
  \end{align}
  where $F|_{\Omega} := \prod_{j \in \Omega}F_j$ for any Pauli operator $F=\prod_{j\in\cC}F_j$ on $\cC$ and $\Omega \subset \cC$.
  Then 
  \begin{align} 
    \syn(S^X, E) &= \ztwoinner{\{q_1,q_2\}}{\supp(E^Z)} \oplus \ztwoinner{\cLXcl}{\supp(E_{\mathrm{gl}})}\label{eq:synviaEgl}\\
    \syn(S^Z, E) &= \ztwoinner{\{q_1,q_2\}}{\supp(E^X)} \oplus \ztwoinner{\cLZcl}{\supp(E_{\mathrm{gl}}^{*})}\ .\label{eq:synviaEgldual}
  \end{align}
\end{lemma}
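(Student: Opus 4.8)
The plan is to reduce both identities to the standard symplectic formula for commutation of Pauli operators. Writing a Pauli operator as $P = X(A)Z(B)$ with $A=\supp(P^X)$ and $B=\supp(P^Z)$ its $X$- and $Z$-supports, two Paulis $P=X(A)Z(B)$ and $Q=X(C)Z(D)$ satisfy
\[
\syn(P,Q) = \ztwoinner{A}{D}\oplus\ztwoinner{B}{C}\ ,
\]
since precisely those qubits where one operator carries an $X$-component and the other a $Z$-component contribute a factor $-1$ to $PQ(QP)^{-1}$. Applied with $Q=E=E^XE^Z$, whose $X$- and $Z$-supports are $\supp(E^X)$ and $\supp(E^Z)$ (the global phase being irrelevant, as only supports enter, cf.~the remark around~\eqref{eq:xzdecomposition}), the lemma reduces to reading off the $X$- and $Z$-supports of $S^X$ and $S^Z$ from Lemma~\ref{lem:logicalopsSXSZ} and simplifying the resulting $\ztwoinner{\cdot}{\cdot}$ expressions.

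For the first identity I would start from the explicit form $S^X=X_{q_1}X_{q_2}X(\cLXcl\cap\cB)Z(\cLXcl\cap\cA)$ in~\eqref{eq:sxdefinition}, whose $X$-support is $\{q_1,q_2\}\cup(\cLXcl\cap\cB)$ and whose $Z$-support is $\cLXcl\cap\cA$. The symplectic formula then gives
\[
\syn(S^X,E) = \ztwoinner{\{q_1,q_2\}}{\supp(E^Z)}\oplus\ztwoinner{\cLXcl\cap\cB}{\supp(E^Z)}\oplus\ztwoinner{\cLXcl\cap\cA}{\supp(E^X)}\ .
\]
The first term already matches the claim, so it remains to identify the last two terms with $\ztwoinner{\cLXcl}{\supp(E_{\mathrm{gl}})}$. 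Using $E_{\mathrm{gl}}=E^X|_{\Ecldec\cap\cA}\,E^Z|_{\cX}$, its support is the disjoint union $(\supp(E^X)\cap(\Ecldec\cap\cA))\cup(\supp(E^Z)\cap\cX)$; combined with $\cLXcl=(\cLXcl\cap\cA)\cup(\cLXcl\cap\cB)$ and the identity $\cLXcl\cap\cB=\cLXcl\cap\cX$ noted after~\eqref{eq:setunionxm}, a bilinear expansion of $\ztwoinner{\cLXcl}{\supp(E_{\mathrm{gl}})}$ produces four cross terms. The two mixed terms vanish because $\cA\cap\cB=\emptyset$ and $\cX\subseteq\cB$, and in the two surviving terms the restrictions $\cap(\Ecldec\cap\cA)$ and $\cap\cX$ drop out since $\cLXcl\cap\cA\subseteq\Ecldec\cap\cA$ and $\cLXcl\cap\cB\subseteq\cX$. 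This reproduces exactly $\ztwoinner{\cLXcl\cap\cA}{\supp(E^X)}\oplus\ztwoinner{\cLXcl\cap\cB}{\supp(E^Z)}$ and hence~\eqref{eq:synviaEgl}.

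The second identity is analogous and simpler, since $S^Z=Z_{q_1}Z_{q_2}Z(\cLZcl\cap\cB)Z(\cLZcl\cap\cA)$ in~\eqref{eq:szdefinition} is purely of $Z$-type: its $X$-support is empty and its $Z$-support is $\{q_1,q_2\}\cup\cLZcl$, so $\syn(S^Z,E)=\ztwoinner{\{q_1,q_2\}}{\supp(E^X)}\oplus\ztwoinner{\cLZcl}{\supp(E^X)}$. Because $E_{\mathrm{gl}}^{*}=E^X|_{\Ecldecdual}$ has support $\supp(E^X)\cap\Ecldecdual$ and $\cLZcl\subseteq\Ecldecdual$, the restriction is invisible to the pairing with $\cLZcl$, yielding $\ztwoinner{\cLZcl}{\supp(E^X)}=\ztwoinner{\cLZcl}{\supp(E_{\mathrm{gl}}^{*})}$ and thus~\eqref{eq:synviaEgldual}.

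The main obstacle — and the only place requiring genuine care — is the bookkeeping in the first identity: one must track which sublattice ($\cA$ versus $\cB$, equivalently bulk $Z$-measured versus boundary $X$-measured qubits) each subset lives on, so that the mixed cross terms vanish and the support restrictions defining $E_{\mathrm{gl}}$ can be absorbed. Everything else is routine bilinear expansion of $\ztwoinner{\cdot}{\cdot}$.
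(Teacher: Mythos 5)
Your proof is correct and follows essentially the same route as the paper's: both read off the $X$- and $Z$-supports of $S^X$ and $S^Z$ from Lemma~\ref{lem:logicalopsSXSZ}, apply the symplectic commutation count, and use the inclusions $\cLXcl\cap\cA\subseteq\Ecldec\cap\cA$, $\cLXcl\cap\cB\subseteq\cX$ and $\cLZcl\subseteq\Ecldecdual$ to absorb the support restrictions defining $E_{\mathrm{gl}}$ and $E_{\mathrm{gl}}^{*}$. The only difference is the direction of the bookkeeping---the paper inserts the restrictions into the syndrome expression while you expand $\ztwoinner{\cLXcl}{\supp(E_{\mathrm{gl}})}$ bilinearly and match terms---which is immaterial.
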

\begin{proof}
  Let $E$ be an arbitrary Pauli operator on~$\cC$.
  Using expression~\eqref{eq:sxdefinition} for~$S^X$, we have 
  \begin{align}
    \syn(S^X,E)&=\syn\left(X_{q_1}X_{q_2}X\left(\cLXcl\cap\cB \right) Z\left( \cLXcl \cap \cA \right),E\right)\\
    &= \ztwoinner{\{q_1,q_2\}}{\supp(E^Z)} \oplus \ztwoinner{\cLXcl\cap\cB}{\supp(E^Z)} \oplus \ztwoinner{\cLXcl \cap \cA }{\supp(E^X)} \\
    &= \ztwoinner{\{q_1,q_2\}}{\supp(E^Z)} \oplus \ztwoinner{\cLXcl\cap\cB}{\supp(E^Z|_{\cX})} \oplus \ztwoinner{\cLXcl \cap \cA }{\supp(E^X|_{\Adec})} \\
              &= \ztwoinner{\{q_1,q_2\}}{\supp(E^Z)} \oplus \ztwoinner{\cLXcl}{\supp(E_{\mathrm{gl}})} 
  \end{align}
  by definition of $E_{\mathrm{gl}}$, see Eq.~\eqref{eq:defofEgl}.  
  A similar calculation yields
  \begin{align}
    \syn(S^Z, E)
    &= \ztwoinner{\{q_1,q_2\}}{\supp(E^X)} \oplus \ztwoinner{\cLZcl\cap\cB}{\supp(E^X|_{\cZ})} \oplus \ztwoinner{\cLZcl\cap\cA}{\supp\left(E^X|_{\Adecdual}\right)} \\
              &= \ztwoinner{\{q_1,q_2\}}{\supp(E^X)} \oplus \ztwoinner{\cLZcl}{\supp(E_{\mathrm{gl}}^{*})} \ ,
  \end{align}
  as claimed.
\end{proof}
Unfortunately, the operators~$E_{\mathrm{gl}}$ and $E_{\mathrm{gl}}^{*}$ defined in~\eqref{eq:defofEgl} and~\eqref{eq:defofEgldual} are not determined by the measurement results~$(a,x,z)$. In particular, $\syn(S^X,E)$ and $\syn(S^Z,E)$ cannot be computed from the measurement outcomes. Algorithm~\ref{alg:ftes} therefore generates estimates~$\hat{s}^X$ and~$\hat{s}^Z$ for these bits, see step~\eqref{it:syndromebitestimation}. 

To motivate the definition of these bits, and compute the final state after the entanglement generation protocol, we discuss how the measurement results~$(a,x,z)$ constrain the error~$E$, or more precisely
the operators~$E_{\mathrm{gl}}$ and $E_{\mathrm{gl}}^{*}$. Observe that by their definitions~\eqref{eq:defofEgl} and~\eqref{eq:defofEgldual}, these operators are fully determined by the supports~$\supp(E_{\mathrm{gl}})$ and~$\supp(E_{\mathrm{gl}}^*)$.
We obtain constraints on these sets expressed in terms of the strings~$s,s^*$ computed in step~\ref{state:setsandsdual} of  Algorithm~\ref{alg:ftes}. 

\begin{lemma} \label{lem:sisboundary}
The subsets $s\subset \Vcldecint$ and $s^{*}\subset \Vcldecdualint$ computed (from the measurement outcomes~$(a,x,z)$) in 
Step~\ref{state:setsandsdual}  satisfy
  \begin{align}
    s &= \partial_{\Tcldec}(\supp(E_{\mathrm{gl}})) \label{eq:stglclaim}\\
    s^{*} &= \partial_{\Tcldecdual}(\supp(E_{\mathrm{gl}}^{*}))\ .\label{eq:stglclaimdual}
  \end{align}
\end{lemma}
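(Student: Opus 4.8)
The plan is to mirror the proof of Lemma~\ref{lem:sisboundarysingle}, replacing the surface-code stabilizers with the cluster-state stabilizers $S^u$ supplied by Lemma~\ref{lem:productsofstabilizergeneratorsone}. For each internal vertex $u$ I would derive two independent expressions for the syndrome bit $\syn(S^u,E)$: one obtained by inserting $S^u$ into the post-measurement state~\eqref{eq:pmstateinalg}, which reproduces the measurement-outcome quantity $s_u$ computed in Step~\ref{state:setsandsdual}; and one obtained by commuting $E$ directly through the explicit Pauli form of $S^u$, which reproduces $(\partial_{\Tcldec}\supp(E_{\mathrm{gl}}))_u$. Equating the two proves the identity componentwise, and the dual identity follows by the same argument applied to $\Tcldecdual$.

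For~\eqref{eq:stglclaim}, fix $u\in\Vcldecint$. Since $S^u$ is a product of generators $G_v$ it stabilizes $W\ket{0^\cC}$, so I may replace $W\ket{0^\cC}$ by $S^uW\ket{0^\cC}$ in~\eqref{eq:pmstateinalg} and commute $S^u$ past $E$, picking up $(-1)^{\syn(S^u,E)}$. By~\eqref{eq:Suandincident}, $S^u$ acts as $Z$ on the $\cA$-qubits of $\incident_{\Tcldec}(u)$ and as $X$ on its $\cB$-qubits, which are exactly the $\cX$-qubits since $\Ecldec\cap\cB=\cX$. Acting on the measurement bras, the $Z$-factors contribute $(-1)^{\ztwoinner{a}{\incident_{\Tcldec}(u)\cap\cA}}$, while the $X$-factors pass through $H(\cX)^\dagger$ as $Z$ (using $HXH=Z$) and contribute $(-1)^{\ztwoinner{x}{\incident_{\Tcldec}(u)\cap\cX}}$. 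Since $u$ is internal, all its incident edges lie in $\Ecldec$, which excludes $q_1,q_2$; hence $S^u$ leaves the $I_{\{q_1,q_2\}}$ factor untouched and the eigenvalue argument yields $\syn(S^u,E)=\ztwoinner{a}{\incident_{\Tcldec}(u)\cap\cA}\oplus\ztwoinner{x}{\incident_{\Tcldec}(u)\cap\cX}$. Expanding $s_u=\ztwoinner{(a\cap\Ecldec)\oplus x}{\incident_{\Tcldec}(u)}$ and using $\incident_{\Tcldec}(u)\subseteq\Ecldec$ shows this equals $s_u$ exactly.

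Next I would compute $\syn(S^u,E)$ directly from~\eqref{eq:Suandincident}. Writing $E=E^XE^Z$, the $Z$-part of $S^u$ anticommutes only with $E^X$ and the $X$-part only with $E^Z$, so $\syn(S^u,E)=\ztwoinner{\incident_{\Tcldec}(u)\cap\cA}{\supp(E^X)}\oplus\ztwoinner{\incident_{\Tcldec}(u)\cap\cX}{\supp(E^Z)}$. By~\eqref{eq:defofEgl} the support of $E_{\mathrm{gl}}$ is $(\supp(E^X)\cap(\Ecldec\cap\cA))\cup(\supp(E^Z)\cap\cX)$, so together with $\incident_{\Tcldec}(u)\subseteq\Ecldec$ this is precisely $(\partial_{\Tcldec}\supp(E_{\mathrm{gl}}))_u$. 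Equating the two expressions for $\syn(S^u,E)$ gives $s_u=(\partial_{\Tcldec}\supp(E_{\mathrm{gl}}))_u$ for all $u\in\Vcldecint$, which is~\eqref{eq:stglclaim}.

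The dual identity~\eqref{eq:stglclaimdual} follows from the same two-sided computation, now using that by~\eqref{eq:Sudualandincident} the dual stabilizer $S^u$ for $u\in\Vcldecdualint$ is a pure $Z$-type operator supported on $\incident_{\Tcldecdual}(u)$. Only $E^X$ then enters the syndrome, matching the definition $E_{\mathrm{gl}}^{*}=E^X|_{\Ecldecdual}$ in~\eqref{eq:defofEgldual}, and since all incident qubits (in $\cA$ and $\cZ$) are $Z$-measured, no Hadamard conjugation is needed. I expect the main difficulty to be purely bookkeeping: keeping the restrictions $E^X|_{\Ecldec\cap\cA}$ and $E^Z|_{\cX}$ aligned with the $Z/X$ split of $S^u$ across the $\cA$/$\cB$ partition, and confirming that these stabilizers never touch $\{q_1,q_2\}$ so that the post-measurement identity holds verbatim. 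Both points are guaranteed by Lemma~\ref{lem:productsofstabilizergeneratorsone} together with the definitions~\eqref{eq:tcldecinterior} and~\eqref{eq:tcldecdualinterior} of the internal vertices.
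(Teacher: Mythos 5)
Your proposal is correct and follows essentially the same route as the paper's proof: both use the syndrome bit $\syn(S^u,E)$ of the cluster-state stabilizer $S^u$ (Lemma~\ref{lem:productsofstabilizergeneratorsone}) as the bridge, first equating it with the measured quantity $s_u$ via the stabilizer insertion/eigenvalue argument on the post-measurement state~\eqref{eq:pmstateinalg}, and then equating it with $(\partial_{\Tcldec}\supp(E_{\mathrm{gl}}))_u$ via the explicit $Z/X$ Pauli form of $S^u$ across the $\cA/\cB$ partition, with the dual case handled identically using the pure $Z$-type dual stabilizers.
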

\begin{proof}
  Let $s = (s_u)\in \{0,1\}^{\Vcldecint}$ be the string computed in step~\ref{state:setsandsdual}, i.e., 
  \begin{align}
  s=\partial_{\Tcldec}\left( \adec \oplus x \right)\ ,
  \end{align}
  and let $u \in \Vcldecint$. Linearity of the map~$\partial_{\Tcldec}$ implies that
  \begin{align}
    s_u &= \left( \partial_{\Tcldec} \left( a \right) \right)_u \oplus \left(  \partial_{\Tcldec} \left( x \right) \right)_u\\
        &=\ztwoinner{a}{\incident_{\Tcldec}(u)} \oplus \ztwoinner{x}{\incident_{\Tcldec}(u) } \ .\label{eq:sudefinitionax}
  \end{align}
  Since
  $\left( \partial_{\Tcldec} \left( \supp(E_{\mathrm{gl}}) \right) \right)_u = \ztwoinner{\incident_{\Tcldec}(u)}{\supp(E_{\mathrm{gl}})}$, we need to show that
  \begin{align}
    \ztwoinner{a}{\incident_{\Tcldec}(u)} \oplus \ztwoinner{x}{\incident_{\Tcldec}(u) } =
    \ztwoinner{\incident_{\Tcldec}(u)}{\supp(E_{\mathrm{gl}})} \ .\label{eq:suandtu}
  \end{align}
  We first show that
  \begin{align}
    \ztwoinner{a}{\incident_{\Tcldec}(u)} \oplus \ztwoinner{x}{\incident_{\Tcldec}(u) } =
    \syn(S^u, E) \ .\label{eq:suandSu}
  \end{align}
   Note that $S^u$ stabilizes the state~$W\ket{0^\cC}$ since it is a product of  the stabilizer generators~$\{G_v\}_v$ of~$W\ket{0^{\cC}}$.
  Replacing $W\ket{0^{\cC}}$ in~\eqref{eq:pmstateinalg} by $S^uW\ket{0^{\cC}}$ yields 
  \begin{align}
    \ket{\psi_{\mathrm{pm}}}
    &= \frac{1}{\sqrt{p}} \left( \bra{a}_{\mathcal{A}} \otimes \bra{x}_{\mathcal{X}}H(\mathcal{X})^\dagger \otimes \bra{z}_{\mathcal{Z}} \otimes I_{\{q_1,q_2\}} \right) E S^uW
    \ket{0^{\cC}}\\
    &= \frac{1}{\sqrt{p}} (-1)^{\syn(S^u, E)} \left( \bra{a}_{\mathcal{A}} \otimes \bra{x}_{\mathcal{X}}H(\mathcal{X})^\dagger \otimes \bra{z}_{\mathcal{Z}} \otimes I_{\{q_1,q_2\}} \right) S^u E    W\ket{0^{\cC}}\\
    &=  (-1)^{\syn(S^u, E) \oplus \ztwoinner{a}{\incident_{\Tcldec}(u)} \oplus \ztwoinner{x}{\incident_{\Tcldec}(u) }} \ket{\psi_{\mathrm{pm}}} \ .\label{eq:psipmeigenval}
  \end{align}
Here the third equality follows from the identity
  \begin{align}
  S^u \left(\ket{a}_\cA(H(\cX)\ket{x}\right)&=
  (-1)^{\ztwoinner{a}{\incident_{\Tcldec}(u)} \oplus \ztwoinner{x}{\incident_{\Tcldec}(u) }} \left(\ket{a}_\cA(H(\cX)\ket{x}\right)\ 
  \end{align}
  which is a consequence of~\eqref{eq:Suandincident} and the fact that~$a\subset \cA$ and $x\subset \cX\subset\cB$ are disjoint.
  Identity~\eqref{eq:psipmeigenval} together with~\eqref{eq:sudefinitionax} establishes~\eqref{eq:suandSu}.

  Next, we show that
  \begin{align}
    \ztwoinner{\incident_{\Tcldec}(u)}{\supp(E_{\mathrm{gl}})}
    = \syn(S^u, E) \ .\label{eq:tuandSu}
  \end{align}
  Observe that according to~\eqref{eq:Suandincident}, we have
  \begin{align}
    \syn(S^u, E) &=
                   \syn\left(\left(\prod_{v \in \incident_{\Tcldec}(u) \cap \cA} Z_v\right)\left(\prod_{v \in \incident_{\Tcldec}(u) \cap \cB} X_v\right),E\right)\\
                 &=\ztwoinner{\incident_{\Tcldec}(u) \cap \mathcal{A}}{\supp(E^X)} \oplus \ztwoinner{\incident_{\Tcldec}(u) \cap \mathcal{B}}{\supp(E^Z)} 
  \end{align}
  for any $u\in \Vcldecint$.
  From this expression, the definition~\eqref{eq:defofEgl}
  of~$E_{\mathrm{gl}}$ and the fact that~$\cA$ and $\cB$ are disjoint, the identity~\eqref{eq:tuandSu} follows.
   Together~\eqref{eq:suandSu}, the claim~\eqref{eq:suandtu} follows. 
  
  The claim for $s^{*}$ can be proved in an analogous manner using \eqref{eq:Sudualandincident}.  
\end{proof}
We now argue that the strings~$m$ and $m^*$ computed
in Step~\ref{state:setm}
of Algorithm~\ref{alg:ftes} can be considered to be proxys for the sets~$\supp(E_{\mathrm{gl}})$ and $\supp(E_{\mathrm{gl}}^*)$. More precisely, we may define
a Pauli error
\begin{align}
\widehat{E}:=\widehat{E}_{\mathrm{gl}}\widehat{E}_{\mathrm{gl}}^*\qquad\textrm{ with }\qquad \begin{matrix}
    \widehat{E}_{\mathrm{gl}} &:=& X(m\cap \cA) Z(m \cap \cB) \\ 
    \widehat{E}_{\mathrm{gl}}^{*} &:=& Z(m^{*})\ .
    \end{matrix}
\end{align}
Then we have $m=\supp(\widehat{E}_{\mathrm{gl}})$ and $m^*=\supp(\widehat{E}_{\mathrm{gl}}^*)$   by definition of these operators, and it  follows immediately from the
   definition of~$m$ and~$m^*$
  that~$(\widehat{E}_{\mathrm{gl}},\widehat{E}_{\mathrm{gl}}^{*})$
  satisfy constraints analogous to   the constraints~\eqref{eq:stglclaim},~\eqref{eq:stglclaimdual}   obeyed by $(E_{\mathrm{gl}},E_{\mathrm{gl}}^*)$, i.e.,  
  \begin{align}
    s &= \partial_{\Tcldec}(\supp(\widehat{E}_{\mathrm{gl}})) \\
    s^{*} &= \partial_{\Tcldecdual}(\supp(\widehat{E}_{\mathrm{gl}}^{*}))\ .
  \end{align}
  Thus~$\widehat{E}= \widehat{E}_{\mathrm{gl}} \widehat{E}_{\mathrm{gl}}^*$ is an error consistent with the observed syndrome~$(s,s^*)$ with the property that
  $\widehat{E}_{\mathrm{gl}}$ and $\widehat{E}_{\mathrm{gl}}^*$
  each are of minimum weight relative to~$\Vcldecint$ and $\Vcldecdualint$, respectively.

Because the error~$\widehat{E}$ can be computed from the measurement outcomes, this motivates using the pair~$\left(
\syn(S^X,\widehat{E}),
\syn(S^Z,\widehat{E})
\right)$ as an estimate for 
$\left(
\syn(S^X,E),
\syn(S^Z,E)
\right)$.  Indeed, this is the reasoning underlying the definition
\begin{align}
  \begin{matrix}
  \hat{s}^X&:=&\ztwoinner{\cLXcl}{m}\\
  \hat{s}^Z&:=&\ztwoinner{\cLZcl}{m^{*}}
  \end{matrix}\label{eq:hatsxszdefclaim}
\end{align}
in  Step~\ref{state:sydnromebitsxsz} of the algorithm: It is easy to check that~$(\hat{s}^X,\hat{s}^Z)$ defined in this way satisfy
\begin{align}
  \hat{s}^X&=\syn(S^X, \widehat{E}_{\mathrm{gl}})\\
  \hat{s}^Z&=\syn(S^Z, \widehat{E}_{\mathrm{gl}}^*)\ 
\end{align}
because of expressions~\eqref{eq:synviaEgl} and since $\widehat{E}_{\mathrm{gl}}$ and $\widehat{E}_{\mathrm{gl}}^{*}$ have no support on $\{q_1,q_2\}$.

Having motivated the algorithm, let us next give an expression for the resulting final state when the algorithm is run on a corrupted cluster state.
\begin{theorem} \label{thm:successcondition}
Let $E$ be a Pauli error on~$\cC$. Suppose we run Algorithm~\ref{alg:ftes}  on the input state~$EW\ket{0^\cC}$.
  Then the output is the Bell state~$\Phi_{(\alpha,\beta)}$ where
    \begin{align}
    \alpha &=\ztwoinner{\{q_1,q_2\}}{\supp(E^Z)} \oplus \ztwoinner{\supp(E_{\mathrm{gl}}) \oplus \mmatch_{\Tcldec}\left(\partial_{\Tcldec}(\supp(E_{\mathrm{gl}}))\right)}{\cLXcl}\\
    \beta &=\ztwoinner{\{q_1,q_2\}}{\supp(E^X)} \oplus \ztwoinner{\supp(E_{\mathrm{gl}}^{*}) \oplus\mmatch_{\Tcldecdual}\left(\partial_{\Tcldecdual}(\supp(E_{\mathrm{gl}}^{*})\right)}{\cLZcl}\ . 
  \end{align}
\end{theorem}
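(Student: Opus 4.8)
The plan is to proceed exactly as in the proof of the surface-code analogue, Theorem~\ref{thm:successconditionsurfacecode}, chaining together the three structural lemmas already established for the cluster-state protocol: Lemma~\ref{lem:pmstateisbellbasis} (which identifies the post-measurement state as a Bell basis state), Lemma~\ref{lem:actualsynd} (which expresses the two syndrome bits $\syn(S^X,E)$ and $\syn(S^Z,E)$ through the physical errors on $\{q_1,q_2\}$ together with the restricted errors $E_{\mathrm{gl}}, E_{\mathrm{gl}}^{*}$), and Lemma~\ref{lem:sisboundary} (which identifies the matched syndromes $s,s^{*}$ with the boundaries of $\supp(E_{\mathrm{gl}})$ and $\supp(E_{\mathrm{gl}}^{*})$). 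The entire argument is a $\mathbb{Z}_2$-bookkeeping computation: no new combinatorial or operator-theoretic input is required beyond bilinearity of $\ztwoinner{\cdot}{\cdot}$.

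First I would invoke Lemma~\ref{lem:pmstateisbellbasis} to conclude that the post-measurement state after Step~\ref{it:stepmeasurement} is the Bell state $\Phi_{(c_X,c_Z)}$ with $c_X,c_Z$ as in~\eqref{eq:cxdefinition}--\eqref{eq:czdefinition}. Since Step~\ref{state:estimatecorrection} applies $Z^{\hat{c}_X}X^{\hat{c}_Z}$ to $q_1$, the output is $\Phi_{(c_X\oplus\hat{c}_X,\,c_Z\oplus\hat{c}_Z)}$, so I would set $\alpha:=c_X\oplus\hat{c}_X$ and $\beta:=c_Z\oplus\hat{c}_Z$. The key simplification is that, comparing the definitions of $c_X$ and $\hat{c}_X$ in Step~\ref{state:estimateofc}, the measurement-dependent terms $\ztwoinner{a}{\cLXcl\cap\cA}$ and $\ztwoinner{x}{\cLXcl\cap\cB}$ appear in both and hence cancel modulo $2$, leaving $\alpha=\syn(S^X,E)\oplus\hat{s}^X$, and analogously $\beta=\syn(S^Z,E)\oplus\hat{s}^Z$.

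Next I would substitute the two ingredients that express these bits through the error. On the one hand, Lemma~\ref{lem:actualsynd} gives $\syn(S^X,E)=\ztwoinner{\{q_1,q_2\}}{\supp(E^Z)}\oplus\ztwoinner{\cLXcl}{\supp(E_{\mathrm{gl}})}$; on the other, the definition~\eqref{eq:hatsxszdefclaim} used in Step~\ref{state:sydnromebitsxsz} gives $\hat{s}^X=\ztwoinner{m}{\cLXcl}$. Adding these and using bilinearity to combine the two inner products against $\cLXcl$ yields $\alpha=\ztwoinner{\{q_1,q_2\}}{\supp(E^Z)}\oplus\ztwoinner{\supp(E_{\mathrm{gl}})\oplus m}{\cLXcl}$. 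Finally I would replace $m$ by its definition $m=\mmatch_{\Tcldec}(s)$ from Step~\ref{state:setm} and invoke Lemma~\ref{lem:sisboundary} to write $s=\partial_{\Tcldec}(\supp(E_{\mathrm{gl}}))$, giving precisely the claimed expression for $\alpha$. The formula for $\beta$ follows by the identical chain applied to the dual objects $S^Z$, $\cLZcl$, $m^{*}$, $\Tcldecdual$, and $E_{\mathrm{gl}}^{*}$.

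I do not expect a genuine obstacle here: the content lies entirely in the supporting lemmas, and what remains is to verify that the four XOR cancellations line up correctly and that the single application of bilinearity merging $\ztwoinner{\cLXcl}{\supp(E_{\mathrm{gl}})}$ with $\ztwoinner{m}{\cLXcl}$ is legitimate (it is, since $\ztwoinner{\cdot}{\cdot}$ is symmetric and $\mathbb{Z}_2$-bilinear). The only point demanding mild care is ensuring that the errors $\widehat{E}_{\mathrm{gl}}$ and $\widehat{E}_{\mathrm{gl}}^{*}$ used to motivate $\hat{s}^X,\hat{s}^Z$ have no support on $\{q_1,q_2\}$, so that the $\{q_1,q_2\}$-contributions come solely from the true error $E$ via Lemma~\ref{lem:actualsynd} and are not double-counted; this is guaranteed because $m,m^{*}$ are boundaries/matchings in $\Tcldec,\Tcldecdual$ whose edge sets exclude $q_1,q_2$ by construction.
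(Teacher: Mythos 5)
Your proposal is correct and follows essentially the same route as the paper's proof: both chain Lemma~\ref{lem:pmstateisbellbasis}, the cancellation of the measurement-dependent terms between $c_X,\hat c_X$ (and $c_Z,\hat c_Z$), Lemma~\ref{lem:actualsynd} together with the definition~\eqref{eq:hatsxszdefclaim} of $(\hat s^X,\hat s^Z)$, and finally Lemma~\ref{lem:sisboundary} with Step~\ref{state:setm}. The only difference is cosmetic ordering — the paper substitutes $m=\mmatch_{\Tcldec}(\partial_{\Tcldec}(\supp(E_{\mathrm{gl}})))$ at the outset while you do it at the end — and your closing remark about $\widehat{E}_{\mathrm{gl}},\widehat{E}_{\mathrm{gl}}^{*}$ having no support on $\{q_1,q_2\}$ matches the paper's justification as well.
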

\begin{proof}
  Because $m,m^*$ are equal to
  \begin{align}
  m &= \mmatch_{\Tcldec}(s)=\mmatch_{\Tcldec}\left(\partial_{\Tcldec}(\supp(E_{\mathrm{gl}}))\right)\\
m^* &= \mmatch_{\Tcldecdual}(s^{*})=\mmatch_{\Tcldecdual}\left(\partial_{\Tcldecdual}(\supp(E_{\mathrm{gl}}^{*})\right)\ ,
  \end{align}
  see Step~\ref{state:setm} of the algorithm and Lemma~\ref{lem:sisboundary}, the claim amounts to the statement that the output state is the Bell state~$\Phi_{(\alpha,\beta)}$ where
  \begin{align}
    \alpha &=\ztwoinner{\{q_1,q_2\}}{\supp(E^Z)} \oplus \ztwoinner{\supp(E_{\mathrm{gl}}) \oplus m}{\cLXcl}\\
    \beta &=\ztwoinner{\{q_1,q_2\}}{\supp(E^X)} \oplus \ztwoinner{\supp(E_{\mathrm{gl}}^{*}) \oplus m^{*}}{\cLZcl}\ .
  \end{align}

  According to Lemma~\ref{lem:pmstateisbellbasis}, the
  post-measurement state
  after Step~\ref{it:stepmeasurement} 
  of Algorithm~\ref{alg:ftes} is the Bell state~$\ket{\Phi_{(c_X,c_Z)}}$
  with
  \begin{align}
c_X&= \ztwoinner{a}{\cLXcl\cap\cA} \oplus \ztwoinner{x}{\cLXcl\cap\cB}\oplus \syn(S^X, E)\\
c_Z&= \ztwoinner{a}{\cLZcl\cap\cA} \oplus \ztwoinner{z}{\cLZcl\cap\cB}\oplus  \syn(S^Z, E)\ .
    \end{align}
    Since the algorithm applies
    the correction operation $Z^{\hat{c}_X}X^{\hat{c}_Z}$
    in Step~\ref{state:estimatecorrection}, the output of the algorithm is  the Bell state~$\ket{\Psi_{(c_X\oplus\hat{c}_X,c_Z\oplus\hat{c}_Z)}}$. It follows with the definition of $(\hat{c}_X,\hat{c}_Z)$ in Step~\ref{state:estimateofc},
    i.e.,
    \begin{align}
  \hat{c}_X &= \ztwoinner{a}{\cLXcl\cap\cA} \oplus \ztwoinner{x}{\cLXcl\cap\cB} \oplus \hat{s}^X\\
  \hat{c}_Z&=\ztwoinner{a}{\cLZcl\cap\cA} \oplus \ztwoinner{z}{\cLZcl\cap\cB} \oplus \hat{s}^Z
    \end{align}
    that
    the output is the Bell state~$\ket{\Phi_{(\alpha,\beta)}}$ where
    \begin{align}
      \alpha &= \hat{s}^X \oplus \syn(S^X,E)\\
      \beta &= \hat{s}^Z \oplus \syn(S^Z,E)\ .
    \end{align}
    The claim now follows by combining the expressions~\eqref{eq:synviaEgl}, \eqref{eq:synviaEgldual} for $\left(\syn(S^X,E),\syn(S^Z,E)\right)$    from Lemma~\ref{lem:actualsynd}      with the
 Definitions~\eqref{eq:hatsxszdefclaim} of $(\hat{s}^X,\hat{s}^Z)$ used in  Step~\eqref{state:sydnromebitsxsz} of the algorithm. 
\end{proof}

We note that all theorems and lemmas in this section still hold if we replace the function $\mmatch$ on $\Tcldec$ and $\Tcldecdual$ with any function on $\Vcldecint$ and $\Vcldecdualint$ with binary output.
The fact that $\mmatch$ produces a matching is -- up to this point -- only used to motivate  the definition of $\hat{s}^X$ and $\hat{s}^Z$ in Algorithm~\ref{alg:ftes}. 
We will use this property of $\mmatch$ in Section~\ref{sec:resiliencenoiselocalstochastic} to show that our protocol is robust against local stochastic noise.

\section{Entanglement generation with local stochastic noise\label{sec:resiliencenoiselocalstochastic}}
In this section, we show that the entanglement generation protocol given by Algorithm~\ref{alg:ftes} applied to the cluster state~$W\ket{0^{\cC}}$ is robust against local stochastic noise $E$ of any noise strength $p$ below some threshold~$p_0$. In more detail, we consider a  cluster state lattice~$\cC$  of the form $\cC=\cC[d \times d \times R]$ (with $d=O(\log R)$) and show that 
even when the ideal cluster state~$W\ket{0^{\cC}}$  is replaced by~$EW\ket{0^{\cC}}$, the resulting state has fidelity at least $1-O(p)$ with the Bell state~$\Phi$ independently of~$R$ (see Corollary~\ref{cor:achievability}). In particular, the range~$R$ of the entanglement generation can be exponential in the surface code distance~$d$ while still generating constant-fidelity entanglement.

We proceed as follows: In Section~\ref{sec:boundrescl}, we construct upper bounds on the resilience functions~$\res_{\cLXcl}(p)$ and $\res_{\cLZcl}(p)$ of the recovery sets $\cLXcl$ and $\cLZcl$. In Section~\ref{sec:boundprobsucccl}, we then establish a lower bound on the success probability of the entanglement generation  protocol using the results of Section~\ref{sec:boundrescl}. 

\subsection{Bounds on the resilience function for $\Tcldec$ and $\Tcldecdual$}\label{sec:boundrescl}
Here we establish upper bounds on the resilience functions of the recovery sets~$\cLXcl$ and $\cLZcl$
for the decoding graphs~$\Tcldec$ and $\Tcldecdual$ for small values of~$p$. These upper bounds immediately give lower bounds on the success probability of the entanglement generation protocol, see Section~\ref{sec:boundprobsucccl}.

\begin{lemma}\label{lem:rescLXclbound}
  Consider the decoding graph~$\Tcldec=(\Vcldec, \Ecldec)$ with set of internal vertices~$\Vcldecint$ defined on the lattice $\Ctilde[d \times d \times R]$. Let  $\cLXcl\subset \Ecldec$ be the recovery set introduced in Section~\ref{sec:decodinggraphs}. Then the resilience function~$\res_{\cLXcl}$ satisfies
  \begin{align}\label{eq:rescLXclbound}
    \res_{\cLXcl}(p) \leq \left(2402 + 100d(R-1)\left(10 \sqrt{p}\right)^{d-2}  \right) \cdot p \ 
  \end{align}
  for any $p\in \interval{0}{\frac{1}{400}}$.
\end{lemma}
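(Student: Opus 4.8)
The plan is to follow the template of Lemma~\ref{lem:upperboundresiliencetdec}, now on the glued decoding graph $\Tcldec = \Teven \cup (\Tdec \times \{1,R\})$, and to separate the contributions that produce the two terms of~\eqref{eq:rescLXclbound}. First I would observe that, as in the two-dimensional case, simple closed loops contribute nothing: every edge of $\cLXcl$ is a dangling edge incident on a degree-$1$ vertex of the left rough boundary, and a degree-$1$ vertex cannot carry the two loop-incident edges that membership in a simple closed loop $L \in \Zcirc(\Tcldec)$ would require, so $\ztwoinner{L}{\cLXcl}=0$. Hence only simple paths $P \in \Zpathext(\Tcldec)$ survive in Definition~\ref{def:res}. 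Since a dangling edge can occur in a simple path only as its first or last edge, a path with $\ztwoinner{P}{\cLXcl}=1$ uses exactly one recovery edge: it begins at a left-membrane vertex, crosses that single edge, and ends at a distinct external vertex contributing no second recovery edge.

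For the quantitative estimates I would reuse $\binom{\ell}{\lceil \ell/2\rceil} \le 2^\ell$ and $p^{\lceil\ell/2\rceil}\le p^{\ell/2}$, together with the structural fact that every vertex of $\Tcldec$ has at most six neighbours; consequently a simple path of length $\ell$ with a prescribed initial edge admits at most $5^{\ell-1}$ extensions, and contributes at most $\tfrac15(10\sqrt p)^\ell$. The geometric series in $\ell$ converges exactly because $p\le\tfrac1{400}$ gives $10\sqrt p\le\tfrac12$; this is the three-dimensional analogue of the base $6\sqrt p$ appearing in Lemma~\ref{lem:upperboundresiliencetdec} (where the degree was four), and it fixes the admissible range of $p$.

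I would then split the contributing paths by which part of $\cLXcl=(\cLXcl\cap\cA)\cup(\cLXcl\cap\cB)$ supplies their unique recovery edge. Paths using $\cLXcl\cap\cB=\cL_X\times\{1,R\}$ start in one of the two surface-code layers; there are only $O(d)$ such starting edges, and summing the geometric series in $\ell$ over their heights telescopes -- precisely as in Lemma~\ref{lem:upperboundresiliencetdec} -- to a constant multiple of $p$, with the two layers and the short excursions into $\Teven$ permitted by the gluing absorbed into the (deliberately loose) constant $2402$. Paths using $\cLXcl\cap\cA$ start at a bulk membrane vertex $(0,u_2,u_3)$, and there are $(d-1)(R-1)/2$ such edges. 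The key claim is that every such path has length at least $d$: after crossing the membrane it sits at an interior vertex, and the nearest admissible external endpoint is $\Omega(d)$ steps away -- this is just the surface-code distance reappearing. Granting this, each starting edge contributes at most $\sum_{\ell\ge d}\tfrac15(10\sqrt p)^\ell\le\tfrac25(10\sqrt p)^d$, and using the repackaging $(10\sqrt p)^d=100\,p\,(10\sqrt p)^{d-2}$ the total over all $(d-1)(R-1)/2$ edges is at most $100d(R-1)(10\sqrt p)^{d-2}p$, the second term of~\eqref{eq:rescLXclbound}.

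The crux -- and where I expect the real effort -- is the minimal-length claim for the bulk family, which is what enforces the factor $(10\sqrt p)^{d-2}$ and thereby permits $R$ to grow exponentially in $d$. One must rule out short bulk paths that cross the membrane and return to a nearby non-recovery external vertex, since $\Theta(R)$ such bounded-length paths would give an unsuppressed $O(Rp)$ contribution and destroy the result. Here the geometry of $\Teven$ is decisive: the bottom row $u_2=0$ carries no interior vertices in the bulk (the sites $(2,0,u_3)$ are absent and $(0,0,u_3)$ is isolated), and the neighbourhoods of $q_1,q_2$ must be treated separately, so that a path which has already spent its one recovery edge is forced to traverse $\sim d$ sites before it can legitimately terminate. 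Establishing this distance bound rigorously, across the interfaces between $\Teven$ and the two copies of $\Tdec$, is the heart of the lemma; the dual estimate $\res_{\cLZcl}$ would then follow by the analogous argument on $\Tcldecdual$.
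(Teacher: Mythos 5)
Your proposal takes essentially the same approach as the paper's proof: discard closed loops, classify the contributing paths by their starting point on the left boundary, bound the surface-code-layer contribution by a constant times $p$ exactly as in Lemma~\ref{lem:upperboundresiliencetdec}, and bound the bulk contribution by the $O(dR)$ bulk membrane edges times the geometric tail $\sum_{\ell\geq d}(10\sqrt{p})^{\ell}$ --- the same split (the paper organizes it as $A_1+\sum_{s=2}^{d-1}A_s\leq 2402p$ plus $A_d\leq d\,\frac{R-1}{2}\,\frac{q^d}{1-q}$ with $q=10\sqrt{p}$), the same counting ($5^{\ell}$ paths from the degree-$6$ bound, $\binom{\ell}{\lceil \ell/2\rceil}\leq 2^{\ell}$, $p^{\lceil\ell/2\rceil}\leq p^{\ell/2}$), and the same final arithmetic. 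The minimal-length claim you single out as the unproven crux --- that a bulk-starting path meeting the recovery set must have length at least $d$ --- is handled by the paper at essentially the level of your sketch, namely via the assertions ``it is clear from the definition of $\Tcldec$'' and ``straightforward counting yields $\abs{\Vextleft{s}}\leq 2s$'' for $2\leq s\leq d-1$, so your plan is effectively as complete as the published argument.
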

\begin{proof}
  Let $L \in \Zcirc(\Tcldec)$ be a simple closed loop. Then it is clear from the definition of~$\cLXcl$ that~$L$ and $\cLXcl$ have no overlap, and in particular, $\ztwoinner{L}{\cLXcl}=0$. This implies that simple closed loops do  not contribute to the resilience function~$\res_{\cLXcl}(p)$ (see Definition~\eqref{eq:defofres}).
  
  To describe the set~$\Zpathext(\Tcldec)$ of simple paths connecting external vertices through internal vertices,
  we partition the set~$\Vcldecext = \Vcldecextleft \cup \Vcldecextright$ of external vertices into the set~$\Vcldecextleft$ of vertices forming the ``left'' boundary of the decoding graph~$\Tcldec$
  and the set~$\Vcldecextright$ of remaining external vertices forming the ``right'', ``front'', and ``back'' boundaries.
  We label the external vertices in~$\Vcldecext$ as 
  \begin{align}
    \Vcldecextleft &= \left\{u_{j}^{k} \mid 1 \leq j \leq d-1 , k \in \left\{0, \frac{R+1}{2}\right\}\right\} \cup \left\{u_j^{k} \mid 0 \leq j \leq d-1, 1 \leq k \leq \frac{R-1}{2} \right\} \\
    \Vcldecextright &= \left\{v_{i,j}^{k} \mid 1 \leq i \leq d-1, 0 \leq j \leq i, k \in \left\{0, \frac{R+1}{2}\right\} \right\} \cup \left\{v_j^{k} \mid 0 \leq j \leq d-1, 1 \leq k \leq \frac{R-1}{2} \right\} \ ,
  \end{align}
  see Fig.~\ref{fig:internalexternalverticescl}. 
  Written out, the external vertices are
  \begin{align}
    u_j^{k} &= \begin{cases*}
     (0, 2j, 1) & \textrm{ for $k = 0$, $1 \leq j \leq d-1$} \\
     (0, 2j, 2k) & \textrm{ for $1 \leq k \leq \frac{R-1}{2}$, $0 \leq j \leq d-1$} \\
     (0, 2j, R) & \textrm{ for $k = \frac{R+1}{2}$, $1 \leq j \leq d-1$}
    \end{cases*}\label{eq:tcldecleftlabel} \\
    v_{i,j}^{k} &= \begin{cases*}
      (2i, 2j, 1) & \textrm{ for  $k = 0$, $1 \leq i \leq d-1$, $0 \leq j \leq i$}  \\
      (2i, 2j, R) & \textrm{ for  $k = \frac{R+1}{2}$, $1 \leq i \leq d-1$, $0 \leq j \leq i$}  
    \end{cases*}\\
    v_j^{k} &= (2d, 2j, 2k) \qquad \textrm{ for  $1 \leq k \leq \frac{R-1}{2}$, $0 \leq j \leq d-1$} \ .\label{eq:tcldecrightlabel}
  \end{align}

  It is clear from the definition of~$\Tcldec$ that a simple path~$P \in \Zpathext(\Tcldec)$ satisfies~$\ztwoinner{P}{\cLXcl}=1$ if and only if~$P$ connects an external vertex~$u \in \Vcldecextleft$ to another external vertex~$v \in \Vcldecextright$.
  Here we say such a path~$P$ (through internal vertices) starts at~$u$ and ends at~$v$.
  Let us denote the set of all simple paths of length~$\ell$ which start at~$u_j^{k} \in \Vcldecextleft$ and end at~$v \in \Vcldecextright$ through internal vertices
   by~$\Delta(u_j^{k}, \ell)$. 
  Then we can write the set of all simple paths of length~$\ell$ in~$\Zpathext(\Tcldec)$ satisfying~$\ztwoinner{P}{\cLXcl}=1$ as
  \begin{align}\label{eq:zpathexttcldecdecompose}
    \{P \in \Zpathext(\Tcldec) \mid \abs{P}=\ell \textrm{ and } \ztwoinner{P}{\cLXcl} = 1\} &= \bigcup_{u \in \Vcldecextleft} \Delta(u, \ell) \ .
  \end{align}
  Note that~$\Delta(u, \ell)$ and~$\Delta(u', \ell')$ are disjoint for~$(u,\ell) \neq (u', \ell^{\prime})$
  since the corresponding paths either start at different vertices or have different lengths.
  We also note that~$\Delta(u, \ell)$ is empty for sufficiently large~$\ell$ since we are considering simple paths and the decoding graph~$\Tcldec$ is finite.
  It follows from the definition~\eqref{eq:defofres} of the resilience function and from~\eqref{eq:zpathexttcldecdecompose} that
  \begin{align}
    \res_{\cLXcl}(p) &= \sum_{\ell=1}^{\infty} \binom{\ell}{\ceil{\ell / u}} \cdot \abs{\bigcup_{u \in \Vcldecextleft} \Delta(u, \ell)} \cdot p^{\ceil{ \ell / 2 }}\\
                     &= \sum_{u \in \Vcldecextleft}\sum_{\ell = L_{u}^{\min}}^{L_{u}^{\max}} \binom{\ell}{\ceil{\ell / 2}} \cdot \abs{\Delta(u, \ell)} \cdot p^{\ceil{\ell/2}} \ . \label{eq:rescLXclwithLminLmax}
  \end{align}
  Here~$L_{u}^{\min}$ and~$L_{u}^{\max}$ are the minimal and maximal lengths of paths in $\bigcup_{\ell = 1}^{\infty} \Delta(u, \ell)$ for $u \in \Vcldecextleft$.

  We rewrite the expression~\eqref{eq:rescLXclwithLminLmax} by partitioning the set~$\Vcldecextleft$ as follows.
  Let us define
  \begin{align}
    \Vextleft{s} = \left\{u \in \Vcldecextleft \mid L_{u}^{\min} = s \right\} \qquad \textrm{ for } \qquad s \geq 1 \ .
  \end{align}
  In other words, $\Vextleft{s}$ is the set of  external vertices~$u$ in~$\Vcldecextleft$ such that the minimal distance to~$\Vcldecextright$ is $s$ in terms of the length of paths connecting~$u$ and~$\Vcldecextright$.
  It is clear from the definition of~$\Tcldec$ that the set~$\Vextleft{s}$ is empty for all~$s > d$.
  Furthermore, the sets~$\{ \Vextleft{s} \}_{s = 1}^d$ are pairwise disjoint and their union is~$\Vcldecextleft$.
  Thus we can write the expression~\eqref{eq:rescLXclwithLminLmax} as
  \begin{align}\label{eq:rescLXclAs}
    \res_{\cLXcl}(p) = \sum_{s=1}^d A_s \qquad \textrm{ where } \qquad A_s = \sum_{u \in \Vextleft{s}} \sum_{\ell = s}^{L_{u}^{\max}} \binom{\ell}{\ceil{\ell / 2}} \cdot \abs{\Delta(u, \ell)} \cdot p^{\ceil{\ell / 2}} \ .
  \end{align}

  We first compute~$A_1$.
  It is clear from the definition of~$\Tcldec$ that~$u_1^{0}$ and~$u_1^{\frac{R+1}{2}}$ are the only vertices in~$\Vextleft{1}$, i.e.,
  \begin{align}\label{eq:tcldeclengthonevertices}
    \Vextleft{1} &= \left\{u_1^{0}, u_1^{\frac{R+1}{2}}\right\} \ .
  \end{align}
  In addition, the set of all simple paths through internal vertices starting at either~$u_1^{0}$ or~$u_1^{\frac{R+1}{1}}$ and ending at any external vertex in~$\Vcldecextright$ consists of only two length-$1$ paths, the path $\left(\left\{u_1^{0}, v_{1,1}^{0}\right\}\right)$ and the  path~$\left(\left\{u_1^{\frac{R+1}{2}}, v_{1,1}^{\frac{R+1}{2}}\right\}\right)$.
  In particular, we have
  \begin{align}\label{eq:tcldeclengthoneverticesmax}
    L_{u_1^{0}}^{\max} = L_{u_1^{\frac{R+1}{2}}}^{\max} = 1 \ .
  \end{align}
  According to~\eqref{eq:tcldeclengthonevertices} and~\eqref{eq:tcldeclengthoneverticesmax} with the definition of~$A_1$ from~\eqref{eq:rescLXclAs}, we have
  \begin{align}\label{eq:boundA1}
    A_1 &= \binom{1}{\ceil{1/2}} \cdot \abs{\Delta(u_1^{0}, 1)} \cdot p^{\ceil{1/2}} + \binom{1}{\ceil{1/2}} \cdot \abs{\Delta(u_1^{\frac{R+1}{2}}, 1)} \cdot p^{\ceil{1/2}}\\
        &= 2 \cdot p \ .
  \end{align}

  Next, we consider~$A_d$.
  It is clear from the definition of~$\Tcldec$ that~$u_j^{k} \not \in \Vextleft{d}$ for~$1 \leq j \leq d-1$ and $k \in \left\{0, \frac{R+1}{2}\right\}$.
  This gives the upper bound
  \begin{align}\label{eq:upperboundvextleftd}
    \abs{\Vextleft{d}} &\leq \abs{\Vcldecextleft \setminus \left\{u_j^{k} \mid 1 \leq j \leq d-1, k \in \left\{0,\frac{R+1}{2}\right\}\right\}}\\
    &= d \cdot \left(\frac{R-1}{2}\right) \ .
  \end{align}
  The fact that the decoding graph~$\Tcldec$ has vertices of degree at most 6 implies that
  \begin{align}\label{eq:upperboundpaths3D}
    \abs{\Delta(u, \ell)} \leq 5^{\ell} \qquad \textrm{ for } \qquad u \in \Vcldecextleft \ .
  \end{align}
  Combining~\eqref{eq:upperboundvextleftd},~\eqref{eq:upperboundpaths3D} and the fact that~$\binom{\ell}{\ceil{\ell/2}}\leq 2^{\ell}$ and $p^{\ceil{\ell/2}} \leq p^{\ell/2}$ with the definition~\eqref{eq:rescLXclAs} of~$A_s$ with $s=d$, we obtain
  \begin{align}\label{eq:boundAd}
    A_d &\leq d \cdot \left(\frac{R-1}{2}\right) \sum_{\ell = d}^{\infty} 2^{\ell} \cdot 5^{\ell} \cdot p^{\ell / 2} \\
        &= d \cdot \left(\frac{R-1}{2}\right) \sum_{\ell = d}^{\infty} (10 \sqrt{p})^{\ell} \\
        &= d \cdot \left(\frac{R-1}{2}\right) \frac{q^d}{1 - q}  \qquad \textrm{ where } \qquad q:= 10 \sqrt{p} \ .
  \end{align}
  Here the geometric series converges since the assumption $0 \leq p \leq  \frac{1}{400}$ implies $0 \leq q \leq \frac{1}{2}$.

  It remains  to calculate an upper bound of~$\sum_{s = 2}^{d-1}A_s$.
  Straightforward counting yields 
  \begin{align}\label{eq:upperboundvextleftmiddle}
    \abs{\Vextleft{s}} \leq 2 \cdot s \qquad \textrm{ for } \qquad 2 \leq s \leq d-1 \ .
  \end{align}
  It follows from an analogous argument as for the upper bound on~$A_d$ that
  \begin{align}\label{eq:boundAmiddle}
    \sum_{s = 2}^{d - 1} A_s &\leq \sum_{s = 2}^{d - 1} \left(2 \cdot s \sum_{\ell = s}^{\infty} 2^{\ell} \cdot 5^{\ell} \cdot p^{\ell / 2} \right) \\
    &= \sum_{s = 2}^{d - 1} \left(  2 \cdot s \cdot \frac{q^s}{1-q}  \right) \qquad \textrm{ where } \qquad q:= 10 \sqrt{p}\\
    &\leq \frac{4q^2}{(1-q)^2} + \frac{2q^3}{(1-q)^3} \ .
  \end{align}
  Here again, the geometric- and the arithmetico-geometric series converge since $0 \leq q \leq \frac{1}{2}$.
  Combining \eqref{eq:boundA1}, \eqref{eq:boundAd} and \eqref{eq:boundAmiddle}, we have
  \begin{align}
    \res_{\cLXcl}(p) &= \sum_{s=1}^d A_s \\
    &\leq 2p + \frac{4q^2}{(1-q)^2} + \frac{2q^3}{(1-q)^3} +d \cdot \frac{R-1}{2} \cdot \frac{q^d}{1 - q} \\
    &= \left[2 + \frac{400}{(1-q)^2} + \frac{200q}{(1-q)^3} + 50d(R-1)  \cdot \frac{(10\sqrt{p})^{d-2}}{1 - q}\right] \cdot p \ .\label{eq:rescLXclreciprocal}
  \end{align}
  Here we used the definition of $q$ for the last equality.
  We have $q \leq \frac{1}{2}$ and $\frac{1}{1-q} \leq 2$, hence we obtain the desired bound~\eqref{eq:rescLXclbound} from~\eqref{eq:rescLXclreciprocal}.
\end{proof}

The resilience function~$\res_{\cLZcl}(p)$ of the dual decoding graph and the recovery set~$\cLZcl$ can be upper bounded in a similar manner.
\begin{lemma}\label{lem:rescLZclbound}
  Consider the dual decoding graph~$\Tcldecdual=(\Vcldecdual, \Ecldecdual)$ with set of internal vertices~$\Vcldecdualint$ defined on the dual lattice~$\Ctildedual [d \times d \times R]$. Let $\cLZcl\subset \Ecldecdual$ be the recovery set introduced in Section~\ref{sec:decodinggraphs}. Then the resilience function~$\res_{\cLZcl}$ satisfies
  \begin{align}\label{eq:rescLZclbound}
    \res_{\cLZcl}(p) &\leq  \left[  2400 + 100d(R+1)(10\sqrt{p})^{d-2}  \right] \cdot p\ .
  \end{align}
  for any $p\in \interval{0}{\frac{1}{400}}$.
\end{lemma}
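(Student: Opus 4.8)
The plan is to follow the proof of Lemma~\ref{lem:rescLXclbound} line by line, with the dual decoding graph~$\Tcldecdual=\Todd\cup(\Tdecdual\times\{1,R\})$ and the dual recovery set~$\cLZcl$ playing the roles of~$\Tcldec$ and~$\cLXcl$. The only structural change is that the distinguished boundary carrying~$\cLZcl$ is the ``bottom'' boundary rather than the ``left'' one, reflecting the interchange of rough and smooth boundaries between the surface code and its dual.

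First, as before, every simple closed loop $L\in\Zcirc(\Tcldecdual)$ is disjoint from~$\cLZcl$ (which consists of dangling edges incident on the degree-$1$ bottom vertices), so $\ztwoinner{L}{\cLZcl}=0$ and closed loops do not contribute to~$\res_{\cLZcl}$. I would then partition the external vertices as $\Vcldecdualext=\Vcldecdualextbottom\cup\Vcldecdualexttop$, where $\Vcldecdualextbottom$ are the endpoints of the dangling edges realising~$\cLZcl$ and $\Vcldecdualexttop$ are the remaining external vertices (on the top, front, and back boundaries). A simple path $P\in\Zpathext(\Tcldecdual)$ satisfies $\ztwoinner{P}{\cLZcl}=1$ precisely when it connects some $u\in\Vcldecdualextbottom$ to a vertex of~$\Vcldecdualexttop$ through internal vertices. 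Writing $\Delta(u,\ell)$ for the length-$\ell$ such paths starting at~$u$ and using that these sets are disjoint for distinct starting points, I obtain
\begin{align}
\res_{\cLZcl}(p)=\sum_{u\in\Vcldecdualextbottom}\sum_{\ell=L_u^{\min}}^{L_u^{\max}}\binom{\ell}{\ceil{\ell/2}}\cdot\abs{\Delta(u,\ell)}\cdot p^{\ceil{\ell/2}}\ .
\end{align}

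Next I would group the bottom external vertices by their minimal path-distance to~$\Vcldecdualexttop$, setting $\Vextbottom{s}=\{u\in\Vcldecdualextbottom\mid L_u^{\min}=s\}$ (empty for $s>d$), so that $\res_{\cLZcl}(p)=\sum_{s=1}^{d}A_s$ with $A_s$ the corresponding partial sum. The three regimes are treated exactly as in the primal case: for the deepest shell I bound $\abs{\Vextbottom{d}}\leq d\cdot\frac{R+1}{2}$, for the intermediate shells $\abs{\Vextbottom{s}}\leq 2s$ (for $2\leq s\leq d-1$), and the shallow shell $s=1$ contributes a fixed $O(p)$ term; in each case the degree-$6$ bound on~$\Tcldecdual$ yields $\abs{\Delta(u,\ell)}\leq 5^{\ell}$. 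Replacing $\binom{\ell}{\ceil{\ell/2}}\leq 2^\ell$ and $p^{\ceil{\ell/2}}\leq p^{\ell/2}$ converts the inner sums into geometric and arithmetico-geometric series in $q:=10\sqrt{p}$, which converge since $0\leq p\leq\frac{1}{400}$ forces $q\leq\frac{1}{2}$ and $\frac{1}{1-q}\leq 2$; summing them gives the claimed bound~\eqref{eq:rescLZclbound}.

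The main obstacle is the purely combinatorial bookkeeping of the dual external vertices: deriving the explicit coordinates of $\Vcldecdualextbottom$ and $\Vcldecdualexttop$ from the definition of~$\Tcldecdual$, and establishing the shell cardinalities $\abs{\Vextbottom{s}}\leq 2s$ and $\abs{\Vextbottom{d}}\leq d\cdot\frac{R+1}{2}$. The numerical discrepancies with the primal bound---namely $R+1$ in place of $R-1$, and the constant $2400$ rather than $2402$---arise entirely here, from the fact that the front- and back-cap layers $u_3\in\{1,R\}$ are distributed among the distance shells differently in~$\Tcldecdual$ than in~$\Tcldec$. I would verify these offsets directly against the coordinate description of the dual external vertices, and confirm that the maximal degree of~$\Tcldecdual$ is indeed~$6$ so that the bound $\abs{\Delta(u,\ell)}\leq 5^\ell$ carries over unchanged.
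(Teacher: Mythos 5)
Your proposal follows the paper's proof of Lemma~\ref{lem:rescLZclbound} essentially verbatim: discard simple closed loops, partition $\Vcldecdualext$ into the bottom part and the top/front/back part, organize the contributing paths by starting vertex and by distance shells $\Vextbottom{s}$, bound $\abs{\Delta(u,\ell)}\leq 5^{\ell}$ via the degree-$6$ bound, and sum geometric and arithmetico-geometric series in $q=10\sqrt{p}$ using the shell cardinalities $\abs{\Vextbottom{s}}\leq 2s$ and $\abs{\Vextbottom{d}}\leq d\cdot\frac{R+1}{2}$. This is exactly the paper's route.

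One point needs correcting, and it is precisely the point on which the claimed constant depends: you assert that ``the shallow shell $s=1$ contributes a fixed $O(p)$ term.'' In the dual graph it contributes nothing, because $\Vextbottom{1}$ is empty: every bottom external vertex has degree~$1$, and its unique neighbor (the vertex directly above it, reached through the dangling edge) is internal --- it lies neither on the top boundary $u_2=2d-1$ nor among the diagonal external vertices of the front/back caps, since the latter would require $u_1\leq u_2=1$ while all bottom vertices have $u_1\geq 3$ in the caps. Hence the shell sum starts at $s=2$, and the constant $2400=1600+800$ comes entirely from the intermediate shells $2\leq s\leq d-1$. If you carried a nonzero $s=1$ term through the computation --- as in the primal case, where $\Vextleft{1}$ has two elements and contributes $2p$, producing $2402$ --- you would end up with a constant strictly larger than $2400$, and the stated bound would not follow. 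Your closing remark that the $2400$-versus-$2402$ discrepancy must be traced to how the cap layers populate the distance shells is the right instinct; the resolution is exactly the emptiness of $\Vextbottom{1}$, which you should verify rather than posit a generic $O(p)$ contribution.
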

The proof mirrors that of Lemma~\ref{lem:rescLXclbound}
and is given in Appendix~\ref{sec:resiliencefunctiondual} for completeness.

\subsection{A lower bound on the success probability of the protocol}\label{sec:boundprobsucccl}
Now we state our main result,  a lower bound on  the success probability of our entanglement generation protocol. Here the success probability (denoted~$\nu_{(0,0)}$ in the following)  is the probability (taken over the random choice of local stochastic error~$E$ as well as the measurements outcomes) that the final state of the protocol is the Bell state~$\Phi=\Phi_{(0,0)}$. Note that this quantity is in one-to-one-correspondence with the fidelity of the (average) output state with the Bell state~$\Phi$. This is because the protocol always produces one of the four orthogonal Bell states. 

\begin{theorem}[Entanglement generation with  local stochastic noise]\label{thm:successprobability}
  Let $d \geq 2$, and let $R \geq 3$ be an odd integer.
  Suppose we run Algorithm~\ref{alg:ftes} on a noisy cluster state $E W\ket{0^{\cC}}$ on a lattice~$\cC\left[d\times d \times R\right]$, where the cluster state is corrupted by a local stochastic error~$E \sim \cN(p)$ with $p \in \interval{0}{\frac{1}{400}}$.
  Then there is a probability distribution~$\nu = (\nu_{(0,0)}, \nu_{(0,1)}, \nu_{(1,0)}, \nu_{(1,1)})$  such that the following holds:
  \begin{enumerate}[(i)]
  \item
    The resulting output state is the two-qubit Bell state~$\ket{\Phi_{(\alpha,\beta)}}$ with probability~$\nu_{(\alpha, \beta)}$ for every pair~$(\alpha,\beta) \in \{0,1\}^2$.
  \item The success probability $\nu_{(0,0)}$ of the entanglement generation protocol satisfies
    \begin{align}
    \nu_{(0,0)} \geq 1 - \left[4806 + 200dR (10\sqrt{p})^{d-2} \right] \cdot p \ . \label{eq:successprobabilitybound} 
    \end{align}
  \end{enumerate}
\end{theorem}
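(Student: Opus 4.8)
The plan is to follow the template of the proof of Theorem~\ref{thm:mainsingleshot}, now substituting the characterization of the output from Theorem~\ref{thm:successcondition} and the resilience bounds of Lemmas~\ref{lem:rescLXclbound} and~\ref{lem:rescLZclbound}. First I would observe that Theorem~\ref{thm:successcondition} shows that, for every fixed Pauli error $E$, running Algorithm~\ref{alg:ftes} on $EW\ket{0^\cC}$ outputs the Bell state $\Phi_{(\alpha,\beta)}$, where $(\alpha,\beta)\in\{0,1\}^2$ is given by an explicit formula depending only on $E$ and not on the measurement outcomes $(a,x,z)$ (the correction in Step~\ref{state:estimatecorrection} exactly cancels the measurement-dependent contributions to $c_X,c_Z$). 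Pushing the distribution of $E\sim\cN(p)$ through the map $E\mapsto(\alpha(E),\beta(E))$ therefore defines a probability distribution $\nu=(\nu_{(0,0)},\nu_{(0,1)},\nu_{(1,0)},\nu_{(1,1)})$ on the four Bell-state labels, which establishes part~(i); marginalizing over the measurement outcomes is trivial since the output is independent of them. Because the four Bell states are orthogonal, $\nu_{(0,0)}$ coincides with the fidelity of the averaged output with $\Phi$.

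For part~(ii), I would introduce the four random variables
\begin{align}
A_Z &:= \ztwoinner{\{q_1,q_2\}}{\supp(E^Z)}, \qquad A_X := \ztwoinner{\{q_1,q_2\}}{\supp(E^X)},\\
B_Z &:= \ztwoinner{\supp(E_{\mathrm{gl}}) \oplus \mmatch_{\Tcldec}(\partial_{\Tcldec}\supp(E_{\mathrm{gl}}))}{\cLXcl},\\
B_X &:= \ztwoinner{\supp(E_{\mathrm{gl}}^{*}) \oplus \mmatch_{\Tcldecdual}(\partial_{\Tcldecdual}\supp(E_{\mathrm{gl}}^{*}))}{\cLZcl},
\end{align}
so that, by Theorem~\ref{thm:successcondition}, $\alpha=A_Z\oplus B_Z$ and $\beta=A_X\oplus B_X$. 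The output equals $\Phi_{(0,0)}$ precisely when $\alpha=\beta=0$, so a union bound over the failure events gives
\begin{align}
1-\nu_{(0,0)} \leq \Pr[A_Z=1]+\Pr[B_Z=1]+\Pr[A_X=1]+\Pr[B_X=1].
\end{align}

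I would then bound the two ``endpoint'' terms and the two ``bulk'' terms separately. For the endpoint terms, $A_Z$ is the parity of $\abs{\{q_1,q_2\}\cap\supp(E^Z)}$; since $E^Z\sim\cN(p)$, a union bound over the two distinguished qubits $q_1,q_2$ yields $\Pr[A_Z=1]\leq 2p$, and likewise $\Pr[A_X=1]\leq 2p$. For the bulk terms, I would apply Proposition~\ref{prop:maincombinatorics} with $\cL=\cLXcl$ and random subset $\cE=\supp(E_{\mathrm{gl}})$ to get $\Pr[B_Z=1]\leq\res_{\cLXcl}(p)$, then invoke Lemma~\ref{lem:rescLXclbound}; analogously $\Pr[B_X=1]\leq\res_{\cLZcl}(p)$ bounded by Lemma~\ref{lem:rescLZclbound}. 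The one genuine subtlety, and the step I expect to require the most care, is verifying the hypothesis of Proposition~\ref{prop:maincombinatorics}, namely that $\supp(E_{\mathrm{gl}})$ and $\supp(E_{\mathrm{gl}}^{*})$ are local stochastic with the same parameter $p$. For $E_{\mathrm{gl}}^{*}=E^X|_{\Ecldecdual}$ this is immediate, since restricting a local stochastic support to a subset of qubits only decreases the relevant probabilities. For $E_{\mathrm{gl}}=E^X|_{\Ecldec\cap\cA}\,E^Z|_{\cX}$ the two factors are supported on the disjoint qubit sets $\cA$ and $\cX\subset\cB$, so for any $\cF\subseteq\Ecldec$ the event $\cF\subseteq\supp(E_{\mathrm{gl}})$ forces $\cF\subseteq\supp(E)$; hence $\Pr[\cF\subseteq\supp(E_{\mathrm{gl}})]\leq\Pr[\cF\subseteq\supp(E)]\leq p^{\abs{\cF}}$, confirming $\supp(E_{\mathrm{gl}})\sim\cN(p)$.

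Finally, I would combine the four estimates. Using $\res_{\cLXcl}(p)\leq(2402+100d(R-1)(10\sqrt{p})^{d-2})p$ and $\res_{\cLZcl}(p)\leq(2400+100d(R+1)(10\sqrt{p})^{d-2})p$ together with the two $2p$ endpoint bounds gives
\begin{align}
1-\nu_{(0,0)} \leq (2+2+2402+2400)\,p + \left(100d(R-1)+100d(R+1)\right)(10\sqrt{p})^{d-2}\,p,
\end{align}
and since the constants sum to $4806$ and $100d(R-1)+100d(R+1)=200dR$, this is exactly the claimed bound~\eqref{eq:successprobabilitybound}. The admissible range $p\leq\frac{1}{400}$ is precisely that in which Lemmas~\ref{lem:rescLXclbound} and~\ref{lem:rescLZclbound} hold. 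In short, the argument is bookkeeping layered on top of the three earlier results; the only non-mechanical points are the local-stochasticity check for $E_{\mathrm{gl}}$ above and the reminder that the endpoint terms each contribute $2p$ (rather than $p$) because of the two distinguished qubits $q_1,q_2$.
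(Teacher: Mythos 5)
Your proposal is correct and follows the paper's own proof essentially step for step: the same four random variables $A_Z,B_Z,A_X,B_X$, the same union bound giving $1-\nu_{(0,0)}\leq\Pr[A_X{=}1]+\Pr[A_Z{=}1]+\Pr[B_X{=}1]+\Pr[B_Z{=}1]$, the same $2p$ endpoint bounds, and the same application of Proposition~\ref{prop:maincombinatorics} together with Lemmas~\ref{lem:rescLXclbound} and~\ref{lem:rescLZclbound}, with identical arithmetic producing the constants $4806$ and $200dR$. The only material difference is that you explicitly verify that $\supp(E_{\mathrm{gl}})$ and $\supp(E_{\mathrm{gl}}^{*})$ are local stochastic with parameter~$p$ (via $\supp(E_{\mathrm{gl}})\subseteq\supp(E)$), a hypothesis the paper invokes without proof; your check is correct and a welcome tightening of the write-up.
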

\begin{proof}
  Due to Lemma~\ref{lem:pmstateisbellbasis}, it suffices to establish an upper bound on $1-\nu_{(0,0)}$.
  Let us introduce the following four random variables:
  \begin{align}\label{eq:defofAZBZAXBX}
    \begin{aligned}
    A_Z &:= \ztwoinner{\{q_1,q_2\}}{\supp(E^Z)} \\
    B_Z &:= \ztwoinner{\supp(E_{\mathrm{gl}}) \oplus \mmatch_{\Tcldec}(\partial_{\Tcldec}\supp(E_{\mathrm{gl}}))}{\cLXcl} \\
    A_X &:= \ztwoinner{\{q_1,q_2\}}{\supp(E^X)} \\
    B_X &:= \ztwoinner{\supp(E_{\mathrm{gl}}^{*}) \oplus \mmatch_{\Tcldecdual}(\partial_{\Tcldecdual}\supp(E_{\mathrm{gl}}^{*}))}{\cLZcl} \ .
    \end{aligned}
  \end{align}
  Then Theorem~\ref{thm:successcondition} translates to
  \begin{align}\label{eq:nu00withAZBZAXBX}
    \nu_{(0,0)} = \Pr[(A_Z, B_Z) \in \{(0,0),(1,1)\} \textrm{ and } (A_X, B_X) \in \{(0,0),(1,1)\}] \ .
  \end{align}
  The union bound gives
  \begin{align}\label{eq:nu00unionbound}
    1-\nu_{(0,0)} \leq \Pr[(A_Z, B_Z) \in \{(0,1),(1,0)\}] + \Pr[(A_X, B_X) \in \{(0,1),(1,0)\}] \ .
  \end{align}
  Using the union bound on the first term of~\eqref{eq:nu00unionbound}, we have
  \begin{align}
    \Pr[(A_Z, B_Z) \in \{(0,1),(1,0)\}] &\leq \Pr[(A_Z, B_Z) = (0,1)] + \Pr[(A_Z, B_Z) = (1,0)] \\
    &\leq \Pr[B_Z = 1] + \Pr[A_Z = 1] \ .
  \end{align}
  Applying the same reasoning to the second term of~\eqref{eq:nu00unionbound} yields
  \begin{align}\label{eq:nu00unionboundAXAZBXBZ}
    1-\nu_{(0,0)} \leq \Pr[A_X = 1] + \Pr[A_Z = 1] + \Pr[B_X = 1] + \Pr[B_Z = 1] \ .
  \end{align}
  By the definition of~$A_X$ and (bi-)linearity of~$\ztwoinner{\cdot}{\cdot}$ together with the union bound, we have
  \begin{align}
    \Pr[A_X = 1] &\leq \Pr[\ztwoinner{\{q_1\}}{\supp(E^Z)} \oplus \ztwoinner{\{q_2\}}{\supp(E^Z)} = 1] \\
                 &\leq \Pr[\ztwoinner{\{q_1\}}{\supp(E^Z)}=1] + \Pr[\ztwoinner{\{q_2\}}{\supp(E^Z)}=1] \\
                 &\leq p + p = 2p\label{eq:boundAXcl}
  \end{align}
  where we used that $E^Z \sim \cN(p)$ is local stochastic noise with the same parameter as~$E$. In a similar manner, we obtain 
  \begin{align}\label{eq:boundAZcl}
    \Pr[A_Z = 1] = 2p \ .
  \end{align}
  We also have
  \begin{align}
    \Pr[B_Z=1] &= \Pr[\ztwoinner{\supp(E_{\mathrm{gl}}) \oplus \mmatch_{\Tcldec}(\partial_{\Tcldec}\supp(E_{\mathrm{gl}}))}{\cLXcl}=1] \\
               &\leq \res_{\cLXcl}(p) \\
               &\leq \left[ 2402 + 100d(R-1)(10\sqrt{p})^{d-2}  \right] \cdot p \                 \label{eq:boundBZcl}
  \end{align}
  where the definition of~$B_Z$, the fact that~$E_{\mathrm{gl}} \sim \cN(p)$, Proposition~\ref{prop:maincombinatorics} and Lemma~\ref{lem:rescLXclbound} are used.
  By  similar reasoning, we get
  \begin{align}
    \Pr[B_X = 1] &= \Pr[\ztwoinner{\supp(E_{\mathrm{gl}}^{*}) \oplus \mmatch_{\Tcldecdual}(\partial_{\Tcldecdual}\supp(E_{\mathrm{gl}}^{*}))}{\cLZcl}] \\
                 &\leq \res_{\cLZcl}(p) \\
                 &\leq \left[2400 + 100d(R+1)(10\sqrt{p})^{d-2}  \right] \cdot p \ , \label{eq:boundBXcl}
  \end{align}
  see Lemma~\ref{lem:rescLZclbound}. 
  Combining inequalities~\eqref{eq:boundAXcl},~\eqref{eq:boundAZcl},~\eqref{eq:boundBZcl} and~\eqref{eq:boundBXcl} together with \eqref{eq:nu00unionboundAXAZBXBZ}, we obtain the claim~\eqref{eq:successprobabilitybound}.
\end{proof}

Theorem~\ref{thm:successprobability} immediately implies the following corollary. It shows that for any sufficiently small constant noise-strength, a constant-fidelity Bell state can be established over a distance~$R$ which is exponential in~$d$. 

\begin{corollary}\label{cor:achievability}
  Let $p \in \linterval{0}{\frac{1}{5006}}$, $d \in \mathbb{N}$ with $d \geq 3$ and $R \in \mathbb{N}$ be odd with $R \geq 3$.
  Suppose we run Algorithm~\ref{alg:ftes} on the noisy cluster state $EW\ket{0^\cC}$ with local stochastic noise $E \sim \cN(p)$ in the lattice $\cC\left[ d \times d \times R \right]$ with
  \begin{align}\label{eq:upperboundonR}
      R \leq \frac{1}{d} \left( \frac{1}{10\sqrt{p}} \right)^{d-2} \ . 
  \end{align}
  Then the probability~$\nu_{(0,0)}$ of successful entanglement generation  satisfies 
  \begin{align}
    \nu_{(0,0)} \geq 1 - 5006p\ .
  \end{align}
\end{corollary}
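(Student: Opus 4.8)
The plan is to derive Corollary~\ref{cor:achievability} directly from Theorem~\ref{thm:successprobability} by plugging in the assumed upper bound~\eqref{eq:upperboundonR} on~$R$ and simplifying. First I would verify that the hypotheses of the theorem are met: since $p \in \linterval{0}{\frac{1}{5006}}$ and $\frac{1}{5006} < \frac{1}{400}$, we indeed have $p \in \interval{0}{\frac{1}{400}}$, so inequality~\eqref{eq:successprobabilitybound} applies. The theorem gives
\begin{align}
\nu_{(0,0)} \geq 1 - \left[4806 + 200dR\,(10\sqrt{p})^{d-2}\right]\cdot p\ ,
\end{align}
so it suffices to control the~$R$-dependent term $200 d R (10\sqrt{p})^{d-2}$.

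The key step is to substitute the bound~\eqref{eq:upperboundonR}. Multiplying both sides of $R \leq \frac{1}{d}\left(\frac{1}{10\sqrt{p}}\right)^{d-2}$ by $d (10\sqrt{p})^{d-2}$ yields
\begin{align}
d R\,(10\sqrt{p})^{d-2} \leq \left(\frac{1}{10\sqrt{p}}\right)^{d-2} (10\sqrt{p})^{d-2} = 1\ .
\end{align}
Hence $200 d R (10\sqrt{p})^{d-2} \leq 200$, and the bracketed quantity in~\eqref{eq:successprobabilitybound} is at most $4806 + 200 = 5006$. Substituting this back gives
\begin{align}
\nu_{(0,0)} \geq 1 - 5006p\ ,
\end{align}
which is exactly the claim.

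There is essentially no analytical obstacle here; the work is entirely in the preceding Theorem~\ref{thm:successprobability}, and this corollary is a clean one-line specialization. The only point requiring a moment of care is checking that the factor $(10\sqrt{p})^{d-2}$ is being handled with the correct sign of exponent: since $p \leq \frac{1}{5006}$ implies $10\sqrt{p} < 1$, the quantity $(10\sqrt{p})^{d-2}$ is a genuine (decaying) factor for $d \geq 3$, so the cancellation against $\left(\frac{1}{10\sqrt{p}}\right)^{d-2}$ is exact and no hidden positivity assumption is violated. I would also note that the hypothesis $d \geq 3$ (rather than merely $d \geq 2$ as in Theorem~\ref{thm:successprobability}) is what ensures the exponent $d-2 \geq 1$, so that the constraint~\eqref{eq:upperboundonR} is nontrivial and the geometric decay is actually operative.
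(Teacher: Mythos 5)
Your proof is correct and follows exactly the paper's own route: verify $p \leq \frac{1}{5006} < \frac{1}{400}$ so that Theorem~\ref{thm:successprobability} applies, then substitute the bound~\eqref{eq:upperboundonR} to get $dR(10\sqrt{p})^{d-2} \leq 1$ and hence $4806 + 200 = 5006$. The paper states this more tersely, but the substance is identical; your added remarks on the sign of the exponent and the role of $d \geq 3$ are sound clarifications, not deviations.
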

\begin{proof}
  Due to the condition $p \in \linterval{0}{\frac{1}{5006}} \subset \interval{0}{\frac{1}{400}}$, the inequality~\eqref{eq:successprobabilitybound}  holds. Assuming that~$R$ satisfies the  upper bound~\eqref{eq:upperboundonR}, the claim follows. 
\end{proof}

\section{Limits on low-latency entanglement generation}~\label{sec:converse}
Here we establish limits on fault-tolerant long-range entanglement generation using adaptive  shallow (i.e., constant-depth) circuits
 along a line of repeater stations. Such circuits capture low-latency schemes where the entanglement is generated in a constant amount of time (up to local corrections).
We show that any such scheme which is resilient to arbitrary local stochastic noise of strength below some threshold can only establish constant-fidelity entanglement up to a certain distance~$R$: We show that~$R$ must be upper bounded by a function which is exponential in the number $m$~of qubits at each repeater station.

In Section~\ref{sec:converseboundlowlatency}, we establish a general converse bound for low-latency entanglement generation, applicable to all protocols realized by constant-depth adaptive circuits.  
In Section~\ref{sec:converseboundcluster}, we establish a converse bound for  any scheme based on the cluster state $W\ket{0^{\cC}}$ which uses the same measurement pattern and syndrome information as Algorithm~\ref{alg:ftes}, but possibly different classical decoding strategies. We discuss how these two converse bounds compare to each other and to our  achievability results for Algorithm~\ref{alg:ftes} from Section~\ref{sec:boundprobsucccl}.

\subsection{A general converse bound for low-latency entanglement generation\label{sec:converseboundlowlatency}}

\newcommand*{\CNOT}{\mathsf{CNOT}}
In the following, we consider constant-depth protocols for entanglement generation across a line of~$R$ site with $m$~qubits per site (i.e., per repeater). Constant-depth here means that the entanglement generation protocol can be written as a circuit of  depth independent of~$(m,R)$ composed of single-qubit gates and measurements, and two-qubit gates between any pair of qubits located at the same or two neighboring repeaters. Throughout, we consider adaptive circuits, i.e., circuits that apply unitaries which  are classically controlled by (functions of) measurement results. Classical computations (i.e., function evaluations) to determine the unitaries which are applied are not counted in the definition of circuit depth.  Furthermore, for our converse bound, it actually suffices to consider the circuit depth ``between repeaters'', i.e., gate sequences within a repeater station can be assumed to contribute only a single gate layer to the overall circuit depth~$\Delta$.

We assume that the considered protocols generate entanglement between a qubit~$q_1$ at the first repeater~($u_3=1$)  and a qubit~$q_2$ at the last repeater  ($u_3=R$). The main result of this section is the following converse. It shows that long-range entanglement cannot be generated fault-tolerantly by constant-depth adaptive circuits if the distance is more than exponential in the number $m$~of qubits at each repeater.

\begin{corollary}\label{cor:separabilityconverse}
Let $\pi$ be a depth-$\Delta$ circuit for entanglement generation between two qubits at distance~$R$ using~$m$ qubits per repeater. 
Suppose there is some constant threshold $p_0<1$ such  that~$\pi$ produces the Bell state~$\Phi$ with probability at least~$1/e\approx 0.368$ for any local stochastic noise of strength~$p\leq p_0$. Then 
\begin{align}
   R&\leq \left(\frac{3}{4p_0^2}\right)^{\Delta m}\ .
\end{align}
\end{corollary}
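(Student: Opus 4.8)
The plan is to prove this converse by the standard "light-cone / entanglement-breaking" argument, converting the low-latency fault-tolerance guarantee into an entanglement bound that fails once $R$ is too large. The key observation is that the depolarizing channel $\cE_p(\rho)=(1-p)\rho+p\tr(\rho)\frac{I}{2}$ decomposes as a convex combination $\cE_p = (1-\lambda)\,\idchannel + \lambda\,\Lambda_{\mathrm{eb}}$ of the identity channel and an entanglement-breaking channel, where $\lambda$ is related to $p$ (concretely, $\cE_p$ is entanglement-breaking at $p=2/3$, and for $p\le p_0$ one can write $\cE_{p_0}$ with a definite entanglement-breaking weight). I will model local stochastic noise of strength $p_0$ by placing an i.i.d.\ single-qubit depolarizing channel $\cE_{p_0}$ on every qubit at every gate layer---this is a legitimate special case of $\cN(p_0)$, so the hypothesized resilience of $\pi$ must hold against it.

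\textbf{The reduction to an entanglement-assisted protocol.} First I would replace the noisy protocol $\pi$ by an auxiliary protocol $\pi'$ as follows. Each depolarizing channel $\cE_{p_0}$ acting on a wire is replaced by a classical mixture: with probability $1-\lambda$ the identity acts, and with probability $\lambda$ the entanglement-breaking channel $\Lambda_{\mathrm{eb}}$ acts. In $\pi'$, imagine a referee who, for each of the $\Delta m$ "cross-repeater" channel locations along the line, flips the corresponding coin and, whenever $\Lambda_{\mathrm{eb}}$ is selected, \emph{measures-and-reprepares} at that location, severing the entanglement there. The central point is a light-cone counting argument: in a depth-$\Delta$ circuit with gates only between adjacent repeaters, any path of correlation from $q_1$ (at $u_3=1$) to $q_2$ (at $u_3=R$) must cross each of the $R-1$ inter-repeater boundaries, and crossing a boundary requires traversing at least one of a bounded number of qubit wires subject to noise. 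If at \emph{some} boundary every crossing wire has its entanglement-breaking channel triggered, then the post-measurement state on $\{q_1,q_2\}$ is separable, and the protocol cannot output $\Phi$ with fidelity above $1/2$.

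\textbf{The probability estimate.} The plan is then to lower-bound the probability that some boundary is "fully cut." At each boundary, the number of crossing wires that could carry entanglement is at most $\Delta m$ (depth $\Delta$ times $m$ qubits per site, using the convention that within-repeater gates cost one layer), and each is independently broken with probability $\lambda$; so a given boundary is fully cut with probability at least $\lambda^{\Delta m}$, hence survives uncut with probability at most $1-\lambda^{\Delta m}$. Across the $R-1$ boundaries---whose noise realizations are independent in our i.i.d.\ model---the probability that \emph{no} boundary is fully cut is at most $\left(1-\lambda^{\Delta m}\right)^{R-1}$. Whenever at least one boundary is cut the output is separable, so the success probability of producing $\Phi$ is bounded by $\left(1-\lambda^{\Delta m}\right)^{R-1} + \tfrac{1}{2}$ (the last term absorbing the best possible overlap of a separable state with $\Phi$); a cleaner route is to note that conditioned on a cut the success probability is $\le 1/2$, giving overall success $\le \left(1-\lambda^{\Delta m}\right)^{R-1}\cdot 1 + \big(1-(1-\lambda^{\Delta m})^{R-1}\big)\cdot\tfrac12$. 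Setting this $\ge 1/e$ and using $1-x\le e^{-x}$ yields a bound of the form $(R-1)\lambda^{\Delta m}\le O(1)$, i.e.\ $R \le \left(\tfrac{3}{4p_0^2}\right)^{\Delta m}$ after substituting the explicit value $\lambda = \tfrac{4p_0^2}{3}$ (or the appropriate constant) coming from the entanglement-breaking decomposition of $\cE_{p_0}$.

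\textbf{The main obstacle} I expect is the light-cone bookkeeping: making precise that \emph{every} entanglement-carrying path from $q_1$ to $q_2$ must pass through a single inter-repeater boundary that can be severed by breaking at most $\Delta m$ wires, especially in the presence of adaptivity and classically-controlled gates. One must argue that classical control does not help create entanglement across a cut boundary---measurement outcomes are classical side-information and cannot manufacture quantum correlations through an entanglement-breaking cut---so the separability conclusion survives conditioning on all classical data. The second delicate point is extracting the precise constant $\tfrac{3}{4p_0^2}$; this requires pinning down the exact entanglement-breaking weight $\lambda$ in the decomposition $\cE_{p_0}=(1-\lambda)\idchannel+\lambda\Lambda_{\mathrm{eb}}$ and tracking it faithfully through the union/product bound. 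Everything else is a routine application of $1-x\le e^{-x}$ and the convexity of the channel decomposition.
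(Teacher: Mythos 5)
There is a genuine gap, and it sits exactly at the step you dismiss as ``light-cone bookkeeping.'' Your central claim is that if the entanglement-breaking component of the depolarizing noise is triggered on every wire crossing some inter-repeater boundary, the output is separable across that boundary. This is false as stated, because the cross-boundary \emph{gates} themselves are noiseless, and a single measure-and-prepare channel acting on one leg of a noiseless two-qubit gate does not make that gate LOCC across the cut. Concretely: let $L',a$ be two qubits on the left in a Bell state, $b$ a qubit on the right; apply the cross gate $\mathsf{CNOT}_{a\to b}$ (producing a GHZ state on $L'ab$) and then measure $a$ in the $X$ basis and reprepare it (a legitimate entanglement-breaking channel on the wire after the gate): conditioned on the outcome, $L'$ and $b$ are left in a Bell state, i.e.\ entanglement \emph{survives} the cut. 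Triggering EB channels on \emph{both} legs after the gate fails too: a cross gate followed by product measurements can implement a Bell measurement on $a,b$, which swaps locally prepared entanglement ($L'$--$a$ on the left, $b$--$R'$ on the right) into genuine $L'$--$R'$ entanglement across the boundary. So your event ``boundary fully cut with probability $\lambda^{\Delta m}$'' does not imply separability, and the induction you would need (separability preserved through cross gates with EB-hit legs) breaks down. This is precisely why the paper does not cut wires directly: it first replaces every cross-repeater $\mathsf{CNOT}$ by a gate-teleportation gadget consuming a resource Bell pair, so that the protocol becomes LOCC across each link given the resource states; destroying a resource Bell pair with an EB channel then \emph{provably} leaves everything across that link LOCC, and LOCC output states are separable. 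The exponent $p_0^2$ in the final bound is generated by this reduction — a single-qubit error on the resource half propagates through the teleportation $\mathsf{CNOT}$ to up to two errors in the simulated circuit, so strength-$p^2$ noise on the resource corresponds to strength-$p$ circuit noise. Your framework contains no mechanism producing that square; your closing substitution $\lambda = \tfrac{4p_0^2}{3}$ is unsupported (the EB weight of a depolarizing channel of local stochastic strength $p_0$ is $\tfrac{4p_0}{3}$, linear in $p_0$).

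A second, smaller problem: your ``cleaner route'' bounds the conditional success probability after a cut by $\tfrac12$ (best overlap of a separable state with $\Phi$), giving overall success at most $\tfrac12 + \tfrac12\left(1-\lambda^{\Delta m}\right)^{R-1}$. Since the hypothesis is only success probability $\geq 1/e < \tfrac12$, this inequality is satisfied for every $R$ and yields no bound at all. The fix is the one the paper uses implicitly: ``producing the Bell state $\Phi$'' is the event that the output \emph{is} $\Phi$, and a separable state is never $\Phi$, so the conditional success probability after a cut is $0$, not $\tfrac12$; then success $\leq \Pr[\text{no cut}] \leq e^{-(R-1)\lambda^{\cdots}}$ and the hypothesis $\geq 1/e$ gives the claimed exponential bound. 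Your remark that adaptivity/classical control cannot manufacture entanglement through a cut is correct in spirit, but it only becomes a theorem once the cut operation is genuinely LOCC — which, as above, requires the gate-teleportation reduction (or, alternatively, EB triggers both \emph{before and after} the gate on the same leg, costing $\lambda^2$ per gate and again recovering the quadratic dependence).
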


Corollary~\ref{cor:separabilityconverse} is an immediate consequence of the following general no-go result for entanglement generation with  limited-depth quantum circuits:
\begin{theorem}\label{thm:generalconverseboundone}
Let~$\pi$ be a circuit of depth~$\Delta$ for distance-$R$ entanglement generation using $m$~qubits per repeater. 
Let $p\in (0,1)$ be arbitrary. Then there is a local stochastic noise model of strength~$p$ 
acting on the circuit such that the probability that 
$\pi$ produces a separable state is lower bounded by 
\begin{align}
  p_{\mathrm{fail}}\geq 1-e^{-R(4p^2/3)^{\Delta m}}\ .
  \end{align}
\end{theorem}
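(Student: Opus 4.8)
The plan is to exhibit, for the given $p$, an explicit local stochastic noise model that with high probability severs the quantum channel across some ``vertical cut'' of the line, forcing the output on $\{q_1,q_2\}$ to be separable. The starting point is the decomposition $\cE_p=(1-p)\,\idchannel+p\,\cD$ of the qubit depolarizing channel into the identity and the completely depolarizing channel $\cD(\rho)=\tr(\rho)\tfrac{I}{2}$, which is entanglement-breaking~\cite{ebchannelshorodecki}. Realizing $\cD$ as a uniformly random Pauli, I would build the noise by sampling, independently at each circuit location, a ``depolarization flag'' $D_\ell\in\{0,1\}$ with $\Pr[D_\ell=1]=p'=\tfrac{4p}{3}$ and, conditioned on $D_\ell=1$, applying a uniformly random Pauli (otherwise the identity). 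Each location then carries a nontrivial Pauli with probability $p'\cdot\tfrac34=p$, so the resulting Pauli error is local stochastic of strength exactly $p$; the flags $\{D_\ell\}$ serve as an auxiliary classical random variable on which I will condition. (When $p>3/4$ one sets $p'=1$ and the bound becomes trivial, so I focus on $p\le 3/4$, where $p'\le 1$.)

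Next I fix the combinatorial geometry. For each $j\in\{1,\dots,R-1\}$ let the cut $j$ split the qubits into $L_j$ (sites $1,\dots,j$) and $R_j$ (sites $j+1,\dots,R$); since qubits never move, $q_1\in L_j$ and $q_2\in R_j$, and the only operations coupling $L_j$ and $R_j$ are the two-qubit gates between sites $j$ and $j+1$. The key local claim is that inserting $\cD$ on the $L_j$-side qubit $a$ of such a crossing gate $U_{ab}$ both immediately before and immediately after $U_{ab}$ turns it into a tensor product of a channel acting only on $L_j$ and one acting only on $R_j$: the pre-$\cD$ discards the incoming $a$ (blocking $L_j\!\to\!R_j$ flow) and the post-$\cD$ discards the outgoing $a$ (blocking $R_j\!\to\!L_j$ flow), leaving $b\mapsto \tr_a[U(\tfrac{I}{2}\otimes\,\cdot\,)U^\dagger]$ on $R_j$ and a fresh maximally mixed qubit returned to $L_j$. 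If every crossing gate of cut $j$ is broken in this way, then---treating single-qubit measurements as local operations and the classical control of adaptive gates as classical communication---the entire protocol is an LOCC map across $L_j\!:\!R_j$ applied to the initial product state, so its output, and in particular the reduced state of $\{q_1,q_2\}$, is separable.

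It then remains to bound the probability of this event. For cut $j$, the breaking requires $\cD$ at a set $\mathrm{req}(j)$ of flag locations, all of which live on site-$j$ qubits; each of its at most $\Delta m$ crossing gates (at most $m$ per layer over $\Delta$ layers) contributes at most two such locations, so $|\mathrm{req}(j)|\le 2\Delta m$. Using independence of the flags and $p'\le1$,
\[
\Pr[B_j]=\prod_{\ell\in\mathrm{req}(j)}\Pr[D_\ell=1]=(p')^{|\mathrm{req}(j)|}\ge\Big(\tfrac{4p}{3}\Big)^{2\Delta m}=\Big(\tfrac{16p^2}{9}\Big)^{\Delta m}\ge\Big(\tfrac{4p^2}{3}\Big)^{\Delta m}\ .
\]
Because $\mathrm{req}(j)$ and $\mathrm{req}(j')$ touch disjoint sites for $j\ne j'$ (I always depolarize the left qubit), the events $B_1,\dots,B_{R-1}$ are independent, whence $\Pr[\text{no }B_j]\le\big(1-(\tfrac{4p^2}{3})^{\Delta m}\big)^{R-1}\le \exp\!\big(-(R-1)(\tfrac{4p^2}{3})^{\Delta m}\big)$. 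Since the occurrence of any $B_j$ certifies a separable output, this yields the stated lower bound on $p_{\mathrm{fail}}$ (up to the immaterial replacement of $R$ by $R-1$).

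The main obstacle I expect is making the ``broken cut $\Rightarrow$ separable'' step fully rigorous in the adaptive setting: one must verify that the pre- and post-insertions of $\cD$ genuinely block information flow in both directions through each crossing gate (the ``return path'' is precisely why two insertions, rather than one, are needed), and that measurement outcomes and classically controlled operations only introduce classical communication, which cannot create entanglement from a product state. A secondary care point is bookkeeping the flag locations so that the breaking events of distinct cuts depend on disjoint location sets---handled cleanly by always depolarizing the $L_j$-side qubit---so that the product bound over the $R-1$ cuts is valid.
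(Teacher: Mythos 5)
Your proof is correct, but it takes a genuinely different route from the paper's. The paper first reduces $\pi$ to an entanglement-based protocol $\pi'$: every crossing gate is assumed without loss of generality to be a $\mathsf{CNOT}$, each such gate is replaced by the gate-teleportation gadget consuming a Bell pair shared between the two sites, and the adversarial noise is placed on the halves of the at most $t\leq \Delta m$ resource Bell pairs per link (Lemma~\ref{lem:converseentanglementbased}); the decomposition $\cE_p=(1-p)\,\mathsf{id}+p\,\cE_{1}$ with $\cE_1$ entanglement-breaking severs each link with probability $p^t$, and the $p^2$ in the final bound comes from the error-strength conversion between $\pi'$ and $\pi$ (a $Y$ error on a resource half commutes through the teleportation $\mathsf{CNOT}$ to a weight-two error). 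You instead work directly on the circuit: your flagged uniformly-random-Pauli noise is local stochastic of strength exactly $p$, and your key observation --- that sandwiching the left qubit of a crossing gate $U_{ab}$ between two complete depolarizations yields the product channel $T_a\otimes\cN_b$ with $T_a(\rho)=\tr(\rho)\tfrac{I}{2}$ and $\cN_b(\sigma)=\mathrm{Tr}_a\bigl[U\bigl(\tfrac{I}{2}\otimes\sigma\bigr)U^\dagger\bigr]$ --- is a correct and complete replacement for the teleportation gadget; here $p^2$ arises because two insertions per gate are needed to block both directions of information flow. Your route buys self-containedness (no $\mathsf{CNOT}$ decomposition, no gadget, no error-commutation bookkeeping) and even a marginally better constant, $(16p^2/9)^{\Delta m}\geq(4p^2/3)^{\Delta m}$; the paper's route buys modularity and immunity to adaptivity of the circuit \emph{structure}: its resource Bell pairs are fixed in advance, whereas in your argument the set of crossing gates, hence $\mathrm{req}(j)$, may itself depend on earlier measurement outcomes. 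That wrinkle is easily repaired --- depolarize every site-$j$ qubit at every time step, using $m(\Delta+1)\leq 2\Delta m$ locations, which sandwiches whichever crossing gates the adaptive run executes --- but you should state it. Finally, your exponent $R-1$ versus the theorem's $R$ is an immaterial discrepancy that you in fact share with the paper, whose proof of Lemma~\ref{lem:converseentanglementbased} likewise produces $p_0=(1-p^t)^{R-1}$ over the $R-1$ links while asserting the bound $(1-p^t)^{R}$.
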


Reformulated, Theorem~\ref{thm:generalconverseboundone} implies that  any protocol~$\pi$ that can withstand arbitrary local stochastic noise of strength~$p$ on the initial state and still succeeds (i.e., produces the Bell pair~$\Phi$) with probability~$p_{\mathrm{succ}}$ must satisfy the relationship
\begin{align}
R&\leq (\textfrac{3}{(4p^2)})^{\Delta m} \log\left(1/p_\mathrm{succ}\right)\ .
\end{align}
In particular, any protocol that produces constant-fidelity entanglement for any error strength below some threshold can at most reach  a distance exponential in~$m$. 

The proof of Theorem~\ref{thm:generalconverseboundone}
 proceeds by an intermediate step, where we consider bipartite-entanglement-based protocols. To describe this in detail, let $\pi$ be a distance-$R$-entanglement generation protocol using~$m$ qubits per repeater which can be implemented by a circuit of depth~$\Delta$.  

We assume without loss of generality that the only two-qubit gates that~$\pi$  uses between neighboring sites are $\CNOT$-gates. 
Note that each two-qubit unitary can be realized by a simple circuit consisting of at most three~$\CNOT$ gates and (arbitrary) single-qubit gates~\cite{vidaldawson04}. 
Each such $\CNOT$-gate can be realized by a gate-teleportation based circuit as shown in Fig.~\ref{fig:gateteleportationcircuitcnot}. This circuit consumes a shared  resource Bell state~$\Phi$. 
We note that the $\CNOT$-implementing circuit uses only local gates (where locality is with respect to the bipartition, i.e., we allow two-qubit gates at the ``source'' and ``target'' sites), local measurements, and classical communication. In particular, this shows that the $\CNOT$~gate can be implemented using a Bell state~$\Phi$ and LOCC-operations only.

Applying this substitution to every $\CNOT$-gate of~$\pi$ which acts between neighboring repeaters, we obtain a bipartite-entanglement-based protocol~$\pi'$ which has the following properties:
\begin{enumerate}[(i)]
\item\label{it:firstpropertypiprimedef}
The protocol~$\pi'$ uses a resource state~$(\Phi^{\otimes  t})_{A^tB^t}$ with~$t\leq  \Delta m$ copies of~$\Phi$ shared between any neighboring pair~$(S_j,S_{j+1})$ of sites.  Here $t$~is the maximal number of two-qubit $\CNOT$ gates executed between two neighboring sites. 
\item\label{it:secondpropertypiprimedef}
The protocol~$\pi'$ is an LOCC-operation between $S_1:S_2:\cdots :S_R$.
\item\label{it:thirdpropertypiprimedef}
Local stochastic noise on~$\pi'$ is equivalent to local stochastic noise on~$\pi$. This follows from the fact that the substitution rule in  Fig.~\ref{fig:gateteleportationcircuitcnot} is local by  suitably commuting errors as in the proof of~\cite[Lemma~11]{bravyiQuantumAdvantageNoisy2020}. In fact, we only need the following special version: A local stochastic error~$E\sim\cN(p)$ of strength~$p$ acting on 
the $t$~qubits $A^t=A_1\cdots A_t$ (i.e., one half of the Bell states) in the resource state~$(\Phi^{\otimes  t})_{A^tB^t}$ followed by an error-free execution of~$\pi'$
is equivalent to executing~$\pi$ with local stochastic noise of strengh~$\sqrt{p}$.  This is because 
in the circuit of Fig.~\ref{fig:gateteleportationcircuitcnot}, 
a single-qubit~$Y$ error on qubit~$A'$ is equivalent to a two-qubit error~$Z_AX_B$  after the~$\CNOT$, whereas  single-qubit~$X$- and $Z$-errors on~$A'$ become  single-qubit errors~$X_B$ and~$Z_A$, respectively.
\end{enumerate}

\begin{figure}
\begin{center}
\begin{quantikz}
\lstick{$A$}  &\quad    &\qw                           &\ctrl{1} &\qw      &\gate{Z} &\qw        &\qw \\
\lstick{$A^{\prime}$} &\quad  & \makeebit[-20][black]{$\Phi_{A^\prime B^\prime}$} &\targ{}  &\meter{} &\cw      &\cwbend{2} \\
\lstick{$B^{\prime}$} &\quad  &                               &\ctrl{1} &\gate{H} &\meter{} \vcw{-2}            \\
\lstick{$B$}  &\quad    &\qw                           &\targ{}  &\qw      &\qw      &\gate{X} &\qw
\end{quantikz}
\end{center}
\caption{Gate-teleportation~\cite{gottesmanchuang99} based protocol for realizing a two-qubit $\CNOT_{AB}$~gate between two sites, see also~\cite{EisertLocalimplement,piveteauCircuitKnittingClassical2022}. The circuit using a Bell state~$\Phi_{A^\prime B^\prime}$ as a resource, as well as local operations and measurements.\label{fig:gateteleportationcircuitcnot}}
\end{figure}
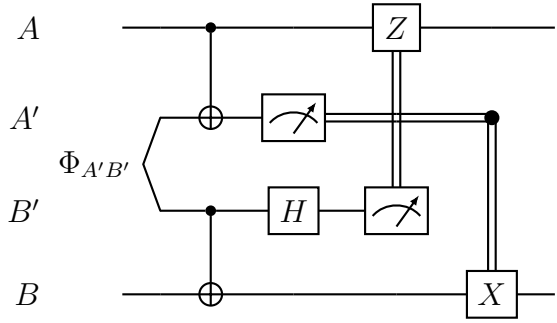

The following lemma gives an upper bound on such an  entanglement-based protocol~$\pi'$ for long-range entanglement generation. 
\begin{lemma}\label{lem:converseentanglementbased}
Consider $R$ sites $S_1,\ldots,S_R$ of the form 
\begin{align}
S_j&=\begin{cases}
A_1C_1\qquad &\textrm{ for }j=1\\
A_jB_jC_j &\textrm{ for } 1<j<R\\
B_{R}C_R&\textrm{ for }j=R\ ,
\end{cases}
\end{align}
where $A_j\cong B_j\cong (\mathbb{C}^2)^{\otimes t}$ for each~$j$. Let~$\pi'$ be a constant-depth adaptive quantum circuit  that acts on $S_1\cdots S_R$ and whose output state is on~$C_1C_R$.
Suppose~$\pi'$ satisfies properties~\eqref{it:firstpropertypiprimedef} and~\eqref{it:secondpropertypiprimedef}.

Assume that~$\pi'$ is an LOCC-operation with respect to the multipartition
\begin{align}
S_1:S_2:\cdots :S_R\ .
\end{align}
Suppose we apply~$\pi'$ to an initial state of the form
\begin{align}
\rho^{\mathrm{initial}}_{S_1\cdots S_R}:=\left(\bigotimes_{j=1}^{R-1}
(\proj{\Phi}^{\otimes t})_{A_jB_{j+1}}\right)\otimes
\bigotimes_{j=1}^R
\rho_{C_j}\ \label{eq:initialstaterhoerrormodelpsi}
\end{align}
where $\Phi\in \mathbb{C}^2\otimes \mathbb{C}^2$ is the two-qubit Bell state, and~$\rho_{C_1},\ldots,\rho_{C_R}$ are arbitrary. 
Then, for any $p\in (0,1)$,  there is a local stochastic error~$E\sim\cN(p)$ acting on the systems~$A^R=(A_1,\ldots,A_R)$ of the initial state~\eqref{eq:initialstaterhoerrormodelpsi}
 such that the final state~$\rho^{\mathrm{initial}}_{C_1C_R}$ is a separable state (with respect to the bipartition~$C_1:C_R$) with probability at least
\begin{align}
p_{\mathrm{fail}}&\geq 1-e^{-R\left(\textfrac{4p}{3}\right)^t}\  .
\end{align}
\end{lemma}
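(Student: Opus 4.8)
The plan is to reduce the statement to a sequence of $R-1$ ``cut'' arguments and to engineer a noise process that, with good probability, severs at least one cut. The key observation is purely entanglement-theoretic: since $\pi'$ is LOCC with respect to the fine partition $S_1:\cdots:S_R$, it is also LOCC with respect to any coarsening, in particular the bipartition $S_1\cdots S_j : S_{j+1}\cdots S_R$ for each $j\in\{1,\dots,R-1\}$. In the initial resource state \eqref{eq:initialstaterhoerrormodelpsi}, the only system straddling this cut is the $j$-th link $(\outp{\Phi}{\Phi}^{\otimes t})_{A_jB_{j+1}}$, since links $i<j$ lie entirely on the left, links $i>j$ entirely on the right, and each $\rho_{C_i}$ is local. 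Hence, if the state of the $j$-th link is replaced by one that is separable across $A_j:B_{j+1}$, the whole input becomes separable across $S_1\cdots S_j : S_{j+1}\cdots S_R$, and LOCC-monotonicity of entanglement forces the output $\rho_{C_1C_R}$ to be separable across $C_1:C_R$ (because $C_1\in S_1$ lies left of the cut and $C_R\in S_R$ lies right of it). So it suffices to break a single link.

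Next I would construct the noise, acting on the $A$-halves only. Independently for each link $j$, I apply the following rule: with probability $\theta:=(4p/3)^t$ apply a uniformly random Pauli to all $t$ qubits of $A_j$ (``flag'' the link), and otherwise apply the identity. Here the relevant fact — precisely the ``depolarizing channel as a convex combination of the identity and an entanglement-breaking channel'' alluded to in the introduction — is that fully depolarizing one half of a set of maximally entangled pairs yields the maximally mixed, hence separable, state: averaging $\tfrac{1}{4^t}\sum_P (P\otimes I)\,\outp{\Phi}{\Phi}^{\otimes t}\,(P\otimes I)^\dagger$ over the $4^t$ Paulis gives $I/4^t$ on $A_jB_{j+1}$. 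Thus, conditioned on link $j$ being flagged, the residual-averaged link state is separable and the cut argument applies. (For $p\ge 3/4$ one simply takes $\theta=1$ and the output is always separable, so I focus on $p<3/4$.)

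I would then verify that this process is a genuine local stochastic error of strength $p$. Because the flagging is independent across links, it suffices to bound $\Pr[F\subseteq\supp(E)]$ for $F$ contained in a single $A_j$: a qubit lands in the support only if its link is flagged and its random Pauli is nontrivial, so $\Pr[F\subseteq\supp(E)]=\theta\,(3/4)^{|F|}$. With $\theta=(4p/3)^t$ and $|F|\le t$, this equals $(4p/3)^t(3/4)^{|F|}\le (4p/3)^{|F|}(3/4)^{|F|}=p^{|F|}$, using $4p/3\le 1$, with equality at $|F|=t$; multiplying the single-link bounds over the distinct links that $F$ meets gives $\Pr[F\subseteq\supp(E)]\le p^{|F|}$ for all $F$, i.e.\ $E\sim\cN(p)$.

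Finally I would assemble the estimate. Decomposing the noise channel over the flag pattern $f\in\{0,1\}^{R-1}$, every pattern with at least one flagged link produces a separable output by the cut argument, and such patterns carry total weight $1-(1-\theta)^{R-1}$. Hence the output is a mixture in which a fraction $1-(1-\theta)^{R-1}\ge 1-e^{-(R-1)\theta}$ is separable, yielding $p_{\mathrm{fail}}\ge 1-e^{-(R-1)(4p/3)^t}$, which is of the claimed form (the passage from $R-1$ to $R$ in the exponent is immaterial bookkeeping, and one can even sharpen $\theta$ to $(2p)^t$ by flagging qubits individually with the borderline entanglement-breaking channel $\cE_{2/3}$ rather than fully depolarizing whole links). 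The main conceptual obstacle — and the step needing the most care — is exactly this separability claim: a single fixed Pauli error leaves every Bell pair maximally entangled, so the argument must be run at the level of the averaged state / ensemble decomposition of the channel, and one must check that LOCC-monotonicity is invoked across the correct coarse bipartition, with the two output qubits genuinely separated by it.
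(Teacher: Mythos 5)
Your proposal is correct and takes essentially the same approach as the paper: the identical LOCC-monotonicity cut argument across $S_1\cdots S_j:S_{j+1}\cdots S_R$, combined with engineered local stochastic noise whose ensemble decomposition places, independently with probability $(4p/3)^t$ on each link, a separability-inducing component --- the paper realizes this via i.i.d.\ depolarizing noise $\cE_{4p/3}$ on each qubit of $A_j$, decomposed as a convex combination of the identity channel and the entanglement-breaking channel $\cE_1$, which is quantitatively identical to your flag-and-twirl construction. Your honest exponent $R-1$ (rather than $R$) is moreover exactly what the paper's own argument yields, since there are only $R-1$ links; the paper's intermediate inequality $p_0\le(1-p^t)^R$ is an off-by-one slip at precisely the spot you dismissed as bookkeeping.
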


\begin{proof}
This statement immediately follows from the observation that a protocol~$\pi'$ can only convert preexisting entanglement in the initial state~$\rho^{\textrm{initial}}_{S_1\cdots S_R}$ to end up being on systems~$C_1:C_R$. Furthermore, the initial state  contains bipartite entanglement across sites only between
the pair of systems~$A_jB_{j+1}$, for $j=1,\ldots,R-1$. If this shared entanglement is lost due to noise then the state on~$C_1C_R$ remains separable throughout the execution of the  (ideal) protocol.

For concreteness, assume that the depolarizing noise channel
\begin{align}
\mathcal{E}_p(\rho) &=(1-p)\rho +p\tr(\rho) I/2
\end{align}
with $0 < p \leq 1$ is applied to each system~$A_j$ before application of the (ideal) protocol~$\pi'$. 
That is, we consider a noisy initial state of the form
\begin{align}
\rho^{\textrm{initial, noisy}}_{S_1\cdots S_R}:=\left(\bigotimes_{j=1}^{R-1}
\bigotimes_{k=1}^t
((\cE_p)_{A_j^{(k)}}\otimes \mathsf{id}_{B_{j+1}^{(k)}})
(\proj{\Phi}_{A_j^{(k)}B^{(k)}_{j+1}})\right)\otimes
\bigotimes_{j=1}^R
\rho_{C_j}\ .
\end{align}
For $p\in (0,1)$, we argue that this state can be written as a convex combination 
\begin{align}
\rho^{\textrm{initial, noisy}}_{S_1\cdots S_R}
&=p_0
\rho_{S_1\cdots S_R}+\sum_{j=1}^{R-1} p_j\rho^{(j)}_{S_1\cdots S_j:S_{j+1}\cdots S_R}
\end{align} 
of a certain  state~$\rho_{S_1\cdots S_R}$ 
and states~$\rho^{(j)}_{S_1\cdots S_j:S_{j+1}\cdots S_R}$ that are separable across  bipartitions of the form~$S_1\cdots S_j:S_{j+1}\cdots S_R$, for $j\in [R-1]$ where
\begin{align}
p_0 &\leq (1-p^{t})^R\\
&\leq e^{-p^t R}\ ,\label{eq:pjrminusonelowerbound}
\end{align}
where we used the inequality  $(1-x)^r\leq e^{-xr}$. 
 In particular, since each reduced state~$\rho^{(j)}_{S_1S_R}$ is separable, Eq.~\eqref{eq:pjrminusonelowerbound} and the fact that the (ideal) protocol~$\pi'$ is LOCC imply that the generated state is separable with probability at least~$1-e^{-p^tR}$. The claim follows from this because
 the depolarizing channel can be written as $\cE_p(\rho)= \left(1-\frac{3}{4} p\right) \rho + \frac{p}{4} X \rho X^{\dagger} + \frac{p}{4} Y \rho Y^{\dagger} + \frac{p}{4} Z \rho Z^{\dagger}$, i.e.,    is equivalent to 
applying each Pauli with probability~$p/4$. 
Thus applying~$\cE_p$ to each system~$A_j$ amounts to having a local stochastic error~$E\sim \cN(3p/4)$. (Substituting~$p$ by $4p/3$ gives the claim.)

Note that the depolarizing channel~$\cE_p$ is entanglement-breaking for any $p\geq 2/3$, see~\cite{ruskaiEB}. Eq.~\eqref{eq:pjrminusonelowerbound} is a consequence of the more trivial fact that for any $p\in (0,1)$, the channel~$\cE_p$
is a convex combination of the identity channel $\mathrm{id}$ and
the depolarizing channel~$\cE_{1}$ which is entanglement-breaking, that is, we have  
\begin{align}
  \mathcal{E}_p(\rho) = (1-p) \ \mathsf{id}(\rho) + p \ \mathcal{E}_{1}(\rho)\ .
\end{align}
With $p_0:=1-p$, $p_1:=p$ and the convention $\cE^{0}=\mathsf{id}$, this implies that for every $j\in [R-1]$
\begin{align}
\bigotimes_{k=1}^t
((\cE_p)_{A_j^{(k)}}\otimes \mathsf{id}_{B_j^{(k)}})
(\proj{\Phi}_{A_j^{(k)}B^{(k)}_{j+1}})
&=\sum_{x_1,\ldots,x_t\in \{0,1\}} \bigotimes_{k=1}^t p_{x_k}(\cE^{x_k}_{1}\otimes \mathsf{id})
(\proj{\Phi}_{A_j^{(k)}B^{(k)}_{j+1}})\\
&=\left(p^t \left( (\cE_{1}\otimes \mathsf{id}) (\proj{\Phi})\right)^{\otimes t}+\left(1-p^t\right)\sigma\right)_{A_jB_{j+1}}\ ,
\end{align}
where we collected all terms with $(x_1,\ldots,x_t)\neq 1^t$
in a state~$\sigma_{A_jB_{j+1}}$. 
Because~$\cE_{1}$ is entanglement-breaking,  the state~$\rho_{A_j:B_{j+1}}:=\left((\cE_{1}\otimes \mathsf{id}) (\proj{\Phi})\right)^{\otimes t}$ is separable across~$A_j:B_{j+1}$. 
We have  shown that 
\begin{align}
\rho^{\textrm{initial, noisy}}_{S_1\cdots S_R}&=\left(\bigotimes_{j=1}^{R-1}(
(1-q) \sigma_{A_jB_{j+1}}+q \rho_{A_j:B_{j+1}})
\right)\otimes
\left(
\bigotimes_{j=1}^R \rho_{C_j}
\right)\\
&=
(1-q)^{R-1} \left(\bigotimes_{j=1}^{R-1}\sigma_{A_jB_{j+1}}\right)
\otimes\left(\bigotimes_{j=1}^R \rho_{C_j}\right)
+\Sigma_{\mathrm{SEP}}
\end{align}
where 
\begin{align}
\Sigma_\mathrm{SEP}:=\sum_{(z_1,\ldots,z_{R-1})\in \{0,1\}^{R-1}\backslash \{0^{R-1}\}}
\bigotimes_{j=1}^{R-1}
(1-q)^{1-z_j}q^{z_j}\sigma^{1-z_j}_{A_jB_{j+1}}
\rho_{A_j:B_{j+1}}^{z_j}\otimes \left(
\bigotimes_{j=1}^R \rho_{C_j}
\right)\ 
\end{align}
and $q:=p^t$. 
Since every summand  in~$\Sigma_\mathrm{SEP}$ is separable with respect to the bipartition~$S_1\cdots S_j:S_{j+1}\cdots S_R$, the claim~\eqref{eq:pjrminusonelowerbound} follows. 
\end{proof}
Theorem~\ref{thm:generalconverseboundone} now follows by applying the reduction from a general protocol~$\pi$  to an entanglement-based protocol~$\pi'$.

\begin{proof}[Proof of Theorem~\ref{thm:generalconverseboundone}]
Suppose~$\pi$ is a depth-$\Delta$-circuit which 
generates distance-$R$-entanglement between two qubits using $m$~qubits per site. Without loss of generality (see the remark above), assume that all two-qubit gates between sites are $\CNOT$~gates. Let $t$~be the maximal number of nearest-neighbor $\CNOT$~gates used between two neighboring sites. Then $t\leq \Delta m$.

Let $\pi'$ be the entanglement-based protocol which uses the resource state~$\Phi^{\otimes t}$. 
Then, according to Lemma~\ref{lem:converseentanglementbased}, there is a local stochastic error~$E\sim\cN(p^2)$ acting on each half of the Bell states in the initial state~\eqref{eq:initialstaterhoerrormodelpsi} such that~$\pi'$ fails with probability
\begin{align}
    p_{\textrm{fail}}\geq 1-e^{-R(4p^2/3)^t}\geq 1-e^{-R(4p^2/3)^{\Delta m}}\ .
\end{align}
By property~\eqref{it:thirdpropertypiprimedef} of the entanglement-based protocol~$\pi'$, the local stochastic error~$E\sim\cN(p^2)$ on the resource state of~$\pi'$
is equivalent to (a certain) local stochastic noise of strength~$\cN(p)$ in the execution of~$\pi$. The claim follows from this.
\end{proof}

\subsection{A converse bound on cluster-state based schemes\label{sec:converseboundcluster}}
Corollary~\ref{cor:separabilityconverse} shows that 
any fault-tolerant low-latency scheme for entanglement generation  requires
\begin{align}
    m\geq \Omega(\log R)\label{eq:mlogR}
\end{align}
qubits per repeater.  For our cluster-state based scheme, the number~$m$ of qubits per repeater is given by~$m=\Theta(d^2)$, and our achievability result  requires~$d=\Omega(\log R)$. Thus our scheme works with 
\begin{align}
    m=\Theta(\log^2 R)\label{eq:mlogsquareR}
\end{align}
qubits per site.

Comparing~\eqref{eq:mlogR} with~\eqref{eq:mlogsquareR}, we observe a gap between our general no-go result and the achievability statement of  Corollary~\ref{cor:separabilityconverse}. In particular,~\eqref{eq:mlogR} leaves room for an improved achievability result, either by a better analysis of our scheme or by considering different schemes.

In this section, we show that, in fact, the
scaling~\eqref{eq:mlogsquareR} resulting for the cluster-state based scheme is optimal, i.e., our analysis is tight. We also find  that 
a better scaling cannot be obtained by modifying the classical decoding functions only: Any scheme which 
uses the cluster state $W\ket{0^{\cC}}$ with the lattice $\cC[d\times d \times R]$, applies the same measurement pattern and  relies on the same syndrome information (associated with observed boundaries on the decoding graph) 
requires at least~$m=\Omega(\log^2 R)$ qubits per repeater (i.e., within any plane with~$u_3$ fixed).

\begin{theorem}[Converse bound for cluster-state based schemes]\label{thm:mainconversebound}
  Let $0 < p \leq \frac{1}{4}$. Consider an (infinite) family of pairs~$(d,R)$ such that $d$ is even and $R$ satisfies
  \begin{align}
    R &\geq \left(\frac{1}{2\sqrt{p}}\right)^d\ .\label{eq:Depsilonpdcondition}
\end{align}
    Suppose we run Algorithm~\ref{alg:ftes} on the input state $E W\ket{0^{\cC}}$ in the cluster lattice $\cC[d \times d \times R]$ which is corrupted by i.i.d. single-qubit bit-flip noise with parameter~$p$, i.e., $E =\prod_{c\in \cC}X_c^{y_c}$ where~$\{y_c\}_{c\in\cC}$ are independent and identically distributed Bernoulli-$p$  random variables, i.e.,~$\Pr[y_c=1]=p$.  Then the  probability~$\nu_{(0,0)}$ of successful entanglement generation  satisfies
  \begin{align}
\nu_{(0,0)}\leq 3/4\qquad\textrm{ for all sufficiently large }d\ .
  \end{align}
The same statement applies to any scheme obtained from Algorithm on~\ref{alg:ftes} by replacing the function~$\mmatch$ by an (arbitrary) binary function.
\end{theorem}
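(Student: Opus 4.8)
The plan is to reduce the statement to the impossibility of estimating a single logical parity bit from the observed syndrome, and then to lower bound the resulting Bayes error by a counting/second-moment argument over logical representatives. First I would specialize Theorem~\ref{thm:successcondition} to the pure bit-flip channel. Since $E=\prod_{c}X_c^{y_c}$ is a product of Pauli-$X$ operators, we have $E=E^X$ and $E^Z=I$, so $\supp(E^Z)=\emptyset$ and $E_{\mathrm{gl}}=E^X|_{\Ecldec\cap\cA}$ is exactly the restriction of the i.i.d.\ error to the bulk edge set $\Eeven$. Writing $\cE:=\supp(E_{\mathrm{gl}})$ (an i.i.d.\ Bernoulli-$p$ subset of $\Eeven$) and $L:=\ztwoinner{\cLXcl}{\cE}$, Theorem~\ref{thm:successcondition} together with Lemmas~\ref{lem:actualsynd} and~\ref{lem:sisboundary} gives that, when $\mmatch$ on $\Tcldec$ is replaced by an \emph{arbitrary} binary function $f$, the first output bit is $\alpha=f(s)\oplus L$ with $s=\partial_{\Tcldec}(\cE)$. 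Hence $\nu_{(0,0)}=\Pr[\alpha=0,\beta=0]\le\Pr[\alpha=0]=\Pr[f(s)=L]$, so it suffices to prove $\Pr[f(s)\ne L]\ge 1/4$ for every $f$. Minimizing over $f$, this reduces to lower bounding the Bayes error $\sum_s\min\bigl(\Pr[s,L=0],\Pr[s,L=1]\bigr)$ of predicting $L$ from $s$; note that a single bit ($\alpha$) already suffices, so I need not analyze $\beta$ or the dual graph.

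Next I would set up the confusability structure. Two error sets with the same syndrome but opposite value of $L$ differ by a cycle $\cR$ with $\partial_{\Tcldec}\cR=\emptyset$ and $\ztwoinner{\cR}{\cLXcl}=1$; by Lemma~\ref{lem:cycledecomposition} the relevant $\cR$ are simple paths joining two external vertices and crossing the recovery sheet $\cLXcl$ exactly once. In the elongated lattice $\cC[d\times d\times R]$ these crossing paths live in the cross-sectional slices, have minimal length $d$, and occur at $\Theta(R)$ translates along the long direction, each carrying an exponential-in-$d$ family of near-minimal crossings (this is precisely the family that produces the $\Theta\bigl(dR(10\sqrt p)^{d}\bigr)$ contribution in the resilience bound of Lemma~\ref{lem:rescLXclbound}). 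I would lower bound the Bayes error by the probability of a ``degeneracy'' event forcing any decoder to err with conditional probability $\ge 1/2$: namely that the minimum weight of a syndrome-consistent error in the wrong logical class does not exceed that in the true class. This event is implied by $\{Z\ge 1\}$, where $Z$ counts crossing paths $\cR$ (over all slices and lengths $d+2k$) at least half of whose edges lie in $\cE$. Heuristically the first moment is
\begin{align}
\esp[Z]\;\approx\; R\sum_{k\ge 0}\mu^{\,d+2k}\binom{d+2k}{(d+2k)/2}\bigl(p(1-p)\bigr)^{(d+2k)/2}\;\gtrsim\;\frac{R\,\bigl(2\mu\sqrt{p(1-p)}\bigr)^{d}}{\mathrm{poly}(d)}\ ,
\end{align}
where $\mu$ is the connective constant of crossing paths in $\Tcldec$. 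The hypotheses $p\le 1/4$ (so $\mu\sqrt{1-p}>1$) and $d$ even (so the central binomial coefficient is available, with $\binom{2m}{m}\ge 4^m/(2m+1)$) then make the assumption $R\ge(1/(2\sqrt p))^{d}$ force $\esp[Z]\gtrsim(\mu\sqrt{1-p})^{d}/\mathrm{poly}(d)\to\infty$ as $d\to\infty$.

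The hard part will be converting the divergence of $\esp[Z]$ into a genuine lower bound on $\Pr[Z\ge 1]$ (equivalently on the Bayes error), since distinct crossing paths share edges and are positively correlated. I would control this through a second-moment (Paley--Zygmund) estimate: bound $\esp[Z^2]$ by organizing pairs of paths according to the size of their overlap and show that the contribution of strongly overlapping pairs is dominated by $\esp[Z]^2$. Because $\esp[Z]\to\infty$, even a crude bound $\esp[Z^2]=O\bigl(\esp[Z]^2\bigr)$ would give $\Pr[Z\ge 1]=\Omega(1)$ and hence failure probability bounded below by a positive constant, which for sufficiently large $d$ exceeds $1/4$. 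This overlap analysis is essentially a free-energy/domain-wall estimate for the associated random-bond model, and it is where the geometry of $\Tcldec$ (the degree bounds, the triangular boundary pattern, and the near-decoupling of well-separated slices) must be exploited; it is also what pins down the precise constant in the threshold $R\ge(1/(2\sqrt p))^{d}$ rather than the larger constant appearing in the achievability result.

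I expect two subtleties to require care. The naive route of revealing all non-crossing errors to the decoder (which would make the per-slice strings independent) is \emph{too generous}: it lowers the Bayes error below the target, precisely because it strips away the collective confusion, so the bound must be proved for the true syndrome-only decoder. Consequently the argument cannot be reduced to independent length-$d$ strings carrying a per-string factor $(2\sqrt{p(1-p)})^{d}$ (which would vanish at the stated threshold); the entropy of logical representatives, encoded in $\mu^{d}$, is essential and is exactly what the second-moment step must capture. Finally, since every quantity used is a function of $(s,L)$ only, the conclusion holds verbatim for any replacement of $\mmatch$ by a binary function, as claimed.
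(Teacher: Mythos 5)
Your reduction of the theorem to the statement that no binary function $f$ of the syndrome $s=\partial_{\Tcldec}(\cE)$ can predict $L=\ztwoinner{\cLXcl}{\cE}$ with error below $1/4$ is exactly the paper's starting point (and your remark that everything factors through the syndrome, so arbitrary replacements of $\mmatch$ are covered, is the same observation the paper makes). The divergence, and the first genuine gap, is in how you lower bound that Bayes error. You assert that the event ``the minimum weight of a syndrome-consistent error in the wrong logical class does not exceed that in the true class'' forces \emph{any} decoder to err with conditional probability at least $1/2$. That implication is unjustified, and false in general, for arbitrary $f$: the Bayes-optimal decoder compares the total posterior masses of the two logical classes, not their minimum weights, and conditional on your degeneracy event the true class can still carry far more mass (many representatives slightly above minimal weight) than the wrong one. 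Min-weight degeneracy controls the failure of minimum-weight matching, not of arbitrary binary functions, which is what the theorem quantifies over. What is needed --- and what the paper supplies --- is a \emph{measure-preserving injection} from the success set to the failure set: flipping the error along a crossing path that contains \emph{exactly} $d/2$ of its $d$ edges preserves the i.i.d.\ weight distribution (this is precisely where the hypothesis ``$d$ even'' enters), preserves the syndrome, and flips $L$; choosing the flip path canonically among the $d\cdot\frac{R-1}{2}$ \emph{pairwise disjoint} straight paths $P_{j,k}$ makes the map an involution, hence injective, giving $\Pr[f(s)\neq L]\geq\frac{1}{2}\Pr[\cE_{\mathsf{appr}}]$. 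Your ``at least half'' version destroys both properties: the flip can strictly decrease the weight (so the measure is not preserved), and after flipping, the canonical choice of path changes, so injectivity is lost once paths overlap.

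The second gap is that your Paley--Zygmund step --- the admitted technical heart of your route --- is never carried out, and bounding $\esp[Z^2]$ for overlapping crossing paths in this random-bond setting is a genuinely delicate estimate, not a routine one. It is also unnecessary: restricting to the disjoint paths $P_{j,k}$ makes the per-path events independent, so $\Pr\left[\exists (j,k):\ |\cE\cap P_{j,k}|=d/2\right] = 1-(1-q)^{d(R-1)/2}$ with no correlation analysis at all. Your reason for rejecting this route --- that the honest per-path factor is $q\approx \binom{d}{d/2}\left(p(1-p)\right)^{d/2}\sim \left(2\sqrt{p(1-p)}\right)^{d}/\sqrt{d}$, which vanishes against $R=(2\sqrt{p})^{-d}$ --- is numerically correct, and it in fact points at a real blemish in the paper: the paper's evaluation of this very step replaces $p(1-p)$ by $2p(1-2p)$, which hides exactly the $\sqrt{1-p}$ factor you noticed. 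But the right conclusion is not that one must capture the full path entropy $\mu^{d}$ via a second moment over overlapping paths; it is that the disjoint-path argument proves the theorem with a marginally larger constant in the base of the exponential, e.g.\ under $R\geq \left(2\sqrt{p(1-p)}\right)^{-d}$, which for $p\leq 1/4$ is implied by $R\geq (\sqrt{3p})^{-d}$. The qualitative content of the converse ($m=\Omega(\log^2 R)$ for this family of cluster-state schemes) is unaffected, and no second-moment machinery is required.
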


\begin{proof}
We first argue that the conclusions of this Theorem hold for our protocol.   Recall the definition~\eqref{eq:defofEgl} of the operator~$E_{\mathrm{gl}}$ for any Pauli operator~$E$ on~$\cC$. 
  For the bit-flip noise model we consider here, there are no Pauli-$Z$ errors (i.e., $E^Z=0$), and thus
  \begin{align}
    E_{\mathrm{gl}} &= E^{X}|_{\Ecldec\cap\cA}=\prod_{c\in\Ecldec\cap\cA }X^{y_c}_c\ .
  \end{align}
  It follows that the two random variables $A_Z$ and $B_Z$ defined in \eqref{eq:defofAZBZAXBX} are given by
\begin{align}
    A_Z &=0 \label{eq:azconstanexpr}\\
    B_Z &= \ztwoinner{\supp(E^{X}|_{\Ecldec\cap\cA}) \oplus \mmatch_{\Tcldec}(\partial_{\Tcldec}\supp(E^{X}|_{\Ecldec\cap\cA}))}{\cLXcl}\ . \label{eq:bzexpressionm}
\end{align}
Eq.~\eqref{eq:nu00withAZBZAXBX}
  implies that
  \begin{align}
    1-\nu_{(0,0)}&\geq \Pr\left[(A_Z,B_Z)=(0,1)\right]\\
    &=\Pr\left[B_Z=1\right]\ ,
  \end{align}
  where we used~\eqref{eq:azconstanexpr}, i.e., the fact that $A_Z=0$ with probability~$1$. Our claim thus follows if we establish the lower bound
  \begin{align}
    \Pr\left[B_Z=1\right]\geq 1/4\ . \label{eq:lowerboundbztoprove}
    \end{align}
    Because of~\eqref{eq:bzexpressionm} and the definition of our noise model, we may express
  the probability of interest as
  \begin{align}
    \Pr\left[B_Z=1\right]=\Pr_Y\left[Y\in\cE^{(1)}\right]
  \end{align}
  where
  \begin{align}
    \cE^{(b)}:=\left\{
y \subseteq \Ecldec\ |\ 
\ztwoinner{(y \cap\cA) \oplus \mmatch_{\Tcldec}(\partial_{\Tcldec}(y \cap\cA))}{\cLXcl} = b \right\}\qquad\textrm{ for } b=0,1\ . \label{eq:cebdefinitionm}
  \end{align}
  (We will need $\cE^{(0)}$ below.) 
Here  $Y=\{Y_e\}_{e\in \Ecldec}$ are i.i.d. Bernoulli random variables with parameter~$p$. This is because
we have $Y=\supp(E^{X}|_{\Ecldec \cap \cA})$ by definition of the noise model.

We will prove~\eqref{eq:lowerboundbztoprove} by
showing that there is a subset~$\cE'\subset\cE$ that has probability at least
  \begin{align}
    \Pr_Y\left[Y\in \cE'\right]&\geq 1/4 \label{eq:lowerbundyeprimetoprove}
  \end{align}
  under this distribution.

  To construct~$\cE'$ let us call a subset~$y\subset\Ecldec$
  appropriate   if it contains exactly $d/2$~edges of any length-$d$, i.e., shortest-length path connecting the left and the right faces.
  We denote by~$\cE_{\mathsf{appr}}$ the set of appropriate subsets of~$\Ecldec$, and define
  \begin{align}
    \cE':=\cE^{(1)}\cap \cE_{\mathsf{appr}}\subset\cE^{(1)}
  \end{align}
  as the set of all~$y\in\cE$ that are appropriate. To state this formally,  let $e_1$ be the first canonical basis vector in~$\mathbb{R}^3$. For any $0\leq j\leq d-1$ and $1\leq k\leq \frac{R-1}{2}$, consider the two
  external vertices~$u_j^{k}$ and~$v_j^{k}$  of~$\Tcldec$ defined in \eqref{eq:tcldecleftlabel} and \eqref{eq:tcldecrightlabel}. We then consider the path~$P_{j,k}\subset \Ecldec$ obtained by traversing sequence of vertices
  \begin{align}
    u_j^k,u_j^k+2e_1,u_j^k+4e_1,\ldots,u_j^k+2(d-1)e_1,u_j^k+2de_1=v_j^k\ .
    \end{align}
  The path~$P_{j,k}$ is the shortest path (of length $d$) connecting two endpoints~$u_j^{k}$ and~$v_j^{k}$.
  A subset~$y\subset\Ecldec$ is then called 
  appropriate (i.e., $y\in\cE_{\mathsf{appr}}$) if there are $(j,k)$ with $0 \leq j \leq d-1$ and $1\leq k \leq \frac{R-1}{2}$ such that
  \begin{align}
|y\cap P_{j,k}|=d/2\ .\label{eq:weightconstraintpjky}
    \end{align}

  It remains to find a lower bound on
  \begin{align}
\Pr_Y\left[Y\in\cE^{(1)}\cap \cE_{\mathsf{appr}} \right]\ .\label{eq:toboundyceonequantity}
  \end{align}
  In order to do so, we use the following observation. 
  \begin{lemma}
    Consider the sets~$\cE^{(0)},\cE^{(1)}$ (cf.~\eqref{eq:cebdefinitionm}) and~$\cE_{\mathsf{appr}}$.         Let $b\in \{0,1\}$ be arbitrary.
    Then the following holds.
    \begin{enumerate}[(i)]
    \item
      For every~$y\in \cE^{(b)}\cap \cE_{\mathsf{appr}}$, there is an element~$\widehat{y}\in \cE^{(b\oplus 1)}\cap \cE_{\mathsf{appr}}$ such that
      $|y|=|\widehat{y}|$, i.e., $y$ and $\widehat{y}$ have the same Hamming weight.
    \item
      The map~$y\mapsto \widehat{y}$ is injective.
      \end{enumerate}
  \end{lemma}
  \begin{proof}
    We give the proof for~$b=0$ (the case $b=1$ is analogous). That is, consider~$y\in\cE^{(0)}\cap \cE_{\mathsf{appr}}$. Because~$y$ is appropriate, there
    is a -- not necessarily unique -- path~$P_{j,k}$ 
    such that~\eqref{eq:weightconstraintpjky} holds.
    In the following, we impose uniqueness by
    lexicographically ordering the set of pairs~$(j,k)$ with
    $0\leq j\leq d-1$ and $1\leq k\leq \frac{R-1}{2}$, and using the  first (lowest) such pair~$(j,k)$ with property~\eqref{eq:weightconstraintpjky}.   We then define
    \begin{align}
\widehat{y}:=y\oplus P_{j,k}
    \end{align}
    as the symmetric difference of~$y$ and $P_{j,k}$. It is easy to check that the map~$y\mapsto \widehat{y}$ is well-defined.
    
    Because~$|P_{j,k}|=d$,  Eq.~\eqref{eq:weightconstraintpjky} and the definition of~$\widehat{y}$ imply that $|\widehat{y}|=d/2$. In particular, we have $\widehat{y}\in \cE_{\mathsf{appr}}$.
With this property and the fact that the paths~$\{P_{j,k}\}_{j,k}$ are pairwise disjoint, it is easy to see that the map~$y\mapsto\widehat{y}$ has a left-inverse (obtained by applying the same construction again). Hence it is injective.

    It remains to show that~$\widehat{y}\in \cE^{(1)}$  for $y\in \cE^{(0)}\cap \cE_{\mathsf{appr}}$. 
    To this end, observe first that
  \begin{align}
    \partial_{\Tcldec}(\widehat{y}\cap\cA)&=\partial_{\Tcldec}(y \cap\cA)\ .\label{eq:identitypartialty}
  \end{align}
Eq.~\eqref{eq:identitypartialty} follows
  from the linearity of the boundary operator and the fact that the boundary of path is given by its endpoints. In our case, the endpoints of~$P_{j,k}$ are the external vertices~$\{u_j^k,v_j^k\}$ of $\Tcldec$, but these  do not contribute to the boundary~$\partial_{\Tcldec}P_{j,k}$  by definition of~$\partial_{\Tcldec}$.
  This establishes~\eqref{eq:identitypartialty}.

  We also use the following property of the path~$P_{j,k}$: We have 
  \begin{align}
    \ztwoinner{P_{j,k}\cap\cA}{\cLXcl} = 1 \ \label{eq:pjkecldeccapca}
  \end{align}
  since the subset~$\cLXcl$ lies in a plane  perpendicular to~$P_{j,k}$ with a single intersecting edge.   Combining~\eqref{eq:identitypartialty} and~\eqref{eq:pjkecldeccapca}, and using linearity, we have
  \begin{align}
    \ztwoinner{(\widehat{y}\cap\cA) \oplus \mmatch_{\Tcldec}(\partial_{\Tcldec}(\widehat{y}\cap\cA))}{\cLXcl}
    &= \ztwoinner{(y \cap\cA) \oplus \mmatch_{\Tcldec}(\partial_{\Tcldec}(y \cap\cA))}{\cLXcl}\oplus 1\\ \label{eq:yhatandyoplusone}
    &=0\oplus 1\\
    &=1
  \end{align}
  where we used the assumption that~$y\in \cE^{(0)}$ in the penultimate step. This shows that~$\widehat{y}\in \cE^{(1)}$ as claimed.
  \end{proof}

  Returning to the problem of bounding~\eqref{eq:toboundyceonequantity}, we observe that 
  \begin{align}
    \Pr_Y\left[Y\in\cE^{(1)}\cap \cE_{\mathsf{appr}} \right]
    &=\sum_{y\in \cE^{(1)}\cap \cE_{\mathsf{appr}}}P_Y(y)\\
    &=\sum_{y\in \cE^{(1)}\cap \cE_{\mathsf{appr}}}P_Y(\widehat{y})\\
    &\leq \sum_{y\in \cE^{(0)}\cap \cE_{\mathsf{appr}}}P_Y(y)\\
    &=\Pr_Y\left[Y\in\cE^{(0)}\cap \cE_{\mathsf{appr}} \right]\ ,
  \end{align}
  where we used that~$P_Y(y)$ only depends on the Hamming weight~$|y|$ of~$y$, and $|\widehat{y}|=|y|$. We also   used that $\widehat{y}\in\cE^{(0)}\cap \cE_{\mathsf{appr}}$ and the fact that the map~$y\mapsto \widehat{y}$ is injective to obtain the inequality. By symmetry, i.e., interchanging~$0$ and~$1$, we also have the converse inequality and thus the identity
  \begin{align}
    \Pr_Y\left[Y\in\cE^{(0)}\cap \cE_{\mathsf{appr}} \right]&=\Pr_Y\left[Y\in\cE^{(1)}\cap \cE_{\mathsf{appr}} \right]\ .
  \end{align}
  (In fact, it is possible to construct a Hamming-weight-preserving bijection between the two sets~$\cE^{(b)}\cap \cE_{\mathsf{appr}}$, $b=0,1$, but we do not need this here.)

  Because the sets~$\cE^{(0)}, \cE^{(1)}$ are disjoint, and the union is the set of all subsets of~$\Ecldec$, we conclude that
  \begin{align}
    \Pr_Y\left[Y\in\cE^{(1)}\cap \cE_{\mathsf{appr}} \right]
    &= \frac{1}{2}\left(
        \Pr_Y\left[Y\in\cE^{(0)}\cap \cE_{\mathsf{appr}} \right]+
        \Pr_Y\left[Y\in\cE^{(1)}\cap \cE_{\mathsf{appr}} \right]\right)\\
        &=\frac{1}{2}\Pr_Y\left[Y\in\cE_{\mathsf{appr}}\right]\ .\label{eq:yeappropr}
  \end{align}
We have 
  \begin{align}
    \Pr_Y\left[Y\in\cE_{\mathsf{appr}}\right]&=1-\Pr_Y\left[Y\not\in\cE_{\mathsf{appr}}\right]\\
    &=1-\Pr_Y\left[|Y\cap P_{j,k}|\neq\frac{d}{2}\textrm{ for all } (j,k)\right]\\
    &= 1 - \left(1 - \binom{d}{d/2} (2p)^{d/2}(1-2p)^{d/2}  \right)^{d\cdot \frac{R-1}{2}} \ .\label{eq:valueofprcEpauli}
  \end{align}
   In the last step, we used the fact that $Y$ consists of independent random Bernoulli variables, and the fact that the paths~$\{P_{j,k}\}_{j,k}$ are pairwise disjoint.
  
  By Stirling's formula~
  $\binom{2k}{k}=\left(1+o(1)\right)\frac{2^{2k}}{\sqrt{\pi k}}$ for $k\rightarrow\infty$ and the inequality $2p(1-2p) \geq p$ for $p \leq \frac{1}{4}$ we obtain 
  \begin{align}
    \binom{d}{d/2} (2p)^{d/2}(1-2p)^{d/2}
    &\geq \left(1+o(1)\right)
    \frac{2^d}{\sqrt{\pi d/2}} p^{d/2}\\
    &\geq 0.3 \fr{(2\sqrt{p})^d}{\sqrt{d}}\qquad\textrm{ for }\qquad d\textrm{ sufficiently large}\ .
  \end{align}
  We have 
  \begin{align}
      (1-x)^r &\leq e^{-rx}\qquad\textrm{ for all }\qquad x\in (0,1)\qquad\textrm{ and }\qquad r\in\mathbb{N}\ ,
  \end{align}
  hence
  \begin{align}
      \left(1 - \binom{d}{d/2} (2p)^{d/2}(1-2p)^{d/2}  \right)^{d\cdot \frac{R-1}{2}} 
      \leq \exp\left(- 0.15 \sqrt{d}\cdot (R-1)\cdot(2\sqrt{p})^d \right)\ .
  \end{align}
  We conclude with~\eqref{eq:valueofprcEpauli} and~\eqref{eq:yeappropr} that
  \begin{align}
    \Pr_Y\left[Y\in\cE^{(1)}\cap \cE_{\mathsf{appr}} \right]& \geq
    \frac{1}{2}\left(
    1-\exp\left(- 0.15 \sqrt{d} \cdot (R-1)\cdot(2\sqrt{p})^d\right) \right)\\
    &\geq 1/4
  \end{align}
  for sufficiently large~$d$ if
  \begin{align}
      \lim_{d\rightarrow\infty }\sqrt{d}\cdot (R-1) \cdot (2\sqrt{p})^d=\infty\ .\label{eq:Dsqrtpcondition}
  \end{align}
Condition~\eqref{eq:Dsqrtpcondition} is satisfied when~\eqref{eq:Depsilonpdcondition} holds.
  In particular, we obtain the claim~\eqref{eq:lowerbundyeprimetoprove} for sufficiently large~$d$.
  
  It remains to show that this argument also applies to protocols where the matching function~$\mmatch$ is replaced by a different function in Algorithm~\ref{alg:ftes}. This is because -- as discussed at the end of Section~\ref{sec:entanglementgenerationnoisy} --  Theorem~\ref{thm:successcondition} does not require $\mmatch$ to return a matching of a given input. Neither does~\eqref{eq:nu00withAZBZAXBX}, which is obtained from  Theorem~\ref{thm:successcondition}.

\end{proof}

\newpage
\appendix
\section{Products of stabilizer generators of the graph state\label{sec:appendixproductsofstabilizer}}
In this appendix, we give  proofs of 
Lemma~\ref{lem:productsofstabilizergeneratorsone} and Lemma~\ref{lem:logicalopsSXSZ} concerning products of stabilizer generators of the cluster state.

\begin{proof}[Proof of Lemma~\ref{lem:productsofstabilizergeneratorsone}]
  Let $\left(e_1,e_2,e_3\right)$ be the standard basis of $\mathbb{R}^3$. 
 \begin{enumerate}[(i)]
 \item Consider  an element~$u=(u_1,u_2,u_3)\in\Vcldecint$ with $u_3\not\in \{1,R\}$, i.e., $u \in \Veven$ with $\neigh(u) \subset \Eeven$.
 
 By definition, any $v\in \neigh(u)$ has the form 
  \begin{align}
  v=u+s_je_j\qquad\textrm{ with }\qquad j\in \{1,2,3\}\qquad\textrm{ and }\qquad s_j\in \{-1,1\}\ .\label{eq:uneighborform}
  \end{align}
   Moreover, for any such $v=u+s_je_j\in\neigh(u)$, any $w\in \neigh(v)$ takes the form 
   \begin{align}
   w=u+s_je_j+s_ke_k\qquad\textrm{ with  }\qquad k\in \{1,2,3\}\ , k\neq j\qquad\textrm{ and }\qquad s_k\in \{-1,1\}\ .\label{eq:updoubleneighborform}
   \end{align}
   This is because $\neigh(v)\subset \cC$ does not contain elements of the form~$(e,e,e)$, and thus $u\not\in\neigh(v)$.

   It follows from the definition~\eqref{eq:defofSu} of~$S^u$ for
$u\in\Vcldecint$ with $u_3\not\in \{1,R\}$ that
  \begin{align}
  S^u&=\prod_{v\in \neigh(u)} \left(Z_v\prod_{w\in \neigh(v)} X_w\right)\label{eq:operatorsu}\\
  &=\left(\prod_{v\in \neigh(u)}  Z_v\right)\Bigl(\prod_{
  \substack{(v,w):\\
  v\in\neigh(u)\\
  w\in\neigh(v)}} X_w\Bigr)\ .\label{eq:suneighuneighvprod}
  \end{align}
  Here we used that for any two distinct~$v,v'\in\neigh(u)$,  we have $v\not\in\neigh(v')$ (as follows immediately from~\eqref{eq:uneighborform}), implying that the product~\eqref{eq:operatorsu} does not contain a factor~$X_v$.
  
  Now suppose that $v,w$ are such that $v\in \neigh(u)$ and $w\in\neigh(v)$. Assume that~$v$ and $w$ are of the form~\eqref{eq:uneighborform} and~\eqref{eq:updoubleneighborform}, respectively, i.e.,
  \begin{align}
  v&=u+s_je_j\\
w&=u+s_j+s_ke_k\ .
  \end{align}
  Consider
  \begin{align}
  v'&:=u+s_ke_k\ .
    \end{align}
  Then it follows from~$u\in\Cprime$ and~$v,w\in \cC$ that $v'\in\cC$. In particular, we have 
  \begin{align}
  v'&\in\neigh(u)\\
  w&\in \neigh(v')\ .
  \end{align}
  This shows that associated with every pair~$(v,w)$ giving rise to a factor~$X_w$ in the product~\eqref{eq:suneighuneighvprod}, there is a pair~$(v',w)\neq (v,w)$ also contributing a factor~$X_w$. Thus all such factors cancel and we are left with
  \begin{align}
  S^u&=\prod_{v\in\neigh(u)}Z_v=\prod_{v \in \incident_{\Tcldec}(u)} Z_v\ \label{eq:suneightglu}
  \end{align}
  because $\neigh(u)=\incident_{\Tcldec}(u)$ by definition of~$\Tcldec$.   In particular, we have $\supp(S^u) = \incident_{\Tcldec}(u)$ as claimed.
  Let us argue that   for~$u\in\Vcldecint$ with $u_3\not\in \{1,R\}$ we have 
  \begin{align}
  \neigh(u)&=\incident_{\Tcldec}(u)\subset \cA\label{eq:tgluca}
  \end{align}
  (and thus in particular $\incident_{\Tcldec}(u)\cap \cB=\emptyset$). Observe that
  combining~\eqref{eq:suneightglu} with~\eqref{eq:tgluca} implies the claim~\eqref{eq:Suandincident} for $u \in \Vcldecint$ with $u_3 \notin \{1,R\}$. 
  To show~\eqref{eq:tgluca}, consider $v\in \neigh(u)$  of the form $v=u+s_je_j$. 
  Suppose that $v_3\in \{1,R\}$. Then this implies that either
  \begin{align}
u_3=2, j=3, s_j=-1\qquad\textrm{ or }\qquad u_3=R-1,j=3,s_j=1\ .
  \end{align}
  In both cases, it follows that $v\in \{(e,e,o)\}$ (since $u\in \{(e,e,e)\}$) and thus $v\in\cA$. 
  This implies~\eqref{eq:tgluca} since any $v\in \neigh(u)$ with $v_3\not\in\{1,R\}$ satisfies~$v\in\cA$.
\item Consider an element $u\in\Vcldecint$ with $u_3\in \{1,R\}$.
  We claim that
  \begin{equation}\label{eq:incidentuintoAandB}
  \begin{aligned}
    \{ u \} &= \incident_{\Tcldec}(u) \cap \cA \\
    \neigh(u) &= \incident_{\Tcldec}(u) \cap \cB \ .
  \end{aligned}
  \end{equation}
  Note that the claim~\eqref{eq:Suandincident} follows from~\eqref{eq:incidentuintoAandB} combined with the definition of $G_u$.
  Observe that $u$ is incident to an edge
  \begin{align}\label{eq:edgetowardsu3}
    \{u, v\} \qquad \textrm{ where } \qquad v =
    \begin{cases*}
      (u_1, u_2, 2) & \textrm{ if $u_3 = 1 $}\\
      (u_1, u_2, R-1) & \textrm{ if $u_3 = R$} \ .
    \end{cases*}
  \end{align}
  To describe the edges in the subgraph $\Tdec \times \{u_3\}$, let us define
  \begin{align}
    \Ctilde_{u_3} &= \{w \in \Ctilde \mid w_3 = u_3\} \\
    \cC_{u_3} &= \{w \in \cC \mid w_3 = u_3\} \ .
  \end{align}
  Then $\Ctilde_{u_3}$ and $\cC_{u_3}$ are in one-to-one correspondence with the set of two-dimensional sites and the set of locations of qubits introduced in Section~\ref{sec:surfacecodelattice}, respectively.
  For a site $w = (w_1, w_2, u_3) \in \Ctilde_{u_3}$, we denote by $\neigh_{u_3}(w)$ the set of nearest neighbors of $w$ in $\cC_{u_3}$. More explicitly,
  \begin{align}
    \neigh_{u_3}(w) = \left\{z \in \cC_{u_3} \mid \sum_{j=1}^2 \abs{w_j - z_j} = 1\right\} \ .
  \end{align}
  From the definition~\eqref{eq:tcldecinterior} of $\Vdecint$, it follows that $(u_1, u_2) \in \Vdecint$.
  Then it follows from the definition~\eqref{eq:tdecinterior} that
  \begin{align}
    \incident_{\Tdec \times \{u_3\}}(u) = \neigh_{u_3}(u) \ ,
  \end{align}
  where each sites in $\neigh_{u_3}(u)$ is the midpoint of an edge of the lattice $\Tdec \times \{u_3\}$.
  Furthermore, it is easy to see that
  \begin{align}
    \neigh_{u_3}(u) = \neigh(u) \ .
  \end{align}
  By construction of $\Tcldec$, the edge $\{u, v\}$ from~\eqref{eq:edgetowardsu3} and the edges in $\incident_{\Tdec \times \{u_3\}}$ form the complete set of the edges where $u$ is incident in $\Tcldec$. That is,
  \begin{align}\label{eq:incidentintoneighanditself}
    \incident_{\Tcldec}(u) = \neigh(u) \cup \{u\} .
  \end{align}
  The claim~\eqref{eq:incidentuintoAandB} follows from~\eqref{eq:incidentintoneighanditself} and the fact that $u \in \cA$ and $\neigh(u) \subset \cB$.
\end{enumerate}
  
  The cases for the dual decoding graph can be checked in analogous manner.
\end{proof}

\begin{proof}[Proof of Lemma~\ref{lem:logicalopsSXSZ}]
  We show the claim~\eqref{eq:sxdefinition}.
  Let $u = (u_1, u_2, u_3) \in \cLXcl \cap \cB$. Then $u_1 = 1$, and $u_2$, $u_3$ are even.
  The stabilizer generator $G_u$ is
  \begin{align}
    G_u = 
    \begin{cases*}
      X_{(1, u_2, u_3-1)}Z_{(1,u_2,u_3)}X_{(1, u_2+1,u_3)} X_{(1,u_2,u_3+1)} &\textrm{ if $u_2 = 0$}\\
      X_{(1, u_2, u_3-1)}Z_{(1,u_2,u_3)}X_{(1, u_2+1,u_3)} X_{(1, u_2-1,u_3)} X_{(u_1,u_2,u_3+1)} &\textrm{ if $2\leq u_2 \leq 2d-4 $}\\
      X_{(1, u_2, u_3-1)}Z_{(1,u_2,u_3)}X_{(1, u_2-1,u_3)} X_{(1,u_2,u_3+1)} &\textrm{ if $u_2 = 2d-2$} \ .
    \end{cases*}
  \end{align}
  Define the operators 
  \begin{equation}\label{eq:calGj}
    \begin{aligned}
    \cG_j :=
    \begin{cases*}
      \prod_{k=2,4,\dots,R-1} Z_{(1,0,k)}X_{(1,1,k)} &\textrm{ if $j = 0$} \\
      \prod_{k=2,4,\dots,R-1} Z_{(1,j,k)}X_{(1,j-1,k)}X_{(1,j+1,k)} &\textrm{ if $j = 2, 4, \dots, 2d-4$} \\
      \prod_{k=2,4,\dots,R-1} Z_{(1,r-1,k)}X_{(1,r-2,k)} &\textrm{ if $j = 2d-2$} \ .
    \end{cases*}
    \end{aligned}
  \end{equation}
  Then it is straightforward to check that
  \begin{align}
    \prod_{k=2,4, \dots, R-1} G_{(1,j,k)} = X_{(1,j,1)} \cG_j X_{(1,j,R)} \qquad \textrm{ for } \qquad j = 0, 2, \dots, 2d-2 \ .
  \end{align}
  We observe that for all even $j, j^{\prime}$ and  $u_3 \in \{1, R\}$, the site $(1, j, u_3)$ does not appear in any of the factors in the product $\cG_{j^{\prime}}$ in~\eqref{eq:calGj}.
  This implies that $X_{(1,j,u_3)}$ and $\cG_{j^{\prime}}$ commute, and thus
  \begin{align}
    S^X &= \prod_{u \in \cLXcl \cap \cX} G_u \\
    &= \prod_{j=0, 2, \dots, 2d-2} \left( \prod_{k=2, 4, \dots, R-1} G_{(1, j, k)} \right) \\
    &= \prod_{j=0, 2, \dots, 2d-2}  X_{(1,j,1)} \cG_j X_{(1,j,R)} \\
    &= \left(\prod_{j=0, 2, \dots, 2d-2}X_{(1,j,1)}X_{(1,j,R)} \right) \left( \prod_{j=0, 2, \dots, 2d-2} \cG_j \right) \\
    &= X_{q_1}X_{q_2} X(\cLXcl \cap \cB) \left( \prod_{j=0, 2, \dots, 2d-2} \cG_j \right) \ .\label{eq:SXintermediate}
  \end{align}
  It is easy to check that in $\prod_{j=0,2,\dots, 2d-2} \cG_j$  all single-qubit Pauli operators commute and that all Pauli-$X$ are cancelled.
  As a result,  we have
  \begin{align}
    \prod_{j=0, 2, \dots, 2d-2} \cG_j = \prod_{\substack{j=0, 2, \dots, 2d-2\\k=2,4,\dots,R-1}} Z_{(1,j,k)} = Z(\cLXcl \cap \cA) \ .\label{eq:prodofcGj}
  \end{align}
  The claim~\eqref{eq:sxdefinition} follows from~\eqref{eq:SXintermediate} and~\eqref{eq:prodofcGj}.
  The claim~\eqref{eq:szdefinition} can be checked in analogous manner.
\end{proof}

\section{An upper bound on the resilience function~
$\res_{\cL_Z}(p)$ \label{app:upperboundresiliencetdecdual}}

\begin{proof}[Proof of Lemma~\ref{lem:upperboundresiliencetdecdual} ]
   We label the external vertices of $\Tdecdual$ as
  \begin{align}
    \Vdecdualext = \{u_1, \dots, u_d, v_1, \dots, v_{d-1}\}
    \qquad \textrm{with} \qquad 
    u_j = (2j - 1, 2j - 1) \textrm{ and }v_k = (2k + 1, -1) \ ,
  \end{align}
  see Fig.~\ref{fig:Tdecdual}
  Clearly, a simple path $P\in \Zpathext(\Tdecdual)$ with external endpoints through internal vertices satisfies $\ztwoinner{P}{\cL_Z}=1$
  if and only if it starts at a vertex~$u_j$ (with $j\in \{1,\ldots,d\}$) and ends at a vertex~$v_k$ (with $k\in \{1,\ldots,d-1\}$) without visiting any other external vertex.
  For $j\in \{1,\ldots,d\}$ and $\ell\in \mathbb{N}$, let $\Delta(u_j,\ell)$ be the set of all such paths~$P\in \Zpathext(\Tdecdual)$ that start at $u_j$ and end at some vertex in the set~$\{v_k\}_{k=1}^{d-1}$.
  By the same argument as used for showing~\eqref{eq:deltaformofresLx}, we have 
\begin{align}\label{eq:rescLZclosedform}
  \res_{\cL_Z}(p)
 &=\sum_{j=1}^{d}\sum_{\ell=L^{\min}_j}^{L^{\max}_j} \binom{\ell}{\lceil \ell/2\rceil}\cdot
 |\Delta(u_j,\ell)|\cdot p^{\lceil \ell/2\rceil} \ .
\end{align}
Here $L^{\min}_j$ and $L^{\max}_j$ are the minimal and maximal lengths of a path~$P \in \Zpathext(\Tdecdual)$ starting at $u_j$ and ending in~$\{v_k\}_{k=1}^{d-1}$.

The set $\Delta(u_1,1)$ is empty and the set $\Delta(u_1, 2)$ consists of a single path, i.e., $|\Delta(u_1,1)|=0$ and $\abs{\Delta(u_1, 2)} = 1$. For $j\in \{2,\ldots,d-1\}$, we use that the graph~$\Tdecdual$ has vertices of degree at most~$4$, which implies that
\begin{align}
|\Delta(u_j,\ell)|\leq 3^{\ell-1}\qquad\textrm{ for all }\qquad \ell\in\mathbb{N}\ 
\end{align}
as in the proof of Lemma~\ref{lem:upperboundresiliencetdec}. 
Again using
\begin{align}
L_j^{\min}\geq j\ .
\end{align}
and the inequalities $p^{\lceil \ell/2\rceil}\leq p^{\ell/2}$ and $\binom{\ell}{\lceil\ell/2 \rceil}\leq 2^\ell$ for $\ell > 1$ in the expression~\eqref{eq:rescLZclosedform}, we have that
\begin{align}
  \res_{\cL_Z^{*}} &\leq \binom{2}{1} 1 \cdot p
                 + \sum_{\ell = 3}^{L_1^{\max}} 2^\ell  3^{\ell-1} p^{\ell/2}
                 + \sum_{j = 2}^{d}\sum_{\ell = j}^{L_j^{\max}} 2^\ell  3^{\ell-1} p^{\ell/2} \\
  &\leq 2p + \frac{1}{3}\sum_{\ell = 3}^\infty q^\ell + \frac{1}{3}\sum_{j = 2}^d \sum_{\ell = j}^\infty q^{\ell} \qquad \textrm{ where } \qquad q:=6 \sqrt{p}\\
  &= 2p + \frac{q^3}{3(1-q)} + \frac{q^2-q^{d-1}}{3(1-q)^2}. \label{eq:rescLZupperboundwithq}
\end{align}
Again, $q$ satisfies $q \leq 6 \sqrt{\frac{1}{144}} \leq \frac{1}{2}$.
For such $q$, the inequalities $\frac{q}{(1-q)} \leq 1$ and $\frac{1}{(1-q)^2} \leq 4$ hold and we conclude that
\begin{align}
  \res_{\cL_Z^{*}} 
  &\leq 2p + \frac{1}{3} \cdot (6\sqrt{p})^2 + \frac{4}{3} \cdot (6\sqrt{p})^2 \\
  &= 38p \ .
\end{align}
\end{proof}

\section{An upper bound on the resilience function~$\res_{\cLZcl}(p)$
\label{sec:resiliencefunctiondual}}
In this appendix, 
we derive an upper bound on the resilience function~$\res_{\cLZcl}(p)$. That is, we  give the 
\begin{proof}[Proof of Lemma~\ref{lem:rescLZclbound}]
  The proof is analogous to that of Lemma~\ref{lem:rescLXclbound}. As before, it is sufficient to consider the set~$\Zpathext(\Tcldecdual)$ in order to compute the resilience~$\res_{\cLZcl}(p)$.
  We partition the set of external vertices
  \begin{align}
    \Vcldecdualext = \Vcldecdualextbottom \cup \Vcldecdualexttop
  \end{align}
  into the set~$\Vcldecdualextbottom$ of vertices~$(u_1,u_2,u_3)$ forming the ``bottom'' boundary with $u_2=-1$ and  the set~$\Vcldecdualexttop$ remaining vertices which form the boundaries on the ``top'' ($u_2=r$), ``front'' ($u_1=1$) and ``back'' ($u_1=R$).
  We label the external vertices as
  \begin{align}
    \Vcldecdualextbottom &= \left\{v_i^{k} \mid 1 \leq i \leq d-1, k \in \{1,D\}\right\} \cup \left\{ v_i^{k} \mid 0 \leq i \leq d-1, 2 \leq k \leq \frac{R-1}{2} \right\} \\
    \Vcldecdualexttop    &= \left\{ u_{i,j}^{k} \mid 1 \leq i \leq j, 1 \leq j \leq d-1, k \in \{1,\frac{R-1}{2}\} \right\} \\
    & \quad \cup \left\{ u_i^{k} \mid 1 \leq i \leq d, 2 \leq k \leq \frac{R-1}{2} \right\} \cup \left\{ u_d^{1}, u_d^{\frac{R-1}{2}} \right\} \ ,
  \end{align}
 see Fig.~\ref{fig:internalexternalverticescl}.
  Written out, the external vertices are located at the following (dual) sites:
  \begin{align}
    v_i^{k} &= (2i+1, -1, 2k-1) \\
    u_{i,j}^{k} &= (2i-1, 2j-1, 2k-1) \\
    u_i^{k} &= (2i-1, 2d-1, 2k-1) \ .
  \end{align}

  We introduce some subsets of paths in~$\Zpathext(\Tcldecdual)$.
  Let us denote by~$\Delta(v, \ell)$ the set of all simple paths through internal vertices starting at~$v \in \Vcldecdualextbottom$ and ending at any vertex in~$\Vcldecdualexttop$. 
  Then, by the similar argument as for showing \eqref{eq:zpathexttcldecdecompose}, we have
  \begin{align}
    \left\{ P \in \Zpathext(\Tcldecdual) \mid \abs{P} = \ell \textrm{ and } \ztwoinner{P}{\cLZcl} = 1 \right\}
    = \bigcup_{v \in \Vcldecdualextbottom} \Delta(v, \ell) \qquad \textrm{ for } \qquad \ell \geq 1 \ ,
  \end{align}
  and the subsets of paths~$\left\{ \Delta(v, \ell) \right\}_{v \in \Vcldecdualextbottom, \ell \geq 1}$ are pairwise disjoint.
  Let $L_v^{\min}$ and $L_v^{\max}$ be the minimal and maximal lengths of paths in $\bigcup_{\ell = 1}^{\infty}\Delta(v, \ell)$ for~$v \in \Vcldecdualextbottom$,
  and let us define
  \begin{align}
    \Vextbottom{s} = \left\{ v \in \Vcldecdualextbottom \mid L_v^{\min} = s \right\} \ .
  \end{align}
  It is easy to check from the definition of~$\Tcldecdual$ that for each vertex~$v=(v_1,v_2,v_3) \in \Vcldecdualextbottom$ there exists a path of length at most~$d$ which starts from~$v$ and ends at some~$u \in \Vcldecdualexttop$ with $(v_1,v_3) = (u_1,u_3)$, and this guarantees that the set~$\Vextbottom{s}$ is empty for all $s > d$.
  Moreover, we observe that~$\Vextbottom{1}$ is empty.
  With an analogous argument as in the proof of  Lemma~\ref{lem:rescLXclbound}, we have
  \begin{align}\label{eq:rescLZclasBs}
    \res_{\cLZcl}(p) = \sum_{s=2}^d B_s \qquad \textrm{ where } \qquad B_s = \sum_{ v \in \Vextbottom{s} } \sum_{\ell = s}^{L_v^{\max}} \binom{\ell}{\ceil{\ell/2}} \cdot \abs{\Delta(v, \ell)} \cdot p^{\ceil{\ell/2}} \ .
  \end{align}
  Here we note that each vertex in the graph~$\Tcldecdual$ has degree at most~$6$, and this implies similarly as before that
  \begin{align}\label{eq:upperboundpaths3Ddual}
    \abs{\Delta(u,\ell)} \leq 5^{\ell} \qquad \textrm{ for } \qquad  u \in \Vcldecdualextbottom \ .
  \end{align}
  We first calculate an upper bound on~$B_d$.
  Due to the   bound
  \begin{align}
    \abs{\Vextbottom{d} } \leq \abs{\Vcldecdualextbottom} \leq d \cdot \frac{R+1}{2}
  \end{align}
  on the number of external vertices on the bottom, together with the bound~\eqref{eq:upperboundpaths3Ddual} and the fact that $\binom{\ell}{\ceil{\ell/2}} \leq 2^l$ and $p^{\ceil{\ell/2}} \leq p^{\ell / 2}$, we have
  \begin{align}
    B_d &\leq d \cdot \frac{R+1}{2} \sum_{\ell = d}^{\infty} 2^{\ell} \cdot 5^{\ell} \cdot p^{\ell / 2} \\
        &\leq d \cdot \frac{R+1}{2} \cdot  \frac{q^d}{1-q} \qquad \textrm{ where } \qquad q := 10 \sqrt{p} \ .\label{eq:boundBd}
  \end{align}
  Here the geometric series converges since $0 \leq q = 10 \sqrt{p}\leq \frac{1}{2}$

  It remains to calculate an upper bound of~$\sum_{s=2}^{d-1} B_s$. It is easy to see that 
  \begin{align}
    \abs{\Vextbottom{s}} \leq 2 \cdot s \qquad \textrm{ for } \qquad 2 \leq s \leq d-1 \ . 
  \end{align}
  By analogous reasoning as for showing~\eqref{eq:boundBd}, we have
  \begin{align}
    \sum_{s=2}^{d-1} B_s & \leq \sum_{s=2}^{d-1} \left( 2 \cdot s \sum_{\ell = s}^{\infty} 2^{\ell} \cdot 5^{\ell} \cdot p^{\ell / 2} \right) \\
                         &= \sum_{s=2}^{d-1} 2 \cdot s \cdot \frac{q^s}{1-q} \\
                         &\leq \frac{4q^2}{\left( 1-q \right)^2} + \frac{2q^3}{\left( 1-q \right)^3} \ . \label{eq:boundBmiddle}
  \end{align}
  Again, the geometric series and the arithmetico-geometric series converge since $0 \leq q \leq \frac{1}{2}$.
  Inserting~$q=10\sqrt{p}$,~\eqref{eq:boundBd} and~\eqref{eq:boundBmiddle} into~\eqref{eq:rescLZclasBs}, we obtain
  \begin{align}
    \res_{\cLZcl}(p) &\leq \frac{4q^2}{(1-q)^2} + \frac{2q^3}{(1-q)^3} + d \cdot \frac{R+1}{2} \cdot \frac{q^d}{1-q}  \\
    &\leq \left[ \frac{400}{(1-q)^2} + \frac{200q}{(1-q)^3} + 50d(R+1) \cdot \frac{(10\sqrt{p})^{d-2}}{1-q} \right] \cdot p  .\label{eq:rescLZclreciprocal}
  \end{align}
  The bound~\eqref{eq:rescLZclbound} follows from~\eqref{eq:rescLZclreciprocal} because $q \leq \frac{1}{2}$ and $\frac{1}{1-q} \leq 2$.
\end{proof}

\subsubsection*{Acknowledgements}
SC and RK gratefully acknowledge support by the European Research Council under grant agreement no. 101001976 (project EQUIPTNT).
Figures were produced using VESTA~\cite{vesta}. RK thanks Isaac Kim for discussions on surface codes. 
\bibliographystyle{plain}
\bibliography{ref}
\end{document}